\newtheorem{proposition}{Proposition}[section]
\newtheorem{lemma}{Lemma}[section]
\newtheorem{observation}{Observation}[section]
\newtheorem{corollary}{Corollary}[section]
\newtheorem{claim}{Claim}[section]
\newenvironment{cproof}{\proof[Proof of claim]}{\endproof}
\newcommand{\eqdef}{\stackrel{{\scriptsize\rm def}}{=}}
\newcommand{\bigmid}{{\;\big|\;}}
\newcommand{\lipItem}[1]{#1}
\newcommand\annotated{annotated\xspace}
\newcommand\Annotated{Annotated\xspace}
\newcommand\annotatedextension{annotated extension\xspace}
\newcommand{\adh}{{\sf adh}\xspace}
\newcommand{\bd}{{\sf bd}\xspace}
\newcommand{\btw}{{\sf btw}\xspace}
\newcommand{\cc}{{\sf cc}\xspace}
\newcommand{\ch}{{\sf ch}\xspace}
\newcommand{\cut}{{\sf cut}\xspace}
\newcommand{\FPT}{{\sf FPT}\xspace}
\newcommand{\heir}{{\sf heir}\xspace}
\newcommand{\NP}{{\sf NP}}
\newcommand{\oct}{{\sf oct}\xspace}
\newcommand{\opt}{{\sf opt}\xspace}
\newcommand{\p}{{\sf p}\xspace}
\newcommand{\ph}{\hat{\sf p}_{g,\opt}\xspace}
\newcommand{\phf}{\hat{\sf p}_{f,\opt}\xspace}
\newcommand{\PbPack}{\text{\sc $\Gcal$-Packing}\xspace}
\newcommand{\PbCov}{{\sc Vertex Deletion to $\Gcal$}\xspace}
\newcommand{\Par}{{\sf par}\xspace}
\newcommand{\torso}{{\sf torso}\xspace}
\newcommand{\tw}{{\sf tw}\xspace}
\newcommand{\XP}{{\sf XP}\xspace}
\newcommand{\yes}{{\sf yes}\xspace}
\newcommand{\bN}{{\mathbb{N}}}
\newcommand{\Acal}{\mathcal{A}}
\newcommand{\Bcal}{\mathcal{B}}
\newcommand{\Ccal}{\mathcal{C}}
\newcommand{\Fcal}{\mathcal{F}}
\newcommand{\Gcal}{\mathcal{G}}
\newcommand{\Hcal}{\mathcal{H}}
\newcommand{\Lcal}{\mathcal{L}}
\newcommand{\Mcal}{\mathcal{M}}
\newcommand{\Ocal}{\mathcal{O}}
\newcommand{\Pcal}{\mathcal{P}}
\newcommand{\Qcal}{\mathcal{Q}}
\newcommand{\Tcal}{\mathcal{T}}
\newcommand{\Xcal}{\mathcal{X}}
\newcommand{\Ycal}{\mathcal{Y}}
\title{Dynamic programming on bipartite tree decompositions\tnoteref{t1}}
\author[1]{Lars Jaffke\fnref{fn1}}
\ead{lars.jaffke@uib.no}
\author[2]{Laure Morelle\corref{cor1}\fnref{fn2,fn3}}
\ead{laure.morelle@lirmm.fr}
\author[2]{Ignasi Sau\fnref{fn2,fn3}}
\ead{ignasi.sau@lirmm.fr}
\author[2]{Dimitrios M. Thilikos\fnref{fn3}}
\ead{sedthilk@thilikos.info}
\affiliation[1]{organization={Department of Informatics, University of Bergen},
addressline={Postboks 7803},
postcode={5020},
city={Bergen},
country={Norway}}
\affiliation[2]{organization={LIRMM, University of Montpellier, CNRS},
addressline={161 rue Ada},
postcode={34095},
city={Montpellier},
country={France}}
\journal{Journal of Computer and System Sciences}
\begin{document}

\begin{frontmatter}

\begin{abstract}
\noindent
We revisit a graph width parameter that we dub \emph{bipartite treewidth} (\textsf{btw}). Bipartite treewidth can be seen as a common generalization of treewidth and the odd cycle transversal number, and is closely related to odd-minors. Intuitively, a \emph{bipartite tree decomposition} is a tree decomposition whose bags induce almost bipartite graphs and whose adhesions contain at most one ``bipartite'' vertex, while the width of such decomposition measures the number of ``non-bipartite'' vertices in a bag. We provide \textsf{para-}\NP-completeness results and develop dynamic programming techniques to solve problems on graphs of small \textsf{btw}. In particular, we show that \textsc{$K_t$-Subgraph-Cover}, \textsc{Weighted Independent Set}, \textsc{Odd Cycle Transversal}, and \textsc{Maximum Weighted Cut} are $\textsf{FPT}$ parameterized by \textsf{btw}. We also provide the following dichotomy when $H$ is a 2-connected graph: if $H$ is bipartite, then \textsc{$H$-\{Subgraph/Induced-Subgraph/Odd-Minor/Scattered\}-Packing} is \textsf{para-NP}-complete parameterized by \textsf{btw} while, if $H$ is non-bipartite, then the problem is solvable in \textsf{XP}-time.
\end{abstract}

\begin{keyword}
tree decomposition\sep bipartite graphs\sep dynamic programming\sep odd-minors\sep packing\sep maximum cut\sep vertex cover\sep independent set\sep odd cycle transversal
\end{keyword}

\end{frontmatter}

\section{Introduction}
\label{sec-intro}

A graph $H$ is said to be an \emph{odd-minor} of a graph $G$ if it can be obtained from $G$ by iteratively removing vertices, edges, and contracting edge cuts.
Hadwiger's conjecture \cite{Hadwiger43uber}, which is open since 1943, states that if a graph excludes $K_t$ as a minor, then its chromatic number is at most $t-1$.
In 1993, Gerards and Seymour \cite{JensenT11grap} generalized this conjecture to odd-minors, hence drawing attention to odd-minors:
the Odd Hadwiger's conjecture states that if a graph excludes $K_t$ as an odd-minor, then its chromatic number is at most $t-1$.
Since then, a number of papers regarding odd-minors appeared. Most of them focused to the resolution of the Odd Hadwiger's conjecture (see for instance \cite{GeelenGRSV09onth}, and \cite{Steiner22impr} for a nice overview of the results), while some others aimed at extending the results of graph minor theory to odd-minors (see for instance \cite{DemaineHK10deco, Huynh09thel, KawarabayashiRW11}). In particular, Demaine, Hajiaghayi, and Kawarabayashi~\cite{DemaineHK10deco} provided a structure theorem which essentially states that graphs excluding an odd-minor can be obtained by clique-sums of almost-embeddable graphs and almost bipartite graphs.
To prove this, they implicitly  proved the following, which is described more explicitly by Tazari \cite{Tazari12fast} (see \autoref{sec_tw} for the corresponding definitions).

\begin{proposition}[\cite{Tazari12fast}, adapted from \cite{DemaineHK10deco}]\label{taz}
Let $H$ be a fixed graph and let $G$ be a given $H$-odd-minor-free graph. There exists a fixed graph $H'$, $\kappa,\mu\in\bN$ depending only on $H$, and an explicit uniform algorithm that computes a rooted tree decomposition of $G$ such that:
\begin{itemize}
\item the adhesion\footnote{The \emph{adhesion} of two nodes of a tree decomposition $\Tcal$ is the intersection of the corresponding bags of $\Tcal$.} of two nodes has size at most $\kappa$, and
\item the torso\footnote{The \emph{torso} of a bag $B$ in a tree decomposition of a graph $G$ is the graph obtained from $G[B]$ by making a clique out of $B\cap B'$ for each other bag $B'$.} of each bag $B$ either consists of a bipartite graph $W_B$ together with $\mu$ additional vertices (bags of Type~1) or is $H'$-minor-free (bags of Type~2).
\end{itemize}

Furthermore, the following properties hold:
\begin{enumerate}
\item Bags of Type~2 appear only in the leaves of the tree decomposition,
\item if $B_2$ is a bag that is a child of a bag $B_1$ in the tree decomposition, then $|B_2\cap V(W_{B_1})|\leq 1$; and if $B_2$ is  of Type~1, then $|B_1\cap V(W_{B_2})|\leq 1$ as well,
\item the algorithm runs in time $\Ocal_H(|V(G)|^4)$, and
\item the $\mu$ additional vertices of the bags of Type~1, called \emph{apex} vertices,  can be computed within the same running time.
\end{enumerate}
\end{proposition}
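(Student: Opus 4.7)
The plan is to obtain the decomposition by post-processing the celebrated structure theorem of Demaine, Hajiaghayi, and Kawarabayashi \cite{DemaineHK10deco} for $H$-odd-minor-free graphs. That theorem yields, in polynomial time, a clique-sum decomposition of $G$ into pieces which are either almost-embeddable in a surface of bounded genus (and therefore $H'$-minor-free for some $H'$ depending only on $H$) or almost bipartite (bipartite after removing a bounded number of apex vertices). I would view this clique-sum decomposition as a rooted tree decomposition $(T,\beta)$ in the natural way: nodes of $T$ are the pieces, adhesions correspond to the clique-summing interfaces, and the bound $\kappa$ on adhesion size follows from the bound on clique-sum size guaranteed by \cite{DemaineHK10deco}.

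Next, I would reroot and, if necessary, reorganize $T$ so that every bag of Type~2 appears as a leaf. Since the clique-sum structure is a tree, this can be achieved by choosing a Type~1 bag as root whenever possible and, for any Type~2 bag that ends up as an internal node, either merging it into an adjacent Type~1 piece through the clique-sum operation or swapping it with a descendant so that it settles at a leaf. As clique-sum merging preserves $H$-odd-minor-freeness and only modifies constantly many pieces per merge, this rearrangement can be carried out without blowing up $H'$ or $\mu$.

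The main obstacle is establishing the vertex-sharing condition: if $B_2$ is a child of a Type~1 bag $B_1$, the adhesion $B_1\cap B_2$ is a clique in the torso of $B_1$, and since bipartite graphs still contain edges (cliques of size~$2$), this adhesion could hit $W_{B_1}$ in more than one vertex. To enforce $|B_2\cap V(W_{B_1})|\leq 1$, I would iteratively reclassify as apex vertices of $B_1$ any additional $W_{B_1}$-side vertices that appear in the adhesions incident to $B_1$, and do the symmetric operation on $W_{B_2}$ when $B_2$ itself is of Type~1. The delicate point is to argue that this apex blow-up remains bounded by a constant depending only on $H$: each adhesion has at most $\kappa$ vertices, and a careful combinatorial accounting---exactly the refinement that Tazari \cite{Tazari12fast} extracts from the proof of \cite{DemaineHK10deco}---bounds the total number of vertices that need to be moved to the apex side in each bag by some $\mu=\mu(H)$. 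Finally, the $\Ocal_H(|V(G)|^4)$ running time and the explicit identification of apex vertices are inherited from the algorithmic version of \cite{DemaineHK10deco}, since every post-processing step above is local to constant-size neighborhoods in $T$ and can be performed within that budget.
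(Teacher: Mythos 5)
First, note that the paper does not actually prove this statement: it is imported from \cite{Tazari12fast} and \cite{DemaineHK10deco}, with only the remark that the strengthened Condition~2 ``follows from the proof of \cite[Theorem 4.1]{DemaineHK10deco}''. So there is no in-paper argument to compare yours against, and your sketch has to stand on its own as a derivation from the Demaine--Hajiaghayi--Kawarabayashi structure theorem. It does not, because of a genuine gap at precisely the step you yourself flag as delicate.

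The problem is the post-processing meant to enforce $|B_2\cap V(W_{B_1})|\leq 1$. A bag $B_1$ of a clique-sum decomposition can have arbitrarily many children --- the paper itself shows, in \autoref{no-nice}, that the number of neighbors of a node in such decompositions cannot be bounded in terms of the width --- and each child adhesion may meet $W_{B_1}$ in up to $\kappa-1$ vertices. Promoting all but one of these to apices therefore adds up to $(\kappa-1)\cdot d$ vertices to the apex set of $B_1$, where $d$ is the number of children; this is not bounded by any $\mu(H)$ and can be $\Omega(|V(G)|)$. Your appeal to ``exactly the refinement that Tazari extracts from the proof of \cite{DemaineHK10deco}'' is circular: that refinement \emph{is} Condition~2, the very statement being proven. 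Condition~2 is a property of how the decomposition is \emph{constructed} inside the proof of \cite[Theorem~4.1]{DemaineHK10deco} --- the bipartite parts and the clique-sum interfaces are built there so that they meet in at most one vertex --- and it cannot be recovered by post-processing a black-box clique-sum decomposition. A similar, milder objection applies to your rearrangement making Type~2 bags leaves: merging an internal $H'$-minor-free piece into an adjacent almost-bipartite piece in general destroys the latter's Type~1 structure, so this property, too, must come from the construction rather than from rerooting and local merges.
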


It is worth mentioning that Condition \lipItem{2} of \autoref{taz} is slightly stronger than what is stated in \cite{Tazari12fast}, but it follows from the proof of \cite[Theorem 4.1]{DemaineHK10deco}.

\begin{figure}[h]
\center
\includegraphics[scale=0.7]{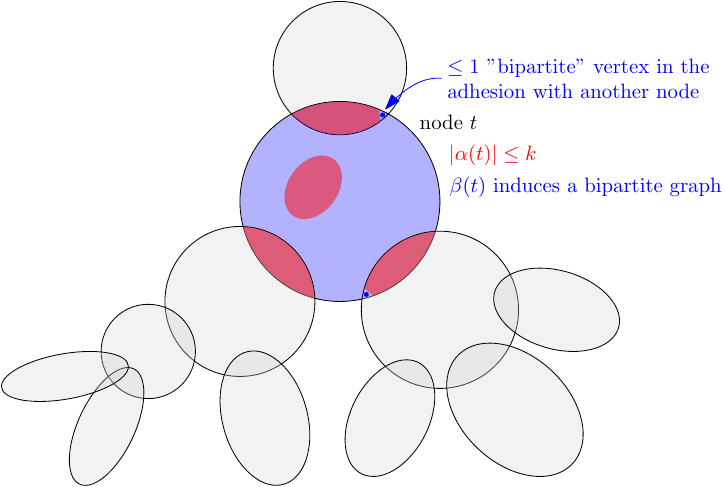}
\caption{Illustration of a bipartite tree decomposition. The blue part induces a bipartite graph, while the additional vertices are depicted in red.
This bipartite tree decomposition has width at most $k$ if there are at most $k$ additional vertices in each bag.}
\label{fig_bipartite_decomposition}
\end{figure}
If we can solve a problem on graphs admitting a tree decomposition as the one described in~\autoref{taz}, we should thus be able to solve this problem on odd-minor-closed graph classes.
%The tree decomposition described in~\autoref{taz} seems adapted to study problems on odd-minor-closed graph classes.
Minor-closed graph classes are already quite well-understood.
Therefore, as a first step toward building a theory for solving  problems on odd-minor-closed graph classes, we study in this paper a new type of tree decomposition, which we call \emph{bipartite tree decomposition}, corresponding to the tree decompositions of \autoref{taz}, but where all bags are only of Type~1 (see \autoref{fig_bipartite_decomposition}). We also stress that this decomposition has also been implicitly used in \cite{KawarabayashiR10an(a} and is also introduced, under the same name, in \cite{Campbell23odd-}.
The width of such tree decompositions is defined as the maximum number of apex vertices in a bag of the decomposition. Naturally, the \emph{bipartite treewidth} of a graph $G$, denoted by $\btw(G)$, is the minimum width over all bipartite tree decompositions of $G$; a formal definition is given in \autoref{sec_tw}. It follows easily from the definition that $\btw(G)=0$ if and only if $G$ is bipartite (indeed, to prove the sufficiency, just take a single bag containing the whole bipartite graph, with no apex vertices). More generally, for every graph $G$ it holds that $\btw(G) \leq \oct(G)$, where $\oct$ denotes the size of a minimum \textsl{odd cycle transversal}, that is, a vertex set intersecting every odd cycle. On the other hand, since a bipartite tree decomposition is a tree decomposition whose width is not larger than the maximum size of a bag (in each bag, just declare all vertices as apices), for every graph $G$ it holds that $\btw(G) \leq \tw(G) + 1 $, where $\tw$ denotes treewidth. Thus, bipartite treewidth can be seen as a common generalization of treewidth and the odd cycle transversal number.  Hence, an \FPT-algorithm parameterized by $\btw$ should generalize {\sl both} \FPT-algorithms parameterized by $\tw$ and by $\oct$. Since our goal is to develop a theory for solving problems related to odd-minors, the first prerequisite is that bipartite treewidth is closed under odd-minors. Fortunately, this is indeed the case (cf. \autoref{lem:closed-under-om}). Interestingly, this would {\sl not} be true anymore if, in Condition \lipItem{2} of \autoref{taz}, the considered intersections were required to be upper-bounded by some integer larger than one (cf. \autoref{lem:not-closed-under-om}).

This type of tree decomposition has already been used implicitly in order to solve {\sc Odd Cycle Transversal} parameterized by the solution size by Kawarabayashi and Reed \cite{KawarabayashiR10an(a}.
Independently of our work, Campbell, Gollin, Hendrey, and Wiederrecht \cite{Campbell23odd-} are also currently  studying bipartite tree decompositions. In particular, they provide universal obstructions
characterizing  bounded
\btw\ in the form of a ``grid theorem” (actually the result of \cite{Campbell23odd-} apply in the much more general setting of undirected group labeled graphs). They also designed an \FPT-approximation algorithm that can construct a bipartite tree decomposition
in time $g(k)\cdot n^4\log n$ (cf. \autoref{prop_seb}).
This  \FPT-approximation is an important prerequisite for our
algorithmic results as it
permits us to assume that, for the implementation of  our algorithms, some (approximate) bipartite tree decomposition is
provided in advance.

Our aim is to provide a general framework for the design of  dynamic programming algorithms on bipartite tree decompositions and, more generally, on a broader type of decompositions that we call \emph{1-$\Hcal$-tree decompositions}. These decompositions generalize bipartite tree decompositions, in the sense that the role of bipartite graphs is replaced by a general graph class $\Hcal$.

\subparagraph{Our results.}
In this article we formally introduce bipartite treewidth and bipartite tree decompositions (noticing that they were implicitly already used before, as discussed above). We then focus on the complexity of various problems when the bipartite treewidth of the input graph is taken as a parameter. In particular, we show the following (cf. \autoref{table-results}):

\begin{itemize}
\item While a graph with \btw\ at most $k$ is $(k+2)$-colorable (\autoref{approx-col}), {\sc 3-Coloring} is \NP-complete even on graphs of \oct of size three (\autoref{lem-hard-coloring}), and thus \btw\ at most three.

\item {\sc $K_t$-Subgraph-Cover}, {\sc Weighted Vertex Cover/Independent Set}, {\sc Odd Cycle Transversal}, and {\sc Maximum Weighted Cut} are \FPT parameterized by \btw (cf. \autoref{sec-results-FPT}). In particular, our \FPT-algorithms extend the domain where these well-studied problems can be solved in polynomial time to graphs that are ``locally close to being bipartite''.
Furthermore, as $\btw(G)\leq\oct(G)$ for any graph $G$, we think that the fact that \textsc{Odd Cycle Transversal} is \FPT\ parameterized by \btw is  relevant by itself, as it generalizes the well-known \FPT-algorithms parameterized by the solution size~\cite{ReedSV04find,LokshtanovNRRS14}.
%We would like to  mention that
%combining in a win-win manner
%our dynamic programming algorithm with the
%\FPT-approximation and the Grid Exclusion Theorem of \cite{Campbell23odd-} we may derive  an \FPT-algorithm
%for \textsc{Odd Cycle Transversal} parameterized by the solution size, whose running time is
%considerably better than the one in~\cite{Campbell23odd-}, which has been obtained independently by using the irrelevant vertex technique (see also \cite{KawarabayashiR10an(a}).

\item Let $H$ be a  2-connected graph.
%We prove that {\sc $H$-Minor-Packing} is \textsf{para-}\NP-complete parameterized by \btw (\autoref{pack-bip-min}).
For each of the {\sc $H$-(Induced-)Subgraph-Packing}, {\sc $H$-Scattered-Packing}, and \!{\sc $H$-Odd-Minor-Packing} problems (cf. \autoref{xp} for the definitions), we obtain the following complexity dichotomy: if $H$ is bipartite, then the problem is {\sf para-NP}-complete parameterized by \btw (in fact, even for $\btw=0$), and if $H$ is non-bipartite, then the problem is solvable in \XP-time. The hardness results are presented in \autoref{sec-packing-hardness} and the \XP-algorithms in \autoref{xp}.

\item In view of the definition of bipartite tree decompositions, it seems natural to consider any graph class $\Hcal$, instead of bipartite graphs, as the ``free part'' of the bags.
This leads to the more general definition of \emph{1-$\Hcal$-tree decomposition} and \emph{1-$\Hcal$-treewidth} (cf. \autoref{sec_tw}), with
1-$\{\emptyset\}$-treewidth being equivalent to the usual treewidth and
1-$\Bcal$-treewidth being the bipartite treewidth if $\Bcal$ is the class of bipartite graphs.
We introduce these more general definitions because our dynamic programming technologies easily extend to 1-$\Hcal$-treewidth.
It also seems natural to generalize by allowing any number $q$ of ``bipartite vertices'' in the adhesion, instead of just one. It corresponds, in \autoref{fig_bipartite_decomposition}, to replacing the ``$\le 1$'' with ``$\le q$''.
For $q=0$, this corresponds to the $\Hcal$-treewidth defined in \cite{EibenGHK21meas} (see also \cite{JansenK21fpta,AgrawalKLPRSZ22Deleting} on the  study of  $\Hcal$-treewidth for several instantiations of $\Hcal$).
However, as mentioned above, while 1-$\Bcal$-treewidth is closed under odd-minors (\autoref{lem:closed-under-om}), this is not the case anymore for $q\geq 2$ (\autoref{lem:not-closed-under-om}).
% For $q \geq 2$, it turns out that even asking that the graph $H$ to be packed or covered is non-bipartite is {\sl not} enough to make the problem tractable.
% LJ
For $q \geq 2$, some problems remain intractable even when $H$ is not bipartite.
As an illustration of this phenomenon, we prove that {\sc $H$-Scattered-Packing} (where there cannot be an edge in $G$ among the copies of $H$ to be packed)
is {\sf para-NP}-complete parameterized by $q$-$\Bcal$-treewidth for $q\geq 2$ even if $H$ is {\sl not} bipartite (\autoref{lem-paraNP-non-bip}).
\end{itemize}

In the statements of the running time of our algorithms, we always let $n$ (resp. $m$) be the number of vertices (resp. edges) of the input graph of the considered problem.

%\begin{table}[htb]
%\centering
%%\scalebox{.76}{\begin{tabular}{ |c|c|c|c| }
%\scalebox{1}{\begin{tabular}{ |c|c|c|c| }
%\hline
%Problem & Complexity &  Constraints on $H$/Running time\\
%\hline
%{\sc $H$(-Induced)-Subgraph/} & \multirow{6}{*}{\textsf{para-}\NP-complete,} & \multirow{2}{*}{$H$ bipartite containing $P_3$ as a subgraph} \\
%{\sc Odd-Minor-Cover} \cite{Yannakakis81node} & \multirow{7}{*}{$k= 0$} &  \\
%{\sc $H$-Minor-Cover} \cite{Yannakakis81node} &  & $H$ containing $P_3$ as a subgraph\\
%{\sc $H$(-Induced)-Subgraph-Packing} &  &  $H$ bipartite containing $P_3$ as a subgraph\\
%{\sc $H$-Minor-Packing} & & $H$ 2-connected with $|V(H)|\geq 3$ \\
%{\sc $H$-Odd-Minor-Packing} & & $H$ 2-connected bipartite with $|V(H)|\geq 3$\\
%{\sc $H$-Scattered-Packing} & & $H$ 2-connected bipartite with $|V(H)|\geq 2$ \\
%\hline
%\multirow{2}{*}{\sc 3-Coloring} & \textsf{para-}\NP-complete, & \\
%& $k= 3$ & \\
%\hline
%{\sc $K_t$-Subgraph-Cover} & \multirow{5}{*}{\FPT} & $\Ocal(2^k\cdot (k^t\cdot(n+m)+m\sqrt{n}))$\\
%{\sc Independent Set} & & $\Ocal(2^k\cdot (k\cdot (k+n)+m \sqrt{n}))$\\
%{\sc Weighted Independent Set} &  & $\Ocal(2^k\cdot (k\cdot (k+n)+n\cdot m))$\\
%{\sc Odd Cycle Transversal} &  & $\Ocal(3^k\cdot k\cdot n\cdot (m+k^2))$\\
%{\sc Maximum Weighted Cut} &  & $\Ocal(2^k\cdot (k\cdot (k+n)+n^{\Ocal(1)}))$\\
%\hline
%{\sc $H$-Subgraph-Packing} & \multirow{4}{*}{\XP} & \multirow{2}{*}{$H$ non-bipartite 2-connected}\\
%{\sc $H$-Induced-Subgraph-Packing} & & \\
%{\sc $H$-Scattered-Packing} & &  \multirow{2}{*}{$n^{\Ocal(k)}$}\\
%{\sc $H$-Odd-Minor-Packing} & & \\
%\hline
%\end{tabular}}
\begin{table}[htb]
\centering
%\scalebox{.76}{\begin{tabular}{ |c|c|c|c| }
\scalebox{1}{\begin{tabular}{ |c|c|c|c| }
\hline
Problem & Complexity &  Constraints on $H$/Running time\\
\hline
{\sc $H$-Minor-Cover}~\cite{Yannakakis81node} &  & $H$ containing $P_3$ as a subgraph\\
{\sc $H$(-Induced)-Subgraph/} & \multirow{4}{*}{\textsf{para-}\NP-complete,} & \multirow{4}{*}{$H$ bipartite containing $P_3$ as a subgraph} \\
{\sc Odd-Minor-Cover}~\cite{Yannakakis81node}  and& \multirow{5}{*}{$k= 0$} &  \\
{\sc $H$(-Induced)-Subgraph/} & &\\
{\sc $H$(-Odd)-Minor-Packing} & &\\
{\sc $H$-Scattered-Packing} & & $H$ 2-connected bipartite with $|V(H)|\geq 2$ \\
\hline
\multirow{2}{*}{\sc 3-Coloring} & \textsf{para-}\NP-complete, & \\
& $k= 3$ & \\
\hline
{\sc $K_t$-Subgraph-Cover} & \multirow{4}{*}{\FPT} & $\Ocal(2^k\cdot (k^t\cdot(n+m)+m^{1+o(1)}))$\\
%{\sc Independent Set} & & $\Ocal(2^k\cdot (k\cdot (k+n)+m^{1+o(1)}))$\\
{\sc Weighted Independent Set} &  & $\Ocal(2^k\cdot (k\cdot (k+n)+m^{1+o(1)}))$\\
{\sc Odd Cycle Transversal} &  & $\Ocal(3^{k}\cdot (m+k^2\cdot n)^{1+o(1)})$\\
{\sc Maximum Weighted Cut} &  & $\Ocal(2^k\cdot (k\cdot (k+n)+n^{\Ocal(1)}))$\\
\hline
{\sc $H$-Subgraph-Packing} & \multirow{4}{*}{\XP} & \multirow{2}{*}{$H$ non-bipartite 2-connected}\\
{\sc $H$-Induced-Subgraph-Packing} & & \\
{\sc $H$-Scattered-Packing} & &  \multirow{2}{*}{$n^{\Ocal(k)}$}\\
{\sc $H$-Odd-Minor-Packing} & & \\
\hline
\end{tabular}}
%\medskip
\caption{Summary of the results obtained in this article, where $k$ is the bipartite treewidth of the input graph.}\label{table-results}
%\vspace{-.7cm}
\end{table}

\subparagraph{Related results.}
Other types of tree decompositions integrating some ``free bipartite parts'' have been defined recently.
As we already mentioned, Eiben, Ganian,  Hamm, and  Kwon~\cite{EibenGHK21meas} defined \emph{$\Hcal$-treewidth} for a fixed graph class $\Hcal$.
The $\Hcal$-treewidth of a graph $G$ is essentially the minimum treewidth of the graph induced by some set $X\subseteq V(G)$ such that the connected components of $G\setminus X$ belong to $\Hcal$,  and is equal to $0$-$\Hcal$-treewidth minus one (cf. \autoref{sec_tw}).
In particular, when $\Hcal$ is the class of bipartite graphs $\Bcal$, Jansen and de Kroon \cite{JansenK21fpta} provided an \FPT-algorithm to test whether the $\Bcal$-treewidth of a graph is at most $k$.

Recently, as a first step to provide a systematic theory for odd-minors,
Gollin and Wiederrecht \cite{GollinW23odd-}
defined the \emph{$\Hcal$-blind-treewidth} of a graph $G$,
where $\Hcal$ is a property of annotated graphs\footnote{An \emph{annotated graph} is a pair $(G,X)$ where $G$ is a graph and $X\subseteq V(G)$.}.
Then the $\Hcal$-blind-treewidth is the smallest $k$ such that $G$ has a tree decomposition where every bag $B$ such that $(G,B)\notin\Hcal$ has size at most $k$.
For the case where  $\Ccal$ consists of every $(G,X)$
where every odd cycle in $H$ has \textsl{at most} one vertex in $X$,
we obtain the $\Ccal$-blind-treewidth, for which \cite{GollinW23odd-} gives
an  analogue of the Grid Exclusion Theorem~\cite{ChuzhoyT21,RobertsonST94} under the odd-minor relation.
Moreover, \cite{GollinW23odd-} provides an \FPT-algorithm for {\sc Independent Set} parameterized by $\Ccal$-blind-treewidth.
The authors of \cite{GollinW23odd-} prove that the \emph{bipartite-blind treewidth} of a graph $G$ is lower-bounded by a function of
the maximum treewidth over all non-bipartite blocks of $G$.
This immediately implies that bipartite-blind treewidth is lower-bounded by bipartite treewidth.
Hence, our \FPT-algorithm for {\sc Independent Set} is more general than the one in \cite{GollinW23odd-}. Independently of our work,~\cite{Campbell23odd-} presents an \FPT-algorithm to solve {\sc Odd Cycle Transversal} parameterized by \btw in time $f(\btw)\cdot n^4\log n$ (in fact, they solve a more general group labeled problem). Our algorithm for \textsc{Odd Cycle Transversal} (cf. \autoref{co-algo-OCT}), which runs in time $\Ocal(3^\btw\cdot\btw^3\cdot n\cdot m)$, is considerably faster.

\subparagraph{Organization of the paper.}
In \autoref{sec-overview} we provide an overview of our techniques.
In \autoref{def} we give basic definitions on graphs and we define $q$-$\Hcal$-treewidth.
In \autoref{fpt} we define a \emph{nice reduction}, give a general dynamic programming algorithm to obtain \FPT-algorithms, and apply it to {\sc $K_t$-Subgraph-Cover}, {\sc Weighted Independent Set/Vertex Cover}, {\sc Odd Cycle Transversal}, and {\sc Maximum Weighted Cut}.
In \autoref{xp} we provide another dynamic programming scheme to solve {\sc $H$-Subgraph-Packing} and {\sc $H$-Odd-Minor-Packing} in \XP-time. In \autoref{bip} we provide our hardness results.
Finally, we present several questions for further research in \autoref{sec-conclusions}.

\section{Overview of our techniques}
\label{sec-overview}

In this section we present an overview of the techniques that we use to obtain our results.

\subsection{Dynamic programming algorithms}
\label{sec-overview-dp}

Compared to dynamic programming on classical tree decompositions, there are two main difficulties for doing dynamic programming on (rooted) {\sl bipartite} tree decompositions. The first one is that the bags in a bipartite tree decomposition may be arbitrarily large, which prevents us from applying typical brute-force approaches to define table entries. The second one, and apparently more important,  is the lack of an upper bound on the number of children of each node of the decomposition. Indeed, unfortunately, a notion of “nice bipartite tree decomposition” preserving the width (even approximately) does not exist (cf. \autoref{no-nice}). We discuss separately the main challenges involved in our \FPT-algorithms and in our \XP-algorithms.

\subsubsection{\FPT-algorithms}
\label{sec-overview-FPT}

In fact, in order to obtain \FPT-algorithms parameterized by \btw, for most of the considered problems, it would be enough to bound the number of children as a function of \btw, but we were not able to come up with a general technique that achieves this property (cf. \autoref{no-nice}). For particular problems, however, we can devise ad-hoc solutions. Namely, for {\sc $K_t$-Subgraph-Cover}, {\sc Weighted Vertex Cover/Independent Set}, {\sc Odd Cycle Transversal}, and {\sc Maximum Weighted Cut} parameterized by \btw, we overcome the above issue by managing to replace the children with constant-sized bipartite gadgets.
More specifically, we guess an annotation of the ``apex'' vertices of each bag $t$, whose number is bounded by \btw, that essentially tells which of these vertices go to the solution or not (with some extra information depending on each particular problem; for instance, for \textsc{Odd Cycle Transversal}, we also guess the side of the bipartition of the non-solution vertices).
Having this annotation, each adhesion of the considered node $t$ with a child contains, by the definition of bipartite tree decompositions, at most one vertex $v$ that is not annotated.
At this point, we crucially  observe that, for the considered  problems, we can make a local computation for each child, independent from the computations at other children, depending only on the values of the optimum solutions at that child that are required to {\sl contain} or to {\sl exclude} $v$ (note that we need to be able to keep this extra information at the tables of the children). Using the information given by these local computations, we can replace the children of $t$ by constant-sized bipartite gadgets (sometimes empty) so that the newly built graph, which we call a \emph{nice reduction}, is an equivalent instance modulo some constant.
If a nice reduction can be efficiently computed for a problem $\Pi$, then we say that $\Pi$ is a \emph{nice problem} (cf. \autoref{sec:nice}).
The newly modified bag has bounded \oct, so we can then use an \FPT-algorithm parameterized by \oct\ to find the optimal solution with respect to the guessed annotation.

\subparagraph{An illustrative example.} Before entering into some more technical details and general definitions, let us illustrate this idea with the \textsc{Weighted Vertex Cover} problem. The formal definition of bipartite tree decomposition can be found in \autoref{sec_tw} (in fact, for the more general setting of 1-$\Hcal$-treewidth).
For this informal explanation, it is enough to suppose that we want to compute the dynamic programming tables at a bag associated with a node $t$ of the rooted tree decomposition, and that the vertices of the bag at $t$ are partitioned into two sets: $\beta(t)$ induces a bipartite graph and its complement, denoted by $\alpha(t)$, corresponds to the apex vertices, whose size is bounded by the parameter $\btw$. The first step is to guess, in time at most $2^{\btw}$, which vertices in $\alpha(t)$ belong to the desired minimum vertex cover. After such a guess, all the vertices in $\alpha(t)$ can be removed from the graph, by also removing the neighborhood of those that were {\sl not} taken into the solution. The definition of bipartite tree decomposition implies that, in each adhesion with a child of the current bag, there is at most one ``surviving'' vertex. Let $v$ be such a vertex belonging to the adhesion with a child $t'$ of $t$. Suppose that, inductively, we have computed in the tables for $t'$ the following two values, subject to the choice that we made for $\alpha(t)$: the minimum weight $w_v$ of a vertex cover in the graph below $t'$ that contains $v$, and the minimum weight $w_{\bar{v}}$ of a vertex cover in the graph below $t'$ that does {\sl not} contain $v$. Then, the trick is to observe that, having these two values at hand, we can totally forget the graph below $t'$: it is enough to delete this whole graph, except for $v$, and attach a new pendant edge $vu$, where $u$ is a new vertex, such that $v$ is given weight $w_v$ and $u$ is given weight $w_{\bar{v}}$. It is easy to verify that this gadget mimics, with respect to the current bag, the behavior of including vertex $v$ or not in the solution for the child $t'$. Adding this gadget for every child results in a bipartite graph, for which we can just solve \textsc{Weighted Vertex Cover} in polynomial time using a classical algorithm~\cite{ChenKLPGS25maxi}, and add the returned weight to our tables. The running time of this whole procedure, from the leaves to the root of the decomposition, is clearly \FPT parameterized by the bipartite treewidth of the input graph.

\subparagraph{Extensions and limitations.} The algorithm sketched above is problem-dependent, in particular the choice of the gadgets for the children, and the deletion of the neighborhood of the vertices chosen in the solution. Which type of replacements and reductions can be afforded in order to obtain an \FPT-algorithm for bipartite treewidth? For instance, concerning the gadgets for the children, as far as the considered problem can be solved in polynomial time on bipartite graphs, we could attach to the ``surviving'' vertices an arbitrary bipartite graph instead of just an edge. If we assume that the considered problem is \FPT parameterized by \oct (which is a reasonable assumption, as \btw generalizes \oct), then one could think that it may be sufficient to devise gadgets with bounded \oct. Unfortunately, this will not work in general. Indeed, even if each of the gadgets has bounded \oct (take, for instance, a triangle), since we do not have any upper bound, in terms of \btw, on the number of children (hence, the number of different adhesions), the resulting graph after the gadget replacement may have unbounded \oct. In order to formalize the type of replacements and reductions that can be allowed, we introduce in \autoref{fpt} the notions of \emph{nice reduction} and \emph{nice problem}.
Additional insights into these definitions, which are quite lengthy, are provided in \autoref{sec:nice}.
%, along with an illustration depicted in \autoref{fig-nice-prob}.

Another sensitive issue is that of ``guessing the vertices into the solution''. While this is quite simple for \textsc{Weighted Vertex Cover} (either a vertex is in the solution, or it is not), for some other problems we may have to guess a  richer structure in order to have enough information to combine the tables of the children into the tables of the current bag. This is the reason why, in the general dynamic programming scheme that we present in \autoref{fpt}, we deal with \emph{annotated problems}, i.e., problems that receive as input, apart from a graph,
a collection of annotated sets in the form of a partition
$\mathcal{X}$ of some $X\subseteq V(G)$.
For instance, for {\sc Weighted Vertex Cover}, we define its \emph{annotated extension}, which we call {\sc Annotated Weighted Vertex Cover}, that takes as input a graph $G$ and two disjoint sets $R$ and $S$ of vertices of $G$, and asks for a minimum vertex cover $S^\star$ such that $S\subseteq S^\star$ and $S^\star\cap R=\emptyset$.

\subparagraph{General dynamic programming scheme.}
Our general scheme
% dynamic programming algorithm on bipartite tree decompositions
essentially says that if a problem $\Pi$ has an annotated extension $\Pi'$ that is
\begin{itemize}
\item a nice problem and
\item solvable in \FPT-time parameterized by \oct,
\end{itemize}
then $\Pi$ is solvable in \FPT-time parameterized by \btw.
More specifically, it is enough to prove that $\Pi'$ is solvable in time $f(|X|)\cdot n^{\Ocal(1)}$ on an instance $(G,\Xcal)$ such that $G\setminus X$ is bipartite, where $\Xcal$ is a partition of $X$ corresponding to the annotation.
This general dynamic programming algorithm
% actually holds for more general $\Hcal$,
% LJ
works in a wider setting, namely for a general graph class $\Hcal$ that plays the role of bipartite graphs, with the additional condition that the annotated extension $\Pi'$ is ``{$\Hcal$-nice}'''; cf. \autoref{DP} for the details.

\subparagraph{Applications.}
We then apply this general  framework to obtain parameterized algorithms for several problems parameterized by bipartite treewidth.
For each of {\sc $K_t$-Subgraph-Cover} (\autoref{sec-kt}), {\sc Weighted Vertex Cover} {\sc/Independent Set} (\autoref{sec-mis}), {\sc Odd Cycle Transversal} (\autoref{sec-algo-oct}), and {\sc Maximum Weighted Cut} (\autoref{sec-cut}), we prove that the problem has an annotated  extension that is 1) nice and 2) solvable in \FPT-time parameterized by \oct, as discussed above.

To prove that an annotated problem has a nice reduction, we essentially use two ingredients.
Given two compatible boundaried graphs $\bf F$ and $\bf G$ with boundary $X$ (a boundaried graph is essentially a graph along with some labeled vertices that form a boundary, see the formal definition in \autoref{sec_bound}), an annotated problem is usually nice if the following hold:
\begin{itemize}
\item (\emph{Gluing property}) Given that we have guessed the annotation $\Xcal$ in the boundary $X$, a solution compatible with the annotation is optimal in the graph ${\bf F}\oplus{\bf G}$ obtained by gluing $\bf F$ and $\bf G$ if and only if it is optimal in each of the two glued graphs. In this case, it means that the optimum on $({\bf F}\oplus{\bf G},\Xcal)$ is equal to the optimum on $(F,\Xcal)$ modulo some constant depending only on $G$ and $\Xcal$.
\item (\emph{Gadgetization}) Given that we have guessed the annotation in the boundary $X\setminus\{v\}$ for some vertex $v$ in $X$, there is a small boundaried graph $G'$, that is bipartite (maybe empty), such that the optimum on $({\bf F}\oplus{\bf G},\Xcal)$ is equal to the optimum on $({\bf F}\oplus{\bf G'},\Xcal)$ modulo some constant depending only on $G$ and $\Xcal$.
\end{itemize}

The gluing property seems critical to show that a problem is nice.
This explains why we solve {\sc $H$-Subgraph-Cover} only when $H$ is a clique.
For any graph $H$, {\sc Annotated $H$-Subgraph-Cover} is defined similarly to {\sc Annotated Weighted Vertex Cover} by specifying vertices that must or must not be taken in the solution.
If $H$ is a clique, then we crucially use the fact that $H$ is a subgraph of ${\bf F}\oplus{\bf G}$ if and only if it is a subgraph of either $F$ or $G$ to prove that {\sc Annotated $H$-Subgraph-Cover} has the gluing property.
However, we observe that if $H$ is not a clique, then {\sc Annotated $H$-Subgraph-Cover} does not have the gluing property (see \autoref{no-glu-h}). This is the main difficulty that we face to solve {\sc $H$-Subgraph-Cover} in the general case.

Note also that if we define in a similar fashion the annotated extension of {\sc Odd Cycle Transversal} (that is, a set $S$ of vertices contained in the solution and a set $R$ of vertices that do not belong to the solution), then we can prove that this annotated extension does not have the gluing property. However, if we define {\sc Annotated Odd Cycle Transversal} as the problem that takes as input a graph $G$ and three disjoint sets $S,X_1,X_2$ of vertices of $G$ and aims at finding an odd cycle transversal $S^\star$ of minimum size such that $S\subseteq S^\star$ and $X_1$ and $X_2$ are on different sides of the bipartition obtained after removing $S^\star$, then {\sc Annotated Odd Cycle Transversal} does have the gluing property (see \autoref{glu-oct}).

For {\sc Maximum Weighted Cut}, the annotation is pretty straightforward: we use two annotation sets $X_1$ and $X_2$, corresponding to the vertices that will be on each side of the cut.  It is easy to see that this annotated extension has the gluing property.

Finding the right gadgets is the main difficulty to prove that a problem is nice.
As explained above, for {\sc Annotated Weighted Vertex Cover}, we replace the boundaried graph $\bf G$ by an edge that simulates the behavior of $G$ with respect to $v$, which is the only vertex that interest us (see \autoref{obs-mis}).
For {\sc Annotated Maximum Weighted Cut}, if $\Xcal=(X_1,X_2)$, the behavior of $G$ can be simulated by an edge between $v$ and a vertex in $X_1$ of weight equal to the optimum on $(G,(X_1,X_2\cup\{v\}))$ and an edge between $v$ and a vertex in $X_2$ of weight equal to the optimum on $(G,(X_1\cup\{v\},X_2))$ (see \autoref{obs+cut}).
For {\sc Annotated $K_t$-Subgraph-Cover}, if $\Xcal=(R,S)$, depending on the optimum on $(G,(R\cup\{v\},S))$ and the one on $(G,(R,S\cup\{v\}))$, we can show that the optimum on $({\bf F}\oplus{\bf G},\Xcal)$ is equal to the optimum on $(F,\Xcal)$ or $(F\setminus\{v\},\Xcal)$ modulo some constant (see \autoref{obs+1}).
For {\sc Annotated Odd Cycle Transversal}, if $\Xcal=(S,X_1,X_2)$, we can show that the optimum on $({\bf F}\oplus{\bf G},\Xcal)$ is equal modulo some constant to the optimum on either $(F,\Xcal)$, or $(F\setminus\{v\},\Xcal)$, or $(F',\Xcal)$, where $F'$ is obtained from $F$ by adding an edge between $v$ and either a vertex of $X_1$ or a vertex of $X_2$ (see \autoref{obs+oct}).

Finally, let us now mention some particular ingredients used to prove that the considered annotated problems are solvable in time $f(|X|)\cdot n^{\Ocal(1)}$ on an instance $(G,\Xcal)$ such that $G\setminus X$ is bipartite, where $\Xcal$ is a partition of a vertex set $X$ corresponding to the annotation.
For {\sc Annotated $K_t$-Subgraph-Cover} and {\sc Annotated Weighted Vertex Cover}, this is simply a reduction to {\sc (Weighted) Vertex Cover} on bipartite graphs.
For {\sc Odd Cycle Transversal}, we adapt the algorithm of Reed, Smith, and Vetta~\cite{ReedSV04find} that uses iterative compression to solve {\sc \Annotated Odd Cycle Transversal} in \FPT-time parameterized by \oct, so that it takes annotations into account (\autoref{lem-oct-annot}).
As for {\sc Maximum Weighted Cut} parameterized by \oct, the most important trick is to reduce to a $K_5$-odd-minor-free graph, and then use known results of Gr{\"{o}}tschel and Pulleyblank \cite{GrotschelP81weak} and Guenin \cite{Guenin01acha} to solve the problem in polynomial time (\autoref{k5}).

\subsubsection{\XP-algorithms}
\label{sec-overview-XP}

We now sketch some of the basic ingredients of the \XP-algorithms that we present in \autoref{xp} for {\sc $H$-(Induced-)Subgraph/Odd-Minor/Scattered-Packing}. The main observation is that,
if $H$ is 2-connected and non-bipartite, since the ``non-apex'' part of each bag is bipartite and $H$ is non-bipartite, in any $H$-subgraph/induced/scattered/odd-minor-packing and every bag of the decomposition, there are at most $\btw$ occurrences of $H$ that intersect that bag.
We thus guess these occurrences, and how they intersect the children, which allows us to reduce the number of children by just deleting those not involved in the packing. The guess of these occurrences is the dominant term in the running time of the resulting \XP-algorithm using this method. Note that for {\sc $H$(-Induced)-Subgraph/Scattered-Packing}, we can indeed easily guess those occurrences in \XP-time parameterized by \btw, as the total size of the elements of the packing intersecting a given bag is bounded by a function of \btw and $H$. However, for {\sc $H$-Odd-Minor-Packing} this is not the case anymore, as an element of the packing may contain an arbitrary number of vertices in the bipartite part of a bag. We overcome this issue as follows. The existence of an $H$-odd-minor is equivalent to the existence of a so-called \emph{odd $H$-expansion}, which  is essentially a collection of trees connected by edges preserving the appropriate parities of the resulting cycles (cf. \autoref{odd-eq}). In an odd $H$-expansion, the \emph{branch vertices} are those that have degree at least three, or that are incident to edges among different trees (cf. \autoref{sec-algo-odd-packing}). Note that, in an odd $H$-expansion, the number of branch vertices depends only on $H$. Equipped with this property, we first guess, at a given bag, the branch vertices of the packing that intersect that bag. Note that this indeed yields an \XP number of choices, as required. Finally, for each such choice, we use an \FPT-algorithm of Kawarabayashi, Reed, and Wollan~\cite{KawarabayashiRW11} solving the {\sc Parity $k$-Disjoint Paths} to check whether the guessed packing exists or not. This approach is formalized in \autoref{lem-algo-odd-minor-packing}.

It is worth mentioning that, as will be discussed in \autoref{sec-conclusions}, we leave as an open problem the existence of  \FPT-algorithms for the above packing problems parameterized by \btw.

\subsection{Hardness results}
\label{sec-overview-hardness}

Finally, we discuss some of the tools that we use to obtain  the  \textsf{para-}\NP-completeness results summarized in \autoref{table-results}, which can be found in \autoref{bip}. We present a number of different reductions, some of them consisting of direct simple reductions, such as the one we provide for \textsc{$3$-Coloring} in \autoref{lem-hard-coloring}.

Except for \textsc{$3$-Coloring}, all the considered problems fall into two categories: covering or packing problems. For the first family (cf. \autoref{sec-covering}), the \textsf{para-}\NP-completeness is an immediate consequence of a result of Yannakakis~\cite{Yannakakis81node} that characterizes hereditary graph classes $\Gcal$ for which \PbCov\ on bipartite graphs is polynomial-time solvable and those for which \PbCov\ remains \NP-complete (cf. \autoref{yan})

For the packing problems (cf. \autoref{sec-packing-hardness}), we do not have such a general result as for the covering problems, and we provide several reductions for different problems. For instance, we prove in \autoref{pack-bip} that if $H$ is a bipartite graph containing $P_3$ as a subgraph, then {\sc $H$-Subgraph-Packing} and {\sc $H$-Induced-Subgraph-Packing} are \NP-complete on bipartite graphs. The proof of \autoref{pack-bip} consists of a careful analysis and a slight modification of a reduction of Kirkpatrick and Hell~\cite{KirkpatrickH83onth} for the problem of partitioning the vertex set of an input graph $G$ into subgraphs isomorphic to a fixed graph $H$. The hypothesis about containing $P_3$ is easily seen to be tight.
We derive from \autoref{pack-bip} a similar result for {\sc $H$-Minor-Packing} in \autoref{pack-bip-min}.
%
%For the minor version, we prove in \autoref{pack-bip-min} that if  $H$ is a 2-connected graph with at least three vertices, then {\sc $H$-Minor-Packing} is \NP-complete on bipartite graphs. The proof uses a reduction from {\sc $P_3$-Subgraph-Packing} on bipartite graphs, which was proved to be \NP-complete by Monnot and Toulouse~\cite{MonnotT07}. The 2-connectivity of $H$ is crucially used in the proof.
Given that odd-minors preserve cycle parity (\autoref{odd-eq}), when $H$ is bipartite,
{\sc $H$-Odd-Minor-Packing} and {\sc $H$-Minor-Packing} are the same problem on bipartite graphs (\autoref{pack-bip-odd}).
Hence, the same hardness result holds for {\sc $H$-Odd-Minor-Packing} when $H$ is bipartite and contains $P_3$ as a subgraph.

In \autoref{pack-bip-scat} we prove that, if
 $H$ is a 2-connected bipartite graph with at least one edge,
then {\sc $H$-Scattered-Packing} is \NP-complete on bipartite graphs, by a simple reduction from the  {\sc Induced Matching} on bipartite graphs, which is known to be \NP-complete~\cite{Cameron89indu}.

Finally, in \autoref{lem-paraNP-non-bip} we prove that if $H$ is a 2-connected graph containing an edge and $q\in\bN_{\geq 2}$,
then {\sc $H$-Scattered-Packing} is {\sf para-NP}-complete parameterized by $q$-$\Bcal$-treewidth.
In fact, this reduction is exactly the same as the one of \autoref{pack-bip-scat}, with the extra observation that, if $G'$ is the graph constructed in the reduction, then the ``bipartite'' treewidth of $G'$ is at most the one of $H$ for $q\geq 2$.

\section{Definitions}\label{def}

In this section we give some definitions and preliminary results.
Basic definitions are given in \autoref{sec_basic}.
Odd-minors are defined in \autoref{sec_odd}.
Treewidth, bipartite treewidth, and other treewidth-like parameters are defined in \autoref{sec_tw}.
Finally, boundaried graphs are defined in \autoref{sec_bound}.

\subsection{Basic definitions}\label{sec_basic}

\subparagraph{Sets and integers.}
We denote by $\bN$ the set of non-negative integers.
Given two integers $p$ and $q$, the set $[p,q]$ contains every integer $r$ such that $p\leq r\leq q$.
For an integer $p\geq 1$, we set $[p]=[1,p]$ and $\bN_{\geq p}=\bN\setminus [0,p-1]$.
For a set $S$, we denote by $2^{S}$ the set of all subsets of $S$ and, given an integer $r\in[|S|]$,
we denote by $\binom{S}{r}$ the set of all subsets of $S$ of size exactly $r$.

\subparagraph{Parameterized complexity.}
A parameterized problem is a language $L\subseteq\Sigma^\star\times\bN$, where $\Sigma^\star$ is a set of strings over a finite alphabet $\Sigma$.
An input of a parameterized problem is a pair $(x, k)$, where $x$ is a string over $\Sigma$ and $k\in\bN$ is a parameter.
A parameterized problem is \emph{fixed-parameter tractable} (or \FPT) if it can be solved
in time $f (k) \cdot |x|^{\Ocal(1)}$ for some computable function $f$.
A parameterized problem is \XP\ if it can be solved
in time $f (k) \cdot |x|^{g(k)}$ for some computable functions $f$ and $g$.
A parameterized problem is {\sf para-NP}-complete if it is \NP-complete for some fixed value $k$ of the parameter.

\subparagraph{Partitions.}
Given $p\in\bN$, a \emph{$p$-partition} of a set $X$ is a tuple $(X_1,\ldots,X_p)$ of pairwise disjoint subsets of $X$ such that $X=\bigcup_{i\in[p]}X_i$.
We denote by $\Pcal_p(X)$ the set of all $p$-partitions of $X$.
Given a partition $\Xcal\in P_p(X)$, its domain $X$ is also denoted as $\cup \Xcal$.
A \emph{partition} is a $p$-partition for some $p\in\bN$.
Note that here, we allow empty sets in a $p$-partition and that the order matters.
Given $Y\subseteq X$, $\Xcal=(X_1,\ldots,X_p)\in\Pcal_p(X)$, and $\Ycal=(Y_1,\ldots,Y_p)\in\Pcal_p(Y)$, we say that $\Ycal\subseteq \Xcal$ if $Y_i\subseteq X_i$ for each $i\in[p]$.
Given a set $U$, two subsets $X,A\subseteq U$, and $\Xcal=(X_1,\ldots,X_p)\in\Pcal_p(X)$, $\Xcal\cap A$ denotes the partition $(X_1\cap A,\ldots,X_p\cap A)$ of $X\cap A$.

\subparagraph{Functions.}
Given two sets $A$ and $B$, and two functions $f,g:A\to2^B$, we denote by $f\cup g$ the function that maps $x\in A$ to $f(x)\cup g(x)\in2^B$.
Let $f:A\to B$ be an injection. Let $K\subseteq B$ be the image of $f$. By convention, if $f$ is referred to as a bijection, it means that $f$ maps $A$ to $K$.
Given a function $w:A\to\bN$, and $A'\subseteq A$, $w(A')=\sum_{x\in A'}w(x)$.

\subparagraph{Basic concepts on graphs.}
All graphs considered in this paper are undirected, finite, and without loops or parallel edges.
We use standard graph-theoretic notation and we refer the reader to \cite{Diestel10grap} for any undefined terminology.
For convenience, we use $uv$ instead of $\{u,v\}$ to denote an edge of a graph.
Let $G$ be a graph. In the rest of this paper we always use $n$ for the cardinality of $V(G)$,
and $m$ for the cardinality of $E(G)$, where $G$ is the input graph of the problem under consideration.
For $S \subseteq V(G)$, we set $G[S]=(S,E\cap \binom{S}{2} )$ and use the shortcut $G \setminus S$ to denote $G[V(G) \setminus S]$.
Given a vertex $v\in V(G)$, we denote by $N_{G}(v)$ the set of vertices of $G$ that are adjacent to $v$ in $G$.
Moreover, given a set $A\subseteq V(G)$, $N_{G}(A)=\bigcup_{v\in A}N_{G}(v)\setminus A$.
For $k\in\bN$, we denote by $P_k$ the path with $k$ vertices, and we say that $P_k$ has length $k-1$ (i.e., the \emph{length} of a path is its number of edges).
We denote by $\cc(G)$ the set of connected components of a graph $G$.
For $A,B\subseteq V(G)$, $E(A,B)$ denotes the set of edges of $G$ with one endpoint in $A$ and the other in $B$.
We say that $E'\subseteq E(G)$ is an \emph{edge cut} of $G$ if there is a partition $(A,B)$ of $V(G)$ such that $E'=E(A,B)$.
We say that a pair $(L,R)\in 2^{V(G)}\times 2^{V(G)}$ is a {\em separation} of $G$
if $L\cup R=V(G)$ and $E(L\setminus R,R\setminus L)=\emptyset$.
The \emph{order} of $(L,R)$ is $|L\cap R|$.
$L\cap R$ is called a \emph{$|L\cap R|$-separator} of $G$.
A graph $G$ is \emph{$k$-connected} if, for any separation $(L,R)$ of $G$ of order at most $k-1$, either $L\subseteq R$ or $R\subseteq L$.
A \emph{cut vertex} in a connected graph $G$ is a vertex whose removal disconnects $G$.
A \emph{block} of $G$ is a maximal connected subgraph of $G$ without a cut vertex.
A graph class $\Hcal$ is \emph{monotone} if the subgraphs of graphs in $\Hcal$ are also in $\Hcal$.
A graph class $\Hcal$ is \emph{hereditary} if the induced subgraphs of graphs in $\Hcal$ are also in $\Hcal$.
%A graph class $\Hcal$ is \emph{hereditary} if for any $G\in \Hcal$ and $v\in V(G)$, $G\setminus\{v\}\in\Hcal$.
The \emph{torso} of a set $X\subseteq V(G)$, denoted by $\torso_G(X)$, is the graph obtained from $G[X]$ by making $N_G(C)$ a clique for each $C\in\cc(G\setminus X)$.
Given two graphs $G_1$ and $G_2$, and $q\in\bN$, a \emph{$q$-clique-sum} of $G_1$ and $G_2$ is obtained from their disjoint union by identifying a $q$-clique of $G_1$ with a $q$-clique of $G_2$, and then possibly deleting some edges of that clique.
A graph class $\Gcal$ is \emph{closed under $q$-clique-sums} if for each $G_1,G_2\in\Gcal$, any $q$-clique-sum of $G_1$ and $G_2$ also belongs to $\Gcal$.

\subsection{Odd-minors}\label{sec_odd}

\subparagraph{Colorings.} A \emph{coloring} of a graph $G$ is a function $c:V(G)\to\bN$.
Given $v\in V(G)$, $c(v)$ is called the \emph{color} of $v$ by $c$.
Given $k\in\bN$, a \emph{$k$-coloring} is a coloring $c:V(G)\to[k]$.
Given a coloring $c$ of a graph $G$ and an edge $uv\in E(G)$, we say that $uv$ is \emph{monochromatic} if $c(u)=c(v)$. Otherwise, we say the $uv$ is \emph{bichromatic}.
A coloring $c$ of a graph $G$ is said to be \emph{proper} if every edge of $G$ is bichromatic.
We say that a graph $G$ is \emph{$k$-colorable} if there exists a proper $k$-coloring of $G$.

\subparagraph{Minors and odd-minors.}
Let $G$ and $H$ be two graphs.
An \emph{$H$-expansion} in $G$ is a function $\eta$ with domain $V(H)\cup E(H)$ such that:
\begin{itemize}
	\item for every $v\in V(H)$, $\eta(v)$ is a subgraph of $G$ that is a tree $T_v$, called \emph{node} of $\eta$, such that each leaf of $T_v$ is adjacent to a vertex of another node of $\eta$, and $\eta(v)$ is disjoint from $\eta(w)$ for distinct $v,w\in V(H)$, and
	\item for every $uv\in E(H)$, $\eta(uv)$ is an edge $u'v'$ in $G$, called \emph{edge} of $\eta$, such that $u'\in V(\eta(u))$ and $v'\in V(\eta(v))$.
\end{itemize}
We denote by $\bigcup\eta$ the subgraph $\bigcup_{x\in V(H)\cup E(H)}\eta(x)$ of $G$.
Given a cycle $C$ in $H$, we set $\eta(C)$ to be the unique cycle in $G$ intersecting exactly the edges $\eta(uv)$ of $\eta$ such that $uv\in E(C)$.

If there is an $H$-expansion in $G$, then we say that $H$ is a \emph{minor} of $G$.
The \emph{contraction} of an edge $uv$ in a simple graph $G$ results in a simple graph $G'$
obtained from $G \setminus \{u,v\}$ by adding a new vertex $w$ adjacent to all the vertices
in the set $N_G(\{u,v\})$.
Equivalently, a graph $H$ is a \emph{minor} of a graph $G$ if $H$ can be obtained from a subgraph of $G$ by contracting edges.
We call such a subgraph \emph{model} of $H$ in $G$.
Note that the image of $\eta$ is a model of $H$ in $G$.
We say that a graph $G$ is \emph{$H$-minor-free} if $G$ excludes the graph $H$ as a minor.

\begin{lemma}\label{odd-eq}
Let $G$ and $H$ be two graphs.
The following statements are equivalent.
\begin{enumerate}
\item There is an $H$-expansion $\eta$ in $G$ and a 2-coloring of $\bigcup\eta$ that is proper in each node of $\eta$ and such that each edge of $\eta$ is monochromatic.
\item There is an $H$-expansion $\eta$ in $G$ such that
every cycle in $\bigcup\eta$ has an even number of edges in $\bigcup_{v\in V(H)}\eta(v)$.
\item There is an $H$-expansion $\eta$ in $G$ such that the length of
every cycle $C$ in $H$ has the same parity as the length of the cycle $\eta(C)$ in $\bigcup\eta$.
\item $H$ can be obtained from a subgraph of $G$ by contracting each edge of an edge cut.
\end{enumerate}
\end{lemma}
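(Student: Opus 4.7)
My plan is to prove the chain $(4)\Rightarrow (1)\Rightarrow (2)\Rightarrow (3)\Rightarrow (1)\Rightarrow (4)$, which collapses to the four-way equivalence once $(1)\Leftrightarrow (4)$ is in hand. The backbone throughout is the dictionary between the 2-coloring in (1) and the bipartition witnessing the edge cut in (4): within $\bigcup\eta$, properness inside each $T_v=\eta(v)$ forces the tree edges to be bichromatic, while condition (1) requires each cross edge $\eta(uv)$ to be monochromatic. Thus the color classes form a bipartition $(A,B)$ of $V(\bigcup\eta)$ whose cut is exactly the union of all tree edges, and contracting this cut merges each $T_v$ into a single vertex while keeping every $\eta(uv)$, recovering $H$. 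Conversely, given $G'\subseteq G$ and an edge cut $E'=E_{G'}(A,B)$ whose contraction is $H$, the connected components of $(V(G'),E')$ are in natural bijection with $V(H)$; taking $\eta(v)$ to be a spanning tree of the $v$-th component (all of whose edges then lie in $E'$) and $\eta(uv)$ to be any $G'$-edge outside $E'$ between the corresponding components gives an $H$-expansion, and coloring $A$ with $0$ and $B$ with $1$ verifies (1). The minor nuisance that each leaf of $T_v$ must be incident to some cross edge is handled by iteratively pruning leaves not used by any $\eta(uv)$.

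The implication $(1)\Rightarrow (2)$ is immediate: walking along any cycle in $\bigcup\eta$, each tree edge flips the current color while each cross edge preserves it, so returning to the starting color forces the number of tree edges to be even. For $(2)\Rightarrow (3)$, each cycle $C=v_0v_1\cdots v_{k-1}v_0$ of $H$ gives rise to a canonical cycle $\eta(C)$ in $\bigcup\eta$ obtained by alternating cross edges $\eta(e)$ (for $e\in E(C)$) with tree paths inside the $\eta(v_i)$'s; its length equals $|E(C)|$ plus the number of tree edges it uses, and (2) forces the latter count to be even, so the lengths of $C$ and $\eta(C)$ have the same parity.

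The main obstacle is the converse $(3)\Rightarrow (1)$, which asks for a globally consistent 2-coloring from only local parity information about $H$-cycles. I will proceed by a signed-graph balance argument. Fix an arbitrary proper 2-coloring $\chi_v$ of each tree $T_v$ (available since trees are bipartite) and, for every $uv\in E(H)$ with $\eta(uv)$ having endpoints $x\in V(T_u)$ and $y\in V(T_v)$, set the label $\ell_{uv}=\chi_u(x)\oplus \chi_v(y)\in\{0,1\}$. I then seek flip bits $b\colon V(H)\to\{0,1\}$ satisfying $b(u)\oplus b(v)=\ell_{uv}$ for every edge of $H$, since the recoloring sending $z\in V(T_v)$ to $\chi_v(z)\oplus b(v)$ is then proper in each $T_v$ and monochromatic on every cross edge, yielding (1). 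Such a $b$ exists iff for every cycle $C=v_0v_1\cdots v_{k-1}v_0$ of $H$ one has $\bigoplus_i \ell_{v_iv_{i+1}}=0$. The key computation is that this XOR telescopes to $\sum_i d_i \pmod 2$, where $d_i$ is the length of the tree path inside $T_{v_i}$ used by $\eta(C)$, because in a tree the parity of the distance between two vertices equals the XOR of their $\chi$-values. Since the length of $\eta(C)$ equals $|E(C)|+\sum_i d_i$, assumption (3) forces $\sum_i d_i$ to be even, so every cycle sum vanishes and the required $b$ exists.

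The remaining implication $(1)\Rightarrow (4)$ is the forward half of $(1)\Leftrightarrow (4)$ already discussed. The two pieces I expect to require a bit of care are the telescoping computation in the $(3)\Rightarrow (1)$ step and the verification that leaf-pruning in $(4)\Rightarrow (1)$ preserves all expansion conditions together with the properness/monochromaticity properties; both are routine once the setup above is in place.
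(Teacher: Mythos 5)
Your proof is correct. Three of your five implications coincide with the paper's: the dictionary between the 2-coloring of (1) and the bipartition witnessing the edge cut of (4) (tree edges are exactly the bichromatic edges, so the color classes realize the cut whose contraction collapses each $T_v$ to a point), the parity-walk argument for $(1)\Rightarrow(2)$, and the ``$\mathrm{length}(\eta(C))=|E(C)|+\sum_i d_i$'' bookkeeping behind $(2)\Rightarrow(3)$. Where you genuinely depart from the paper is in closing the loop: the paper proves $(2)\Rightarrow(1)$ by greedily extending a proper 2-coloring tree by tree and deriving a contradiction from an odd count of bichromatic edges on a cycle if the extension gets stuck, whereas you prove $(3)\Rightarrow(1)$ by a signed-graph balance argument --- fix arbitrary proper 2-colorings $\chi_v$ of the trees, define the $\mathrm{GF}(2)$ labels $\ell_{uv}=\chi_u(x)\oplus\chi_v(y)$ on $E(H)$, and solve $b(u)\oplus b(v)=\ell_{uv}$, which is possible exactly when all cycle sums vanish; your telescoping identity $\bigoplus_C \ell \equiv \sum_i d_i \pmod 2$ reduces that to hypothesis (3). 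Your version buys a cleaner and more rigorous consistency check (it reduces to the standard fact that a $\mathrm{GF}(2)$ edge labeling is a coboundary iff it vanishes on the cycle space, verifiable on fundamental cycles), while the paper's greedy extension is shorter but stated rather informally. You are also more careful than the paper on the minor point that leaves of each $T_v$ must be adjacent to another node in the $(4)\Rightarrow(1)$ direction, which you handle by pruning; the paper glosses over this. No gaps.
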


\begin{proof}
See \autoref{fig_odd-minor} to get some intuition.

{\bf 1 $\Rightarrow$ 2:} Let $\eta$ be an $H$-expansion in $G$ with a 2-coloring $c$ of $\bigcup\eta$ that is proper in each node of $\eta$ and such that each edge of $\bigcup\eta$ is monochromatic.
$\bigcup\eta$ is a subgraph of $G$.
The edges of $\bigcup_{v\in V(H)}\eta(v)$ are exactly the bichromatic edges of $\eta$.
Let $C$ be a cycle in $\bigcup\eta$.
We transform $C$ into a directed cyclic graph $C'$.
The bichromatic edges in $C'$ have alternatively color 1-2 and color 2-1.
Thus, since $C'$ is a cycle, the number of edges 1-2 and 2-1 is equal.
Hence, the number of bichromatic edges in $C$ is even.

{\bf 2 $\Rightarrow$ 1:} Let $\eta$ be an $H$-expansion in $G$ such that
every cycle in $\bigcup\eta$ has an even number of edges in $\bigcup_{v\in V(H)}\eta(v)$.
Let $v$ be an arbitrary vertex of $H$. We color $\eta(v)$ greedily to obtain a proper 2-coloring of the node.
Since $\eta(v)$ is a tree, there is only one proper 2-coloring up to isomorphism.
We extend this isomorphism greedily to the entire $\bigcup\eta$ so that each edge of $\eta$ is monochromatic and each node of $\eta$ is properly 2-colored.
Assume that there is a vertex $v$ of $\bigcup\eta$ that is not colorable by this greedy approach.
Then $v$ is part of a cycle $C$ in $\eta$ such that each other vertex of $C$ is colored, but the neighbors $u$ and $w$ of $v$ in $C$ give contradictory instructions for the coloring of $v$.
If $u$ has color $c_u$ and $w$ has color $c_w$ with $c_u\ne c_w$ (resp. $c_u=c_w$), then, given that $C$ has an even number of bichromatic edges, this implies that exactly one of $uv$ and $vw$ is bichromatic (resp. $uv$ and $vw$ are either both monochromatic or both bichromatic).
%If both $uv$ and $vw$ should both be monochromatic edges, but $u$ has color 1 and $v$ has color 2, then this means that $C$ has an odd number of bichromatic edges, i.e., edges in $\bigcup_{v\in V(H)}\eta(v)$. Hence the contradiction.
%A similar argument holds for any state (monochromatic or bichromatic) for $uv$ and $vw$.
Thus, $v$ can be colored greedily.
Therefore, $\eta$ is an $H$-expansion in $G$ with a 2-coloring $c$ of $\bigcup\eta$ that is proper in each node of $\eta$ and such that each edge of $\eta$ is monochromatic.

{\bf 2 $\Leftrightarrow$ 3:} Let $\eta$ be an $H$-expansion in $G$.
By definition, there is a one-to-one correspondence between the edges of $\eta$ and the edges of $H$.
Therefore, $C$ is a cycle of $H$ if and only if $\eta(C)$ is a cycle of $\bigcup\eta$, and
there are as many edges of $\eta$ in $\eta(C)$ as the number of edges in $C$.
The other edges of $\eta(C)$ are in the nodes of $\eta$, i.e., in $\bigcup_{v\in V(H)}\eta(v)$.
Thus,
every cycle in $\bigcup\eta$ has an even number of edges in $\bigcup_{v\in V(H)}\eta(v)$ if and only if the length of
every cycle $C$ in $H$ has the same parity as the length of the cycle $\eta(C)$ in $\bigcup\eta$.

{\bf 1 $\Rightarrow$ 4:} Let $\eta$ be an $H$-expansion in $G$ with a 2-coloring $c$ of $\bigcup\eta$ that is proper in each node of $\eta$ and such that each edge of $\eta$ is monochromatic.
$\bigcup\eta$ is a subgraph of $G$.
Let $X_1$ and $X_2$ be the sets of vertices of $\eta$ with color 1 and 2 respectively.
By the definition of $c$, the set $E'=E(X_1,X_2)$ contains exactly the edges that are inside the nodes of $\eta$, and none of the edges joining two nodes of $\eta$.
Then $E'$ is an edge cut (between $X_1$ and $X_2$) of $\bigcup\eta$ and, by contracting $E'$ in $\bigcup\eta$, we obtain $H$.

{\bf 4 $\Rightarrow$ 1:} Let $G'$ be a subgraph of $G$ and $E'$ be an edge cut of $G'$ such that $H$ can be obtained by contracting $E'$ in $G'$.
Let $E''=E(G')\setminus E'$.
Let $G''$ be a graph obtained from $G'$ by removing edges in $E'$ such that every connected component of $G''\setminus E''$ is a spanning tree.
Then there is an $H$-expansion $\eta$ in $G$ such that $\bigcup\eta=G''$ and the edges of $\eta$ are exactly the edges in $E''$.
Let $(X_1,X_2)$ be the partition of $V(G')$ witnessing the edge cut $E'$.
Then, if we give color 1 to the vertices of $X_1$ and color 2 to the vertices of $X_2$, there is a proper 2-coloring of every node of $\eta$ and each edge of $\eta$ is monochromatic.
 \end{proof}

\begin{figure}[h]
\center
\includegraphics[scale=0.7]{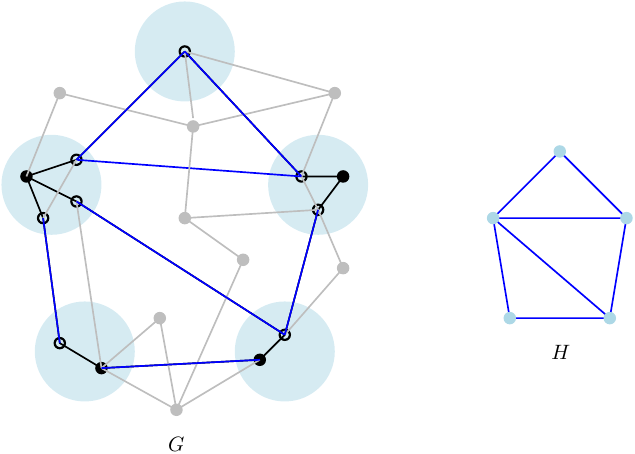}
\caption{An odd $H$-expansion $\eta$ in a graph $G$. The nodes of $\eta$ are the subgraphs in the blue disks, and the edges of $\eta$ are the blue edges in $G$.}
\label{fig_odd-minor}
\end{figure}

An $H$-expansion for which one of the statements of \autoref{odd-eq} is true is called an \emph{odd $H$-expansion} (see \autoref{fig_odd-minor} for an illustration).
Using Statement 3, we say that such an $H$-expansion in $G$ \emph{preserves cycle parity}.
If there is an $H$-expansion in $G$, then we say that $H$ is an \emph{odd-minor} of $G$.
Note that if $H$ is an odd-minor of $G$, then $H$ is a minor of $G$. However, the opposite does not always hold. For instance, $K_3$ is a minor of $C_4$, but is not an odd-minor of $C_4$.
We say that a graph $G$ is \emph{$H$-odd-minor-free} if $G$ excludes the graph $H$ as an odd-minor.
In particular, observe that bipartite graphs are exactly the $K_3$-odd-minor-free graphs and that the forests are exactly the $\{K_3,C_4\}$-odd-minor-free graphs.

\subsection{Treewidth-like parameters and preliminary results}\label{sec_tw}

\subparagraph{Treewidth.}
A \emph{tree decomposition} of a graph~$G$
is a pair~$(T,\chi)$ where $T$ is a tree and $\chi: V(T)\to 2^{V(G)}$
such that
\begin{itemize}
	\item $\bigcup_{t \in V(T)} \chi(t) = V(G)$,
	\item for every $e\in E(G)$, there is a $t\in V(T)$ such that $\chi(t)$ contains both endpoints of~$e$, and
	\item for every~$v \in V(G)$, the subgraph of~$T$ induced by $\{t \in V(T)\mid {v \in \chi(t)}\}$ is connected.
\end{itemize}

The {\em width} of $(T,\chi)$ is equal to $\max\big\{\left|\chi(t)\right|-1 \bigmid t\in V(T)\big\}$
and the {\em treewidth} of $G$, denoted by $\tw(G)$, is the minimum width over all tree decompositions of $G$.

For every node $t\in V(T)$, $\chi(t)$ is called \emph{bag} of $t$.
Given $tt'\in E(T)$, the \emph{adhesion} of $t$ and $t'$, denoted by $\adh(t,t')$, is the set $\chi(t)\cap\chi(t')$.

A \emph{rooted tree decomposition} is a triple $(T,\chi,r)$ where $(T,\chi)$ is a tree decomposition and $(T,r)$ is a \emph{rooted tree} (i.e., $T$ is a tree and $r\in V(T)$).
Given $t\in V(T)$, we denote by $\ch_r(t)$ the set of children of $t$ and by $\Par_r(t)$ the parent of $t$ (if $t\neq r$).
We set $\delta_t^r=\adh(t,\Par_r(t))$, with the convention that $\delta_r^r=\emptyset$.
Moreover, we denote by $G_t^r$ the graph induced by $\bigcup_{t'\in V(T_t)}\chi(t')$ where $(T_t,t)$ is the rooted subtree of $(T,r)$ containing all descendants of t.
We may use $\delta_t$ and ${G_t}$ instead of $\delta_t^r$ and ${G_t^r}$ when there is no risk of confusion.\bigskip

While our goal in this article is to study bipartite treewidth, defined below, we provide the following definition in a more general way, namely of a parameter that we call 1-$\Hcal$-treewidth, with the hope of it finding some application in future work.
% We use the term 1-$\Hcal$-treewidth instead of $\Hcal$-treewidth to avoid confusion with the parameter defined in \cite{EibenGHK21meas}.
% LJ
We use the term $1$-$\Hcal$-treewidth to signify that the $\Hcal$-part of each bag intersects each neighboring bag in at most one vertex. This also has the benefit of avoiding confusion with $\Hcal$-treewidth defined in \cite{EibenGHK21meas}, which would be another natural name for this class of parameters.

\subparagraph{$1$-$\Hcal$-treewidth.}
Let $\Hcal$ be a graph class.
A \emph{1-$\Hcal$-tree decomposition} of a graph $G$ is a triple $(T,\alpha,\beta)$, where $T$ is a tree and $\alpha,\beta:V(T)\to 2^{V(G)}$, such that
\begin{itemize}
\item $(T,\alpha\cup\beta)$ is a tree decomposition of $G$, where $\alpha\cup \beta$ maps each $t\in V(T)$ to $\alpha(t)\cup\beta(t)$,
\item for every $t\in V(T)$, $\alpha(t)\cap\beta(t)=\emptyset$,
\item for every $t\in V(T)$, $G[\beta(t)]\in\Hcal$, and
\item for every $tt'\in E(T)$, $|(\alpha\cup\beta)(t')\cap\beta(t)|\leq 1$.
\end{itemize}
The vertices in $\alpha(t)$ are called \emph{apex vertices} of the node $t\in V(T)$.

The \emph{width} of $(T,\alpha,\beta)$ is equal to $\max\big\{\left|\alpha(t)\right|\bigmid t\in V(T)\big\}$.
The \emph{1-$\Hcal$-treewidth} of $G$, denoted by $(1,\Hcal)\text{-}\tw(G)$, is the minimum width over all 1-$\Hcal$-tree decompositions of $G$.

A \emph{rooted 1-$\Hcal$-tree decomposition} is a tuple $(T,\alpha,\beta,r)$ where $(T,\alpha,\beta)$ is a 1-$\Hcal$-tree decomposition and $(T,r)$ is a rooted tree.

Given that $(T,\alpha\cup\beta)$ is a tree decomposition, we naturally extend all definitions and notations concerning treewidth to 1-$\Hcal$-treewidth.

Observe also that a tree decomposition $(T,\chi)$ is also a 1-$\Hcal$-tree decomposition for every graph class $\Hcal$, in the sense that $(T,\chi,o)$ is a 1-$\Hcal$-tree decomposition, where $o:V(T)\to~\emptyset$.
Therefore, for every graph class $\Hcal$ and every graph $G$, $(1,\Hcal)\text{-}\tw(G)\leq\tw(G)+1$.

If $\Hcal$ is the graph class containing only the empty graph, then a 1-$\Hcal$-tree decomposition is exactly a tree decomposition.

\subparagraph{Remark.}
In \cite{EibenGHK21meas}, a parameter with a similar name is defined, namely \emph{$\Hcal$-treewidth}.
The $\Hcal$-treewidth of a graph $G$ is essentially the minimum treewidth of the graph induced by some set $X\subseteq V(G)$ such that the connected components of $G\setminus X$ belong to $\Hcal$.
This actually corresponds to the 0-$\Hcal$-treewidth (minus one), which is defined by replacing the ``1'' by a ``0'' in  the last item of the definition of a 1-$\Hcal$-tree decomposition above.
Indeed, let $(T,\alpha,\beta)$ be a 0-$\Hcal$-tree decomposition of a graph $G$ of width $k$.
Note that, for each distinct $t,t'\in V(T)$, $\beta(t)\cap\beta(t')=\emptyset$.
Let $X=\bigcup_{t\in V(T)}\alpha(t)$.
Then $(T,\alpha)$ is a tree decomposition of $X$ of width $k-1$.
Moreover, for each $t\in V(T)$, $G[\beta(t)]\in\Hcal$, and therefore the connected components of $G\setminus X$ belong to $\Hcal$.

\begin{figure}
\center
\includegraphics{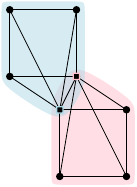}
\caption{A graph of bipartite treewidth one. A corresponding bipartite tree decomposition of width one is depicted, with two bags (one blue and one pink). The apex vertex of each bag is the squared vertex of the same color.}
\end{figure}

\subparagraph{Bipartite treewidth [adapted from \cite{DemaineHK10deco,Tazari12fast}].}
A graph $G$ is \emph{bipartite} if there is a partition $(A,B)$ of $V(G)$ such that $E(G)=E(A,B)$.
We denote the class of bipartite graphs by~$\Bcal$.
We focus here on the case where $\Hcal=\Bcal$. Then, we use the term \emph{bipartite treewidth} instead of 1-$\Hcal$-treewidth, and denote it by $\btw$. As mentioned in the introduction, this definition had already been used (more or less implicitly) in~\cite{DemaineHK10deco,Tazari12fast}.

Given that the bipartite graphs are closed under 1-clique-sums, we have the following.
\begin{observation}\label{obs-biptw-zero}
A graph has bipartite treewidth zero if and only if it is bipartite.
\end{observation}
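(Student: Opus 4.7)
The backward direction is immediate: given a bipartite graph $G$, take the trivial 1-$\Bcal$-tree decomposition $(T,\alpha,\beta)$ where $T$ consists of a single node $t$, $\alpha(t)=\emptyset$, and $\beta(t)=V(G)$. All conditions are satisfied (the last one vacuously, since $T$ has no edges), and the width is $|\alpha(t)|=0$, so $\btw(G)=0$.

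For the forward direction, suppose $(T,\alpha,\beta)$ is a 1-$\Bcal$-tree decomposition of $G$ of width $0$, so $\alpha(t)=\emptyset$ for every $t\in V(T)$. In particular, for every edge $tt'\in E(T)$ the adhesion $\beta(t)\cap\beta(t')=(\alpha\cup\beta)(t')\cap\beta(t)$ has size at most $1$. The plan is to induct on $|V(T)|$. The base case $|V(T)|=1$ gives $G=G[\beta(t)]\in\Bcal$ directly.

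For the inductive step, pick a leaf $t'\in V(T)$ with unique neighbor $t$, and let $v$ be the unique vertex (if any) of $\beta(t')\cap\beta(t)$. Let $G_1=G[\beta(t')]$ and let $G_2$ be the subgraph of $G$ obtained by removing all vertices that appear exclusively in $\beta(t')$; the standard properties of tree decompositions guarantee that $G_2$ is exactly the subgraph induced by the bags of $T-t'$, and hence inherits a 1-$\Bcal$-tree decomposition of width $0$ on the smaller tree $T-t'$. By the inductive hypothesis, $G_2$ is bipartite, and $G_1\in\Bcal$ by hypothesis. Moreover, $G$ is precisely the $1$-clique-sum of $G_1$ and $G_2$ along $v$ (or the disjoint union if $\beta(t')\cap\beta(t)=\emptyset$). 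Since $\Bcal$ is closed under $1$-clique-sums and under disjoint unions, $G\in\Bcal$.

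The only substantive ingredient is the closure of $\Bcal$ under $1$-clique-sums, which is routine: given proper $2$-colorings of $G_1$ and $G_2$, permute the two colors in one of the parts if necessary so that the identified vertex $v$ receives the same color in both, yielding a proper $2$-coloring of the $1$-clique-sum. I do not anticipate any real obstacle; the only care needed is to recognize that the width-$0$ condition forces all adhesions to have size at most $1$, which is exactly what makes the decomposition a tree of $1$-clique-sums of its bags.
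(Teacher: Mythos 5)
Your proof is correct and follows exactly the route the paper intends: the paper states the observation as an immediate consequence of bipartite graphs being closed under $1$-clique-sums (the backward direction being the single-bag decomposition), and your leaf-by-leaf induction is just that argument written out in full.
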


Moreover, Campbell, Gollin, Hendrey, and Wiederrecht \cite{Campbell23odd-} recently announced an \FPT-approximation algorithm to construct a bipartite tree decomposition.

\begin{proposition}[\cite{Campbell23odd-}]\label{prop_seb}
There exist computable functions $f_1,f_2,g:\bN\to\bN$ and an algorithm that, given a graph $G$ and $k\in\bN$, outputs, in time $g(k)\cdot n^4\log n$, either  a report that $\btw(G)\geq f_1(k)$, or a bipartite tree decomposition of $G$ of width at most $f_2(k)$.
\end{proposition}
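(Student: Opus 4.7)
The plan is to adapt the classical FPT-approximation strategy for treewidth (in the spirit of Robertson--Seymour and Reed) to the bipartite setting. At a high level, the algorithm recursively looks for a small apex set $A\subseteq V(G)$ of size bounded by $f_2(k)$ such that $G\setminus A$ admits a balanced separation into almost-bipartite pieces: either such an $A$ is found and used as the apex set $\alpha(t)$ of the current bag of the constructed decomposition, or we certify that $\btw(G)\geq f_1(k)$ by exhibiting an odd-minor obstruction. The recursion is then carried out on each side of the separator.

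The key structural ingredient is a grid-exclusion-type theorem for bipartite treewidth, asserting the existence of a function $h:\bN\to\bN$ such that any graph with $\btw(G)\geq h(k)$ contains a canonical obstruction of size $k$; morally, this obstruction takes the form of an odd $H$-expansion encoding a ``bipartite wall'' threaded by odd cycles in a non-trivial way (this is the role played by the universal obstructions announced in~\cite{Campbell23odd-}, formulated more generally in the language of group-labeled graphs). The subroutine that searches for a good apex set at the current node thus runs a bounded-depth enumeration: it either finds an $A$ of size at most $f_2(k)$ after which a balanced separator of $G\setminus A$ can be extracted by a Menger-type max-flow computation, or, from the failure of this search, constructs one of the canonical obstructions, directly certifying $\btw(G)\geq f_1(k)$.

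Having found $A$ together with a balanced separation $(L,R)$ of $G\setminus A$, the algorithm sets $\alpha(t):=A$ and $\beta(t):=L\cap R$ (possibly after a small cleanup), and recurses on the subgraphs induced by $L$ and $R$. Special care is required to maintain the defining constraint of bipartite tree decompositions, namely that each adhesion with a child contains at most one vertex of $\beta(t)$ (cf.\ the fourth item of the definition of a $1$-$\Hcal$-tree decomposition). This is enforced by grouping the connected components of $G\setminus(A\cup\beta(t))$ according to their attachments in $\beta(t)$, and moving supernumerary attachments into the apex part at the cost of a controlled blow-up absorbed into the ratio $f_2/f_1$. The total running time is dominated by $\Ocal(n)$ recursive calls, each of cost $g'(k)\cdot n^3\log n$ for the separator search and obstruction detection, giving the claimed $g(k)\cdot n^4\log n$ bound.

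The hard part will be establishing the grid-exclusion theorem for bipartite treewidth together with its algorithmic, obstruction-detecting counterpart. Standard treewidth arguments rely on the Robertson--Seymour grid theorem, but odd-minor obstructions are more delicate because they must respect cycle parity, as formalized by \autoref{odd-eq}: the balanced separators produced by the algorithm must interact coherently with the two-coloring witnessing bipartiteness of $\beta(t)$, while simultaneously respecting the at-most-one-vertex adhesion condition. Reconciling these two combinatorial constraints throughout the recursion is the main technical challenge, and this proposition is essentially the algorithmic corollary of the full group-labeled grid theorem of~\cite{Campbell23odd-}.
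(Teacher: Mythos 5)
This statement is not proved in the paper at all: it is imported verbatim from \cite{Campbell23odd-}, where the \FPT-approximation algorithm is obtained in the more general setting of group-labeled graphs, and the present paper only uses it as a black box. So there is no in-paper argument to compare against, and your proposal has to stand on its own. As written, it does not: it defers the entire difficulty to an unproven ``grid-exclusion theorem for bipartite treewidth together with its algorithmic, obstruction-detecting counterpart,'' which is precisely the main theorem of the cited work. Acknowledging in the last paragraph that ``the hard part will be establishing'' this ingredient, and that the proposition ``is essentially the algorithmic corollary of the full group-labeled grid theorem,'' amounts to citing the result you are supposed to prove.

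Beyond that, the concrete recursive scheme you describe would fail to produce a valid bipartite tree decomposition. Setting $\beta(t):=L\cap R$ makes the bipartite part of the bag equal to a balanced separator, but the defining condition of a $1$-$\Bcal$-tree decomposition requires every child's bag to intersect $\beta(t)$ in at most one vertex; a balanced separator of a graph with an apex set removed has no reason to meet each recursive piece in at most one vertex, so you would have to move all but one attachment of each piece into $\alpha(t)$. Your fix --- ``moving supernumerary attachments into the apex part at the cost of a controlled blow-up'' --- is not controlled: the number of components of $G\setminus(A\cup\beta(t))$ with two or more neighbours in $\beta(t)$ can be $\Omega(n)$, with pairwise distinct attachment pairs, so the apex set would grow unboundedly. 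This is exactly the phenomenon isolated in \autoref{no-nice}, which shows that one cannot locally restructure around a bag (there, $K_{t,t}$ with pendant triangles) without the width exploding. A treewidth-style balanced-separator recursion is therefore the wrong template here: the correct decompositions have arbitrarily large bipartite bags with children attached at single $\beta$-vertices, a structure that balanced separation does not yield. A correct proof genuinely needs the universal-obstruction machinery of \cite{Campbell23odd-}, not an adaptation of the Robertson--Seymour/Reed scheme.
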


Bipartite treewidth is not closed under minors, given that contracting an edge in a bipartite graph (which has bipartite treewidth zero) may create a non-bipartite graph (which has positive bipartite treewidth). However, bipartite treewidth is closed under odd-minors, which is a desirable property to deal with odd-minor related problems.

\begin{lemma}\label{lem:closed-under-om}
Bipartite treewidth is closed under odd-minor containment.
\end{lemma}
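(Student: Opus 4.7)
The plan is to lift a given bipartite tree decomposition of $G$ to one of $H$ via the odd $H$-expansion $\eta$, with accompanying $2$-coloring $c$, provided by \autoref{odd-eq}. Let $(T,\alpha,\beta)$ be a bipartite tree decomposition of $G$ of width $\btw(G)$, write $\chi=\alpha\cup\beta$, and for each $t\in V(T)$ set $\alpha'(t)=\{v\in V(H)\mid V(\eta(v))\cap\alpha(t)\neq\emptyset\}$ and $\beta'(t)=\{v\in V(H)\mid V(\eta(v))\cap\beta(t)\neq\emptyset\}\setminus\alpha'(t)$. I claim that $(T,\alpha',\beta')$ is a bipartite tree decomposition of $H$ of width at most $\btw(G)$. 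That $(T,\alpha'\cup\beta')$ is a valid tree decomposition of $H$ is a routine minor-style check: the set of bags in which $v\in V(H)$ appears is $\{t\mid V(\eta(v))\cap\chi(t)\neq\emptyset\}$, which is connected because $\eta(v)$ is, and every edge $vw\in E(H)$ is covered by any bag containing the two endpoints of $\eta(vw)$. The width bound $|\alpha'(t)|\leq|\alpha(t)|$ follows from the pairwise disjointness of the nodes $\eta(v)$.

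For the adhesion condition, fix $tt'\in E(T)$ and any $v\in\beta'(t)\cap(\alpha'\cup\beta')(t')$. The connected subgraph $\eta(v)$ intersects both $\chi(t)$ and $\chi(t')$ and avoids $\alpha(t)$. The standard separation property of tree decompositions across the edge $tt'$ forces $V(\eta(v))\cap\chi(t)\cap\chi(t')\neq\emptyset$, and this intersection lies inside $\beta(t)\cap\chi(t')$ because $\eta(v)$ avoids $\alpha(t)$. Since the nodes of $\eta$ are pairwise disjoint and $|\beta(t)\cap\chi(t')|\leq 1$ by the adhesion condition of the original decomposition, at most one such $v$ exists.

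The main obstacle is establishing that $H[\beta'(t)]$ is bipartite. The strategy is to exhibit $H[\beta'(t)]$ as an odd-minor of the bipartite graph $G[\beta(t)]$; since bipartite graphs are closed under odd-minors (odd expansions preserve cycle parity, by \autoref{odd-eq}), this will conclude the proof. The key structural claim, where the single-$\beta$-vertex adhesion property is used in an essential way, is that for every $v\in\beta'(t)$ the set $V(\eta(v))\cap\beta(t)$ induces a connected subtree of $\eta(v)$, and for every edge $vw\in E(H[\beta'(t)])$ both endpoints of $\eta(vw)$ lie in $\beta(t)$. Indeed, if a simple path in $\eta(v)$ between two vertices of $\beta(t)$ left $\beta(t)$ into the component $W$ of $G\setminus\chi(t)$ attached to some neighbor $t^\star$ of $t$, then, because $\eta(v)$ avoids $\alpha(t)$, it would have to enter and exit $W$ through vertices of $\beta(t)\cap\chi(t^\star)$, a set of size at most one, which is impossible for a simple path. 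An analogous argument, using the disjointness of $\eta(v)$ and $\eta(w)$, shows that if an endpoint of $\eta(vw)$ lay outside $\beta(t)$ then two distinct vertices of the same adhesion $\beta(t)\cap\chi(t^\star)$ would be forced. With this claim in place, restricting each $\eta(v)$ to $\beta(t)$ and keeping the edges $\eta(vw)$ gives a subgraph of $G[\beta(t)]$ whose edge cut of internal (bichromatic under $c$) tree edges, when contracted, yields precisely $H[\beta'(t)]$, so by \autoref{odd-eq} this graph is an odd-minor of $G[\beta(t)]$, as desired.
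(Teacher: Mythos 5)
Your proof is correct and follows essentially the same route as the paper's: both push the given bipartite tree decomposition of $G$ forward along the odd-minor operation, keeping the tree $T$, placing a vertex of $H$ in the apex set of a bag exactly when its branch set meets $\alpha(t)$, and using the parity structure of the odd expansion to argue that each new $\beta$-part stays bipartite. If anything, you are more explicit than the paper on one point: your structural claim that $V(\eta(v))\cap\beta(t)$ is connected in $\eta(v)$ and that the edges $\eta(vw)$ between two $\beta'(t)$-nodes have both endpoints in $\beta(t)$ (both forced by the size-one adhesions) is implicitly needed for the paper's adhesion and bipartiteness arguments as well, but is not spelled out there.
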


\begin{proof}
Let $G$ be a graph and $H$ be an odd-minor of $G$. We want to prove that $\btw(H)\leq\btw(G)$.
By \autoref{odd-eq}, there is a subgraph $G'$ of $G$ and an edge cut $E'$ such that $H$ is obtained from $G'$ by contracting every edge in $E'$.

Since we only removed vertices and edges to obtain $G'$ from $G$, $\btw(G')\leq\btw(G)$.
It remains to show that $\btw(H)\leq\btw(G')$.
Let $(T,\alpha',\beta')$ be a bipartite tree decomposition of $G'$.
We transform $(T,\alpha',\beta')$ to a bipartite tree decomposition $(T,\alpha,\beta)$ of $H$ as follows.
For each $e=uv\in E'$ and for each $t\in V(T)$ such that $\{u,v\}\cap(\alpha'\cup\beta')(t)\neq\emptyset$,
\begin{itemize}
\item if $\{u,v\}\cap\alpha'(t)\neq\emptyset$, then the vertex $v_e$ resulting from contracting $e$ is placed in $\alpha(t)$,
\item otherwise, $v_e$ is placed in $\beta(t)$.
\end{itemize}
For each $v\in V(G')$ that is not involved in any contraction and for each $t\in V(T)$, if $v\in\alpha'(t)$ (resp. $v\in\beta'(t)$), then $v\in\alpha(t)$ (resp. $v\in\beta(t)$).

Let us show that $(T,\alpha,\beta)$ is indeed a bipartite tree decomposition of $H$.
For simplicity, we identify the vertices in the bags with the vertices of $H$.
It is easy to see that $(T,\alpha\cup\beta)$ is a tree decomposition of $H$, since it is obtained from $(T,\alpha'\cup\beta')$ by contracting the edge set $E'$.
The only difficult point is to check that, for each $v\in V(H)$ the subgraph of $T$ induced by the bags containing $v$ is connected. This trivially true if $v\in V(G)$. Otherwise, $v=v_{uw}$ for some $u,w\in V(G)$.
In this case, given that the subgraph of $T$ induced by the bags of $(T,\alpha'\cup\beta')$ containing $u$ (resp. $w$) is connected and that there is a bag of $(T,\alpha'\cup\beta')$ containing both $u$ and $w$, we also conclude that the subgraph of $T$ induced by the bags of $(T,\alpha\cup\beta)$ containing $v_{uw}$ is indeed connected.

Moreover, given that an edge with at least one endpoint in $\alpha'(t)$ contracts to a vertex in $\alpha(t)$, no new vertex is added to $\beta(t)$, and therefore, for any $t'\in V(T)\setminus\{t\}$, $|(\alpha\cup\beta)(t')\cap\beta(t)|\leq 1$.

It remains to prove that, for each $t\in V(T)$,  $H[\beta(t)]$ is bipartite.
Let $t\in V(T)$.
Let $E_t$ be the set of edges of $E'$ with both endpoints in $\beta'(t)$.
We have to prove  that the bipartite graph induced by $\beta'(t)$ in $G'$ remains bipartite after contracting $E_t$.
$E_t$ is an edge cut of $G'[\beta'(t)]$, witnessed by some vertex partition $(A_1,A_2)$.
Given a proper 2-coloring $(B_1,B_2)$ of $G'[\beta'(t)]$, which is bipartite, keep the same color for the vertices in $A_1$, and change the color of the vertices in $A_2$, i.e., define the coloring $(C_1,C_2)=((B_1\cap A_1)\cup(B_2\cap A_2),(B_2\cap A_1)\cup(B_1\cap A_2))$.
Thus, the monochromatic edges are exactly the edges of $E_t$.
Therefore, contracting $E_t$ gives a proper 2-coloring of $H[\beta(t)]$, so $H[\beta(t)]$ is bipartite.
Thus, $(T,\alpha,\beta)$ is a bipartite tree decomposition of $H$.

Moreover, since the contraction of an edge with both endpoints in $\beta'(t)$ is a vertex in $\beta(t)$, it follows that $|\alpha(t)|\leq|\alpha'(t)|$ for every $t\in V(T)$. Therefore, $\btw(H)\leq\btw(G')$.
 \end{proof}

A natural generalization of bipartite treewidth can be made by replacing the ``1'' in the last item of the definition of 1-$\Hcal$-tree decomposition by any $q\in\bN$, hence defining \emph{$q$-$\Hcal$-tree decompositions} and \emph{$q$-$\Hcal$-treewidth}, denoted by $(q,\Hcal)$-$\tw(G)$.
For $q\geq 2$, however, $q$-$\Bcal$-treewidth is not closed under odd-minor containment, as we prove in \autoref{lem:not-closed-under-om} below.
Additionally, given that for $q\in\{0,1\}$, $\torso_{G\setminus\alpha(t)}(\beta(t))=G[\beta(t)]$, we could replace the third item by the property ``for every $t\in V(T)$, $\torso_{G\setminus\alpha(t)}(\beta(t))\in\Hcal$'', hence defining what we call \emph{$q$-torso-$\Hcal$-tree decompositions} and \emph{$q$-torso-$\Hcal$-treewidth}, denoted by $(q,\Hcal)^{\star}$-$\tw(G)$.
However, we prove in \autoref{lem:not-closed-under-om} that, for $q\geq 2$, $(q,\Bcal)^{\star}$-$\tw$ is also not closed under odd-minors.
These facts, in our opinion, provide an additional justification for the choice of $q\le1$ in the definition of bipartite treewidth.

\begin{lemma}\label{lem:not-closed-under-om}
For $q\geq 2$, $q$(-torso)-$\Bcal$-treewidth is not closed under odd-minor containment.
In particular, for any $t\geq 3$, there exist a graph $G$ and an odd-minor $H$ of $G$, such that
$(q,\Bcal)$-$\btw(G)=0$ and $(q,\Bcal)$-$\btw(H)=t-2$, and
$(q,\Bcal)^{\star}$-$\btw(G)\leq 1$ and $(q,\Bcal)^{\star}$-$\btw(H)=t-2$.
\end{lemma}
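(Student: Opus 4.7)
The plan is to show, for each $t\geq 3$, the existence of an odd-minor pair $(G,K_t)$ realising all four widths.

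First, for $K_t$ itself: since $K_t$ is complete, the Helly property of subtrees implies that every tree decomposition of $K_t$ contains a bag $B$ with $\alpha(B)\cup\beta(B)=V(K_t)$. In such a bag, $G[\beta(B)]$ is the clique on $|\beta(B)|$ vertices, which is bipartite only when $|\beta(B)|\leq 2$, forcing $|\alpha(B)|\geq t-2$. The torso of $\beta(B)$ contains $G[\beta(B)]$ as a subgraph, so the same lower bound holds for the torso parameter, and the matching upper bound is realised by the single-bag decomposition with $|\alpha|=t-2$ and $|\beta|=2$. Hence $(q,\Bcal)$-$\btw(K_t)=(q,\Bcal)^{\star}$-$\btw(K_t)=t-2$.

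For $t=3$, I would take $G=C_5$. The trees $T_1=\{v_1\}$, $T_2=\{v_2,v_3\}$, $T_3=\{v_4,v_5\}$ of the cycle $v_1v_2v_3v_4v_5$, together with the $2$-colouring assigning colour $1$ to $v_1,v_2,v_5$ and colour $2$ to $v_3,v_4$, form an odd $K_3$-expansion by \autoref{odd-eq}: each tree is properly 2-coloured and each of the three inter-tree edges $v_1v_2$, $v_3v_4$, $v_5v_1$ is monochromatic. The fan decomposition with bags $\{v_1,v_2,v_3\}$, $\{v_1,v_3,v_4\}$, $\{v_1,v_4,v_5\}$ has bipartite bags and adhesions of size $2$, giving $(q,\Bcal)$-$\btw(C_5)=0$ for any $q\geq 2$; using the same decomposition with $\alpha=\{v_1\}$ in every bag leaves $\beta$-sets of size $2$, and since the unique component of $C_5\setminus\{v_1\}$ meeting each bag has a single-vertex boundary in that $\beta$-set, no torso edge is added, giving $(q,\Bcal)^{\star}$-$\btw(C_5)\leq 1$.

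For $t\geq 4$, I would generalise by taking $G$ to be $K_t$ with every edge $v_iv_j$ replaced by the length-$3$ path $v_i\text{-}a_{ij}\text{-}b_{ij}\text{-}v_j$, further supplemented by bipartite gadgets at each hub so that the hubs can be spread across the torso decomposition. The claim that $K_t$ is an odd-minor of $G$ then follows by colouring every hub with colour $1$ and every subdivision vertex with colour $2$: each chord path has a monochromatic middle edge $a_{ij}b_{ij}$ together with two bichromatic outer edges, and contracting the edge cut of the bichromatic edges (\autoref{odd-eq}) recovers $K_t$. The bound $(q,\Bcal)$-$\btw(G)=0$ is witnessed by the decomposition whose root bag is $\{v_1,\ldots,v_t\}$ (inducing no edges) joined to a chord bag $\{v_i,a_{ij},b_{ij},v_j\}$ per edge (each a bipartite path) with adhesions $\{v_i,v_j\}$ of size $2$.

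The hardest step will be $(q,\Bcal)^{\star}$-$\btw(G)\leq 1$, since the naive decomposition above has torso of the root bag equal to the clique $K_t$ (each chord component adds a torso edge between its two hubs). The fix is to abandon the single hub-collecting bag in favour of a chain-like decomposition in which each bag contains one hub as its sole apex together with a bounded fragment of the incident chord paths and the attached bipartite gadgets; with the gadgets chosen so that every torso enrichment contributes only a single bipartite edge, each torso stays bipartite, and connector bags keep the adhesions of size $\leq q$. Combining these bounds with $(q,\Bcal)$-$\btw(K_t)=(q,\Bcal)^{\star}$-$\btw(K_t)=t-2$ yields the strict inequalities witnessing non-closure for both parameters when $t\geq 4$; the case $t=3$ is already witnessed by the first inequality.
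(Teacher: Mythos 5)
Your overall strategy coincides with the paper's: force $K_t$ into a single bag to get $(q,\Bcal)$-$\tw(K_t)=(q,\Bcal)^{\star}$-$\tw(K_t)=t-2$, and use the double subdivision of $K_t$ (your ``length-$3$ chord paths'' graph is exactly the paper's $K_t''$) as the host. Your lower-bound argument for $K_t$, the odd-minor claim via contracting the edge cut of bichromatic edges, the width-$0$ star decomposition for the non-torso parameter, and the $C_5$ instance for $t=3$ are all correct and essentially identical to the paper. The genuine gap is exactly where you flag it: the bound $(q,\Bcal)^{\star}$-$\tw(G)\leq 1$ for $t\geq 4$ is never established, and the sketch you offer -- a chain-like decomposition in which ``each bag contains one hub as its sole apex together with a bounded fragment of the incident chord paths,'' aided by unspecified bipartite gadgets -- points in a direction that cannot succeed. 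If a bag $B$ has $\alpha(B)=\{v_i\}$ and $\beta(B)$ consists only of vertices on the chord paths at $v_i$, then $(G\setminus\{v_i\})\setminus\beta(B)$ still contains the connected subgraph spanned by the other $t-1$ hubs and their chord paths; this single component attaches to $t-1$ vertices of $\beta(B)$ (one per chord at $v_i$), so the torso turns them into a $K_{t-1}$, which contains a triangle once $t\geq 4$. Adding gadgets to $G$ only creates additional components and cannot disconnect this one, so no gadget choice rescues the one-hub-per-bag plan; moreover, the gadgets introduce an inconsistency, since your odd-minor argument and your width-$0$ decomposition are verified for the ungadgeted graph.

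The paper's construction goes in the opposite direction: keep all hubs in one $\beta$-part. Writing $u_{ij},w_{ij}$ for the internal vertices of the chord between $v_i$ and $v_j$ (with $u_{ij}$ adjacent to $v_i$), take a star whose centre bag has $\alpha=\emptyset$ and $\beta=\{v_1,\dots,v_t\}\cup\{w_{ij}\}_{i<j}$, and one leaf bag per chord with $\alpha=\{v_i\}$ and $\beta=\{u_{ij},w_{ij}\}$. The complement of the centre bag shatters into the singletons $\{u_{ij}\}$, each contributing only the torso edge $v_iw_{ij}$, so the centre torso is the once-subdivided clique $K_t'$, which is bipartite; each leaf torso is a single edge, the adhesions meet each $\beta$-part in at most two vertices (hence $q\geq 2$ suffices), and the width is $1$. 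You would need to replace your chain sketch with this, or an equivalently explicit, construction for the proof to be complete.
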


\begin{proof}
Let $t\in\bN_{\geq 3}$ and let $K_t'$ (resp. $K_t''$) be the graph obtained from $K_t$ by subdividing every edge once (resp. twice).
Let $V'=\{v_1,\ldots,v_t\}$ be the set of vertices of $K_t''$ that are the original vertices of $K_t$.
$K_t$ is an odd-minor of $K_t''$ since $K_t$ can be obtained from $K_t''$ by contracting the edge cut $E(V',V(K_t'')\setminus V')$. Note also that $K_t'$ is bipartite.
We show that taking $G = K_t''$ and $H = K_{t}$ satisfies the statement of the lemma.

Given that $K_t$ is a complete graph, it has to be fully contained in one bag of
any tree decomposition, so in particular of any $q$(-torso)-$\Bcal$-tree decomposition.
Since the smallest odd cycle transversal of $K_t$ has size $t-2$, we have that
$(q,\Bcal)^{\star}$-$\tw(K_t)=(q,\Bcal)$-$\tw(K_t)=t-2$.

Let us first prove that $(q,\Bcal)$-$\tw(K_t'')= 0$.
For $i,j\in[t]$ with $i<j$, let $e_{i,j}$ be the path of length three between $v_i$ and $v_j$.
Let $T$ be a tree with one central vertex $x_0$ and, for each $i,j\in[t]$ with $i<j$, a vertex $x_{i,j}$ only adjacent to $x_0$ (thus, $T$ is a star).
Let $\beta(x_0)=V'$ and $\beta(x_{i,j})=V(e_{i,j})$ for each $i,j\in[t]$ with $i<j$.
Let $\alpha(x)=\emptyset$  for each $x\in V(T)$.
$V'$ is an independent set, so $G[V']$ is bipartite. Note that paths are bipartite.
Moreover, each adhesion contains at most two vertices of $\beta(x_0)$ and two vertices of $\beta(x_{i,j})$.
Hence, $(T,\alpha,\beta)$ is a $q$-$\Bcal$-tree decomposition of $K_t''$, for $q\ge2$, and has width zero.

Let us now prove that $(q,\Bcal)^{\star}$-$\btw(K_t'')\leq 1$.
Let $u_{i,j}$ and $w_{i,j}$ be the internal vertices of $e_{i,j}$, such that $u_{i,j}$ is adjacent to $v_i$.
Let $V_1$ (resp. $V_2$) be the set of vertices $u_{i,j}$ (resp. $w_{i,j}$).
We construct a $q$-torso-$\Bcal$-tree decomposition $(T,\alpha',\beta')$ of $K_t''$ as follows.
We set $\alpha'(x_0)=\emptyset$ and $\beta'(x_0)=V'\cup V_2$.
For each $i,j\in[t]$ with $i<j$, we set $\alpha'(x_{i,j})=\{v_i\}$ and $\beta'(x_{i,j})=\{u_{i,j},w_{i,j}\}$.
Observe that $\torso_{K_t''\setminus\alpha'(x_0)}(\beta'(x_0))=K_t'$,
since each path $v_i$-$u_{i,j}$-$w_{i,j}$ is replaced by an edge $v_iw_{i,j}$.
Thus, it is bipartite.
Similarly, the torso at each other node of $T$ is an edge, and hence is bipartite as well.
Moreover, each adhesion contains at most two vertices of $\beta'(x_0)$ and one vertex of $\beta'(x_{i,j})$.
Hence, $(T,\alpha',\beta')$ is indeed a $q$-torso-$\Bcal$-tree decomposition of $K_t''$, for $q\ge2$, and has width one.
Therefore, $(q,\Bcal)^{\star}$-$\btw(K_t'')\leq 1$.

Hence, $q$(-torso)-$\Bcal$-treewidth is not closed under odd-minor containment and the gap between a graph and an odd-minor of this graph can be arbitrarily large.
 \end{proof}

As mentioned in \autoref{sec-overview-dp}, one of the main difficulties for doing dynamic programming on (rooted) bipartite tree decompositions is the lack of a way to upper-bound the number of children of each node of the decomposition.
As shown in the next lemma, the notion of ``nice tree decomposition'' is not generalizable to bipartite tree decompositions.

\begin{lemma}\label{no-nice}
For any $t\in\bN$, there exists a graph $G$ such that $\btw(G)=1$ and any bipartite tree decomposition of $G$ whose nodes all have at most $t$ neighbors has width at least $t-1$.
\end{lemma}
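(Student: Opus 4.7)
The plan is to construct, for each $t$, a graph $G_t$ with $\btw(G_t)=1$ whose only width-$1$ bipartite tree decompositions have tree max-degree strictly greater than $t$. A candidate shape is a ``hub-and-spokes'' graph: take a central vertex $v$ and attach $N=N(t)$ carefully engineered gadgets to $v$, where $N$ is chosen large relative to $t$ and the gadgets are designed so that the natural width-$1$ decomposition is essentially a star of degree $\geq N$ around a bag containing $v$.

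\textbf{Upper bound $\btw(G_t)\leq 1$.} I would exhibit an explicit star-shaped decomposition: a root bag $\tau_0$ with $\beta(\tau_0)=\{v\}$ and $\alpha(\tau_0)=\emptyset$, together with one subtree per gadget. Each gadget would be chosen to have odd cycle transversal $1$, so each subtree admits a width-$1$ bipartite tree decomposition in which the unique apex is the local transversal vertex. Verifying the adhesion condition $|(\alpha\cup\beta)(t')\cap\beta(\tau_0)|\leq 1$ at each tree edge incident to $\tau_0$ reduces to checking that the gadget's root bag shares at most one vertex with $\beta(\tau_0)=\{v\}$, which is immediate from the construction.

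\textbf{Lower bound under bounded degree.} Let $(T,\alpha,\beta)$ be any bipartite tree decomposition of $G_t$ with every node of $T$ having at most $t$ neighbours. Using the condition $|(\alpha\cup\beta)(t')\cap\beta(t)|\leq 1$, each tree edge can ``transmit'' at most one $\beta$-vertex of its endpoint. I would then run a pigeonhole-style counting argument on the subtree $T_v$ where $v$ lives: each of the $N$ gadgets forces a distinct ``witness'' that must be reachable from the rest of its gadget through the decomposition; if a node $t^\star\in T_v$ has $d\leq t$ tree-neighbours, then at most $d$ witnesses can descend through its $\beta$ side (one per tree edge), so the remaining witnesses appearing in $\chi(t^\star)$ must lie in $\alpha(t^\star)$. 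By choosing $N$ large enough in terms of $t$ and arranging that the gadgets concentrate at some common ``bottleneck'' bag $t^\star$, one obtains $|\alpha(t^\star)|\geq t-1$.

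\textbf{Main obstacle.} The chief difficulty is that the most natural candidates—the friendship graph $F_N$ with $N$ triangles through a common $v$, $N$ disjoint odd cycles, the wheel $W_{2k}$—all admit width-$1$ decompositions with a binary tree: one simply lets $\beta(t)=\{v\}$ at every internal node, and since $|\beta(t)|=1$ the adhesion condition is satisfied by any number of children. To obstruct this single-vertex $\beta$-passing trick, the gadgets must force two or more vertices to be simultaneously ``essential'' in the central region of every width-$1$ decomposition, so that after one of them is chosen as apex of the bottleneck, further essential vertices can still not be distributed within a $t$-degree-bounded tree without blowing up $\alpha$. Designing such gadgets and carrying out the adhesion-based counting so that $|\alpha(t^\star)|$ reaches exactly $t-1$ is where the technical content of the proof lies.
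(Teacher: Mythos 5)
Your proposal is not a proof: it is a plan that stops exactly at the point where the mathematical content lies. You correctly diagnose the central obstacle --- that a single hub vertex $v$ can be carried along an arbitrarily long path of bags with $\beta(\cdot)=\{v\}$, so that friendship graphs, disjoint odd cycles, and wheels all admit low-degree width-$1$ decompositions --- but you then explicitly defer the design of gadgets that defeat this trick, and no candidate graph $G_t$ is actually exhibited or verified. As written, both the upper bound and the lower bound are conditional on a construction that you acknowledge you do not have, so there is a genuine gap.

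The missing idea, which is how the paper resolves precisely the obstacle you name, is to make the ``central region'' a \emph{highly connected bipartite} graph rather than a single vertex: take $G$ to be $K_{t,t}$ with a pendant triangle glued at each of its $2t$ vertices. The core $K_{t,t}$ is bipartite, so the star-shaped decomposition with $\beta(t_0)=V(K_{t,t})$ and one child per triangle (apex the attachment vertex) has width $1$; but $K_{t,t}$ cannot be split across two bags without the corresponding adhesion having size at least $t$ (any separation of $K_{t,t}$ into two nonempty parts needs a separator of order $\geq t$), and since at most one adhesion vertex may lie in $\beta$, this already forces $|\alpha|\geq t-1$. So in any decomposition of width less than $t-1$ the whole of $V(K_{t,t})$ sits in a single bag; that bag then has to either serve $2t$ distinct pendant triangles through $2t$ distinct neighbours (each adhesion carries at most one vertex of $\beta$, and distinct triangles attach at distinct core vertices), violating the degree bound $t$, or absorb at least $2t-t=t$ pairwise vertex-disjoint triangles into the bag itself, forcing its odd cycle transversal --- hence $|\alpha|$ --- to be at least $t$. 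This is exactly the ``two or more simultaneously essential vertices'' mechanism you were looking for: connectivity of the bipartite core replaces your unspecified gadgets, and no pigeonhole argument over witnesses is needed.
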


\begin{proof}
Let $G$ be the graph obtained from $K_{t,t}$ by gluing a new pendant triangle $H_v$ to each vertex $v$ of $K_{t,t}$ (that is, $v$ is identified with one vertex of its pendant triangle).
Let $T$ be the star $K_{1,2t}$, with vertex set $\{t_0\}\cup\{t_v\mid v\in V(K_{t,t})\}$.
Let $\alpha(t_0)=\emptyset$, $\beta(t_0)=V(K_{t,t})$, and for every $v\in V(K_{t,t})$, $\alpha(t_v)=\{v\}$ and $\beta(t_v)=V(H_v)\setminus \{v\}$. It can be easily verified that
$\Tcal=(T,\alpha,\beta)$ is a bipartite tree decomposition of $G$ of width one.
Given that $G$ is not bipartite, \autoref{obs-biptw-zero} implies that $\btw(G)=1$.
Note that node $t_0$ has $2t$ neighbors.
For any bipartition $(A,B)$ of $V(K_{t,t})$ such that $A,B\neq\emptyset$, we have $|E(A,B)|\geq t$.
Indeed, if $a_1$ and $a_2$ are the vertices of $A$ on each side of $K_{t,t}$, where we can assume without loss of generality $a_1\ge 1$ and $a_2\le k-1$ (since $A,B\neq\emptyset$), then $E(A,B)$ has size $a_1(k-a_2)+a_2(k-a_1)=(k-a_1)a_1+(k-a_2)a_2+(a_1-a_2)^2\ge k-a_1+a_2+(a_1-a_2)^2\ge k$.
Hence, for any bipartite tree decomposition $\Tcal'$ of $G$ such that $V(K_{t,t})$ is not totally contained in one bag, there is an adhesion of two bags of size at least $t$, so the width of $\Tcal'$ is at least $t-1$.
If $V(K_{t,t})$ is fully contained in one bag, however, the only way to reduce the number of children to $k$, for some integer $k$, is to add $2t-k$ of the pendant triangles inside the same bag. But then this bag has odd cycle transversal number at least $2t-k$, so the obtained bipartite tree decomposition has width at least $2t-k$. Hence, if we want that $k \leq t$, then the corresponding width is at least $2t-k \geq t$.
 \end{proof}

\subsection{Boundaried graphs} \label{sec_bound}

To design dynamic programming algorithms, we need to first define boundaried graphs.\medskip

Let $t\in \bN$. A \emph{$t$-boundaried graph} is a triple ${\bf G}=(G,B,\rho)$ where $G$ is a graph, $B\subseteq V(G)$, $|B|=t$, and $\rho:B\to\bN$ is an injective function.
We say that $B$ is the \emph{boundary} of ${\bf G}$ and we write $B=\bd(G)$.
We call ${\bf G}$ \emph{trivial} if all its vertices belong to the boundary.
We say that two $t$-boundaried graphs ${\bf G}_1=(G_1,B_1,\rho_1)$ and ${\bf G}_2=(G_2,B_2,\rho_2)$ are \emph{isomorphic} if $\rho_1(B_1)=\rho_2(B_2)$ and there is an isomorphism from $G_1$ to $G_2$ that extends the bijection $\rho_2^{-1}\circ\rho_1$.
A triple $(G,B,\rho)$ is a \emph{boundaried graph} if it is a $t$-boundaried graph for some $t\in\bN$.
%We denote by $\frak{B}_t$ the set of all (pairwise non-isomorphic) $t$-boundaried graphs.
A boundaried graph $\bf F$ is a \emph{boundaried induced subgraph} (resp. \emph{boundaried subgraph}) of $\bf G$ if $\bf F$ can be obtained from $\bf G$ by removing vertices (resp. and edges).
A boundaried graph $\bf F$ is a \emph{boundaried odd-minor} of $\bf G$ if $\bf F$ can be obtained from a boundaried subgraph ${\bf G}'$ of $\bf G$ by contracting an edge cut such that every vertex in $\bd({\bf G}')$ is on the same side of the cut.
We say that two boundaried graphs ${\bf G}_1=(G_1,B_1,\rho_1)$ and ${\bf G}_2=(G_2,B_2,\rho_2)$ are \emph{compatible} if $\rho_1(B_1)=\rho_2(B_2)$ and $\rho_2^{-1}\circ\rho_1$ is an isomorphism from $G_1[B_1]$ to $G_2[B_2]$.

\begin{figure}[h]
\center
\includegraphics{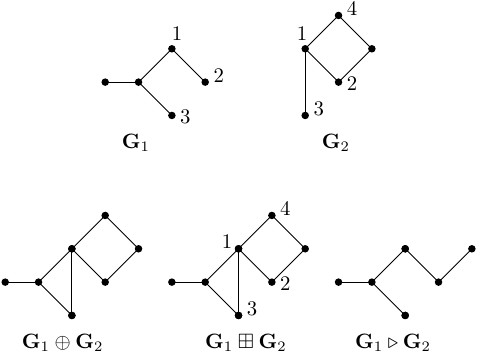}
\caption{Examples of use of the operations $\oplus$, $\boxplus$, and $\triangleright$.}
\label{fig_gluing_operations}
\end{figure}
Given two boundaried graphs ${\bf G}_1=(G_1,B_1,\rho_1)$ and ${\bf G}_2=(G_2,B_2,\rho_2)$, we define ${\bf G}_1\oplus {\bf G}_2$ as the unboundaried graph obtained if we take the disjoint union of $G_1$ and $G_2$ and, for every $i\in\rho_1(B_1)\cap\rho_2(B_2)$, we identify vertices $\rho_1^{-1}(i)$ and $\rho_2^{-1}(i)$.
Note that we do not ask ${\bf G}_1$ and ${\bf G}_2$ to be compatible.
If two vertices are adjacent in at least one boundary, then there are adjacent in ${\bf G}_1\oplus {\bf G}_2$.
If $v$ is the result of the identification of $v_1:=\rho_1^{-1}(i)$ and $v_{2}:=\rho_2^{-1}(i)$ then we say that $v$ is the \emph{heir} of
$v_{i}$ from ${\bf G}_i, i\in[2]$. If $v$ is either a vertex
in $B_{1}$ where $ρ_{1}(v)\not\in ρ_{1}(B_{1})\cap ρ_{2}(B_{2})$
or a vertex
in $B_{2}$ where $ρ_{2}(v)\not\in ρ_{1}(B_{1})\cap ρ_{2}(B_{2})$,
then $v$ is also a (non-identified) vertex of ${\bf G}_1\oplus {\bf G}_2$  and is an \emph{heir} of itself (from ${\bf G}_1$ or ${\bf G}_2$ respectively). For $i\in[2]$,
and given an edge $vu$ in ${\bf G}_1\oplus {\bf G}_2$,
we say that $vu$ is the \emph{heir} of an edge $v'u'$ from ${\bf G}_{i}$
if $v'$ is the heir of $v$ from $\bf G_{i}$, $u'$ is the heir of $u$ from $\bf G_{i}$, and $v'u'$
is an edge of $G_{i}$.
If $x'$ is an heir of $x$ from ${\bf G}=(G,B,\rho)$ in ${\bf G}'$, then we write $x=\heir_{\bf G,G'}(x')$.
Given $B'\subseteq B$, we write $\heir_{\bf G,G'}(B')=\bigcup_{v\in B'}\heir_{\bf G,G'}(x')$.
We also define ${\bf G}_1\boxplus {\bf G}_2$ as the {\sl boundaried} graph $({\bf G}_1\oplus {\bf G}_2,B,\rho)$, where $B$ is the set of all heirs from $\bf G_1$ and $\bf G_2$ and $\rho:B\to\bN$ is the union of $\rho_1$ and $\rho_2$ after identification.
Note  that in circumstances where  $\boxplus$ is repetitively applied, the heir relation is maintained due to its transitivity.
Moreover, we define ${\bf G}_1\triangleright {\bf G}_2$ as the unboundaried graph $G$ obtained from ${\bf G}_1\oplus {\bf G}_2$ by removing all heirs from  ${\bf G}_{2}$ that are not heirs from ${\bf G}_{1}$
and all heirs of edges from ${\bf G}_2$ that are not heirs of edges from ${\bf G}_1$.
Informally, ${\bf G}_1\triangleright {\bf G}_2$ is obtained from ${\bf G}_1\oplus {\bf G}_2$ by removing vertices and edges of $G [B_2]$ that are not in $G [B_1]$.
Note that $\triangleright$ is not commutative.
See \autoref{fig_gluing_operations} for an illustration of the operations $\oplus$, $\boxplus$, and $\triangleright$.

\section{General dynamic programming to obtain \FPT-algorithms}\label{fpt}

In this section, we introduce a framework for obtaining \FPT-algorithms for problems
parameterized by the width of a given bipartite tree decomposition of the input graph.
In \autoref{sec:nice} we introduce the main technical notion of a \emph{nice problem}
and the necessary background,
and in \autoref{sec:nice:dp} we provide dynamic programming algorithms for nice problems.
In \autoref{sec-results-FPT},
we give applications to concrete problems.

\subsection{Nice problems}\label{sec:nice}
\label{sec-nice-reductions}

All algorithms we give for problems on graphs of bounded \btw\
follow the same strategy.
To avoid unnecessary repetition, we introduce a framework that captures
the features that the problems have in common with respect to their algorithms using bipartite tree decompositions.
Naturally, the algorithms use dynamic programming along a rooted bipartite tree decomposition.
However, as the bags in the decomposition can now be large,
we cannot apply brute-force approaches to define table entries as we can, for instance,
on standard tree decompositions when dealing with treewidth.

Suppose we are at a node $t$ with children $t_1, \ldots, t_d$.
Since the size of $\alpha(t)$ is bounded by the width,
we can store all possible ways in which solutions interact with $\alpha(t)$.
Moreover, since each adhesion has at most one vertex from $\beta(t)$,
the size of each adhesion is still bounded in terms of the bipartite treewidth.
Therefore, we can store one table entry
for each way in which a solution can interact with the adhesion $\delta_t$ of $t$ and its parent.
However, since the size of $\beta(t)$ is unbounded,
there can now be an exponential (in $n$)
number of choices of table entries that are
``compatible'' with the choice $\Xcal$ made for $\delta_t$,
so we cannot simply brute-force them to determine the optimum value corresponding to $\Xcal$.
To overcome this, we apply the following strategy:
First, since the size of $\alpha(t)$ is bounded in terms of the bipartite treewidth,
we guess which choice $\Acal$
of the interaction of the solution with $\alpha(t) \cup \delta_t$ that extends $\Xcal$
leads to the optimum partial solution.
For each $i \in [d]$, there may be a vertex $v_{t_i} \in \delta_{t_i} \cap \beta(t)$
whose interaction with the partial solution remained undecided.
We replace, for each $i \in [d]$,
the subgraph $G_{t_i} \setminus \delta_{t_i}$
with a simply structured subgraph that simulates the behaviour
of the table at $t_i$ when it comes to the decision of how $v_{t_i}$ interacts with the solution,
under the choice of $\Acal$ for $\alpha(t) \cup \delta_t$.
The crux is that the resulting graph will have an odd cycle transversal
that is bounded in terms of the size of $\alpha(t)$,
so we can apply known \FPT-algorithms parameterized by odd cycle transversal
to determine the value of the table entry.
These notions can be formalized not only for bipartite treewidth, but for any 1-$\Hcal$-treewidth,
so we present them in full generality here.
We also depart from using tree decompositions explicitly,
and state them in an equivalent manner in the language of boundaried graphs.

First, let us formalize the family of problems we consider
which we refer to as \emph{optimization problems}.
Here, solutions correspond in some sense to partitions of the vertex set,
and we want to optimize some property of such a partition.
For instance, if we consider \textsc{Odd Cycle Transversal},
then this partition has three parts, one for the vertices in the solution,
and one part for each part of the bipartition of the vertex set of the graph obtained by removing the solution vertices, and we want to minimize the size of the first part of the partition.
(It will become clear later why we keep one separate part for each part of the bipartition.)
In \textsc{Maximum Cut},
the partition simply points to which side of the cut each vertex is on,
and we want to maximize the number of edges going across.

A \emph{$p$-partition-evaluation function on graphs} is a function $f$ that receives as input a graph $G$ along with a $p$-partition $\mathcal{P}$ of its vertices and outputs a non-negative integer.
Given such a function $f$ and some choice $\mathsf{opt}\in\{\max,\min\}$
we define the associated graph parameter $\mathsf{p}_{f,\mathsf{opt}}$ where, for every graph $G$,
$$\mathsf{p}_{f,\mathsf{opt}}(G)=\mathsf{opt}\{f(G,\mathcal{P})\mid \text{ $\mathcal{P}$ is a $p$-partition of $V(G)$}\}.$$
An \emph{optimization graph partition problem} (\emph{optimization problem} for short) is a problem that can be expressed as follows.
\begin{center}
	\fbox{
		\begin{minipage}{5cm}
			\noindent\textbf{Input}:~~A graph $G$.\\
			\textbf{Objective}:~~Compute $\mathsf{p}_{f,\mathsf{opt}}(G)$.
		\end{minipage}
	}
\end{center}
To represent the case when we made some choices for the (partial) solution
to an optimization problem, such as $\Acal$ above,
we consider \emph{annotated} versions of such problems.
They extend the function $\mathsf{p}_{f,\mathsf{opt}}$
so to receive as input, apart from a graph,
a set of annotated sets in the form of a partition
$\mathcal{X}\in\Pcal_p(X)$ of some $X\subseteq V(G)$.
More formally, the \emph{annotated extension} of  $\mathsf{p}_{f,\mathsf{opt}}$
is the parameter  $\hat{\mathsf{p}}_{f,\mathsf{opt}}$  such that
$$\hat{\mathsf{p}}_{f,\mathsf{opt}}(G,\mathcal{X})=\mathsf{opt}\{f(G,\mathcal{P})\mid \text{ $\mathcal{P}$ is a $p$-partition of $V(G)$ with $\mathcal{X}\subseteq \mathcal{P}$}\}.$$
Observe that  ${\mathsf{p}}_{f,\mathsf{opt}}(G)=\hat{\mathsf{p}}_{f,\mathsf{opt}}(G,\emptyset^p)$, for every graph $G$.
The problem $\Pi'$ is a \emph{$p$-annotated extension} of the optimization problem $\Pi$ if $\Pi$ can be expressed by some $p$-partition-evaluation function $f$ and some choice $\mathsf{opt}\in\{\max,\min\}$, and if $\Pi'$ can be expressed as follows.
\begin{center}
	\fbox{
		\begin{minipage}{9cm}
			\noindent\textbf{Input}:~~A graph $G$ and $\Xcal\in\Pcal_p(X)$ for some $X\subseteq V(G)$.\\
			\textbf{Objective}:~~Compute $\hat{\mathsf{p}}_{f,\mathsf{opt}}(G,\Xcal)$.
		\end{minipage}
	}
\end{center}
We also say that $\Pi'$ is a \emph{$p$-annotated problem}.

Let us turn to the main technical tool introduced in this section
that formalizes the above idea, namely \emph{nice problems}.
First, we may assume that the vertices of $G$ are labeled injectively via $\sigma$.
Then, each graph $G_{t_i}$, for $i \in [d]$,
naturally corresponds to a boundaried graph
${\bf G_i} = (G_{t_i}, \delta_{t_i}, \sigma_{|\delta_{t_i}})$;
from now on let $X_i = \delta_{t_i}$.
The part of $G_t$ that will be modified
can be viewed as a boundaried graph ${\bf G}$
which is essentially obtained as $\boxplus_{i \in [d]} {\bf G_i}$.
However, as we want to fix a choice of how the partial solution interacts with $\delta_t$,
we include these corresponding vertices in ${\bf G}$ as well,
modeled as a trivial boundaried graph ${\bf X}$,
making ${\bf G} = {\bf X} \boxplus (\boxplus_{i \in [d]} {\bf G_i})$ (see \autoref{fig_collection} for an illustration).

Denote the boundary of ${\bf G}$ by $X$.
The set $X$ is partitioned into $(A, B)$,
corresponding to
$(X \cap (\alpha(t) \cup \delta(t)), X \cap (\beta(t) \setminus \delta(t)))$,
and the fact that the adhesion between $t_i$ and $t$ had at most one vertex in common with $\beta(t)$
now materializes as the fact that for each $i \in [d]$,
${\bf G}$ has at most one vertex outside of $A$
that is an heir of a vertex in~${\bf G_i}$.
Fixing a choice for $\alpha(t) \cup \delta_t$
now corresponds to choosing a partition $\Acal$ of the set $A$.
As we assume that all table entries at the children have been computed,
we assume knowledge of all values
$\hat{\mathsf{p}}_{f,\mathsf{opt}}(G_{t_i},\mathcal{X}_i)$,
for all $i \in [d]$ and $\Xcal_i \in \Pcal_p(X_i)$.
This finishes the motivation of the input of a nice problem.

Given the pair $({\bf G}, \Acal)$,
a nice problem outputs a tuple $({\bf G'} = (G', X', \rho'), \Acal', s')$, called \emph{nice reduction},
with the following desired properties.
${\bf G'}$ can be constructed by gluing $d'$ boundaried graphs
plus one trivial one (for some $d'\in\bN$), similarly to ${\bf G}$.
$\Acal'$ is a $p$-partition of a set $A' \subseteq V(G')$
whose size is at most the size of $A$ plus a constant.
No matter what the structure of the graph of the
vertices in $(\alpha \cup \beta)(t)$ looked like
(remember, so far we carved out only the adhesions),
the solutions are preserved, up to an offset of $s'$.
This is modeled by saying that for each boundaried graph ${\bf F}$
(which corresponds to the remainder of the bag at $t$)
compatible with ${\bf G}$,
 $\hat{\mathsf{p}}_{f,\mathsf{opt}}({\bf G}\oplus {\bf F},\mathcal{A})
 = \hat{\mathsf{p}}_{f,\mathsf{opt}}({\bf G}'\triangleright{\bf F},\Acal')+s'.$
The reason why we use the $\triangleright$-operator in the right-hand side of the equation
is the gadgeteering happening in the later sections.
To achieve the ``solution-preservation'',
we might have to add or remove vertices, or change adjacencies between vertices in $X_i$ (for {\sc Odd Cycle Transversal} and {\sc Maximum Cut} for instance).

The last condition corresponds to our aim that
if the bag at $t$ induces a graph of small \oct (now, a small modulator to a graph class $\Hcal$),
then the entire graph resulting from the operation (${\bf G}'\triangleright{\bf F}$) should
have a small modulator to $\Hcal$ (namely $A'$).
All remaining conditions are related to the efficiency of the nice reduction.

%For an illustration of the following definition,
%see \autoref{fig-nice-prob}.

\begin{figure}
\center
\includegraphics[scale=1]{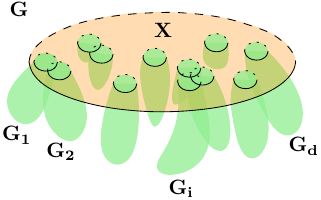}
\caption{Illustration of ${\bf G}={\bf X}\boxplus(\boxplus_{i\in[d]}{\bf G}_i)$.}
\label{fig_collection}
\end{figure}

\subparagraph{Nice problem and nice reduction.}
	Let $p\in \bN$,
	let $\Hcal$ be a graph class,
	and let $\Pi$ be a $p$-annotated problem corresponding to some choice of a $p$-partition-evaluation function $f$ and some $\mathsf{opt}\in\{\max,\min\}$.
	We say that $\Pi$ is a \emph{$\Hcal$-nice problem} if there exists an algorithm that receives as input
	\begin{itemize}
	\item a boundaried  graph ${\bf G}=(G,X,ρ)$, 	
	\item  a trivial boundaried graph ${\bf X}=(G[X],X,\rho_X)$
		and a collection $\{{\bf G}_i\mid i\in[d]\}$ of boundaried graphs with ${\bf G}_i=(G_{i},X_{i},ρ_{i})$ for $i\in[d]$,
		such that $d\in\bN$ and ${\bf G}={\bf X}\boxplus(\boxplus_{i\in[d]}{\bf G}_i)$ (see \autoref{fig_collection}),
	\item a partition $(A,B)$  of $X$ such that
for all $i \in [d]$,
	$|\heir_{\bf G_i,G}(X_i)\setminus A|\leq 1$,
	\item some $\Acal\in \Pcal_p(A)$, and
\item for every $i\in[d]$ and each $\mathcal{X}_i \in \Pcal_p(X_i)$, the value $\hat{\mathsf{p}}_{f,\mathsf{opt}}(G_i,\mathcal{X}_i)$,
	\end{itemize}
	and outputs, in time $\Ocal(|A|\cdot d),$ a tuple
	$({\bf G}'=(G',X',ρ'), \mathcal{A}', s')$, called \emph{$\Hcal$-nice reduction of the pair $(\mathbf{G},\Acal)$ with respect to $\Pi$}, such that the following hold (see \autoref{fig-nice-prob} for an illustration).
	\begin{itemize}
 		\item[(i)] There are sets $A'\subseteq V(G')$ and $\Acal'\in\Pcal_p(A')$ with $|A'|=|A|+\Ocal(1)$.
		\item[(ii)] There is a trivial boundaried graph ${\bf X}'=(G[X'],X',\rho_{X'})$ and a collection $\{{\bf G}_i'=(G_{i}',X_{i}',ρ_{i}')\mid i\in[d']\}$, where $d'\in\bN$, of boundaried graphs such that ${\bf G}' = {\bf X}'\boxplus(\boxplus_{i \in [d']} {\bf G}_i')$ and
$|V(G')| \le |X| + \Ocal(|B|)$, $|E(G')| \le |E(G[X])| + \Ocal(|B|)$.
		\item[(iii)]
			For any boundaried graph ${\bf F}$ compatible with ${\bf G}$, it holds that
\begin{eqnarray}
 \hat{\mathsf{p}}_{f,\mathsf{opt}}({\bf G}\oplus {\bf F},\mathcal{A})
 & = &\hat{\mathsf{p}}_{f,\mathsf{opt}}({\bf G}'\triangleright{\bf F},\Acal')+s'.
\nonumber
\end{eqnarray}
\item[(iv)] For any boundaried graph ${\bf F}=(F,X_F,\rho_F)$ compatible with ${\bf G}$,
if $\bar{F}\setminus A_F\in\Hcal$, where $\bar{F}=({\bf F}\oplus {\bf G})[\heir_{{\bf F},{\bf G}\oplus{\bf F}}(V(F))]$ and $A_F=\heir_{{\bf G},{\bf G}\oplus{\bf F}}(A)$,
then $({\bf G}'\triangleright{\bf F})\setminus A'\in\Hcal$.
\end{itemize}

\begin{figure}
	\centering
	\includegraphics[scale=0.5]{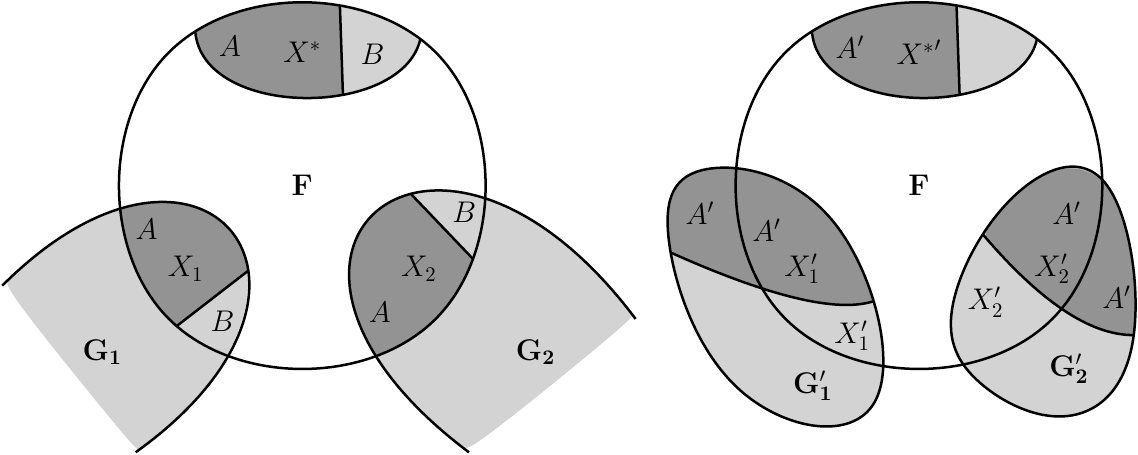}
	\caption{Illustration of the setting of the nice problem and reduction.
		The shaded area on the left is ${\bf G}$
		where $X = X_1 \cup X_2 \cup X^{\star}$,
		and the shaded area on the right is ${\bf G'}$
		where $X' = X_1' \cup X_2' \cup {X^{\star}}'$.}
	\label{fig-nice-prob}
\end{figure}

All the definitions of this section are naturally generalizable to graphs with weights on the vertices and/or edges.
Given such a weight function $w$, we extend $f(G,\Pcal)$, $\p_{f\opt}(G)$, $\hat{\p}_{f\opt}(G,\Xcal)$, $({\bf G},\Acal)$, and $({\bf G}',\Acal',s')$ to $f(G,\Pcal,w)$, $\p_{f\opt}(G,w)$, $\hat{\p}_{f\opt}(G,\Xcal,w)$, $({\bf G},\Acal,w)$, and $({\bf G}',\Acal',s',w')$, respectively.

\subsection{General dynamic programming scheme}\label{sec:nice:dp}

We now have all the ingredients for our general dynamic programming scheme on bipartite tree decompositions.
We essentially prove that if a problem $\Pi$ has an annotated extension that is $\Bcal$-nice and solvable in \FPT-time parameterized by \oct, then $\Pi$ is solvable in \FPT-time parameterized by \btw.
This actually holds for more general $\Hcal$.

\begin{lemma}\label{DP}
Let $p\in\bN$.
Let $\Hcal$ be a hereditary graph class.
Let $\Pi$ be an optimization problem.
Let $\Pi'$ be a problem that is:
\begin{itemize}
\item a $p$-\annotatedextension of $\Pi$ corresponding to some choice of $p$-partition-evaluation function $g$ and some $\mathsf{opt}\in\{\max,\min\}$,
\item $\Hcal$-nice, and
\item solvable, given instances $(G,\Xcal)$ such that $G\setminus{\cup\Xcal}\in\Hcal$, in time $f(|{\cup\Xcal}|)\cdot n^c\cdot m^d$, for some $c,d\in\bN$.
\end{itemize}
Then, there is an algorithm that, given a graph $G$ and a 1-$\Hcal$-tree decomposition of $G$ of width $k$, computes $\p_{f,\opt}(G)$ in time $\Ocal(p^k\cdot f(k+\Ocal(1))\cdot (k\cdot n)^c\cdot m^d)$ (or $\Ocal(p^k\cdot f(k+\Ocal(1))\cdot(m+k^2\cdot n)^d)$ if $c=0$).
\end{lemma}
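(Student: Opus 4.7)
The plan is bottom-up dynamic programming on a rooted 1-$\Hcal$-tree decomposition $(T,\alpha,\beta,r)$ of $G$, maintaining for every $t\in V(T)$ and every $\Xcal_t\in\Pcal_p(\delta_t)$ an entry
\[
M[t,\Xcal_t]:=\ph(G_t,\Xcal_t).
\]
Because $\delta_t\subseteq\chi(t)=\alpha(t)\cup\beta(t)$ and at most one vertex of $\beta(t)$ may lie in $\delta_t$, we have $|\delta_t|\leq k+1$, so each node holds at most $p^{k+1}$ entries. Since $\delta_r=\emptyset$, the final answer $\p_{g,\opt}(G)$ is just $M[r,\emptyset^p]$.

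At an internal node $t$ with children $t_1,\ldots,t_d$, and assuming all child tables are already computed, I would fix $\Xcal_t$, set $A:=\alpha(t)\cup\delta_t$ (of size at most $k+1$) and $X:=A\cup\bigcup_i\delta_{t_i}$, and assemble the boundaried children ${\bf G}_i:=(G_{t_i},\delta_{t_i},\rho|_{\delta_{t_i}})$, a trivial boundaried graph ${\bf X}:=(G[X],X,\rho|_X)$, and the bag ${\bf F}:=(G[\chi(t)],X,\rho|_X)$; then ${\bf G}:={\bf X}\boxplus(\boxplus_i{\bf G}_i)$ is compatible with ${\bf F}$ and satisfies ${\bf G}\oplus{\bf F}=G_t$. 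Since $\delta_{t_i}\subseteq\chi(t)$ and $|\delta_{t_i}\cap\beta(t)|\leq 1$, we have $|\delta_{t_i}\setminus A|\leq 1$, so the hypotheses of the $\Hcal$-nice reduction are met for the partition $(A,X\setminus A)$. I would then enumerate the at most $p^{k+1}$ partitions $\Acal\in\Pcal_p(A)$ extending $\Xcal_t$; for each such $\Acal$ I would call the $\Hcal$-nice reduction on $({\bf G},\Acal)$ using the precomputed child tables to obtain a triple $({\bf G}',\Acal',s')$ with $|A'|=\Ocal(k)$, invoke the $\Pi'$-algorithm on $({\bf G}'\triangleright{\bf F},\Acal')$ to obtain a value $v_\Acal$, and finally set $M[t,\Xcal_t]:=\opt_\Acal(v_\Acal+s')$.

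Correctness will follow from the identity $M[t,\Xcal_t]=\opt\{\ph({\bf G}\oplus{\bf F},\Acal)\mid\Acal\in\Pcal_p(A),\ \Acal\supseteq\Xcal_t\}$, which is immediate from the definition of $\ph$, together with the third bullet of the $\Hcal$-nice reduction, which turns each term into $\ph({\bf G}'\triangleright{\bf F},\Acal')+s'$. The most delicate point will be checking the precondition of the $\Pi'$-algorithm, namely that $({\bf G}'\triangleright{\bf F})\setminus A'\in\Hcal$. This is provided by the fourth bullet of the nice reduction, whose hypothesis is activated by observing that the image $\bar F\setminus A_F$ sits inside $G[\beta(t)\setminus\delta_t]$, which is an induced subgraph of $G[\beta(t)]\in\Hcal$.

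For the running time, the per-node cost is $p^{k+1}\cdot\bigl(\Ocal(k\cdot d_t)+f(\Ocal(k))\cdot|V({\bf G}'\triangleright{\bf F})|^c\cdot|E({\bf G}'\triangleright{\bf F})|^d\bigr)$, and the size bounds in the nice reduction give $|V({\bf G}'\triangleright{\bf F})|=\Ocal(|\chi(t)|)$ and $|E({\bf G}'\triangleright{\bf F})|=\Ocal(|E(G[\chi(t)])|+k^2+d_t)$. Summing over $t$ via $\sum_t|\chi(t)|=\Ocal(kn)$, $\sum_t|E(G[\chi(t)])|=\Ocal(m)$, and $\sum_t d_t=\Ocal(n)$ then yields the $\Ocal(p^k\cdot f(k+\Ocal(1))\cdot(kn)^c\cdot m^d)$ bound; when $c=0$, the absence of any bag-size factor together with the $\Ocal(k^2)$ extra edges per node coming from the nice-reduction gadgets collapses the edge count into the sharper $\Ocal(p^k\cdot f(k+\Ocal(1))\cdot(m+k^2 n)^d)$ form. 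The main obstacle I foresee is the interplay between the $\triangleright$-operation and the $\Hcal$-modulator guarantee, together with enough care in the edge-count bookkeeping to match the stated bounds exactly.
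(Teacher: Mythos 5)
Your proposal is correct and follows essentially the same route as the paper's proof: a bottom-up table $\ph(G_t,\cdot)$ indexed by $p$-partitions of the adhesion $\delta_t$, enumeration of the $p^{|A_t|}$ extensions to $A_t=\alpha(t)\cup\delta_t$, an application of the $\Hcal$-nice reduction followed by the \FPT-algorithm for $\Pi'$ on $({\bf G}'\triangleright{\bf F},\Acal')$, and the same amortized accounting $\sum_t|\chi(t)|=\Ocal(kn)$, $\sum_t c_t=\Ocal(n)$, $\sum_t|E(G[\chi(t)])|=\Ocal(m+k^2n)$. The only cosmetic caveats are that $\sum_t|E(G[\chi(t)])|$ is $\Ocal(m+k^2n)$ rather than $\Ocal(m)$ because adhesion edges repeat across bags (your separate $+k^2$ term absorbs this, so the final bound is unaffected), and that concluding $G[\beta(t)\setminus\delta_t]\in\Hcal$ from $G[\beta(t)]\in\Hcal$ implicitly uses heredity of $\Hcal$, an assumption the paper's own proof also glosses over.
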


\begin{proof}
Let {\sf Alg} be the algorithm that solves instances $(G,\Xcal)$ such that $G\setminus\cup \Xcal\in\Hcal$ in time $f(|\cup \Xcal|)\cdot n^c\cdot m^d$.

Let $(T,\alpha,\beta,r)$ be a rooted $1$-$\Hcal$-tree decomposition of $G$ of width at most $k$.
Let $\sigma:V(G)\to\bN$ be an injection.
For $t\in V(T)$, we set the following.
Let ${\bf G}_t =(G_t,\delta_t,\sigma_{|\delta_t})$ be a boundaried graph whose underlying graph $G_t$ contains the vertices of $G$ appearing below $t$ in the tree decomposition and whose boundary is the adhesion $\delta_t$ of $t$ with its parent.
Let $A_t=\alpha(t)\cup\delta_t$ be the vertices of $\alpha(t)$ plus the vertex in $\delta_t\setminus \alpha(t)$, if it exists.
Let $X_t$ be the union of $A_t$ and the adhesions $\delta_{t'}$ for $t'\in\ch_r(t)$ and let $B_t=X_t\setminus A_t$, that is, containing the vertex in $\delta_{t'}\setminus \alpha(t)$ for each $t'\in\ch_r(t)$, if it exists.
Let also ${\bf X_t}=(G[X_t],X_t,\sigma_{|X_t})$ be a trivial boundaried graph with underlying graph $X_t$ and ${\bf H}_t={\bf X}_t\boxplus(\boxplus_{t'\in\ch_r(t)}{\bf G}_{t'})$ (as in \autoref{fig_collection}).
%\begin{align*}
%{\bf G}_t &=(G_t,\delta_t,\sigma_{|\delta_t}),&
%X_t &=\alpha(t)\cup\delta_t\cup\bigcup_{t'\in\ch_r(t)}\delta_{t'},\\
%{\bf X}_t &=(G[X_t],X_t,\sigma_{|X_t}),&
%{\bf H}_t&={\bf X}_t\boxplus(\boxplus_{t'\in\ch_r(t)}{\bf G}_{t'}),\\
%A_t&=\alpha(t)\cup\delta_t, \text{ and }&
%B_t&=X_t\setminus A_t=X_t\cap\beta(t)\setminus\delta_t.
%\end{align*}
Let also ${\bf F}_t$ be such that $G_t={\bf F}_t\oplus{\bf H}_t$, that is, essentially, the graph induced by the bag at node $t$ with boundary $X_t$.
Recall that $|\bd({\bf G}_{t'})\setminus A_t|\leq 1$ for $t'\in\ch_r(t)$.

We proceed in a bottom-up manner to compute $\ph(G_t,\Xcal)$, for each $t\in V(T)$,
for each $\Xcal\in\Pcal_p(\delta_t)$.
Hence, given that $\delta_r=\emptyset$, it implies that $\ph(G_r,\emptyset)=\p_{g,\opt}(G)$.

We fix $t\in V(T)$.
By induction, for each $t'\in\ch_r(t)$ and for each $\Xcal_{t'}\in\Pcal_p(\delta_{t'})$, we compute the value $\ph(G_{t'},\Xcal_{t'})$.
Let $\Xcal\in\Pcal_p(\delta_t)$,
let $\Qcal$ be the set of all $\Acal\in\Pcal_p(A_t)$ such that $\Acal\cap\delta_t=\Xcal$, and
let $\Acal\in\Qcal$.
Since $\Pi'$ is $\Hcal$-nice, there is an $\Hcal$-nice reduction $({\bf H}_\Acal,\Acal',s_\Acal)$ of $({\bf H}_t,\Acal)$ with respect to $\Pi'$.
Hence, $\ph(G_t,\Acal)=\ph({\bf H}_\Acal\triangleright{\bf F}_t,\Acal')+s_\Acal$ by item (iii).
Let us compute $\ph({\bf H}_\Acal\triangleright{\bf F}_t,\Acal')$.

Given that $(T,\alpha,\beta)$ is a 1-$\Hcal$-tree decomposition, we have $F_t\setminus \alpha(t)=G[\beta(t)]\in\Hcal$, where $F_t$ is the underlying graph of $\bf F_t$. Thus, given that $\Hcal$ is hereditary, we have $F_t\setminus A_t\in\Hcal$.
Therefore, by item (iv) of the definition of an $\Hcal$-reduction, $({\bf H}_\Acal\triangleright{\bf F}_t)\setminus (\cup\Acal')\in\Hcal$.
Hence, we can compute $\ph({\bf H}_\Acal\triangleright{\bf F}_t,\Acal')$, and thus $\ph(G_t,\Acal)$, using {\sf Alg} on the instance $({\bf H}_\Acal\triangleright{\bf F}_t,\Acal')$.
Finally, $\ph(G_t,\Xcal)=\opt_{\Acal\in\Qcal}\ph(G_t,\Acal)$.

It remains to calculate the complexity.
Throughout, we make use of the fact that $p$ is a fixed constant.
We can assume that $T$ has at most $n$ nodes: for any pair of nodes $t$ and $t'$ with $(\alpha\cup\beta)(t)\subseteq(\alpha\cup\beta)(t')$, we can contract the edge $tt'$ of $T$ to a new vertex $t''$ with $\alpha(t'')=\alpha(t')$ and $\beta(t'')=\beta(t')$.
This defines a valid 1-$\Hcal$-tree decomposition of the same width.
For any leaf $t$ of $T$, there is a vertex $u\in V(G)$ that only belongs to the bag of $t$.
From this observation, we can inductively associate each node of $T$ to a distinct vertex of $G$.
%So this 1-$\Hcal$-tree decomposition has at most $n$ bags.
Hence, if $c_t=|\ch_r(t)|$, then we have $\sum_{t\in V(T)}c_t\leq n$.
Let also $n_t=|(\alpha\cup\beta)(t)|$ and $m_t=|E(G[(\alpha\cup\beta)(t)])|$.
Note that $|A_t| = |\alpha(t)|+|\delta_t\cap\beta(t)|\leq k+1$ and that $|B_t|=|\bigcup_{t'\in V(T)}\delta_{t'}\cap\beta(t)|\leq c_t$, so $|X_t|\leq k+1+c_t$.
Moreover, the properties of the tree decompositions imply that the vertices in $\beta(t)\setminus X_t$ are only present in node $t$.
Then,
$\sum_{t\in V(T)}n_t=\sum_{t\in V(T)}(|X_t|+|\beta(t)\setminus X_t|)=\Ocal(k\cdot n)$.
Also, let $\bar m_t$ be the number of edges only present in the bag of node $t$.
The edges that are present in several bags are those in the adhesion of $t$ and its neighbors.
$t$ is adjacent to its $|c_t|$ children and its parent, and an adhesion has size at most $k+1$.
Thus,
$\sum_{t\in V(T)}m_t\leq\sum_{t\in V(T)}({\bar m_t}+k^2(1+c_t))=\Ocal(m+k^2\cdot n)$.

There are $p^{|A_t|}\leq p^{k+1}=\Ocal(p^k)$ partitions of $\Pcal_p(A_t)$.
For each of them, we compute in time $\Ocal(k\cdot c_t)$ an $\Hcal$-nice reduction $({\bf H}_\Acal,\Acal',s_\Acal)$ with $|\cup \Acal'|=|A_t|+\Ocal(1)=k+\Ocal(1)$ (by item (i)) and with $\Ocal(|B_t|)=\Ocal(c_t)$ additional vertices and edges (by item (ii)).
We thus solve $\Pi'$ on $({\bf H}_\Acal\triangleright{\bf F}_t,\Acal')$ in time
$f(k+\Ocal(1))\cdot \Ocal((n_t+c_t)^c\cdot(m_t+c_t)^d)$.
Hence, the running time is $\Ocal(p^k\cdot f(k+\Ocal(1))\cdot (k\cdot n)^c\cdot m^d)$ (or $\Ocal(p^k\cdot f(k+\Ocal(1))\cdot(m+k^2\cdot n)^d)$ if $c=0$).
 \end{proof}

\subsection{Generalizations}

For the sake of simplicity, we assumed in \autoref{DP} that the problem $\Pi$ under consideration takes as input just a graph.
However, a similar statement still holds if we add labels/weights on the vertices/edges of the input graph.
This is in particular the case for {\sc Weighted Independent Set} (\autoref{sec-mis}) and {\sc Maximum Weighted Cut} (\autoref{sec-cut}) where the vertices or edges are weighted. Furthermore, while we omit the proof here, with some minor changes to the definition of a nice problem, a similar statement would also hold for $q$(-torso)-$\Hcal$-treewidth.\medskip

Moreover, again for the sake of simplicity, we assumed that $\Pi'$ is solvable in \FPT-time, while other complexities such as \XP-time could be considered.
Similarly, in the definition of the nice reduction, the constraints $|A'|=|A|+\Ocal(1)$, $|V(G')| \le |X| + \Ocal(|B|)$, $|E(G')| \le |E(G[X])| + \Ocal(|B|)$ can be modified.
For instance, we could change the first constraint to $|A'|=\Ocal(1)$.
In those cases, the dynamic programming algorithm still applies, but the running time of \autoref{DP} is different.\medskip

To give a precise running time for {\sc $K_t$-Subgraph-Cover} (\autoref{sec-kt}), {\sc Weighted Independent Set} (\autoref{sec-mis}), and {\sc Maximum Weighted Cut} (\autoref{sec-cut}) below,
let us observe that, if $\Pi'$ is solvable in time $f(|\cup \Xcal|)\cdot (n')^c\cdot (m')^d$, where $G'=G\setminus\cup \Xcal$, $n'=|V(G')|$, and $m'=|E(G')|$, then the running time of \autoref{DP} is better.
Indeed, in the proof of the complexity of \autoref{DP},
we now solve $\Pi'$ on $({\bf H}_\Acal\triangleright{\bf F},\Acal')$ in time
$f(k+\Ocal(1))\cdot \Ocal((n'_t+c_t)^c\cdot(m'_t+c_t)^d)$, where $n'_t=|\beta(t)|$ and $m'_t=|E(G[\beta(t)])|$.
We have
$\sum_{t\in V(T)}n'_t=\sum_{t\in V(T)}(|B|+|\beta(t)\cap\delta_t|+|\beta(t)\setminus X|)=\Ocal(n)$ and
$\sum_{t\in V(T)}m'_t\leq m$.
Hence, the total running time is $\Ocal(p^k\cdot (k\cdot n+ f(k+\Ocal(1))\cdot n^c\cdot m^d))$.

\subsection{Applications}
\label{sec-results-FPT}

We now apply the above framework to give \FPT-algorithms for several problems parameterized by bipartite treewidth,
that is, $1$-$\Bcal$-treewidth where $\Bcal$ is the class of bipartite graphs.
Thanks to \autoref{DP},
this now reverts to showing that the problem under consideration has a $\Bcal$-nice annotated extension
that is solvable in \FPT\xspace time when parameterized by \oct.
Several of the presented results actually hold for other graph classes $\Hcal$, not necessarily only bipartite graphs.

All of the problems of this section have the following property, that seems critical to show that a problem is $\Hcal$-nice.

\subparagraph{Gluing property.}
Let $\Pi$ be a $p$-\annotated problem corresponding to some choice of $p$-partition-evaluation function $f$ and some $\mathsf{opt}\in\{\max,\min\}$.
We say that $\Pi$ has the \emph{gluing property} if, given two compatible boundaried graphs ${\bf F}=(F,X,\rho)$ and ${\bf G}=(G,X,\rho)$, $\Xcal\in\Pcal_p(X)$, and
$\Pcal\in\Pcal_p(V({\bf F}\oplus{\bf G}))$  such that $\Xcal\subseteq\Pcal$, then
$\phf({\bf F}\oplus{\bf G},\Xcal)=f({\bf F}\oplus{\bf G},\Pcal)$ if and only if $\phf(F,\Xcal)=f(F,\Pcal\cap V(F))$ and $\phf(G,\Xcal)=f(G,\Pcal\cap V(G))$, where $F$ and $G$ are the underlying graphs of $\bf F$ and $\bf G$ (see \autoref{fig_gluing_property}).
\medskip

\begin{figure}[h]
\center
\includegraphics[scale=0.7]{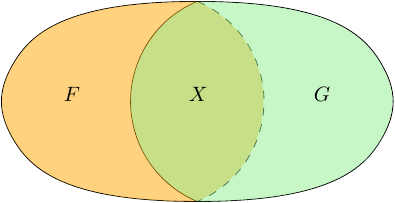}
\caption{If a problem $\Pi$ has the gluing property, then a solution is optimal on ${\bf F}\oplus{\bf G}$ if and only if its restriction to $F$ and its restriction to $G$ are both optimal.}
\label{fig_gluing_property}
\end{figure}

For the sake of simplicity, with a slight abuse of notation, we identify in this section a vertex with its heir.
\medskip

Let $\Pi'$ be an annotated extension of some problem $\Pi$.
Given an instance $({\bf G}={\bf X}\boxplus(\boxplus_{i\in[d]}{\bf G}_i),(A,B),\Acal)$ for a $\Bcal$-nice reduction with respect to $\Pi'$, we know that the boundary of each $G_i$ contains at most one vertex of $B$, and hence which is not annotated (the annotated vertices are those in $A=\cup\Acal$).
To show that $\Pi'$ is $\Bcal$-nice, we thus essentially need to show how to reduce a graph ${\bf F}\oplus{\bf G}$ to a graph $F'$ when the boundary of ${\bf F}$ and ${\bf G}$ is totally annotated (and that $\Pi'$ has the gluing property), and when the boundary of ${\bf F}$ and ${\bf G}$ has a single vertex $v$ that is not annotated.
To show that $\Pi$ is \FPT\ parameterized by \btw, it then suffices to prove that $\Pi'$ is \FPT\ parameterized by \oct\ on instances where a minimal odd cycle transversal is annotated.

\subsubsection{$K_t$-Subgraph-Cover}\label{sec-kt}

Let $\Gcal$ be a graph class.
The problem \PbCov\ is defined as follows.

\begin{center}
	\fbox{
		\begin{minipage}{11.5cm}
			\noindent{{\sc (Weighted)} \PbCov}\\
			\noindent\textbf{Input}:~~A graph $G$ (and a weight function $w:V(G)\to\bN$).\\
			\textbf{Objective}:~~Find the set $S\subseteq V(G)$ of minimum size (resp. weight) such \phantom{\textbf{Objective}:~~} that $G\setminus S\in\Gcal$.
		\end{minipage}
	}
\end{center}

If $\Gcal$ is the class of edgeless (resp. acyclic, planar, bipartite, (proper) interval, chordal) graphs, then we obtain the {\textsc{Vertex Cover}} (resp. {\textsc{Feedback Vertex Set}}, {\textsc{Vertex Planarization}}, {\textsc{Odd Cycle Transversal}}, {\sc (proper) Interval Vertex Deletion}, {\sc Chordal Vertex Deletion}) problem.
Also, given a graph $H$, if $\Gcal$ is the class of graphs that do not contain $H$ as a subgraph (resp. a minor/odd-minor/induced subgraph), then the corresponding problem is called {\sc $H$-Subgraph-Cover} (resp. {\sc $H$-Minor-Cover}/{\sc $H$-Odd-Minor-Cover}/{\sc $H$-Induced-Subgraph-Cover}).\medskip

Let $H$ be a graph and $w:V(G)\to\bN$ be a weight function (assigning one to every vertex in the unweighted case).
We define $f_ {H}$  as the $2$-partition-evaluation function where, for every graph $G$, for every $(R,S)\in\Pcal_2(V(G))$,
\begin{equation*}
   f_ {H}(G,(R,S))=
   \begin{cases}
     +\infty & \text{if } H \text{ is a subgraph of } G\setminus S, \\
     w(S) &  \text{otherwise.}
   \end{cases}
\end{equation*}

Seen as an optimization problem, {\sc (Weighted) $H$-Subgraph-Cover} is the problem of computing $\p_{f_ {H},\min}(G)$.
We call its annotated extension {\sc (Weighted) \Annotated $H$-Subgraph-Cover}.
In other words, {\sc (Weighted) \Annotated $H$-Subgraph-Cover} is defined as follows.

\begin{center}
	\fbox{
		\begin{minipage}{11.5cm}
			\noindent{{\sc (Weighted) \Annotated $H$-Subgraph-Cover}}\\
			\noindent\textbf{Input}:~~A graph $G$, two disjoint sets $R,S\subseteq V(G)$ (and a weight function $w:~V(G)\to\bN$).\\
			\textbf{Objective}:~~Find, if it exists, the minimum size (resp. weight) of a set $S^\star \subseteq V(G)$ such that $R\cap S^\star=\emptyset$, $S\subseteq S^\star$, and $G\setminus S^\star$ does not contain $H$ as a subgraph.
		\end{minipage}
	}
\end{center}

Given a graph $G$ and a set $X\subseteq V(G)$, note that $X$ is a vertex cover if and only if $V(G)\setminus X$ is an independent set.
Hence, the size/weight of a minimum vertex cover is equal to the size/weight of a maximum independent set.
Thus, seen as optimization problems, {\sc (Weighted) Vertex Cover} and {\sc (Weighted) Independent Set} are equivalent problems.\medskip

In order to prove that {\sc (Weighted) \Annotated $K_t$-Subgraph-Cover} is a nice problem, we first prove that {\sc (Weighted) \Annotated $K_t$-Subgraph-Cover} has the gluing property.

\begin{lemma}[Gluing property]\label{glu-kt}
{\sc (Weighted) \Annotated $K_t$-Subgraph-Cover} has the gluing property.
More precisely, given two boundaried graphs ${\bf F}=(F,B_F,\rho_F)$ and ${\bf G}=(G,B_G,\rho_G)$, a weight function $w:V({\bf F}\oplus{\bf G})\to\bN$, a set $X\subseteq V({\bf F}\oplus{\bf G})$ such that $B_F\cap B_G\subseteq X$, and $\Xcal=(R,S)\in\Pcal_2(X)$, we have $$\hat{\p}_{f_{K_t},\min}({\bf F}\oplus{\bf G},\Xcal,w)=\hat{\p}_{f_{K_t},\min}(F,\Xcal\cap V(F),w)+\hat{\p}_{f_{K_t},\min}(G,\Xcal\cap V(G),w)-\bar{w},$$ where $\bar{w}=w(S\cap B_F\cap B_G)$.
\end{lemma}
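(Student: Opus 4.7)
The plan is to exploit the $2$-connectivity (in fact, completeness) of $K_t$ in order to show that every $K_t$-subgraph of ${\bf F}\oplus{\bf G}$ lies entirely in $F$ or entirely in $G$. Since $(V(F),V(G))$ is a separation of ${\bf F}\oplus{\bf G}$ whose separator is $B_F\cap B_G$, there is no edge between $V(F)\setminus V(G)$ and $V(G)\setminus V(F)$. If a $K_t$-subgraph had a vertex in each of those two sets, they would have to be adjacent by completeness, contradicting the absence of such edges. Therefore any $K_t$-subgraph of ${\bf F}\oplus{\bf G}$ is a subgraph of $F$ or of $G$.

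The second step is to transfer this structural property to the level of solutions. From the observation above, a set $S^{\star}\subseteq V({\bf F}\oplus{\bf G})$ kills all $K_t$-subgraphs of ${\bf F}\oplus{\bf G}$ if and only if $S^{\star}\cap V(F)$ kills all $K_t$-subgraphs of $F$ and $S^{\star}\cap V(G)$ kills all $K_t$-subgraphs of $G$. Moreover, $S^{\star}$ is compatible with $\Xcal=(R,S)$ (i.e., $S\subseteq S^{\star}$ and $S^{\star}\cap R=\emptyset$) if and only if $S^{\star}\cap V(F)$ is compatible with $\Xcal\cap V(F)$ and $S^{\star}\cap V(G)$ is compatible with $\Xcal\cap V(G)$; this is immediate since $R$ and $S$ are the disjoint union of their intersections with $V(F)$ and $V(G)$. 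Combining these two equivalences gives a bijection between valid solutions on the glued instance and pairs of valid solutions on the two sides that agree on $V(F)\cap V(G)=B_F\cap B_G$.

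The third step is the weight accounting. By inclusion--exclusion,
\[
w(S^{\star})=w(S^{\star}\cap V(F))+w(S^{\star}\cap V(G))-w(S^{\star}\cap B_F\cap B_G).
\]
Here we use the hypothesis $B_F\cap B_G\subseteq X=R\cup S$: for any valid $S^{\star}$ one has $S^{\star}\cap B_F\cap B_G=S\cap B_F\cap B_G$, since $S\subseteq S^{\star}$ and $S^{\star}\cap R=\emptyset$, so the third term is a constant depending only on $\Xcal$ and the boundaries. Minimising the left-hand side thus reduces to minimising the two terms on the right independently, which is exactly the stated equality $\hat{\p}_{f_{K_t},\min}({\bf F}\oplus{\bf G},\Xcal,w)=\hat{\p}_{f_{K_t},\min}(F,\Xcal\cap V(F),w)+\hat{\p}_{f_{K_t},\min}(G,\Xcal\cap V(G),w)-w(S\cap B_F\cap B_G)$; the ``if and only if'' form stated in the definition of the gluing property then follows because the decomposition preserves the value of $f_{K_t}$ on any compatible partition~$\Pcal$.

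The main subtlety (rather than obstacle) is the bookkeeping of the shared boundary vertices: one has to check that the constant correction $w(S\cap B_F\cap B_G)$ is the right one, which hinges on the observation that any valid $S^{\star}$ intersects the separator $B_F\cap B_G$ in exactly $S\cap B_F\cap B_G$. Everything else is a direct consequence of the fact that $K_t$, being a clique, cannot straddle a vertex separation, which is the property that fails once $H$ is not a clique (and explains why the argument does not extend to general {\sc $H$-Subgraph-Cover}, as already noted in the overview).
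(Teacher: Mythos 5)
Your proposal is correct and follows essentially the same route as the paper's proof: the key observation that a clique cannot straddle the separation $(V(F),V(G))$, the inclusion--exclusion on weights, and the use of $B_F\cap B_G\subseteq X$ to pin down $S^{\star}\cap B_F\cap B_G=S\cap B_F\cap B_G$ are exactly the ingredients of the paper's two-inequality argument. No gaps.
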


\begin{proof}
Observe that $K_t$ is a subgraph of ${\bf F}\oplus{\bf G}$ if and only if $K_t$ is a subgraph of $F$ or of $G$.

Let $\Pcal=(R^\star,S^\star)\in\Pcal_2(V({\bf F}\oplus{\bf G}))$ be such that $\Xcal\subseteq\Pcal$ and $\hat{\p}_{f_{K_t},\min}({\bf F}\oplus{\bf G},\Xcal,w)=f_{K_T}({\bf F}\oplus{\bf G},\Pcal,w).$
Then $K_t$ is neither a subgraph of $F\setminus (S^\star\cap V(F))$ nor of $G\setminus(S^\star\cap V(G))$.
Therefore,
\begin{align*}
\hat{\p}_{f_{K_T},\min}({\bf F}\oplus{\bf G},&\Xcal,w)=w(S^\star)\\
&=w(S^\star\cap V(F))+w(S^\star\cap V(G))-w(S^\star\cap B_F\cap B_G)\\
&\geq \hat{\p}_{f_{K_t},\min}(F,\Xcal\cap V(F),w)+\hat{\p}_{f_{K_t},\min}(G,\Xcal\cap V(G),w)-\bar{w}.
\end{align*}

Reciprocally, let $\Pcal_H=(R_H,S_H)\in\Pcal_2(V(H))$ be such that $\Xcal\cap V(H)\subseteq\Pcal_H$ and $\hat{\p}_{f_{K_t},\min}(H,\Xcal\cap V(H))=f_{K_t}(H,\Pcal_H)$ for $H\in\{F,G\}$.
Then $K_t$ is not a subgraph of $({\bf F}\oplus{\bf G})\setminus (S_F\cup S_G)$, so
\begin{align*}
\hat{\p}_{f_{K_t},\min}({\bf F}\oplus{\bf G},&\Xcal,w)\leq w(S_F\cup S_G)\\
&=w(S_F)+w(S_G)-\bar{w}\\
&=\hat{\p}_{f_{K_t},\min}(F,\Xcal\cap V(F),w)+\hat{\p}_{f_{K_t},\min}(G,\Xcal\cap V(G),w)-\bar{w}.
\end{align*}
 \end{proof}

The main obstacle to find an \FPT-algorithm parameterized by $(1,\Hcal)$-\tw\ for {\sc (Weighted) \Annotated $H$-Subgraph-Cover}, for $H$ that is not a clique, is the fact that the problem does not have the gluing property.

\begin{lemma}\label{no-glu-h}
If $H$ is not a complete graph, then
{\sc (Weighted) \Annotated $H$-Subgraph-Cover} does not have the gluing property.
\end{lemma}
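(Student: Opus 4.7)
The plan is to exhibit, for every non-clique $H$, compatible boundaried graphs $\bf F$ and $\bf G$ together with partitions $\Xcal\subseteq \Pcal$ that satisfy both local equalities of the gluing property but violate the global one. The idea I would use is to split $H$ along a non-edge $uv$ so that each half loses one of $u,v$---and thus becomes too small to contain $H$---while their gluing faithfully reassembles $H$ itself.

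Concretely, since $H$ is not a clique, I would fix $u,v\in V(H)$ with $uv\notin E(H)$ and set $X:=V(H)\setminus\{u,v\}$. After labeling $V(H)$ injectively, I would take $\bf F$ to be $H-v$ boundaried by $X$ (with the inherited labels) and $\bf G$ to be $H-u$ boundaried by $X$ in the same way. Compatibility will be immediate since both induce the same subgraph $H[X]$ on their boundary, and the identity $E(H-v)\cup E(H-u)=E(H)\setminus\{uv\}=E(H)$ (which crucially uses the non-edge $uv$) will show that ${\bf F}\oplus{\bf G}$ is isomorphic to $H$. In particular, neither $F$ nor $G$ can contain $H$ as a subgraph, simply because each has only $|V(H)|-1$ vertices.

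To finish, I would assign unit weights, set $\Xcal=(X,\emptyset)\in\Pcal_2(X)$---that is, $R=X$ and $S=\emptyset$ in the notation of \Annotated $H$-Subgraph-Cover---and extend to $\Pcal=(V({\bf F}\oplus{\bf G}),\emptyset)$. Since $H$ is a subgraph of ${\bf F}\oplus{\bf G}$ and $S$ is empty, one computes $f_H({\bf F}\oplus{\bf G},\Pcal)=+\infty$, whereas $\hat{\p}_{f_H,\min}({\bf F}\oplus{\bf G},\Xcal)=1$: deleting the single vertex $u\notin R$ destroys the only copy of $H$. On the $F$-side, $\Pcal\cap V(F)=(V(F),\emptyset)$ yields $f_H(F,\Pcal\cap V(F))=0$, matching $\hat{\p}_{f_H,\min}(F,\Xcal\cap V(F))=0$ via $S^\star=\emptyset$, and the same holds symmetrically for $G$. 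Thus both local equalities hold while the global one fails, disproving the gluing property. The only delicate step is the edge-set identity forcing ${\bf F}\oplus{\bf G}\cong H$; this is exactly where the hypothesis $uv\notin E(H)$ is used, and also where the construction would collapse for cliques---mirroring the role of the clique assumption in \autoref{glu-kt}.
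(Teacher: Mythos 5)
Your construction is exactly the paper's: ${\bf F}=H-v$, ${\bf G}=H-u$ glued along $X=V(H)\setminus\{u,v\}$ with $\Xcal=(X,\emptyset)$, using the non-edge $uv$ to get ${\bf F}\oplus{\bf G}\cong H$. The only (immaterial) difference is the witness partition: the paper takes $\Pcal=(V(H)\setminus\{u\},\{u\})$ so that the global equality holds while the local one fails on $G$, whereas you take $\Pcal=(V({\bf F}\oplus{\bf G}),\emptyset)$ so that both local equalities hold while the global one fails; either choice refutes the biconditional, and your computations are correct.
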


\begin{proof}
Since $H$ is not a clique, there are two vertices $u,v\in V(H)$ that are not adjacent.
Let $V'=V(H)\setminus\{u,v\}$ and
let $\sigma:V'\to\bN$ be an injection.
Let ${\bf F}=(H\setminus\{u\}, V',\sigma)$ and ${\bf G}=(H\setminus\{v\},V',\sigma)$.
Then ${\bf F}\oplus{\bf G}$ is isomorphic to $H$.
Let $\Xcal=(V',\emptyset)$ and $\Pcal=(H\setminus\{u\},\{u\})\supseteq\Xcal$.
Then we have $\hat{\p}_{f_H,\min}({\bf F}\oplus{\bf G},\Xcal)=f_H({\bf F}\oplus{\bf G},\Pcal)=1$.
However, $\hat{\p}_{f_H,\min}(G,\Xcal\cap V(G))=0<f_H(G,\Pcal\cap V(G))=1$.
 \end{proof}

We now show how to reduce a graph ${\bf F}\oplus{\bf G}$ when the boundary of ${\bf F}$ and ${\bf G}$ has a single vertex $v$ that is not annotated.
Essentially, given an annotation $(R,S)$, we compare the size $s^+$ of an optimal solution when $v$ is added to $S$ and the size $s^-$ of an optimal solution when $v$ is added to $R$.
If $s^+\le s^-$, then it is always better too add $v$ to $S$, so we do so and delete $V(G)\setminus V(F)$ from ${\bf F}\oplus{\bf G}$.
Otherwise, we still delete $V(G)\setminus V(F)$ from ${\bf F}\oplus{\bf G}$, but there is no need to annotate $v$, whose addition to $S$ or not will be determined by its behaviour in $F$.
See \autoref{fig_subgraph_cover} for an illustration.

\begin{figure}[h]
\center
\includegraphics[scale=0.7]{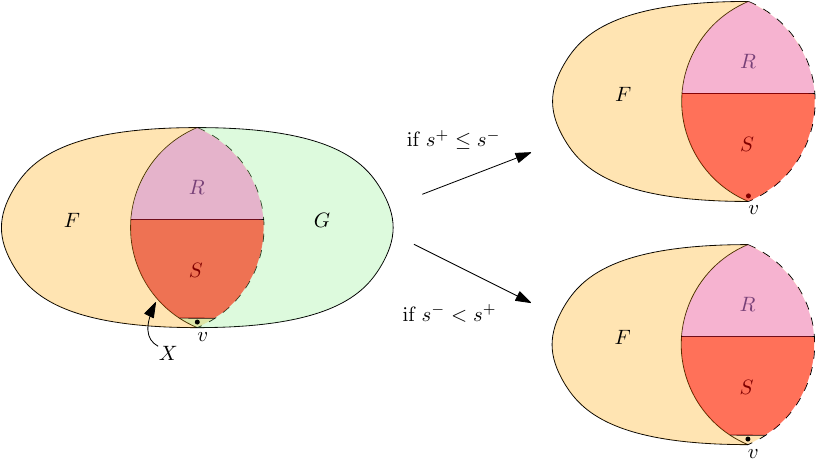}
\caption{Illustration of the gadgetization for {\sc $K_t$-Subgraph-Cover}.}
\label{fig_subgraph_cover}
\end{figure}

\begin{lemma}[Gadgetization]\label{obs+1}
Let ${\bf F}=(F,B_F,\rho_F)$ and ${\bf G}=(G,B_G,\rho_G)$ be two boundaried graphs.
Let $X\subseteq V({\bf F}\oplus{\bf G})$ be such that $B_F\cap B_G\subseteq X$.
Let also $v\in B_F\cap B_G$ and let $\Xcal=(R,S)\in\Pcal_2(X\setminus\{v\})$.
We define $\Xcal^+=(R,S\cup\{v\})$ and $\Xcal^-=(R\cup\{v\},S)$.
Furthermore, for $a\in\{+,-\}$, we set $s^a=\hat{\p}_{f_{K_t},\min}(G,\Xcal^a\cap V(G))$, and we set $\bar{s}=|S\cap B_F\cap B_G|$.
Then, for any $t\in\bN$,
\begin{equation*}
   \hat{\p}_{f_{K_t},\min}({\bf F}\oplus{\bf G},\Xcal)=
   \begin{cases}
     s^++\hat{\p}_{f_{K_t},\min}(F,\Xcal^+\cap V(F))-\bar{s}-1 & \text{if } s^+\leq s^-, \\
     s^-+\hat{\p}_{f_{K_t},\min}(F,\Xcal\cap V(F))-\bar{s} &  \text{otherwise.}
   \end{cases}
\end{equation*}
\end{lemma}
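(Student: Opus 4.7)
The plan is to reduce the computation of $\hat{\p}_{f_{K_t},\min}(\mathbf{F}\oplus\mathbf{G},\Xcal)$ to two applications of the gluing property established in \autoref{glu-kt}, one for each possible status of the unannotated vertex $v$. Since $\Xcal$ partitions only $X\setminus\{v\}$, every $2$-partition $(R^\star,S^\star)$ of $V(\mathbf{F}\oplus\mathbf{G})$ extending $\Xcal$ either puts $v$ in $S^\star$ (and thus extends $\Xcal^+$) or puts $v$ in $R^\star$ (and thus extends $\Xcal^-$); these cases being disjoint and exhaustive, one obtains the decomposition
\[
\hat{\p}_{f_{K_t},\min}(\mathbf{F}\oplus\mathbf{G},\Xcal)\ =\ \min\bigl(\hat{\p}_{f_{K_t},\min}(\mathbf{F}\oplus\mathbf{G},\Xcal^+),\ \hat{\p}_{f_{K_t},\min}(\mathbf{F}\oplus\mathbf{G},\Xcal^-)\bigr).
\]
Because $B_F\cap B_G\subseteq X$ and $v\in X$, both $\Xcal^+$ and $\Xcal^-$ are partitions of $X$ satisfying the hypothesis of \autoref{glu-kt}. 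Writing $\tilde{p}^\pm:=\hat{\p}_{f_{K_t},\min}(F,\Xcal^\pm\cap V(F))$, the gluing identity then yields
\[
\hat{\p}_{f_{K_t},\min}(\mathbf{F}\oplus\mathbf{G},\Xcal^+)=\tilde{p}^++s^+-(\bar{s}+1),\qquad\hat{\p}_{f_{K_t},\min}(\mathbf{F}\oplus\mathbf{G},\Xcal^-)=\tilde{p}^-+s^-\!-\bar{s},
\]
where the extra $-1$ on the left comes from the fact that $v\in B_F\cap B_G$ is added to $S$ when passing from $\Xcal$ to $\Xcal^+$.

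The next step is a monotonicity observation specific to $K_t$-Subgraph-Cover: given any feasible solution $S_0^\star$ witnessing $\hat{\p}(H,(R\cup\{v\},S))$ for some graph $H$, the set $S_0^\star\cup\{v\}$ is feasible for $\hat{\p}(H,(R,S\cup\{v\}))$, since deleting an extra vertex preserves $K_t$-subgraph-freeness. Consequently $s^+\le s^-+1$ and $\tilde{p}^+\le\tilde{p}^-+1$. These inequalities drive the case analysis. In Case~1 ($s^+\le s^-$), $\tilde{p}^+-1\le\tilde{p}^-$ together with $s^+\le s^-$ implies $\tilde{p}^++s^+-1\le\tilde{p}^-+s^-$, so the $\Xcal^+$-branch dominates and the $\min$ equals the first formula in the lemma. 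In Case~2 ($s^+>s^-$), the monotonicity forces the equality $s^+=s^-+1$; substituting collapses both branches to $\tilde{p}^\pm+s^-\!-\bar{s}$, so the $\min$ equals $\min(\tilde{p}^+,\tilde{p}^-)+s^-\!-\bar{s}$.

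To conclude that this last expression matches the second formula in the statement, I would observe that the very same $\{+,-\}$-dichotomy applied to $F$ alone gives $\hat{\p}_{f_{K_t},\min}(F,\Xcal\cap V(F))=\min(\tilde{p}^+,\tilde{p}^-)$: any partition of $V(F)$ extending $\Xcal\cap V(F)$ either places $v$ in the solution (extending $\Xcal^+\cap V(F)$) or outside it (extending $\Xcal^-\cap V(F)$). The main conceptual point — and the step that I expect to require the most care — is recognizing that the asymmetric-looking formula (with $\Xcal^+$ in one branch and $\Xcal$ in the other) is a consequence of the sharp inequality $s^+\le s^-+1$, which allows the two branches of the $\min$ to coincide into a single ``unannotated at $v$'' expression in Case~2, but not in Case~1. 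The rest of the argument is bookkeeping over the definitions of $\Xcal^\pm$ and the constant $\bar{s}$.
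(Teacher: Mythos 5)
Your proposal is correct and follows essentially the same route as the paper: split the minimum over the two placements of $v$, apply the gluing identity of \autoref{glu-kt} to each branch (with the offsets $\bar{s}+1$ and $\bar{s}$), derive the monotonicity bounds $s^+\le s^-+1$ and $\tilde{p}^+\le\tilde{p}^-+1$, and resolve the two cases exactly as in the paper, including the final identification $\min(\tilde{p}^+,\tilde{p}^-)=\hat{\p}_{f_{K_t},\min}(F,\Xcal\cap V(F))$. No gaps.
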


\begin{proof}
For $H\in\{F,G\}$ and for $a\in\{+,-\}$, let $S^a_H\subseteq V(H)$ be such that $\Xcal^a\cap V(H)\subseteq(V(H)\setminus S^a_H,S^a_H)$ and $\hat{\p}_{f_{K_t},\min}(H,\Xcal^a\cap V(H))=f_{K_t}(H,(V(H)\setminus S^a_H,S^a_H))=|S^a_H|$.
We deduce that $s^+=|S_G^+|$ and $s^-=|S_G^-|$.
Let us note similarly $t^+=|S_F^+|$ and $t^-=|S_F^-|.$

Note that $S^+_F\cap S^+_G=(S\cap B_F\cap B_G)\cup\{v\}$ and $S^-_F\cap S^-_G=S\cap B_F\cap B_G$.
Hence, using \autoref{glu-kt}, we have that
\begin{align*}
\hat{\p}_{f_{K_t},\min}({\bf F}\oplus{\bf G},\Xcal)
&=\min\{\hat{\p}_{f_{K_t},\min}({\bf F}\oplus{\bf G},\Xcal^+),\hat{\p}_{f_{K_t},\min}({\bf F}\oplus{\bf G},\Xcal^-)\}\\
&=\min\{t^++s^+-1,t^-+s^-\}-\bar{s}.
\end{align*}

Note that we always have $s^+\leq s^-+1$ since $G\setminus (S_G^-\cup\{v\})$ does not contain $K_t$ as a subgraph, and thus $|S_G^-\cup\{v\}|\geq p_{f_{K_t},\min}(G,\Xcal^+\cap V(G))$.
Similarly, $t^+\leq t^-+1$.

Thus, if $s^+\leq s^-$, then $t^++s^+-1\leq t^-+s^+\leq t^-+s^-.$
Given that $t^+=\hat{\p}_{f_{K_t},\min}(F,\Xcal^+\cap V(F))$, it follows that
$$\hat{\p}_{f_{K_t},\min}({\bf F}\oplus{\bf G},\Xcal)=\hat{\p}_{f_{K_t},\min}(F,\Xcal^+\cap V(F))+s^+-\bar{s}-1.$$
And if $s^-<s^+$, then $s^+=s^-+1$, so $\min\{t^++s^+-1,t^-+s^-\}=\min\{t^+,t^-\}+s^-=\hat{\p}_{f_{K_t},\min}(F,\Xcal\cap V(F))+s^-$.
It follows that
$$\hat{\p}_{f_{K_t},\min}({\bf F}\oplus{\bf G},\Xcal)=\hat{\p}_{f_{K_t},\min}(F,\Xcal\cap V(F))+s^--\bar{s}.$$
 \end{proof}

Contrary to \autoref{glu-kt}, observe that \autoref{obs+1} only holds in the unweighted case.
Indeed, in the weighted case, we now have $s^+\leq s^-+w(v)$, and thus, when
$s^+\in[s^-+1,s^-+w(v)-1]$, we do not know what happens.

Using \autoref{glu-kt} and \autoref{obs+1}, we can now prove that {\sc \Annotated $K_t$-Subgraph-Cover} is $\Hcal$-nice.
Essentially, given an instance $({\bf G}={\bf X}\boxplus(\boxplus_{i\in[d]}{\bf G}_i),(A,B),(R,S))$,
we reduce ${\bf G}$ to ${\bf X}$ and further remove some vertices of $B$ that can be optimally added to $S$, and show that the resulting boundaried graph is equivalent to $\bf G$ modulo some constant $s$.

\begin{lemma}[Nice problem]\label{ktvc-nice}
Let $\Hcal$ be a hereditary graph class.
Let $t\in\bN$.
{\sc \Annotated $K_t$-Subgraph-Cover} is $\Hcal$-nice.
\end{lemma}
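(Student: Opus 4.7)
The plan is to process each child ${\bf G}_i$ independently, replacing it by an (edgeless) boundary-only boundaried graph and folding its contribution into $s'$. We set $\Acal' = \Acal$ (so $A' = A$) and construct ${\bf G}'$ as the trivial boundaried graph on some $X' \subseteq X$ obtained by deleting from $X$ the vertices of $B$ that we ``force into $S$'' during processing.

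For each ${\bf G}_i$, the hypothesis $|\heir_{{\bf G}_i, G}(X_i) \setminus A| \leq 1$ yields two cases. If $\heir_{{\bf G}_i, G}(X_i) \subseteq A$, we apply the gluing property (\autoref{glu-kt}) with ${\bf G}_i$ in place of ${\bf G}$ and the remainder of ${\bf G} \oplus {\bf F}$ in place of ${\bf F}$: the formula shows that discarding the interior of ${\bf G}_i$ preserves the optimum modulo adding $\hat{\p}_{f_{K_t},\min}(G_i, \Acal \cap X_i) - |S \cap X_i|$ to $s'$ (writing $\Acal = (R, S)$). If instead $\heir_{{\bf G}_i, G}(X_i) \cap B = \{v\}$ for a unique $v$, we consult the supplied values $s_i^+$ and $s_i^-$ corresponding to the two extensions of $\Acal \cap (X_i \setminus \{v\})$ obtained by placing $v$ in $S$ or in $R$, and apply \autoref{obs+1}: when $s_i^+ \leq s_i^-$ we delete $v$ from $X'$ (forcing $v$ into $S$, with the cost absorbed in the offset) and add $s_i^+ - |S \cap X_i|$ to $s'$; when $s_i^+ > s_i^-$ we retain $v$ in $X'$ as an unannotated vertex and add $s_i^- - |S \cap (X_i \setminus \{v\})|$ to $s'$.

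Correctness of the equation $\hat{\p}_{f_{K_t},\min}({\bf G} \oplus {\bf F}, \Acal) = \hat{\p}_{f_{K_t},\min}({\bf G}' \triangleright {\bf F}, \Acal') + s'$ follows by iterating the two lemmas once per child, since each ${\bf G}_i$ interacts with the rest of ${\bf G} \oplus {\bf F}$ only through its boundary and the contributions accumulate additively. The $\Hcal$-condition is immediate: ${\bf G}'$ introduces no vertices or edges outside $X$ and $E(G[X])$, so $({\bf G}' \triangleright {\bf F}) \setminus A'$ embeds as a subgraph of $\bar F \setminus A_F$, which lies in $\Hcal$ by hypothesis, and heredity of $\Hcal$ closes the argument. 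The size bounds $|A'| = |A|$, $|V(G')| \leq |X|$, and $|E(G')| \leq |E(G[X])|$ hold by construction, and the running time is $\Ocal(|A| \cdot d)$ since $|X_i| \leq |A| + 1$ for every child by hypothesis.

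The main obstacle I anticipate is that a single vertex $v \in B$ may belong to $\heir(X_i)$ for several distinct children simultaneously, in which case the delete-or-keep decisions across those children must be consistent; naively processing children one at a time could produce contradictory choices. My plan is to resolve this with an initial pass: for each $v \in B$, aggregate the differences $s_i^+ - s_i^-$ over all children $i$ with $v \in \heir(X_i)$ and make a single global decision per $v$ based on the sign of the total; subsequent per-child processing then uses these decisions uniformly. Since $\sum_i |\heir(X_i) \cap B| \leq d$, this preliminary pass costs $\Ocal(d)$ time, preserving the target running time of $\Ocal(|A| \cdot d)$.
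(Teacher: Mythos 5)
Your overall architecture matches the paper's: iterate the gluing property (\autoref{glu-kt}) over the fully annotated children and the gadgetization identity (\autoref{obs+1}) over the children with one unannotated boundary vertex, accumulate the offsets into $s'$, delete the vertices of $B$ that get forced into $S$, and observe that the reduced graph glued to ${\bf F}$ is an induced subgraph of $F$, so heredity of $\Hcal$ gives the last condition. You also correctly isolate the one real difficulty, namely that several children may share the same unannotated vertex $v\in B$. The gap is in your resolution of that difficulty: the rule ``force $v$ into $S$ iff $\sum_i (s_i^+-s_i^-)\le 0$'' is wrong. If $v$ lies in $k$ children and also in $F$, then placing $v$ in the solution is counted $k+1$ times across the glued pieces, so the correct comparison is between $\sum_i s_i^+ + t^+ - k$ and $\sum_i s_i^- + t^-$, where $t^{\pm}$ denote the optima of the remainder with $v$ forced in or out. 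Since $t^+\le t^-+1$ and each $s_i^+\le s_i^-+1$, the right rule is to force $v\in S$ unless \emph{every} child containing $v$ satisfies $s_i^+=s_i^-+1$; the threshold on the aggregate is $k-1$, not $0$. Concretely, for $t=3$ let child $1$ consist of $v$ alone ($s_1^+=1$, $s_1^-=0$) and let child $2$ be a triangle through $v$ ($s_2^+=s_2^-=1$); your aggregate is $1>0$, so you keep $v$ unannotated and fold in $s_1^-+s_2^-=1$, and if $F$ also contains a triangle through $v$ your reduced instance evaluates to $2$, whereas the true optimum is $1$ (take $S^\star=\{v\}$).

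The paper sidesteps this entirely and needs no global per-vertex preprocessing: it processes the children \emph{sequentially}, maintaining a growing annotation set $S_i$. Each application of \autoref{obs+1} is an exact identity for the optimum of the whole glued graph, not a committed guess, so the moment some child $j$ has $s_j^+\le s_j^-$ the identity itself places $v$ into $S_j$, and every later child containing $v$ then falls into the fully annotated case and is handled by \autoref{glu-kt}; if no child forces $v$, then all of them satisfy $s_i^+=s_i^-+1$ and folding in $s_i^-$ while leaving $v$ free is exactly what \autoref{obs+1} prescribes, with the over-counting absorbed by the $-1$ terms and the final $+|S_B|$. If you want to keep your one-shot-per-vertex formulation, replace the sign test by the question ``does some child containing $v$ have $s_i^+\le s_i^-$?'' and re-derive the offset so that it accounts for the $k$-fold double counting; the sequential version gets all of this for free.
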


\begin{proof}
Let ${\bf G}=(G,X,ρ)$ be a boundaried  graph, 	
let ${\bf X}=(G[X],X,\rho_X)$ be a trivial boundaried graph and
let $\{{\bf G}_i=(G_{i},X_{i},ρ_{i})\mid i\in[d]\}$ be a collection of boundaried graphs,
such that ${\bf G}={\bf X}\boxplus(\boxplus_{i\in[d]}{\bf G}_i)$,
let $(A,B)$ be a partition of $X$ such that
for all $i \in [d]$,
$|X_i\setminus A|\leq 1$,
and
let $\Acal=(R,S)\in\Pcal_2(A)$.
Suppose that we know,
for every $i\in[d]$ and each $\mathcal{X}_i \in \Pcal_2(X_i)$, the value $\hat{\mathsf{p}}_{f_{K_t},\mathsf{min}}(G_i,\mathcal{X}_i)$.

Let us show that item (iii) holds (recall that $\Hcal$-niceness is defined in \autoref{sec:nice}).
Let $({\bf H}_0,S_0,s_0)=({\bf G},S,0)$.
For $i$ going from 1 up to $d$, we construct $({\bf H}_i,S_i,s_i)$ from $({\bf H}_{i-1},S_{i-1},s_{i-1})$
such that for any boundaried graph $\bf F$ compatible with $\bf G$,
$$\hat{\p}_{f_{K_t},\min}({\bf G}\oplus{\bf F},\Acal)=\hat{\p}_{f_{K_t},\min}({\bf H}_i\oplus{\bf F},\Acal_i)+s_i,$$
where $\Acal_i=(R,S_i)$.
This is obviously true for $i=0$.

Let $i\in[d]$.
Let ${\bf H}_i$ be the boundaried graph such that ${\bf H}_{i-1}={\bf H}_i\boxplus{\bf G}_i$.
See \autoref{fig_nice_subgraph} for an illustration.
\begin{figure}[h]
\center
\includegraphics[scale=0.8]{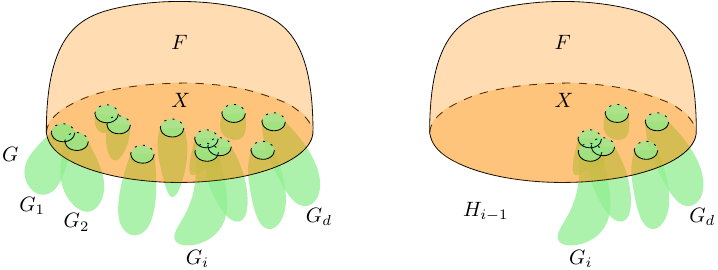}
\caption{Illustration of ${\bf G}$ and $\bf H_{i-1}$ in the proof of \autoref{ktvc-nice}.}
\label{fig_nice_subgraph}
\end{figure}
By induction,
$\hat{\p}_{f_{K_t},\min}({\bf G}\oplus{\bf F},\Acal)=
\hat{\p}_{f_{K_t},\min}({\bf H}_{i-1}\oplus{\bf F},\Acal_{i-1})+s_{i-1}$.

{Suppose first that $X_i\subseteq R\cup S_{i-1}$.}
Let $\Pcal_i=(R^i,S^i)\in\Pcal_2(X)$ be such that $\Acal_{i-1}\subseteq \Pcal_i$ and
$\hat{\p}_{f_{K_t},\min}({\bf H}_{i-1}\oplus{\bf F},\Acal_{i-1})=\hat{\p}_{f_{K_t},\min}({\bf H}_{i-1}\oplus{\bf F},\Pcal_i).$
Let $F$ be the underlying graph of $\bf F$.
According to \autoref{glu-kt},
\begin{align*}
\hat{\p}_{f_{K_t},\min}({\bf H}_{i-1}\oplus{\bf F},\Pcal_i) &= \hat{\p}_{f_{K_t},\min}(F,\Pcal_i)+\hat{\p}_{f_{K_t},\min}(H_{i-1},\Pcal_i)-|X\cap S^i| \\
        &= \hat{\p}_{f_{K_t},\min}(F,\Pcal_i)+\hat{\p}_{f_{K_t},\min}(H_i,\Pcal_i)\\
        &\ \ \ +\hat{\p}_{f_{K_t},\min}(G_i,\Pcal_i\cap X_i)-|X_i\cap S^i|-|X\cap S^i| \\
        &= \hat{\p}_{f_{K_t},\min}({\bf H}_i\oplus{\bf F},\Pcal_i)+\hat{\p}_{f_{K_t},\min}(G_i,\Acal_{i-1}\cap X_i)-|S_{i-1}\cap X_i|.
        %&\ \ \ -|S_{i-1}\cap X_i|.
\end{align*}
Since this is the case for all such $\Pcal_i$, it implies that
\begin{align*}
\hat{\p}_{f_{K_t},\min}({\bf H}_{i-1}\oplus{\bf F},\Acal_{i-1})&=
\hat{\p}_{f_{K_t},\min}({\bf H}_i\oplus{\bf F},\Acal_{i-1})+\hat{\p}_{f_{K_t},\min}(G_i,\Acal_{i-1}\cap X_i)-|S_{i-1}\cap X_i|.
%&\ \ \ -|S_{i-1}\cap X_i|.
\end{align*}
Therefore, $$\hat{\p}_{f_{K_t},\min}({\bf G}\oplus{\bf F},\Acal)=\hat{\p}_{f_{K_t},\min}({\bf H}_i\oplus{\bf F},\Acal_i)+s_i,$$ where $\Acal_i=\Acal_{i-1}$ and $s_i=s_{i-1}+\hat{\p}_{f_{K_t},\min}(G_i,\Acal_{i-1}\cap X_i)-|S_{i-1}\cap X_i|$.

{Otherwise, there is $v_i\in V(G_{i-1})$ such that $X_i\setminus (R\cup S_{i-1})=\{v_i\}$.}
Let $\Xcal_i^+=(R,S_{i-1}\cup\{v_i\})\cap X_i$ and $\Xcal_i^-=(R\cup\{v_i\},S_{i-1})\cap X_i$.
Let $\Pcal_i=(R^i,S^i)\in\Pcal_2(X\setminus\{v_i\})$ be such that $\Acal_{i-1}\subseteq \Pcal_i$ and
$\hat{\p}_{f_{K_t},\min}({\bf H}_{i-1}\oplus{\bf F},\Acal_{i-1})=\hat{\p}_{f_{K_t},\min}({\bf H}_{i-1}\oplus{\bf F},\Pcal_i).$
Note that $${\bf H}_{i-1}\oplus{\bf F}=({\bf H}_i\boxplus{\bf G}_i)\oplus{\bf F}=({\bf H}_i\boxplus{\bf F})\oplus{\bf G}_i.$$
For $a\in\{+,-\}$, let $s^a_i=\p_{f_{K_t},\min}(G,\Xcal_i^a)$.
Then, using \autoref{obs+1}, we have the following case distinction.
\begin{align*}
\hat{\p}_{f_{K_t},\min}&({\bf H}_{i-1}\oplus{\bf F},\Pcal_i)
=\hat{\p}_{f_{K_t},\min}(({\bf H}_i\boxplus{\bf F})\oplus{\bf G}_i\oplus{\bf F},\Pcal_i)\\
&\hspace{-0.4cm}=
\begin{cases}
s_i^++\hat{\p}_{f_{K_t},\min}({\bf H}_i\oplus{\bf F},(R^i,S^i\cup\{v_i\}))-|S_{i-1}\cap X_i|-1 & \text{if } s_i^+\leq s_i^-\\
s_i^-+\hat{\p}_{f_{K_t},\min}({\bf H}_i\oplus{\bf F},\Pcal_i)-|S_{i-1}\cap X_i| & \text{otherwise.}
\end{cases}
\end{align*}
Since this is the case for every such $\Pcal_i$, by setting
$({\bf H}_i,S_i,s_i)=({\bf H}_i,S_{i-1}\cup\{v_i\},s_{i-1}+s^+-|S_{i-1}\cap X_i|-1)$ if $s^+\leq s^-$ and
$({\bf H}_i,S_i,s_i)=({\bf H}_i,S_{i-1},s_{i-1}+s^--|S_{i-1}\cap X_i|)$ otherwise, we have that
$$\hat{\p}_{f_{K_t},\min}({\bf G}\oplus{\bf F},\Acal)=\hat{\p}_{f_{K_t},\min}({\bf H}_i\oplus{\bf F},\Acal_i)+s_i.$$

The boundaried graph $\bf H_d$ obtained at the end is isomorphic to ${\bf X}$.
Let $S_B=S_d\setminus S\subseteq B$.
Observe that
\begin{align*}
\hat{\p}_{f_{K_t},\min}({\bf H}_d\oplus{\bf F},\Acal_d) &=\hat{\p}_{f_{K_t},\min}(({\bf H}_d\oplus{\bf F})\setminus S_B,\Acal_d\setminus S_B)+|S_B|\\
&=\hat{\p}_{f_{K_t},\min}(({\bf H}_d\setminus S_B)\triangleright{\bf F},\Acal)+|S_B|
\end{align*}
Hence, $$\hat{\p}_{f_{K_t},\min}({\bf G}\oplus{\bf F},\Acal)=\hat{\p}_{f_{K_t},\min}(({\bf X}\setminus S_B)\triangleright{\bf F},\Acal)+|S_B|+s_d.$$

Let ${\bf H}={\bf X}\setminus S_B$.
Item (i) holds trivially.
Item (ii) also holds, given that $|V(H)|\leq |X|$ and $|E(H)|\leq E(G[X])$, where $H$ is the underlying graph of $\bf H$.
Given that $S_B\subseteq B\subseteq X$, it implies that ${\bf H}\triangleright{\bf F}$ is isomorphic to ${F}\setminus S_B$.
Thus, since $\Hcal$ is hereditary, if $F\setminus A$ belongs to $\Hcal$, then so does $({\bf H}\triangleright{\bf F})\setminus A$. So item (iv) holds.
Hence, $({\bf X}\setminus S_B,\Acal,|S_B|+s_d)$ follows every conditions so that it is an $\Hcal$-nice reduction of $(G,\Acal)$  with respect to {\sc \Annotated $K_t$-Subgraph-Cover}.

At each step $i$, we compute $|S_{i-1}\cap X_i|$, and thus $s_i$, in time $\Ocal(|A|)$ (since $\hat{\mathsf{p}}_{f_{K_t},\mathsf{min}}(G_i,\mathcal{X}_i)$ is supposed to be known).
${\bf H}_i$ and $S_i$ are then constructed in time $\Ocal(1)$.
Hence, the computation takes time $\Ocal(|A|\cdot d)$, and thus, {\sc \Annotated $K_t$-Subgraph-Cover} is $\Hcal$-nice.
 \end{proof}

We now solve {\sc Weighted \Annotated $K_t$-Subgraph-Cover} for $t\geq 3$ parameterized by \oct.
Note that %{\sc Vertex Cover} can be solved on bipartite graphs in time $\Ocal(m \sqrt{n})$ using a maximum matching algorithm~\cite{MicaliV80ano(} due to Kőnig's theorem \cite{Diestel10grap}.
%Moreover,
{\sc Weighted Vertex Cover} can be solved on bipartite graphs in time $\Ocal(m^{1+o(1)})$ using a flow algorithm \cite{ChenKLPGS25maxi}.
Indeed, let $G=(A,B)$ be a bipartite graph and $w:V(G)\to\bN$ be a weight function. We construct a flow network $N$ by connecting a source $s$ to each vertex in $A$ and a sink $t$ to each vertex in $B$.
We give infinite capacity to the original edges of $G$, and capacity $w(v)$ to each edge connecting a vertex $v$ and a terminal vertex.
Every $s-t$ cut in $N$ corresponds to exactly one vertex cover and every vertex cover corresponds to an $s-t$ cut. Thus a minimum cut of $N$ gives a minimum weight vertex cover of $G$.

\begin{lemma}\label{MIS-oct}
Let $t\in\bN_{\geq 2}$.
There is an algorithm that, given a graph $G$ and two disjoint sets $R,S\subseteq V(G)$ such that $G'=G\setminus (R\cup S)$ is bipartite,
solves {\sc Weighted \Annotated $K_t$-Subgraph-Cover} on $(G,R,S)$ in time $\Ocal(k^t\cdot(n'+m')+(m')^{1+o(1)})$, where $k=|R|$, $n'=|V(G')|$, and $m'=|E(G')|$.
\end{lemma}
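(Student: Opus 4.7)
The plan is to exploit the key structural observation that since $G' = G \setminus (R \cup S)$ is bipartite and $t \geq 3$, every clique in $G'$ has at most two vertices; hence every $K_t$ subgraph of $G \setminus S$ contains at most $2$ vertices in $V(G')$, and consequently at least $t-2$ vertices in $R$ forming a clique in $G[R]$. Because $R \cap S^\star = \emptyset$ and $S \subseteq S^\star$, any feasible solution decomposes as $S^\star = S \cup T$ with $T \subseteq V(G')$, so the task reduces to finding a minimum-size (resp. minimum-weight) set $T \subseteq V(G')$ that hits every $K_t$ of $G \setminus S$, after which we return $|S|+|T|$ (resp. $w(S)+w(T)$).

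First, I would enumerate all $t$-subsets of $R$ in $\Ocal(k^t)$ time to test whether $G[R]$ itself contains a $K_t$; if so, no feasible solution exists and we return $+\infty$. Next, I would determine a set $T_1 \subseteq V(G')$ of \emph{forced} vertices: for each $u \in V(G')$, compute $R_u := R \cap N_G(u)$ and include $u$ in $T_1$ iff $G[R_u]$ contains a $(t-1)$-clique, which is tested by enumerating $(t-1)$-subsets of $R_u$ in $\Ocal(k^{t-1})$ time per vertex. Such a $u$, together with any $(t-1)$-clique in $R_u$, witnesses a $K_t$ whose only vertex outside $R$ is $u$, so $u$ must belong to every feasible $T$. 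After committing to $T_1$, the initial check and the definition of $T_1$ ensure that every remaining $K_t$ in $G \setminus (S \cup T_1)$ contains exactly two vertices $u, v \in V(G') \setminus T_1$ with $uv \in E(G')$ and with $t-2$ further vertices forming a clique in $R \cap N_G(u) \cap N_G(v)$.

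I would then build an auxiliary graph $H$ on $V(G') \setminus T_1$ by placing an edge $uv$ whenever $uv \in E(G')$ and $R \cap N_G(u) \cap N_G(v)$ contains a $(t-2)$-clique in $G[R]$ (tested in $\Ocal(k^{t-2})$ per edge, hence $\Ocal(k^{t-2} \cdot m')$ overall). Since $H$ is a subgraph of the bipartite graph $G'$, it is itself bipartite, and the residual problem is exactly to compute a minimum (weight) vertex cover $T_2$ of $H$: in the unweighted case via the Micali--Vazirani matching algorithm in time $\Ocal(m' \sqrt{n'})$, and in the weighted case via the max-flow formulation recalled just before the lemma statement in time $\Ocal(m' \cdot n')$. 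The output is then $|S|+|T_1|+|T_2|$ (resp. the corresponding weighted sum).

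Correctness follows from the case analysis on the location of a hypothetical $K_t$ in $G \setminus S$ (entirely in $R$, one vertex in $V(G')$, or two vertices in $V(G')$), together with the observations that $T_1$ is forced and that $T_2$ is a minimum extension under $T_1$; the stated running time is obtained by summing the four terms above. I do not anticipate a true obstacle here; the most delicate point is arguing that $T_1$ is indeed \emph{unavoidable} in any feasible solution and that, after its removal, the remaining constraints reduce \emph{exactly} to covering the edges of $H$, which is what makes the bipartite vertex cover subroutine directly applicable and keeps the overhead in $k$ confined to the enumeration of $(t-2)$-, $(t-1)$-, and $t$-cliques inside $R$.
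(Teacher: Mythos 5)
Your proposal is correct and follows essentially the same route as the paper's proof: observe that bipartiteness of $G'$ forces every $K_t$ to have at most two vertices outside $R$, rule out a $K_t$ inside $G[R]$, commit the forced vertices (those completing a $(t-1)$-clique of $R$ to a $K_t$), and reduce what remains to minimum (weighted) vertex cover on the bipartite graph of ``dangerous'' edges of $G'$. The only cosmetic difference is that the paper phrases this as an enumeration of all $K_t$-occurrences classified by how many vertices lie in $R$, while you build the forced set $T_1$ and the auxiliary graph $H$ directly; the resulting algorithm and running-time analysis are the same.
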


\begin{proof}
Observe that we can assume that $S=\emptyset$, since $S^\star$ is an optimal solution for $(G,R,S)$ if and only if $S^\star\setminus S$ is an optimal solution for $(G\setminus S,R,\emptyset)$.
Thus, $G\setminus R$ is bipartite, so for any occurrence of $K_t$ contained in $G$ (as a subgraph), at most two of its vertices belong to $G\setminus R$.
Hence, enumerating the occurrences of $K_t$ takes time $\Ocal(k^t+k^{t-1}\cdot n'+k^{t-2}\cdot m')$.
If $G[R]$ contains an occurrence of $K_t$, then {\sc Weighted \Annotated $K_t$-Subgraph-Cover} has no solution.
Let us thus assume that $G[R]$ contains no $K_t$.
For each occurrences of $K_t$ in $G$ that contains $t-1$ vertices of $R$ and one vertex $v\in V(G)\setminus R$,
we add $v$ to $S^\star$ and remove $v$ from $G$, since $v$ has to be taken in the solution.
Hence, all that remains are occurrences of $K_t$ with $t-2$ vertices in $R$ and the two others in $G\setminus R$.
Let $H$ be the graph induced by the edges of the occurrences of $K_t$ in $G$ with both endpoints in $G\setminus R$.
Each edge of $H$ intersects any solution $\bar{S}$ on $G$ for {\sc Weighted \Annotated $K_t$-Subgraph-Cover}.
Hence, $\bar{S}$ is the union of $S^\star$ and a minimum weighted vertex cover $C$ of $H$.
Thus, $C$~can be computed in time $\Ocal((m')^{1+o(1)})$.
The running time of the algorithm is hence $\Ocal(k^t+k^{t-1}\cdot n'+k^{t-2}\cdot m'+(m')^{1+o(1)})$.
 \end{proof}

We apply \autoref{ktvc-nice} and \autoref{MIS-oct} to the dynamic programming algorithm of \autoref{DP} to obtain the following result.

\begin{corollary}\label{cor-Kt-cover}
Let $t\in\bN_{\geq 2}$.
Given a graph $G$ and a bipartite tree decomposition of $G$ of width $k$, there is an algorithm that solves {\sc $K_t$-Subgraph-Cover} on $G$ in time $\Ocal(2^k\cdot (k^t\cdot(n+m)+m^{1+o(1)}))$.
\end{corollary}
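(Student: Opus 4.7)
The plan is to apply the general dynamic programming scheme of \autoref{DP} with $\Pi = $ {\sc $K_t$-Subgraph-Cover} and $\Pi' = $ {\sc \Annotated $K_t$-Subgraph-Cover}, choosing $p = 2$, $\mathsf{opt} = \min$, and $\Hcal = \Bcal$ the class of bipartite graphs. By construction $\Pi'$ is a 2-annotated extension of $\Pi$ corresponding to the 2-partition-evaluation function $f_{K_t}$, so the first hypothesis of \autoref{DP} is satisfied.

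The two remaining hypotheses are exactly the content of the two preceding lemmas: \autoref{ktvc-nice} (instantiated with the hereditary class $\Hcal = \Bcal$) supplies the fact that $\Pi'$ is $\Bcal$-nice, and \autoref{MIS-oct} solves $\Pi'$ on instances $(G, R, S)$ with $G \setminus (R \cup S) \in \Bcal$ in time $\Ocal(|R|^t \cdot (n' + m') + m' \sqrt{n'})$, where $n'$ and $m'$ refer to $G \setminus (R \cup S)$. Since $t \geq 3$ is fixed, this bound has the form $f(|R|) \cdot (n' + m') + m' \sqrt{n'}$, i.e.\ a sum of two subroutine costs, one of shape $f(k) \cdot n'^1 m'^0$ (and symmetrically $f(k) \cdot n'^0 m'^1$) and one of shape $n'^{1/2} m'^1$.

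To conclude, I would invoke the sharper complexity bound given in the generalization paragraph following \autoref{DP}, which replaces the global parameters $n, m$ by the local parameters $n'_t, m'_t$ inside the subroutine cost and uses $\sum_{t \in V(T)} n'_t = \Ocal(n)$ and $\sum_{t \in V(T)} m'_t \leq m$. Summing the two contributions over all nodes of the decomposition gives $\sum_t k^t(n'_t + m'_t) = \Ocal(k^t(n+m))$ for the first term, and $\sum_t m'_t \sqrt{n'_t} \leq \sqrt{n} \cdot \sum_t m'_t \leq m\sqrt{n}$ for the second term, since each $n'_t \leq n$. Multiplying by the factor $p^k = 2^k$ coming from the enumeration of partitions $\Acal \in \Pcal_2(A_t)$ yields the claimed running time $\Ocal(2^k \cdot (k^t(n+m) + m\sqrt{n}))$.

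There is essentially no obstacle beyond bookkeeping: the conceptual work has already been done in \autoref{ktvc-nice} and \autoref{MIS-oct}. The only point worth double-checking is that the $\sqrt{n'}$ factor in the bipartite vertex cover subroutine indeed telescopes to $\sqrt{n}$ when summed via $\sqrt{n'_t} \leq \sqrt{n}$; this is what forces the sum to $m\sqrt{n}$ rather than some larger expression, and is what matches the stated running time.
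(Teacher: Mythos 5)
Your proposal is correct and matches the paper's own (one-line) proof exactly: the corollary is obtained by plugging \autoref{ktvc-nice} (with $\Hcal=\Bcal$, which is hereditary) and \autoref{MIS-oct} into the general scheme of \autoref{DP}, using the refined per-node accounting from the generalizations paragraph with $p=2$. Your bookkeeping of the sums $\sum_t k^t(n'_t+m'_t)=\Ocal(k^t(n+m))$ and $\sum_t m'_t\sqrt{n'_t}\leq m\sqrt{n}$ is the right way to recover the stated bound.
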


\subsubsection{Weighted Vertex Cover/Weighted Independent Set}\label{sec-mis}

Given that \autoref{obs+1} only holds for {\sc $K_t$-Subgraph-Cover} in the unweighted case, we propose here an analogous result that holds in the weighted case, when we restrict ourselves to $t=2$, i.e., {\sc Weighted Vertex Cover}.
We already know that {\sc Weighted Vertex Cover} has the gluing property (\autoref{glu-kt}).
We now show how to reduce a graph ${\bf F}\oplus{\bf G}$ to a graph $F'$ when the boundary of ${\bf F}$ and ${\bf G}$ has a single vertex $v$ that is not annotated. Recall that this reduction was sketched in \autoref{sec-overview}, and see \autoref{fig_weighted_cover} for an illustration.

\begin{figure}[h]
\center
\includegraphics[scale=0.7]{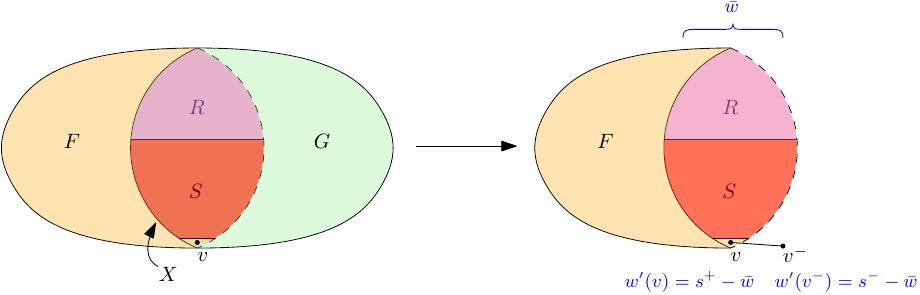}
\caption{Illustration of the gadgetization for {\sc Weighted Vertex Cover}.}
\label{fig_weighted_cover}
\end{figure}

\begin{lemma}[Gadgetization]\label{obs-mis}
Let ${\bf F}=(F,B_F,\rho_F)$ and ${\bf G}=(G,B_G,\rho_G)$ be two boundaried graphs.
Let $w:V({\bf F}\oplus{\bf G})\to\bN$ be a weight function, let $X\subseteq V({\bf F}\oplus{\bf G})$ be such that $B_F\cap B_G\subseteq X$, let $v\in B_F\cap B_G$, and let $\Xcal=(R,S)\in\Pcal_2(X\setminus\{v\})$.
We define $\Xcal^+=(R,S\cup\{v\})$ and $\Xcal^-=(R\cup\{v\},S)$.
Furthermore, for $a\in\{+,-\}$,
we set $s^a=\hat{\p}_{f_{K_t},\min}(G,\Xcal^a\cap V(G),w)$.
We also set $\bar{w}=w(S\cap B_F\cap B_G)$,
${\bf G}'=(G',\{v\},\rho_{|\{v\}})$, where $G'$ is an edge $vv^-$ for some new vertex $v^-$,
and $w':V({\bf F}\oplus{\bf G}')\to\bN$ such that $w'(v)=s^+-\bar{w}$, $w'(v^-)=s^--\bar{w}$, and $w'(x)=w(x)$ otherwise.
Then $$\hat{p}_{f{K_2},\min}({\bf F}\oplus{\bf G},\Xcal,w)=\hat{p}_{f{K_2},\min}({\bf F}\oplus{\bf G}',\Xcal,w').$$
\end{lemma}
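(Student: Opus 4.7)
The plan is to split the minimum defining $\hat p_{f_{K_2},\min}({\bf F}\oplus{\bf G},\Xcal,w)$ according to whether the unannotated vertex $v$ belongs to the chosen vertex cover, handle each of the two cases via the gluing property (\autoref{glu-kt}), and then observe that the gadget edge $vv^-$ with weights $(s^+-\bar w,\, s^--\bar w)$ exactly reproduces the contribution of $\bf G$ in each case. Writing $A^a:=\hat p_{f_{K_2},\min}({\bf F}\oplus{\bf G},\Xcal^a,w)$ and $B^a:=\hat p_{f_{K_2},\min}({\bf F}\oplus{\bf G}',\Xcal^a,w')$ for $a\in\{+,-\}$, the lemma reduces to showing $A^+=B^+$ and $A^-=B^-$, since taking the minimum on each side gives the claim.

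For the $+$ case I would apply \autoref{glu-kt} to $({\bf F}\oplus{\bf G},\Xcal^+,w)$: because $v\in B_F\cap B_G$ and $v$ sits in the $S$-part of $\Xcal^+$, the subtracted term is $\bar w+w(v)$, so $A^+=\hat p_{f_{K_2},\min}(F,\Xcal^+\cap V(F),w)+s^+-\bar w-w(v)$. Applying the same gluing identity to $({\bf F}\oplus{\bf G}',\Xcal^+,w')$, where now $B_F\cap B_{G'}=\{v\}$, one computes $\hat p_{f_{K_2},\min}(G',\Xcal^+\cap V(G'),w')=w'(v)=s^+-\bar w$ (since $v$ is forced into the cover of the edge $vv^-$), which exactly cancels the subtracted $w'(v)$, yielding $B^+=\hat p_{f_{K_2},\min}(F,\Xcal^+\cap V(F),w')$. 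Finally, because $v$ is forced into the cover on the $F$-side and $w'$ agrees with $w$ off $v$, the weight change accounts for exactly $w'(v)-w(v)=s^+-\bar w-w(v)$, so $B^+=A^+$.

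For the $-$ case, \autoref{glu-kt} gives $A^-=\hat p_{f_{K_2},\min}(F,\Xcal^-\cap V(F),w)+s^--\bar w$. In ${\bf F}\oplus{\bf G}'$ with $\Xcal^-$, the vertex $v$ is excluded, so $v^-$ must be picked to cover the edge $vv^-$, contributing $w'(v^-)=s^--\bar w$; since $v\notin S^\star$, the value of $w'(v)$ is irrelevant, and $w'$ coincides with $w$ on every vertex that could possibly be used, hence $B^-=\hat p_{f_{K_2},\min}(F,\Xcal^-\cap V(F),w)+s^--\bar w = A^-$.

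The main obstacle is weight bookkeeping: the weight of the shared vertex $v$ is rescaled from $w(v)$ to $s^+-\bar w$ in $w'$, and one must verify that, in the $+$ case, this rescaling on the $F$-side precisely absorbs the gadget's contribution after the gluing subtraction, while in the $-$ case the rescaling is invisible because $v$ is outside the cover. One should also briefly check that the new weights are nonnegative: both $s^+-\bar w$ and $s^--\bar w$ are $\ge 0$ since any vertex cover of $G$ containing $S\cap V(G)$ in particular contains $S\cap B_F\cap B_G$, whose weight is $\bar w$; this ensures $w'$ is a legitimate weight function $V({\bf F}\oplus{\bf G}')\to\bN$.
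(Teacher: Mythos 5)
Your proposal is correct and follows essentially the same route as the paper's proof: split on whether $v$ enters the cover, apply the gluing property (\autoref{glu-kt}) to both ${\bf F}\oplus{\bf G}$ and ${\bf F}\oplus{\bf G}'$ in each case, and track the weight shift $w'(v)-w(v)=s^+-\bar w-w(v)$ on the $F$-side; the paper just packages the two cases as a single chain of equalities on the outer minimum rather than proving $A^+=B^+$ and $A^-=B^-$ separately. Your extra check that $s^\pm-\bar w\ge 0$ (so $w'$ is a legitimate $\bN$-valued weight function) is a small point the paper leaves implicit.
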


\begin{proof}
For $a\in\{+,-\}$, let $t^a=\hat{\p}_{f_{K_t},\min}(F,\Xcal^a\cap V(F),w)$.

Note that
\begin{align*}
s'^-:=\hat{\p}_{f_{K_2},\min}(G',\Xcal^-\cap V(G'),w')&=\hat{\p}_{f_{K_2},\min}(G',(\{v\},\emptyset),w')\\
&=w'(v^-)\\
&=s^--\bar{w},\\
s'^+:=\hat{\p}_{f_{K_2},\min}(G',\Xcal^+\cap V(G'),w')&=\hat{\p}_{f_{K_2},\min}(G',(\emptyset,\{v\}),w')\\
&=w'(v)\\
&=s^+-\bar{w},\\
t'^-:=\hat{\p}_{f_{K_2},\min}(F,\Xcal^-\cap V(F),w')&=t^-, \text{ and }\\
t'^+:=\hat{\p}_{f_{K_2},\min}(F,\Xcal^+\cap V(F),w')&=t^++w'(v)-w(v).
\end{align*}

Hence, using \autoref{glu-kt}, we have that
\begin{align*}
\hat{\p}_{f_{K_2},\min}({\bf F}\oplus{\bf G},&\Xcal,w)\\
&=\min\{\hat{\p}_{f_{K_2},\min}({\bf F}\oplus{\bf G},\Xcal^+,w),\hat{\p}_{f_{K_2},\min}({\bf F}\oplus{\bf G},\Xcal^-,w)\}\\
&=\min\{t^++s^+-w(v),t^-+s^-\}-\bar{w}\\
&=\min\{t^++s'^+-w(v),t^-+s'^-\}\\
&=\min\{t'^++s'^+-w'(v),t'^-+s'^-\}\\
&=\min\{\hat{\p}_{f_{K_2},\min}({\bf F}\oplus{\bf G}',\Xcal^+,w'),\hat{\p}_{f_{K_2},\min}({\bf F}\oplus{\bf G}',\Xcal^-,w')\}\\
&=\hat{\p}_{f_{K_2},\min}({\bf F}\oplus{\bf G}',\Xcal,w').
\end{align*}
\end{proof}

Using \autoref{glu-kt} and \autoref{obs-mis}, we can now prove that {\sc Weighted \Annotated Vertex Cover} is $\Hcal$-nice.
Essentially, given an instance $({\bf G}={\bf X}\boxplus(\boxplus_{i\in[d]}{\bf G}_i),(A,B),(R,S),w)$,
we reduce ${\bf G}$ to ${\bf X}$ where we glue an edge to some vertices in $B$.
We then show that if the appropriate weight is given to each new vertex, then the resulting boundaried graph is equivalent to $\bf G$ modulo some constant $s$.

\begin{lemma}[Nice problem]\label{mis-nice}
Let $\Hcal$ be a graph class that is closed under 1-clique-sums and contains $K_2$.
Then \!{\sc Weighted Annotated Vertex Cover} is $\Hcal$-nice.
\end{lemma}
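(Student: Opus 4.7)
The plan is to follow the proof of Lemma~\ref{ktvc-nice} step by step, substituting the pendant-edge gadget of Lemma~\ref{obs-mis} for the vertex-removal gadget used in the $K_t$ case. We process the children ${\bf G}_1,\dots,{\bf G}_d$ sequentially and maintain, at each step $i$, a boundaried graph ${\bf H}_i$, a weight function $w_i$, and an offset $s_i$ satisfying the invariant $\hat{p}_{f_{K_2},\min}({\bf G}\oplus{\bf F},\Acal,w) = \hat{p}_{f_{K_2},\min}({\bf H}_i\oplus{\bf F},\Acal,w_i) + s_i$ for every ${\bf F}$ compatible with ${\bf G}$, starting from $({\bf H}_0,w_0,s_0) := ({\bf G},w,0)$.

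At step $i$, write ${\bf H}_{i-1} = {\bf H}_i^{\mathrm{pre}} \boxplus {\bf G}_i$. If $X_i \subseteq A$, I invoke Lemma~\ref{glu-kt}: set ${\bf H}_i := {\bf H}_i^{\mathrm{pre}}$, keep $w_i := w_{i-1}$ restricted to $V({\bf H}_i)$, and absorb $\hat{p}_{f_{K_2},\min}(G_i,\Acal\cap X_i,w_{i-1}) - w_{i-1}(S\cap X_i)$ into $s_i$. Otherwise, let $v_i$ denote the unique vertex of $X_i \setminus A$ and apply Lemma~\ref{obs-mis} with the roles of the two boundaried graphs played by ${\bf H}_i^{\mathrm{pre}}\oplus{\bf F}$ and ${\bf G}_i$: this replaces ${\bf G}_i$ by a pendant-edge gadget ${\bf G}_i'$ consisting of a new vertex $v_i^-$ adjacent to $v_i$, and rewrites $w_i(v_i) := s_i^+ - \bar{w}_i$ and $w_i(v_i^-) := s_i^- - \bar{w}_i$, where $s_i^{\pm} := \hat{p}_{f_{K_2},\min}(G_i, \Xcal_i^{\pm}, w_{i-1})$ and $\bar{w}_i := w_{i-1}(S\cap X_i)$, while leaving $s_i := s_{i-1}$. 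After $d$ rounds, ${\bf H}_d$ equals ${\bf X}$ together with one pendant edge hung at each case-2 boundary vertex, and we output $({\bf G}',\Acal',s',w') := ({\bf H}_d, \Acal, s_d, w_d)$.

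The nice-reduction bookkeeping is then routine. Each case-2 step contributes one new vertex $v_i^-$ and one new edge $v_i v_i^-$, giving $|A'| = |A|$, $|V(G')| \le |X| + \Ocal(|B|)$, and $|E(G')| \le |E(G[X])| + \Ocal(|B|)$. The identity $\hat{p}_{f_{K_2},\min}({\bf G}\oplus{\bf F},\Acal,w) = \hat{p}_{f_{K_2},\min}({\bf G}'\triangleright{\bf F},\Acal',w') + s'$ follows by reading the invariant at $i=d$ and noticing that ${\bf H}_d \triangleright {\bf F}$ coincides with $H_d$ as an unboundaried graph. For the $\Hcal$-closure condition, $({\bf G}'\triangleright{\bf F}) \setminus A'$ is the graph $G[B]$ with one pendant edge attached at each case-2 vertex; since $G[B]$ sits as an induced subgraph of $\bar{F}\setminus A_F \in \Hcal$, and each pendant edge amounts to taking a $1$-clique-sum with a single edge, the resulting graph lies in $\Hcal$ by the two closure hypotheses. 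Each step costs $\Ocal(|A|)$ look-ups and arithmetic, matching the $\Ocal(|A|\cdot d)$ budget from the definition.

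The main technical subtlety I expect is handling \emph{shared} unannotated vertices, i.e., when $v_i = v_j$ for some $i < j$. After step $i$ the weight of $v_i = v_j$ has been rewritten to $s_i^+ - \bar{w}_i$, so the quantity $\hat{p}_{f_{K_2},\min}(G_j, \Xcal_j, w_{j-1})$ needed at step $j$ no longer equals the input value $\hat{p}_{f_{K_2},\min}(G_j, \Xcal_j, w)$. Since every vertex of $X_j$ is explicitly annotated under the partitions $\Xcal_j^{\pm}$ used in step $j$, the discrepancy is the affine correction $\sum_{v\in X_j}(w_{j-1}(v) - w(v))\cdot\mathbf{1}[v\text{ lies in the $S$-part of }\Xcal_j]$; this is computable in $\Ocal(|A|)$ time per child and does not affect the validity of Lemma~\ref{obs-mis}, which is stated for arbitrary weight functions on ${\bf F}\oplus{\bf G}$.
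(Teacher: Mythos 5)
Your overall strategy is the right one and matches the paper's in spirit: process the children, dispose of the fully annotated ones via the gluing property (\autoref{glu-kt}), and replace each child with an unannotated boundary vertex by the weighted pendant-edge gadget of \autoref{obs-mis}. Your "main technical subtlety" paragraph is also essentially correct as a computation: since under $\Xcal_j^{\pm}$ every vertex of $X_j$ is annotated, and the only vertex of $X_j$ whose weight can have been rewritten is $v_j$ itself (vertices of $A$ are never touched), the stored values translate to the current weight function by an explicit affine shift, and iterating the gadget at a shared vertex does accumulate the contributions correctly. This is, in disguise, the same computation the paper performs when it expands $s_i^{\pm}$ for the glued graph ${\bf G}_i'=\boxplus_{j\in I_i}{\bf G}_j$ in terms of the per-child values.

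The genuine problem is the size accounting. You attach one new vertex $v_i^-$ and one new edge per case-2 \emph{child}, so your ${\bf G}'$ has up to $d$ extra vertices and edges, where $d$ is the number of children. Several children may share the same unannotated vertex $v\in B$ (this is exactly the situation your last paragraph is worried about), so $d$ is not bounded by $|B|$, and the claimed bounds $|V(G')|\le |X|+\Ocal(|B|)$ and $|E(G')|\le |E(G[X])|+\Ocal(|B|)$ — which are part of the definition of an $\Hcal$-nice reduction — do not follow from your construction. The paper avoids this by first partitioning $[d]$ into classes $I_0,I_1,\dots,I_{|B|}$ according to which vertex of $B$ (if any) lies in $X_j\setminus A$, gluing each class into a single boundaried graph ${\bf G}_i'$, computing the combined $s_i^{\pm}$ from the given per-child values via \autoref{glu-kt}, and only then attaching \emph{one} pendant edge per vertex of $B$. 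Your construction can be repaired in the same way (or, equivalently, by merging all pendant neighbours of a shared vertex into a single pendant vertex carrying the summed weight), but as written the verification of the nice-reduction conditions fails. Two smaller slips: the graph $({\bf G}'\triangleright{\bf F})\setminus A'$ is not "$G[B]$ with pendant edges" but $\bar F\setminus A_F$ (the whole bipartite part of the bag) with pendant edges attached at vertices of $B$; stating it your way forces you to pass to an induced subgraph of a member of $\Hcal$, which would require hereditariness — a hypothesis \autoref{mis-nice} does not grant you, and which the correct description makes unnecessary since closure under $1$-clique-sums with $K_2$ suffices.
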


\begin{proof}
Let ${\bf G}=(G,X,ρ)$ be a boundaried  graph, 	
let $w:V(G)\to\bN$ be a weight function,
let ${\bf X}=(G[X],X,\rho_X)$ be a trivial boundaried graph and
let $\{{\bf G}_i=(G_{i},X_{i},ρ_{i})\mid i\in[d]\}$ be a collection of boundaried graphs,
such that ${\bf G}={\bf X}\boxplus(\boxplus_{i\in[d]}{\bf G}_i)$,
let $(A,B)$ be a partition of $X$ such that
for all $i \in [d]$,
$|X_i\setminus A|\leq 1$,
and
let $\Acal=(R,S)\in\Pcal_2(A)$.
Suppose that we know,
for every $i\in[d]$ and each $\mathcal{X}_i \in \Pcal_2(X_i)$, the value $\hat{\mathsf{p}}_{f_{K_t},\mathsf{min}}(G_i,\mathcal{X}_i,w)$.
Recall the $\Hcal$-niceness was defined in \autoref{sec:nice}.

Let $v_1,\ldots,v_{|B|}$ be the vertices of $B$.
For $i\in[|B|]$, let $I_i=\{j\in[d]\mid X_j\setminus A=\{v_i\}\}$.
Let $I_0=\{j\in[d]\mid X_j\subseteq A\}$.
Obviously, $(I_i)_{i\in[0,|B|]}$ is a partition of $[d]$.
Let ${\bf G}_i'=\boxplus_{j\in I_i}{\bf G}_j$.
See \autoref{fig_nice_wvc} for an illustration.
\begin{figure}[h]
\center
\includegraphics{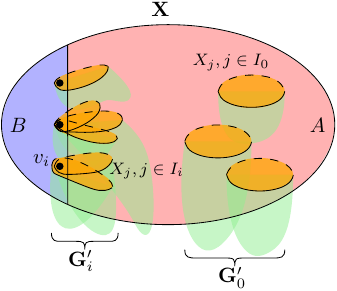}
\caption{Illustration of $G_i'$ in the proof of \autoref{mis-nice}.}
\label{fig_nice_wvc}
\end{figure}

Let us show that item (iii) holds.
Let $({\bf H_{-1}},w_{-1},s_{-1})=({\bf G},w,0)$.
For $i$ going from 0 up to $|B|]$, we will construct $({\bf H}_i,w_i,s_i)$ from $({\bf H}_{i-1},w_{i-1},s_{i-1})$
such that $(w_i)_{|V(F)\setminus\{v_j\mid j\leq i\}}=w_{|V(F)\setminus\{v_j\mid j\leq i\}}$ and $(w_i)_{|V(G_j')}=w_{|V(G_j')}$, for $j>i$, and for any boundaried graph $\bf F$ with underlying graph $F$ and compatible with $\bf G$,
$$\hat{\p}_{f_{K_2},\min}({\bf G}\oplus{\bf F},\Acal,w)=\hat{\p}_{f_{K_2},\min}({\bf H}_i\oplus{\bf F},\Acal,w_i)+s_i.$$
This is obviously true for $i=-1$.

Let $i\in[0,|B|]$.
Let ${\bf H}_i'$ be the boundaried graph with underlying graph $H_i'$
such that ${\bf H}_{i-1}={\bf H}_i'\boxplus{\bf G}_i'$.
By induction, we have
$$\hat{\p}_{f_{K_2},\min}({\bf G}\oplus{\bf F},\Acal,w)=
\hat{\p}_{f_{K_2},\min}({\bf H}_{i-1}\oplus{\bf F},\Acal,w_{i-1})+s_{i-1}.$$

\medskip
\noindent{\bf Suppose first that $i=0$.}
Let $\Pcal_0=(R^0,S^0)\in\Pcal_2(X)$ be such that $\Acal\subseteq \Pcal_0$ and
$\hat{\p}_{f_{K_2},\min}({\bf G}\oplus{\bf F},\Acal,w)=\hat{\p}_{f_{K_2},\min}({\bf G}\oplus{\bf F},\Pcal_0,w).$
According to \autoref{glu-kt},
\begin{align*}
\hat{\p}_{f_{K_2},\min}({\bf G}\oplus{\bf F},\Pcal_0,w) &= \hat{\p}_{f_{K_2},\min}(F,\Pcal_0,w)+\hat{\p}_{f_{K_2},\min}(G,\Pcal_0,w)-w(X\cap S^0)\\
%&\ \ \ -w(X\cap S^0) \\
&=\hat{\p}_{f_{K_2},\min}(F,\Pcal_0,w)-w(X\cap S^0)+\hat{\p}_{f_{K_2},\min}(H_0',\Pcal_0,w)\\
%&\ \ \ +\hat{\p}_{f_{K_2},\min}(H_0,\Pcal_0,w)\\
&\ \ \ +\sum_{j\in I_0}(\hat{\p}_{f_{K_2},\min}(G_j,\Pcal_0\cap X_j,w)-w(S^0\cap X_j))\\
&= \hat{\p}_{f_{K_2},\min}({\bf H}_0'\oplus{\bf F},\Pcal_0,w)+\sum_{j\in I_0}(\hat{\p}_{f_{K_2},\min}(G_j,\Acal\cap X_j,w)-w(S\cap X_j)).
%&\ \ \ +\sum_{j\in I_0}(\hat{\p}_{f_{K_2},\min}(G_j,\Acal\cap X_j,w)-w(S\cap X_j)).
\end{align*}
Since this is the case for all such $\Pcal_0$, it implies that
\begin{align*}
\hat{\p}_{f_{K_2},\min}({\bf G}\oplus{\bf F},\Acal,w)&=
\hat{\p}_{f_{K_2},\min}({\bf H}_0'\oplus{\bf F},\Acal,w)+\sum_{j\in I_0}(\hat{\p}_{f_{K_2},\min}(G_j,\Acal\cap X_j,w)-w(S\cap X_j)).
%&\ \ \ +\sum_{j\in I_0}(\hat{\p}_{f_{K_2},\min}(G_j,\Acal\cap X_j,w)-w(S\cap X_j)).
\end{align*}
Therefore, if ${\bf H}_0={\bf H}_0'$, $w_0=w$ and $$s_0=\sum_{j\in I_0}(\hat{\p}_{f_{K_2},\min}(G_j,\Acal\cap X_j,w)-w(S\cap X_j)),$$ then
$$\hat{\p}_{f_{K_2},\min}({\bf G}\oplus{\bf F},\Acal,w)=\hat{\p}_{f_{K_2},\min}({\bf H}_0\oplus{\bf F},\Acal,w_0)+s_0.$$

\noindent{\bf Otherwise, $i\in[|B|]$ and $X_j\setminus A=\{v_i\}$ for each $j\in I_i$.}
Let $X_i'=\bigcup_{j\in I_i}X_j$.
Let $\Xcal_i^+=(R,S\cup\{v_i\})\cap X_i'$ and $\Xcal_i^-=(R\cup\{v_i\},S)\cap X_i'$.
Let ${\bf H}_i''=(H_i'',\{v_i\},\rho_{|\{v_i\}})$ be the boundaried graph where $H_i''$ is an edge $v_iv_i^-$ for some new vertex $v_i^-$.
Let ${\bf H}_i={\bf H}_i'\boxplus{\bf H}_i''$.
Let $s_i^+=\hat{\p}_{f_{K_t},\min}(G_i',\Xcal_i^+,w)$ and $s_i^-=\hat{\p}_{f_{K_t},\min}(G_i',\Xcal_i^-,w)$.
By \autoref{glu-kt}, $$s_i^+=\sum_{j\in I_i}(\hat{\p}_{f_{K_t},\min}(G_j,\Xcal_i^+\cap X_j,w)-w(S\cap X_j)-w(v_i))+w(S\cap X_i')+w(v_i),$$
and $$s_i^-=\sum_{j\in I_i}(\hat{\p}_{f_{K_t},\min}(G_j,\Xcal_i^-\cap X_j,w)-w(S\cap X_j))+w(S\cap X_i').$$
Since the $\hat{\p}_{f_{K_t},\min}(G_j,\Xcal_i^a\cap X_j,w)$ are given,
$s_i^+$ and $s_i^-$ can be computed in time $\Ocal(|A|\cdot |I_i|)$.
Let $w_i:V({\bf F}\oplus{\bf H}_i'')\to\bN$ be such that $w_i(v_i)=s_i^+-w(S\cap B_F\cap B_G)$, $w_i(v_i^-)=s_i^--w(S\cap B_F\cap B_G)$, and $w_i(x)=w(x)$ otherwise.
Let $\Pcal_i=(R^i,S^i)\subseteq\Pcal_2(X\setminus\{v_i\})$ be such that $\Acal\subseteq \Pcal_i$ and
$\hat{\p}_{f_{K_2},\min}({\bf H}_{i-1}\oplus{\bf F},\Acal,w_{i-1})=\hat{\p}_{f_{K_2},\min}({\bf H}_{i-1}\oplus{\bf F},\Pcal_i,w_{i-1}).$
Then, using \autoref{obs-mis},
\begin{align*}
\hat{\p}_{f_{K_2},\min}({\bf H}_{i-1}\oplus{\bf F},\Pcal_i,w_{i-1})
&=\hat{\p}_{f_{K_2},\min}(({\bf H}_{i}'\boxplus{\bf F})\oplus{\bf G}_i',\Pcal_i,w_{i-1})\\
&=\hat{\p}_{f_{K_2},\min}(({\bf H}_{i}'\boxplus{\bf F})\oplus{\bf H}_i'',\Pcal_i,w_{i})\\
&=\hat{\p}_{f_{K_2},\min}({\bf H}_i\oplus{\bf F},\Pcal_i,w_{i})
\end{align*}
Since this is the case for all such $\Pcal_i$, it implies that
$$\hat{\p}_{f_{K_2},\min}({\bf H}_{i-1}\oplus{\bf F},\Acal,w_{i-1})=
\hat{\p}_{f_{K_2},\min}({\bf H}_i\oplus{\bf F},\Acal,w_{i}).$$
Therefore, given $s_i=s_{i-1}$,
$$\hat{\p}_{f_{K_2},\min}({\bf G}\oplus{\bf F},\Acal,w)=\hat{\p}_{f_{K_2},\min}({\bf H}_i\oplus{\bf F},\Acal,w_i)+s_i.$$

Observe that ${\bf H_{|B|}}={\bf X}\boxplus(\boxplus_{i\in[|B|]}{\bf H}_i'')$ and ${\bf H_{|B|}}\oplus{\bf F}={\bf F}\oplus(\boxplus_{i\in[|B|]}{\bf H}_i'')$.
Suppose that $F\setminus A\in\Hcal$.
Given that $\Hcal$ is closed under 1-clique-sums and contains edges, that each $H_i''$ is an edge, and that $|\bd({\bf H}_i'')|=1$,
it follows that $({\bf H_{|B|}}\oplus{\bf F})\setminus A\in\Hcal$.
So item (iv) holds.
Moreover, item (i) trivially holds and item (ii) holds given that
$|V(H_{|B|})|=|X|+|B|$ and $|E(H_{|B|})|=|E(G[X])|+|B|$.
Hence, $({\bf H}_{|B|},\Acal,s_{|B|},w_{|B|})$ is an $\Hcal$-nice reduction of $({\bf G},\Acal,w)$  with respect to {\sc Weighted \Annotated Vertex Cover}.

At each step $i$, $s_i$ is computable in time $\Ocal(|A|\cdot |I_i|)$, and ${\bf H}_{i}$ and $w_{i}$ are computable in time $\Ocal(1)$.
Hence, the computation takes time $\Ocal(|A|\cdot d)$.
Therefore, {\sc Weighted \Annotated Vertex Cover} is $\Hcal$-nice.
 \end{proof}
 
We now solve {\sc Weighted Vertex Cover/Independent Set} parameterized by \oct with a better running time than \autoref{MIS-oct}.

\begin{observation}\label{obs-mis-oct}
Let $\Hcal$ be a hereditary graph class such that {\sc Weighted Vertex Cover} can be solved on instances $(G,w)$ where $G\in\Hcal$ in time $\Ocal(n^c\cdot m^d)$ for some $c,d\in\bN$.
Then
{\sc Weighted \Annotated Vertex Cover} is solvable on instance $(G,R,S,w)$ such that $G'=G\setminus(R\cup S)\in\Hcal$ in time $\Ocal(k\cdot(k+ n')+n'^c\cdot m'^d)$, where $n'=|V(G')|$, $m'=|E(G')|$, and $k=|R|$.
\end{observation}

\begin{proof}
Let $G$ be a graph, $w$ be a weight function, and $R,S\subseteq V(G)$ be two disjoint sets such that $G\setminus(R\cup S)\in\Hcal$.
If $R$ is not an independent set, then $(G,R,S,w)$ has no solution.
Hence, we assume that $R$ is an independent set.
Then $S^\star\subseteq V(G)$ is a solution of {\sc Weighted \Annotated Vertex Cover} on $(G,R,S,w)$ if and only if $S^\star=S_B\cup S\cup N_G(R)$ where $S_B$ is a solution of {\sc Weighted Vertex Cover} on $(G\setminus(R\cup S),w)$.
Checking that $R$ is an independent set takes time $\Ocal(k^2)$ and then finding $N_G(R)$ takes time $\Ocal(k\cdot n')$, hence the result.
 \end{proof}

We apply \autoref{mis-nice} and \autoref{obs-mis-oct} to the dynamic programming algorithm of \autoref{DP} to obtain the following result.
Note that any graph class that is hereditary and does not contain $K_2$ is edgeless. Given that the problem is trivial on edgeless graphs, we can remove the condition of containing $K_2$.

\begin{corollary}\label{cor-mis}
Let $\Hcal$ be a hereditary graph class that is closed under 1-clique-sum.
Suppose that {\sc Weighted Vertex Cover} can be solved on instances $(G,w)$ where $G\in\Hcal$ in time $\Ocal(n^c\cdot m^d)$.
Then, given a graph $G$, a $1$-$\Hcal$-tree decomposition of $G$ of width $k$, and a weight function $w$, there is an algorithm that solves {\sc Weighted Vertex Cover}/{\sc Weighted Independent Set} on $(G,w)$ in time $\Ocal(2^k\cdot (k\cdot (k+n)+n^c\cdot m^d))$.
\end{corollary}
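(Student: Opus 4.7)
The plan is to obtain this corollary as a direct application of the general dynamic programming scheme of \autoref{DP}, instantiated with $p=2$, the graph class $\Hcal$, the optimization problem $\Pi$ being {\sc Weighted Vertex Cover} (expressed via the $2$-partition-evaluation function $f_{K_2}$ with $\opt=\min$), and its $2$-annotated extension $\Pi'$ being {\sc Weighted \Annotated Vertex Cover}. To invoke \autoref{DP}, I need to verify its three hypotheses on $\Pi'$, all of which are essentially established earlier in this subsection.

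First, by construction $\Pi'$ is a $2$-annotated extension of $\Pi$. Second, \autoref{mis-nice} shows that $\Pi'$ is $\Hcal$-nice: this is the step that uses the assumption that $\Hcal$ is closed under $1$-clique-sums and contains edges, because the nice reduction there replaces the part of a bag corresponding to each non-fully-annotated child by a pendant edge attached at a boundary vertex, and these hypotheses ensure that such gadgets keep the non-$A'$ part of the reduced graph inside $\Hcal$. Third, \autoref{obs-mis-oct}, combined with the assumed $\Ocal(n^c\cdot m^d)$ algorithm for {\sc Weighted Vertex Cover} on $\Hcal$, shows that $\Pi'$ is solvable on instances $(G,R,S,w)$ with $G\setminus(R\cup S)\in\Hcal$ in time $\Ocal(k(k+n')+(n')^c\cdot (m')^d)$, where $k=|R|$.

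For the running time, I would follow the refined analysis from the generalizations paragraph after \autoref{DP}. Enumerating the partitions of $A$ at each node contributes the $2^k$ factor; the construction of the nice reductions contributes a total of $\Ocal(k\cdot n)$; summing the additive $k(k+n'_t)$ term of the child-problem algorithm over all nodes, using $|V(T)|\leq n$ together with $\sum_t n'_t = \Ocal(n)$, yields the $\Ocal(k(k+n))$ term; and summing the $(n'_t)^c\cdot (m'_t)^d$ terms using $\sum_t n'_t\leq n$ and $\sum_t m'_t\leq m$ gives the final $n^c\cdot m^d$ contribution.

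I do not expect any real obstacle beyond the preceding lemmas: the main technical work, namely showing that {\sc Weighted \Annotated Vertex Cover} admits a nice reduction, is already done in \autoref{mis-nice}. The one subtlety to keep track of is the fact that the nice reduction updates the weight function (new pendant vertices receive the weights $s_i^+-\bar{w}$ and $s_i^--\bar{w}$ in the proof of \autoref{mis-nice}); this is handled by the weighted variant of \autoref{DP} discussed in the generalizations paragraph, since the algorithm used at each node for $\Pi'$ is treated as a black box that accepts an arbitrary weight function.
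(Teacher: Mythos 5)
Your proposal is correct and follows exactly the paper's route: the paper obtains \autoref{cor-mis} precisely by plugging \autoref{mis-nice} (niceness, which is where the 1-clique-sum and edge-containment hypotheses are used) and \autoref{obs-mis-oct} (solvability with an annotated modulator) into the general scheme of \autoref{DP}, with the running time coming from the refined accounting in the generalizations paragraph. Your handling of the updated weight function on the gadget vertices is also consistent with how the paper treats the weighted variant.
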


Given that the class $\Bcal$ of bipartite graphs is hereditary, closed under 1-clique-sums, and that {\sc Weighted Vertex Cover} can be solved on bipartite graphs in time $\Ocal(m^{1+o(1)})$~\cite{ChenKLPGS25maxi}, we obtain the following result concerning bipartite treewidth using \autoref{cor-mis}.

\begin{corollary}\label{mis-btw}
Given a graph $G$, a bipartite tree decomposition of $G$ of width $k$, and a weight function $w$, there is an algorithm that solves {\sc Weighted Vertex Cover}/{\sc Weighted Independent Set} on $(G,w)$ in time $\Ocal(2^k\cdot (k\cdot (k+n)+m^{1+o(1)}))$.
\end{corollary}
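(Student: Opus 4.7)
The plan is straightforward: \autoref{mis-btw} is obtained as a direct instantiation of \autoref{cor-mis} to the class $\Hcal = \Bcal$ of bipartite graphs. So I only need to verify the three hypotheses of \autoref{cor-mis} for $\Bcal$, and then plug in the resulting exponents into the generic running time.

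First I would check that $\Bcal$ is closed under $1$-clique-sums. Given two bipartite graphs $G_1, G_2$ with proper $2$-colorings $c_1,c_2$, a $1$-clique-sum identifies a single vertex $v_1 \in V(G_1)$ with a vertex $v_2 \in V(G_2)$; by permuting the two color classes of $c_2$ if necessary we may assume $c_1(v_1) = c_2(v_2)$, and then $c_1 \cup c_2$ is a proper $2$-coloring of the resulting graph. Since $1$-clique-sum optionally allows deleting the edges of the shared clique, which is a single vertex and hence has no edges, this covers all cases. Second, $\Bcal$ trivially contains the edge $P_2$, so the "contains edges" hypothesis holds. Third, \textsc{Weighted Vertex Cover} is solvable on bipartite graphs in time $\Ocal(n\cdot m)$ by the classical reduction to a minimum $s$-$t$ cut in the flow network described just before \autoref{MIS-oct} (combined with the max-flow algorithms of King, Rao, and Tarjan~\cite{KingRT94afas} or Orlin~\cite{Orlin13maxf}). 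This gives the required bound with $c = 1$ and $d = 1$.

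Plugging $c = d = 1$ into the running time $\Ocal(2^k\cdot (k\cdot(k+n) + n^c \cdot m^d))$ furnished by \autoref{cor-mis} yields exactly $\Ocal(2^k\cdot (k\cdot(k+n) + n\cdot m))$, and the correctness and the fact that the algorithm equally solves \textsc{Weighted Independent Set} (via the equivalence $S$ is a vertex cover iff $V(G)\setminus S$ is an independent set, with the minimum-weight cover corresponding to the maximum-weight independent set) are inherited from \autoref{cor-mis}. Since every step is a verification of hypotheses rather than a new argument, I do not anticipate any genuine obstacle; the only place that requires a moment of care is checking that the "possibly deleting some edges of the shared clique" clause in the definition of $q$-clique-sum is harmless for $q = 1$, which as noted above is immediate.
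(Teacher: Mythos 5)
Your proposal is correct and matches the paper's own derivation exactly: the paper obtains \autoref{mis-btw} by instantiating \autoref{cor-mis} with $\Hcal=\Bcal$, citing closure of bipartite graphs under $1$-clique-sums, the presence of edges in $\Bcal$, and the $\Ocal(n\cdot m)$ flow-based algorithm for \textsc{Weighted Vertex Cover} on bipartite graphs from~\cite{KingRT94afas,Orlin13maxf}, giving $c=d=1$. Your extra verification of the $1$-clique-sum closure and of the harmlessness of edge deletion in a single-vertex clique is just a more explicit spelling-out of the same argument.
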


\subsubsection{Odd Cycle Transversal}
\label{sec-algo-oct}

Let $H$ be a graph.
We define $f_ {\oct}$  as the $3$-partition-evaluation function where, for every graph $G$ and for every $(S,X_1,X_2)\in\Pcal_3(V(G))$,
\begin{equation*}
   f_ {\oct}(G,(S,X_1,X_2))=
   \begin{cases}
     |S| & \text{if } G\setminus S\in\Bcal, \text{with bipartition $(X_1,X_2)$}, \\
     +\infty &  \text{otherwise.}
   \end{cases}
\end{equation*}

Hence, seen as an optimization problem, {\sc Odd Cycle Transversal} is the problem of computing $\p_{f_ {\oct},\min}(G)$.
We call its annotated extension {\sc \Annotated Odd Cycle Transversal}.
In other words, {\sc \Annotated Odd Cycle Transversal} is defined as follows.

\begin{center}
	\fbox{
		\begin{minipage}{11.5cm}
			\noindent{\sc (Weighted) \Annotated Odd Cycle Transversal}\\
			\noindent\textbf{Input}:~~A graph $G$, three disjoint sets $S,X_1,X_2\subseteq V(G)$ (and a weight function $w:V(G)\to\bN$).\\
			\textbf{Objective}:~~Find, if it exists, a set $S^\star$ of minimum size (resp. weight) such that $S\subseteq S^\star$, $(X_1\cup X_2)\cap S^\star=\emptyset$, and $G\setminus S^\star$ is bipartite with $X_1$ and $X_2$ on different sides of the bipartition.
		\end{minipage}
	}
\end{center}

We first prove that {\sc (Weighted) \Annotated Odd Cycle Transversal} has the gluing property.
\begin{lemma}[Gluing property]\label{glu-oct}
{\sc (Weighted) \Annotated Odd Cycle Transversal} has the gluing property.
More precisely, given two boundaried graphs ${\bf F}=(F,B_F,\rho_F)$ and ${\bf G}=(G,B_G,\rho_G)$, a function $w:V({\bf F\oplus G})\to\bN$, a set $X\subseteq V({\bf F}\oplus{\bf G})$ such that $B_F\cap B_G\subseteq X$, and $\Xcal=(S,X_1,X_2)\in\Pcal_3(X)$, we have
$$\hat{\p}_{f_{\oct},\min}({\bf F}\oplus{\bf G},\Xcal,w)=\hat{\p}_{f_{\oct},\min}(F,\Xcal\cap V(F),w)+\hat{\p}_{f_{\oct},\min}(G,\Xcal\cap V(G),w)-\bar{w},$$ where $\bar{w}=w(S\cap B_F\cap B_G)$.
\end{lemma}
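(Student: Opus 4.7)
The plan is to mirror the proof of \autoref{glu-kt}, establishing the equality by two inequalities, with the essential new ingredient being that the annotation $\Xcal=(S,X_1,X_2)$ pins down a \emph{bipartition}, not merely a solution set, on the shared boundary. Throughout I will rely on the hypothesis $B_F\cap B_G\subseteq X$, which guarantees that every shared vertex has already been assigned by $\Xcal$ to exactly one of $S$, $X_1$, $X_2$. For bookkeeping, set $\bar{w}=w(S\cap B_F\cap B_G)$.

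For the $(\leq)$ direction, I would pick, for each $H\in\{F,G\}$, a partition $\Pcal_H=(S_H,X_1^H,X_2^H)$ of $V(H)$ extending $\Xcal\cap V(H)$ and realizing $\hat{\p}_{f_{\oct},\min}(H,\Xcal\cap V(H),w)$, and form the component-wise union $\Pcal=(S_F\cup S_G,\,X_1^F\cup X_1^G,\,X_2^F\cup X_2^G)$. The assumption $B_F\cap B_G\subseteq X$ ensures that this is a genuine partition of $V({\bf F}\oplus{\bf G})$ extending $\Xcal$, because each shared vertex is already locked into one of the three parts by $\Xcal$ and therefore into the matching part of both $\Pcal_F$ and $\Pcal_G$. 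The key check is that $({\bf F}\oplus{\bf G})\setminus (S_F\cup S_G)$ is bipartite with the claimed sides: it is the union of $F\setminus S_F$ and $G\setminus S_G$ glued along $(B_F\cap B_G)\setminus S$, and on those shared vertices both local bipartitions agree with $(X_1,X_2)$, so no monochromatic edge can appear across the seam. Combining this with the identity $S_F\cap S_G=S\cap B_F\cap B_G$, inclusion–exclusion then gives $w(S_F\cup S_G)=w(S_F)+w(S_G)-\bar{w}$, which is the desired inequality.

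For the $(\geq)$ direction, I would start from an optimal partition $\Pcal^\star=(S^\star,X_1^\star,X_2^\star)$ of $V({\bf F}\oplus{\bf G})$ extending $\Xcal$ and restrict it to $V(F)$ and $V(G)$ respectively. Each restriction extends the corresponding restriction of $\Xcal$, and since induced subgraphs of bipartite graphs are bipartite, each restriction is a feasible solution whose cost upper-bounds the respective optimum. Summing these two bounds and applying inclusion–exclusion (noting once more that $S^\star\cap B_F\cap B_G=S\cap B_F\cap B_G$, forced by the annotation) reproduces $w(S^\star)$ modulo the $-\bar{w}$ correction, yielding the reverse inequality.

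The only subtle point, and the one I expect to be the true obstacle, is the consistency check in the $(\leq)$ step: without the bipartition annotation $(X_1,X_2)$, the two local optima could place a shared boundary vertex on opposite sides of their bipartitions, creating an odd cycle across the seam and destroying bipartiteness of the glued graph. This is precisely why the annotated extension is defined to include $(X_1,X_2)$ and why the statement requires $B_F\cap B_G\subseteq X$; together these ingredients make the gluing canonical, and once this is in place the argument above essentially writes itself.
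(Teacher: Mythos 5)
Your proposal is correct and follows essentially the same argument as the paper's proof: one inequality by restricting an optimal partition of ${\bf F}\oplus{\bf G}$ to $F$ and $G$, the other by taking the componentwise union of local optima and checking that the annotation $(S,X_1,X_2)$ on $B_F\cap B_G\subseteq X$ forces the two local bipartitions to agree on the seam, so the glued graph minus the union of the solution sets is bipartite. The subtle point you flag (consistency of the bipartition across the shared boundary) is exactly the observation the paper makes to justify that $X_1^F\cup X_1^G$ and $X_2^F\cup X_2^G$ are independent sets.
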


\begin{proof}
Let $\Pcal=(S^\star,X_1^\star,X^\star_2)\in\Pcal_3(V({\bf F}\oplus{\bf G}))$ be such that $\Xcal\subseteq\Pcal$ and $\hat{\p}_{f_{\oct},\min}({\bf F}\oplus{\bf G},\Xcal,w)=f_{\oct}({\bf F}\oplus{\bf G},\Pcal,w).$
Then, for $H\in\{F,G\}$, $H\setminus (S^\star\cap V(H))$ is bipartite, witnessed by the 2-partition $(X_1^\star\cap V(H),X_2^\star\cap V(H))$.
Therefore,
\begin{align*}
\hat{\p}_{f_{\oct},\min}({\bf F}\oplus{\bf G},&\Xcal,w)=w(S^\star)\\
&=w(S^\star\cap V(F))+w(S^\star\cap V(G))-w(S^\star\cap B_F\cap B_G)\\
&\geq \hat{\p}_{f_{\oct},\min}(F,\Xcal\cap V(F),w)+\hat{\p}_{f_{\oct},\min}(G,\Xcal\cap V(G),w)-\bar{w}.
\end{align*}

Reciprocally, let $\Pcal_H=(S_H,X_1^H,X_2^H)\in\Pcal_3(V(H))$ be such that $\Xcal\cap V(H)\subseteq \Pcal_H$ and $\hat{\p}_{f_{\oct},\min}(H,\Xcal\cap V(H),w)=f_{\oct}(H,\Pcal_H,w)$ for $H\in\{F,G\}$.
Since $\Pcal_F\cap B_F\cap B_G=\Pcal_G\cap B_F\cap B_G$, it follows that $X_1^F\cup X_1^G$ and $X_2^F\cup X_2^G$ are two independent sets of $({\bf F\oplus G})\setminus (S_F\cup S_G)$.
Therefore, $({\bf F\oplus G})\setminus (S_F\cup S_G)$ is a bipartite graph witnessed by $(X_1^F\cup X_1^G,X_2^F\cup X_2^G)$. Thus,
\begin{align*}
\hat{\p}_{f_{\oct},\min}({\bf F}\oplus{\bf G},&\Xcal,w)\leq w(S_F\cup S_G)\\
&=w(S_F)+w(S_G)-\bar{w}\\
&=\hat{\p}_{f_{\oct},\min}(F,\Xcal\cap V(F),w)+\hat{\p}_{f_{\oct},\min}(G,\Xcal\cap V(G),w)-\bar{w}.
\end{align*}
 \end{proof}

We now show how to reduce a graph ${\bf F}\oplus{\bf G}$ to a graph $F'$ when the boundary of ${\bf F}$ and ${\bf G}$ has a single vertex $v$ that is not annotated.
See \autoref{fig_gadget_OCT} for an illustration.
Similarly to \autoref{obs+1}, the proof of \autoref{obs+oct} only holds in the unweighted case.

\begin{figure}[h]
\center
\includegraphics[scale=0.7]{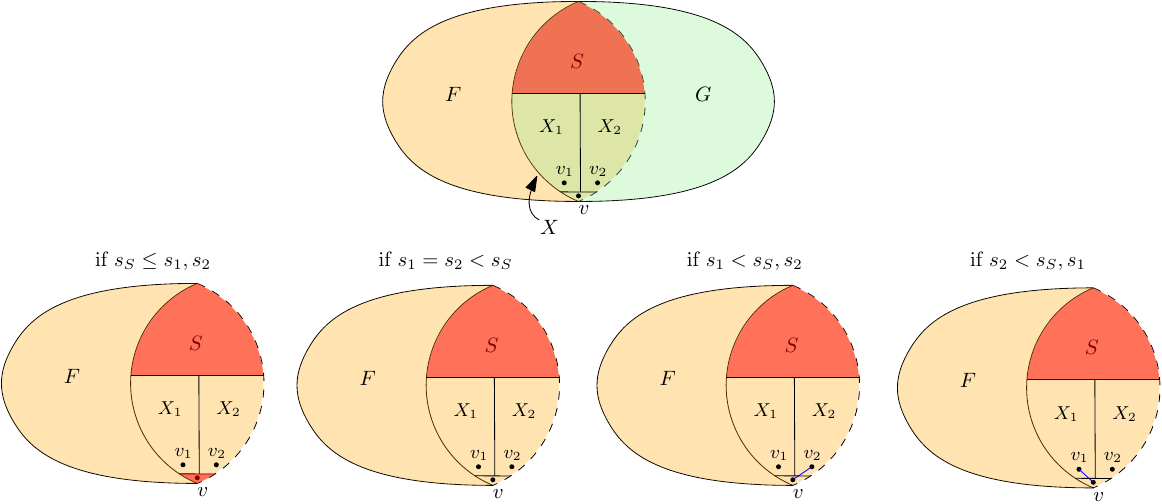}
\caption{Illustration of the gadgetization for {\sc Odd Cycle Transversal}.}
\label{fig_gadget_OCT}
\end{figure}

\begin{lemma}[Gadgetization]\label{obs+oct}
Let ${\bf F}=(F,B_F,\rho_F)$ and ${\bf G}=(G,B_G,\rho_G)$ be two boundaried graphs.
Let $X\subseteq V({\bf F}\oplus{\bf G})$ be such that $B_F\cap B_G\subseteq X$.
Let also $v\in B_F\cap B_G$, let $\Xcal=(S,X_1,X_2)\in\Pcal_3(X\setminus\{v\})$ with $X_1,X_2\neq\emptyset$, and
let $v_1\in X_1$ and $v_2\in X_2$.
We set $\Xcal_S=(S\cup\{v\},X_1,X_2)$, $\Xcal_1=(S,X_1\cup\{v\},X_2)$, and $\Xcal_2=(S,X_1,X_2\cup\{v\})$.
We also set, for $a\in\{S,1,2\}$, $s_a=\hat{\p}_{f_{\oct},\min}(G,\Xcal_a\cap V(G))$, and we set
$\bar{s}=|S\cap B_F\cap B_G|$.
For $i\in[2]$, we note $F_i$ the graph obtained from $F$ by adding the edge $vv_i$.
Then we have the following case distinction.
\begin{equation*}
   \hat{\p}_{f_{\oct},\min}({\bf F}\oplus{\bf G},\Xcal)=
   \begin{cases}
     s_S+\hat{\p}_{f_{\oct},\min}(F,\Xcal_S\cap V(F))-\bar{s}-1 & \text{if } s_S\leq s_1,s_2, \\
     s_1+\hat{\p}_{f_{\oct},\min}(F,\Xcal\cap V(F))-\bar{s} &  \text{if } s_1=s_2<s_S,\\
s_1+\hat{p}_{f_\oct,\min}(F_2,\Xcal\cap V(F))-\bar{s} & \text{if } s_1<s_S,s_2,\\
s_2+\hat{p}_{f_\oct,\min}(F_1,\Xcal\cap V(F))-\bar{s} & \text{otherwise.}
   \end{cases}
\end{equation*}
\end{lemma}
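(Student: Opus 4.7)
The plan is to extend the gluing property of \autoref{glu-oct} by a case analysis over the three possible locations of the unannotated vertex $v$ in an optimal partition. Since any partition of $V({\bf F}\oplus{\bf G})$ extending $\Xcal$ must place $v$ into exactly one of $S^\star$, $X_1^\star$, or $X_2^\star$, we have $\hat{\p}_{f_{\oct},\min}({\bf F}\oplus{\bf G},\Xcal)=\min_{a\in\{S,1,2\}}\hat{\p}_{f_{\oct},\min}({\bf F}\oplus{\bf G},\Xcal_a)$. Setting $t_a := \hat{\p}_{f_{\oct},\min}(F,\Xcal_a\cap V(F))$ and noting that $|S_a \cap B_F \cap B_G|$ equals $\bar{s}+1$ when $a=S$ and $\bar{s}$ otherwise, \autoref{glu-oct} applied to each $\Xcal_a$ gives $\hat{\p}_{f_{\oct},\min}({\bf F}\oplus{\bf G},\Xcal_S) = t_S+s_S-\bar{s}-1$ and $\hat{\p}_{f_{\oct},\min}({\bf F}\oplus{\bf G},\Xcal_a) = t_a+s_a-\bar{s}$ for $a\in\{1,2\}$.

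I then establish two auxiliary facts. First, since adding $v$ to any valid OCT with $v \notin S^\star$ produces a valid OCT with $v \in S^\star$ of cost one larger, we have $s_S\le s_a+1$ and $t_S\le t_a+1$ for each $a\in\{1,2\}$. Second, the key claim is that $\hat{\p}_{f_{\oct},\min}(F_i,\Xcal\cap V(F)) = \min(t_S,t_{3-i})$ for $i\in\{1,2\}$: in $F_i$, the unannotated vertex $v$ can be placed in $S^\star$ (the new edge $vv_i$ is removed along with $v$, contributing cost $t_S$), in $X_{3-i}^\star$ (the edge $vv_i$ is bichromatic since $v_i\in X_i$, contributing cost $t_{3-i}$), or in $X_i^\star$, which is infeasible because $vv_i$ would be monochromatic and violate bipartiteness.

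The proof concludes by a case analysis on the relative sizes of $s_S$, $s_1$, $s_2$. In Case 1 ($s_S\le s_1,s_2$), combining $s_S\le s_a$ with $t_S\le t_a+1$ gives $t_S+s_S-1\le t_a+s_a$ for $a\in\{1,2\}$, so the minimum of the three gluing expressions is $t_S+s_S-\bar{s}-1$, as claimed. In the remaining cases, the inequality $s_S\le\min(s_1,s_2)+1$ forces $s_S=\min(s_1,s_2)+1$. In Case 2 ($s_1=s_2<s_S$), the three expressions become $t_S+s_1-\bar{s}$, $t_1+s_1-\bar{s}$, $t_2+s_1-\bar{s}$, whose minimum is $s_1-\bar{s}+\min(t_S,t_1,t_2) = s_1-\bar{s}+\hat{\p}_{f_{\oct},\min}(F,\Xcal\cap V(F))$. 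In Case 3 ($s_1<s_S,s_2$), we have $s_S=s_1+1$ and $s_2\ge s_1+1$; combined with $t_2\ge t_S-1$ this yields $t_2+s_2\ge t_S+s_1$, so the minimum simplifies to $s_1-\bar{s}+\min(t_S,t_1) = s_1-\bar{s}+\hat{\p}_{f_{\oct},\min}(F_2,\Xcal\cap V(F))$. Case 4 is symmetric with the roles of $1$ and $2$ exchanged.

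The main technical subtlety is the characterization of $\hat{\p}_{f_{\oct},\min}(F_i,\cdot)$ via the added edge $vv_i$, which is precisely what allows the asymmetric information carried by the pair $(s_1,s_2)$ to be encoded as a boundaried-graph modification. Once this is in place, \autoref{glu-oct} and the elementary inequalities $s_S\le s_a+1$, $t_S\le t_a+1$ do all the remaining work.
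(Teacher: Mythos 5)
Your proof is correct and follows essentially the same route as the paper's: apply the gluing lemma to each of the three placements of $v$, use the elementary inequalities $s_S\le s_a+1$ and $t_S\le t_a+1$, and resolve the four cases. Your explicit verification that $\hat{\p}_{f_{\oct},\min}(F_i,\Xcal\cap V(F))=\min(t_S,t_{3-i})$ is a slightly more careful justification of the gadget edge than the paper spells out, but the argument is the same.
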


\begin{proof}
For $H\in\{F,G\}$ and $a\in\{S,1,2\}$, let $\Pcal_H^a=(S_H^a,X_{1,H}^a,X_{2,H}^a)$
be a partition of $V(H)$ such that $\Xcal_a\cap V(H)\subseteq\Pcal_H^a$ and $\hat{\p}_{f_\oct,\min}(H,\Xcal_a\cap V(H))=f_\oct(H,\Pcal_H^a)=|S_H^a|$.
We therefore have $s_a=|S_G^a|$. Let us similarly define $t_a=|S_F^a|$ for $a\in\{S,1,2\}$.

Note that $S^S_F\cap S^S_G=(S\cap B_F\cap B_G)\cup\{v\}$ and for $a\in\{1,2\}$, $S^a_F\cap S^a_G=S\cap B_F\cap B_G$.
Hence, using \autoref{glu-kt}, we have that
\begin{align*}
\hat{\p}_{f_{\oct},\min}({\bf F}\oplus{\bf G},\Xcal)
&=\min\{\hat{\p}_{f_{\oct},\min}({\bf F}\oplus{\bf G},\Xcal_a)\mid a\in\{S,1,2\}\}\\
&=\min\{t_S+s_S-1,t_1+s_1,t_2+s_2\}-\bar{s}.
\end{align*}
Note that $s_S\leq s_1+1$, since $G\setminus (S_G^1\cup\{v\})$ is bipartite, witnessed by the 2-partition $(X_1\setminus\{v\},X_2)$, and thus $|S_G^1\cup\{v\}|\geq \hat{\p}_{f_\oct,\min}(G,\Xcal_S\cap V(G))$.
Similarly, $s_S\leq s_2+1$, $t_S\leq t_1+1$, and $t_S\leq t_2+1$.

Hence, if $s_S\leq s_1,s_2$, then $\min\{t_S+s_S-1,t_1+s_1,t_2+s_2\}=t_S+s_S-1$, so
$$\hat{\p}_{f_{\oct},\min}({\bf F}\oplus{\bf G},\Xcal)=s_S+t_S-\bar{s}-1.$$

If $s_1=s_2<s_S$, then $s_1=s_2=s_S-1$. Thus,
$\min\{t_S+s_S-1,t_1+s_1,t_2+s_2\}=\min\{t_S,t_1,t_2\}+s_1=\hat{\p}_{f_{\oct},\min}(F,\Xcal\cap V(F))+s_1$, so
$$\hat{\p}_{f_{\oct},\min}({\bf F}\oplus{\bf G},\Xcal)=s_1+\hat{\p}_{f_{\oct},\min}(F,\Xcal\cap V(F))-\bar{s}.$$

If $s_1<s_S,s_2$, then $s_1+1=s_S\leq s_2$.
We have $t_2+s_2\geq t_S+s_2\geq t_S+s_S-1.$
Thus,
$\min\{t_S+s_S-1,t_1+s_1,t_2+s_2\}=\min\{t_S,t_1\}+s_1=s_1+\min\{\hat{\p}_{f_{\oct},\min}(F,\Xcal_S\cap V(F)),\hat{\p}_{f_{\oct},\min}(F,\Xcal_1\cap V(F))\}$.
Hence, we just need to ensure that $v$ cannot be added to $X_2$, which is done by adding an edge between $v$ and $v_2\in X_2$.
Therefore, $$\hat{\p}_{f_{\oct},\min}({\bf F}\oplus{\bf G},\Xcal)=s_1+\hat{p}_{f_\oct,\min}(F_2,\Xcal\cap V(F))-\bar{s}.$$

Otherwise, $s_2<s_S,s_1$
By symmetry, we similarly obtain
$$\hat{\p}_{f_{\oct},\min}({\bf F}\oplus{\bf G},\Xcal)=s_2+\hat{p}_{f_\oct,\min}(F_1,\Xcal\cap V(F))-\bar{s}.$$
 \end{proof}

Using \autoref{glu-oct} and \autoref{obs+oct}, we can now prove that {\sc \Annotated Odd Cycle Transversal} is $\Hcal$-nice.
Essentially, given an instance $({\bf G}={\bf X}\boxplus(\boxplus_{i\in[d]}{\bf G}_i),(A,B),\Acal)$,
we reduce ${\bf G}$ to ${\bf X}$ where we add two new vertices $u_1$ and $u_2$ in $\Acal$ and add edges between $u_i$ and some vertices in $B$, for $i\in[2]$, and show that the resulting boundaried graph is equivalent to $\bf G$ modulo some constant $s$.

\begin{lemma}[Nice problem]\label{oct-nice}
Let $\Hcal$ be a hereditary graph class.
{\sc \Annotated Odd Cycle Transversal} is $\Hcal$-nice.
\end{lemma}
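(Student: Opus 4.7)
The plan is to adapt the iterative scheme used in Lemmas \ref{ktvc-nice} and \ref{mis-nice} to the three-part annotation $(S,X_1,X_2)$ of {\sc Odd Cycle Transversal}, combining the gluing property (\autoref{glu-oct}) with the four-case gadgetization (\autoref{obs+oct}). The main preparatory move is to augment $\mathbf{G}$ with two fresh isolated vertices $u_1,u_2$ and set $\Acal_0=(S,X_1\cup\{u_1\},X_2\cup\{u_2\})$, $\mathbf{H}_0=\mathbf{G}$ with these two vertices added, and $s_0=0$. Since $u_1,u_2$ are isolated and pre-assigned to different sides of the bipartition, the invariant
$$\hat{\p}_{f_\oct,\min}(\mathbf{G}\oplus\mathbf{F},\Acal)=\hat{\p}_{f_\oct,\min}(\mathbf{H}_0\oplus\mathbf{F},\Acal_0)+s_0$$
holds trivially, and moreover $X_1\cup\{u_1\}$ and $X_2\cup\{u_2\}$ are now non-empty, which is precisely the hypothesis required to invoke \autoref{obs+oct} later.

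Next, we process the children ${\bf G}_1,\ldots,{\bf G}_d$ one by one. At step $i$, write ${\bf H}_{i-1}={\bf H}_i'\boxplus{\bf G}_i$. If every vertex of $X_i$ is already annotated in $\Acal_{i-1}$, apply \autoref{glu-oct} to detach ${\bf G}_i$: set ${\bf H}_i={\bf H}_i'$, $\Acal_i=\Acal_{i-1}$, and $s_i=s_{i-1}+\hat{\p}_{f_\oct,\min}({\bf G}_i,\Acal_{i-1}\cap X_i)-|S_{i-1}\cap X_i|$. Otherwise, by the hypothesis $|X_i\setminus A|\leq 1$, there is a unique $v_i\in X_i$ that is unannotated. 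Using the precomputed values $\hat{\p}_{f_\oct,\min}({\bf G}_i,\cdot)$ together with \autoref{glu-oct}, compute $s_S,s_1,s_2$ as in \autoref{obs+oct} for the three candidate placements of $v_i$. Depending on which of the four cases of \autoref{obs+oct} is triggered, we do exactly one of:  (i) if $s_S\leq s_1,s_2$, move $v_i$ into the $S$-component of $\Acal_i$ and delete $v_i$ from ${\bf H}_i$, incrementing $s_i$ by $s_S-|S_{i-1}\cap X_i|-1$ plus $1$ for the removal; (ii) if $s_1=s_2<s_S$, leave ${\bf H}_i={\bf H}_i'$ and $\Acal_i=\Acal_{i-1}$ with $s_i=s_{i-1}+s_1-|S_{i-1}\cap X_i|$; (iii) if $s_1<s_S,s_2$, add the edge $v_iu_2$ to ${\bf H}_i'$ to forbid $v_i\in X_2$; (iv) symmetrically, if $s_2<s_S,s_1$, add $v_iu_1$. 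In each subcase the invariant is preserved by the corresponding equality of \autoref{obs+oct}.

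After $d$ steps, ${\bf H}_d$ equals ${\bf X}$ augmented with $\{u_1,u_2\}$, possibly some vertices of $B$ that were not declared in $S^\star$, and at most one extra edge per such vertex, connecting it to $u_1$ or $u_2$. Setting $A'=A\cup\{u_1,u_2\}$ and $\Acal'=\Acal_d$, we have $|A'|=|A|+2=|A|+\Ocal(1)$, $|V({\bf H}_d)|\leq|X|+|B|+2=|X|+\Ocal(|B|)$, and $|E({\bf H}_d)|\leq|E(G[X])|+|B|=|E(G[X])|+\Ocal(|B|)$, matching the size requirements of a nice reduction. For the hereditary condition, note that $({\bf H}_d\triangleright{\bf F})\setminus A'$ is an induced subgraph of $\bar F\setminus A_F$: removing $A'$ deletes $u_1,u_2$ (hence also the gadget edges incident to them) and $A$, leaving only vertices in $V(F)\setminus A$ with edges inherited from $F$. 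Since $\Hcal$ is hereditary, $\bar F\setminus A_F\in\Hcal$ implies $({\bf H}_d\triangleright{\bf F})\setminus A'\in\Hcal$.

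The main obstacle, compared to \autoref{ktvc-nice} and \autoref{mis-nice}, is the asymmetric three-part structure: a non-solution vertex must be placed on one \emph{specific} side of the bipartition, so the gadgetization cannot be as simple as ``in-or-out of the solution.'' Ensuring the non-emptiness prerequisite of \autoref{obs+oct} by introducing $u_1,u_2$ is the key enabling trick, and the four-case split in the gadgetization must be matched by a corresponding four-case update of $({\bf H}_i,\Acal_i,s_i)$. Finally, each $s_a$ for $a\in\{S,1,2\}$ is obtained in $\Ocal(|A|)$ time from the table of ${\bf G}_i$, so the whole procedure runs in $\Ocal(|A|\cdot d)$ time, as required for a nice reduction.
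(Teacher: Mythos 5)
Your proposal is correct and follows essentially the same route as the paper's proof: add the two fresh vertices $u_1,u_2$ to guarantee the non-emptiness hypothesis of \autoref{obs+oct}, peel off fully-annotated children via \autoref{glu-oct}, apply the four-case gadgetization otherwise (deleting $v$, doing nothing, or adding an edge to $u_1$ or $u_2$), and strip the newly $S$-annotated $B$-vertices at the end before checking the size and hereditary conditions. The only difference is organizational — you process children one at a time where the paper groups all children sharing the same unannotated vertex $v_i$ and gadgetizes once per $B$-vertex — and both bookkeeping schemes yield the same reduction.
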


\begin{proof}
Let ${\bf G}=(G,X,ρ)$ be a boundaried  graph, 	
let ${\bf X}=(G[X],X,\rho_X)$ be a trivial boundaried graph and
let $\{{\bf G}_i=(G_{i},X_{i},ρ_{i})\mid i\in[d]\}$ be a collection of boundaried graphs,
such that ${\bf G}={\bf X}\boxplus(\boxplus_{i\in[d]}{\bf G}_i)$,
let $(A,B)$ be a partition of $X$ such that
for all $i \in [d]$,
$|X_i\setminus A|\leq 1$,
and
let $\Acal=(S,X^1,X^2)\in\Pcal_3(A)$.
Suppose that we know,
for every $i\in[d]$ and each $\mathcal{X}_i \in \Pcal_3(X_i)$, the value $\hat{\mathsf{p}}_{f_{\oct},\mathsf{min}}(G_i,\mathcal{X}_i)$.

Let ${\bf G}'$ and ${\bf X}'$ be the boundaried graphs obtained from $\bf G$ and $\bf X$ respectively, by adding two new isolated vertices $u_1$ and $u_2$ in the boundary (with unused labels).
Let $\Acal'=(S,X^1\cup\{u_1\},X^2\cup\{u_2\})$. This operation is done to ensure that $X_i'=X_i\cup\{u_i\}$ is non-empty for $i\in[2]$.
Obviously, $\hat{\p}_{f_{\oct},\min}({\bf G}\oplus{\bf F},\Acal)=\hat{\p}_{f_{\oct},\min}({\bf G}'\oplus{\bf F},\Acal')$.

Let $v_1,\ldots,v_{|B|}$ be the vertices of $B$.
For $i\in[|B|]$, let $I_i=\{j\in[d]\mid X_j\setminus A=\{v_i\}\}$.
Let $I_0=\{j\in[d]\mid X_j\subseteq A\}$.
Obviously, $(I_i)_{i\in[0,|B|]}$ is a partition of $[d]$.

Let us show that item (iii) holds.
Let $({\bf H}_{-1},S_{-1},s_{-1},E_{-1})=({\bf G}',S,0,\emptyset)$.
For $i$ going from 0 up to $|B|$, we construct $({\bf H}_i,S_{i-1},s_i,E_i)$ from $({\bf H}_{i-1},S_i,s_{i-1},E_{i-1})$
such that for any boundaried graph $\bf F$ with underlying graph $F$ and compatible with~$\bf G$,
$$\hat{\p}_{f_{\oct},\min}({\bf G}\oplus{\bf F},\Acal)=\hat{\p}_{f_{\oct},\min}({\bf H}_i\oplus{\bf F}_i,\Acal_i)+s_i,$$
where ${\bf F}_i$ is the boundaried graph obtained from $\bf F$ by adding the edges in $E_i$ and $\Acal_i=(S_i,X^1\cup\{u_1\},X^2\cup\{u_2\})$.
This is obviously true for $i=-1$.

Let $i\in[0,|B|]$.
By induction, we have $$\hat{\p}_{f_{\oct},\min}({\bf G}\oplus{\bf F},\Acal)=\hat{\p}_{f_{\oct},\min}({\bf H}_{i-1}\oplus{\bf F}_{i-1},\Acal_{i-1})+s_{i-1}.$$
Let ${\bf G}'_i=\boxplus_{j\in I_i} {\bf G}_j$.
Let ${\bf H}_i'$ be the boundaried graph with underlying graph $H_i'$ such that ${\bf H}_{i-1}={\bf H}_i'\boxplus{\bf G}_i'$.

\medskip
\noindent{\bf Suppose first that $i=0$.}
Let $\Pcal_0=(S^0,X_1^0,X_2^0)\in\Pcal_3(X\cup\{u_1,u_2\})$ be such that $\Acal'\subseteq\Pcal_0$ and
$\hat{\p}_{f_\oct,\min}({\bf G}'\oplus{\bf F},\Acal')=\hat{\p}_{f_\oct,\min}({\bf G}'\oplus{\bf F},\Pcal_0).$
According to \autoref{glu-oct},
\begin{align*}
\hat{\p}_{f_\oct,\min}({\bf G}'\oplus{\bf F},\Pcal_0)
&=\hat{\p}_{f_\oct,\min}(F,\Pcal_0)+\hat{\p}_{f_\oct,\min}(G',\Pcal_0)-|S^0\cap X|\\
&=\hat{\p}_{f_\oct,\min}(F,\Pcal_0)-|S^0\cap X|+\hat{\p}_{f_\oct,\min}({H}_0',\Pcal_0)\\
&\ \ \ \ +\sum_{j\in I_0}(\hat{\p}_{f_\oct,\min}(G_j,\Pcal_0\cap X_j)-|S^0\cap  X_j|)\\
&=\hat{\p}_{f_\oct,\min}({\bf H}_0'\oplus{\bf F},\Pcal_0)+\sum_{j\in I_0}(\hat{\p}_{f_\oct,\min}(G_j,\Pcal_0\cap X_j)-|S\cap  X_j|),\\
&\ \ \ \ \ \ \ \text{ because }S^0\setminus S \subseteq B\text{ and }X_j\cap B=\emptyset.
%&\ \ \ +\sum_{j\in I_0}(\hat{\p}_{f_\oct,\min}(G_j,\Pcal_0\cap X_j)-|S\cap  X_j|).
\end{align*}
Since this is the case for all such $\Pcal_0$, it implies that
\begin{align*}
\hat{\p}_{f_\oct,\min}({\bf G}'\oplus{\bf F},\Acal')
&=\hat{\p}_{f_\oct,\min}({\bf H}_0'\oplus{\bf F},\Acal')+\sum_{j\in I_0}(\hat{\p}_{f_\oct,\min}(G_j,\Pcal_0\cap X_j)-|S\cap  X_j|).
%&\ \ \ +\sum_{j\in I_0}(\hat{\p}_{f_\oct,\min}(G_j,\Pcal_0\cap X_j)-|S\cap  X_j|).
\end{align*}
Therefore, if ${\bf H}_0={\bf H}_0'$, $S_0=S$, $s_0=\sum_{j\in I_0}(\hat{\p}_{f_\oct,\min}(G_j,\Pcal_0\cap X_j)-|S\cap  X_j|)$, and $E_0=\emptyset$, then
$$\hat{\p}_{f_{\oct},\min}({\bf G}\oplus{\bf F},\Acal)=\hat{\p}_{f_{\oct},\min}({\bf H}_0\oplus{\bf F}_0,\Acal_0)+s_0.$$

\noindent{\bf Otherwise, $i\in[|B|]$ and $X_j\setminus A=\{v_i\}$ for each $j\in I_i$.}
Let $X_i'=\bigcup_{j\in I_i}X_j$.
Let $\Xcal_i^S=(S_{i-1}\cup\{v_i\},X^1\cup\{u_1\},X^2\cup\{u_2\})$, $\Xcal_i^1=(S_{i-1},X^1\cup\{u_1,v_i\},X^2\cup\{u_2\})$, and $\Xcal_i^2=(S_{i-1},X^1\cup\{u_1\},X^2\cup\{u_2,v_i\})$.
For $a\in\{S,1,2\}$, let $s_i^a=\hat{\p}_{f_\oct,\min}(G_i',\Xcal_i^a\cap X_i')$.
By \autoref{glu-oct},
$$s_i^S=\sum_{j\in I_i}(\hat{p}_{f_\oct,\min}(G_j,\Xcal_i^S\cap X_j)-|S_{i-1}\cap X_j|-1)+|S_{i-1}\cap X_i'|+1$$
and, for $a\in\{1,2\}$,
$$s_i^a=\sum_{j\in I_i}(\hat{p}_{f_\oct,\min}(G_j,\Xcal_i^a\cap X_j)-|S_{i-1}\cap X_j|)+|S_{i-1}\cap X_i'|.$$
Therefore, $s_i^a$ can be computed for $a\in\{S,1,2\}$ in time $\Ocal(|A|\cdot |I_i|)$.
For $a\in\{1,2\}$, let ${\bf H}_i^a$ and ${\bf F}_i^a$ be the boundaried graphs obtained from ${\bf H}_i'$ and ${\bf F}_{i-1}$, respectively, by adding the edge $v_iu_a$.
Let $\Pcal_i=(S^i,X_1^i,X_2^i)\in\Pcal_3(X\cup\{u_1,u_2\}\setminus\{v_i\})$ be such that $\Acal_{i-1}\subseteq\Pcal_i$ and
$\hat{\p}_{f_\oct,\min}({\bf H}_{i-1}\oplus{\bf F}_{i-1},\Acal_{i-1})=\hat{\p}_{f_\oct,\min}({\bf H}_{i-1}\oplus{\bf F}_{i-1},\Pcal_i).$
Let $\Pcal_i^S=(S^i\cup\{v_i\},X_1^i,X_2^i)$, $\Pcal_i^1=(S^i,X_1^i\cup\{v_i\},X_2^i)$, and $\Pcal_i^2=(S^i,X_1^i,X_2^i\cup\{v_i\})$.
Note that for $a\in\{S,1,2\}$, we have $s_i^a=\hat{\p}_{f_\oct,\min}(G_i',\Pcal_i^a\cap X_i')$.
Then, using \autoref{obs+oct},
\begin{align*}
\hat{\p}_{f_\oct,\min}&({\bf H}_{i-1}\oplus{\bf F}_{i-1},\Pcal_i)
=\hat{\p}_{f_\oct,\min}(({\bf H}_{i}'\boxplus{\bf F}_{i-1})\oplus{\bf G}_i',\Pcal_i)\\
&=
\begin{cases}
     s_S+\hat{\p}_{f_{\oct},\min}({\bf H}_{i}'\oplus{\bf F}_{i-1},\Pcal_i^S)-\bar{s}-1 & \text{ if } s_i^S\leq s_i^1,s_i^2, \\
     s_1+\hat{\p}_{f_{\oct},\min}({\bf H}_{i}'\oplus{\bf F}_{i-1},\Pcal_i)-\bar{s} & \text{ if } s_i^1=s_i^2<s_i^S,\\
s_1+\hat{p}_{f_\oct,\min}({\bf H}_i^2\oplus{\bf F}_{i}^2,\Pcal_i)-\bar{s} & \text{ if } s_i^1<s_i^S,s_i^2, \text{and }\\
s_2+\hat{p}_{f_\oct,\min}({\bf H}_i^1\oplus{\bf F}_{i}^1,\Pcal_i)-\bar{s} &\text{ otherwise.}
\end{cases}
\end{align*}
Since this is the case for any such $\Pcal_i$, we have that
\begin{align*}
\hat{\p}_{f_{\oct},\min}({\bf G}\oplus{\bf F},\Acal)
&=\hat{\p}_{f_\oct,\min}({\bf H}_{i-1}\oplus{\bf F}_{i-1},\Acal_{i-1})+s_{i-1}\\
&=\hat{\p}_{f_\oct,\min}({\bf H}_{i}\oplus{\bf F}_{i},\Acal_i)+s_{i},
\end{align*}
where
\begin{align*}
({\bf H}_i,S_i,s_i,E_i)=
\begin{cases}
({\bf H}_i',S_{i-1}\cup\{v\},s_{i-1}+s_S-\bar{s}-1,E_{i-1}) & \text{if } s_i^S\leq s_i^1,s_i^2, \\
({\bf H}_i',S_{i-1},s_{i-1}+s_1-\bar{s},E_{i-1}) &  \text{if } s_i^1=s_i^2<s_i^S,\\
({\bf H}_i^2,S_{i-1},s_{i-1}+s_1-\bar{s},E_{i-1}\cup\{v_iu_2\}) & \text{if } s_i^1<s_i^S,s_i^2,\\
({\bf H}_i^1,S_{i-1},s_{i-1}+s_2-\bar{s},E_{i-1}\cup\{v_iu_1\}) & \text{otherwise.}
\end{cases}
\end{align*}

Let $S_B=S_{|B|}\setminus S\subseteq B$.
Let ${\bf H}_B={\bf H}_{|B|}\setminus S_B$.
We have $({\bf H}_{|B|}\oplus{\bf F}_{|B|})\setminus S_B=({\bf H}_{|B|}\oplus{\bf F})\setminus S_B=({\bf H}_{|B|}\setminus S_B)\triangleright{\bf F}={\bf H}_{B}\triangleright{\bf F}$.
Observe that
\begin{align*}
\hat{\p}_{f_\oct,\min}({\bf H}_{|B|}\oplus{\bf F}_{|B|},\Acal_{|B|})
&= \hat{\p}_{f_\oct,\min}(({\bf H}_{|B|}\oplus{\bf F}_{|B|})\setminus S_B,\Acal_{|B|}\setminus S_B)+|S_B|\\
&= \hat{\p}_{f_\oct,\min}({\bf H}_{B}\triangleright{\bf F},\Acal')+|S_B|.
\end{align*}
Hence,
$$\hat{\p}_{f_{\oct},\min}({\bf G}\oplus{\bf F},\Acal)=\hat{\p}_{f_\oct,\min}({\bf H}_{B}\triangleright{\bf F},\Acal')+|S_B|+s_{|B|},$$
so item (iii) holds.

Observe that ${\bf H}_B$ is isomorphic to the boundaried graph obtained from ${\bf X}$ by adding two new vertices and the at most $|B|$ edges in $E_{|B|}$, and removing the vertices in $S_B$.
Hence, $|V(H_B)|\leq |X|+2$ and $|E(H_B)|\leq |E(G[X])|+|B|$, so item (ii) holds.
Moreover, $|{\cup\Acal'}|=|{\cup\Acal}|+2$, so item (i) holds.
Suppose that $F\setminus A\in\Hcal$.
Observe that, since the edges in $E_{|B|}$ all have one endpoint in $\{u_1,u_2\}$, $({\bf H}_{B}\triangleright{\bf F})\setminus (\cup\Acal)\setminus\{u_1,u_2\}$ is isomorphic to $(F\setminus A)\setminus S_B$.
Since $\Hcal$ is hereditary, $({\bf H}_{B}\triangleright{\bf F})\setminus (\cup\Acal')\in\Hcal$, and thus, item (iv) holds.
Thus, $({\bf H}_B,\Acal',|S_B|+s_{|B|})$ is an $\Hcal$-nice reduction of $(G,\Acal)$ with respect to {\sc Annotated Odd Cycle Transversal}.

At each step $i$, $(H_i,S_i,s_i,E_i)$ is computable in time $\Ocal(|A|\cdot |I_i|)$, so
the computation takes time $\Ocal(|A|\cdot d)$.
Hence, {\sc Annotated Odd Cycle Transversal} is $\Hcal$-nice.
 \end{proof}

In the next lemma, we adapt the seminal proof of Reed, Smith and Vetta \cite{ReedSV04find} that uses iterative compression to solve {\sc \Annotated Odd Cycle Transversal} in \FPT-time parameterized by \oct.

Given a graph $G$ and two sets $A,B\subseteq V(G)$, an \emph{$(A,B)$-cut} is a set $X\subseteq V(G)$ such that there are no paths from a vertex in $A$ to a vertex in $B$ in $V(G)\setminus X$.

\begin{lemma}\label{lem-oct-annot}
There is an algorithm that, given a graph $G$, (a weight function $w:V(G)\to\bN$,) and three disjoint sets $S,A,B\subseteq V(G)$, such that $G\setminus (S\cup A\cup B)$ is bipartite, solves {\sc \Annotated Odd Cycle Transversal} (resp. {\sc Weighted \Annotated Odd Cycle Transversal}) on $(G,S,A,B)$ in time $\Ocal((m+k^2)^{1+o(1)})$, where $k=|A\cup B|$.
\end{lemma}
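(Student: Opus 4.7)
The plan is to adapt the iterative compression algorithm of Reed, Smith and Vetta~\cite{ReedSV04find} for OCT to the annotated setting, exploiting the fact that the annotation $A \cup B$ already supplies a precomputed odd cycle transversal. Since $S \subseteq S^\star$ is forced in any valid solution, I would first work in $G' := G \setminus S$; the hypothesis $G \setminus (S \cup A \cup B) \in \Bcal$ then says that $X := A \cup B$ is an OCT of $G'$, so a valid solution corresponds exactly to a set $Y \subseteq V(G') \setminus X$ with $S^\star = S \cup Y$ such that $G' \setminus Y$ is bipartite with $A$ and $B$ on opposite sides. As a preliminary step, I would return infeasibility if $G'$ contains an edge within $A$ or within $B$.

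In the RSV compression scheme one enumerates $3^{|X|}$ partitions $(X_Y, X_L, X_R)$ of $X$ (specifying which $X$-vertices are added to the new OCT, and on which side the remaining $X$-vertices lie); for each, one reduces to a single minimum vertex cut in an auxiliary graph. In our setting the partition is forced to be $(\emptyset, A, B)$ since $A \cup B$ cannot be modified and $A, B$ must be placed on prescribed sides of the final bipartition, so only a single min vertex cut instance needs to be solved rather than $3^k$ of them. I would compute a bipartition $(L, R)$ of $G' \setminus X$ in time $\Ocal(n+m)$ and then construct the standard auxiliary digraph $H$ via the bipartite double cover: each vertex $v \in V(G') \setminus X$ is split into two copies $v_0, v_1$ joined by an internal arc of capacity $1$ (or $w(v)$ in the weighted variant); each edge $uv$ of $G'$ with $u, v \notin X$ is replaced by two infinite-capacity arcs across the two layers of the cover; and a super-source $s$ and super-sink $t$ are attached to the copies of the $A$- and $B$-vertices respectively, in a way consistent with placing $A$ on the left and $B$ on the right of the final bipartition. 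A minimum $s$--$t$ vertex cut in $H$ then coincides with a minimum (weighted) valid $Y$, by the standard RSV parity argument that identifies $s$--$t$ paths in $H$ with odd walks in $G'$ whose endpoints are forced to land on forbidden sides.

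For the complexity, $H$ has $\Ocal(n+k)$ vertices and $\Ocal(m+k^2)$ arcs (the $k^2$ term absorbing source and sink attachments together with arcs arising from edges inside $X$); a minimum vertex cut is then computable in time $\Ocal((n+k)(m+k^2))$ by a standard augmenting-path max-flow algorithm, and in the weighted setting the same asymptotic bound is achieved via the polynomial-time max-flow algorithms of~\cite{KingRT94afas,Orlin13maxf}. The main obstacle I expect is the verification of the gadget's correctness: one must argue that every minimum $s$--$t$ vertex cut of $H$ corresponds bijectively to a feasible $Y$ making $G' \setminus Y$ bipartite with $A \subseteq L^\star$ and $B \subseteq R^\star$, which requires tracking how parities of walks in $G'$ translate through the two layers of the double cover and how the source/sink placement enforces that $A$ and $B$ end up on the prescribed opposite sides of the resulting bipartition.
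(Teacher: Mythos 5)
Your proposal follows essentially the same route as the paper: adapt the Reed--Smith--Vetta compression step, observe that the annotation forces the partition of $X=A\cup B$ to be $(\emptyset,A,B)$ so that only a single minimum vertex cut (rather than $3^{k}$ of them) must be computed, and solve that cut by max-flow after the standard node-splitting, giving the stated $\Ocal((n+k)(m+k^2))$ bound. The one piece you defer --- how the source/sink attachment encodes that $A$ and $B$ land on opposite sides --- is handled in the paper by a slightly different but equivalent gadget: it first adds all edges between $A$ and $B$, then splits each $x\in X$ (not the non-$X$ vertices) into copies $x_1,x_2$ with edges routed according to the fixed bipartition $(S_1,S_2)$ of $G\setminus X$, and separates $A_1\cup B_2$ from $A_2\cup B_1$; the parity argument you invoke is exactly the verification carried out there, so your sketch is correct in outline but that routing is the detail you would still need to pin down (along with the edges between $X$ and $V(G)\setminus X$, which your construction does not yet specify).
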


\begin{proof}
We can assume that $S=\emptyset$, given that $S^\star$ is an optimal solution for $(G,S,A,B)$ if and only if $S^\star\setminus S$ is an optimal solution for $(G\setminus S,\emptyset,A,B)$.

Let $G^+$ be the graph obtained from $G$ by joining each $a\in A$ with each $b\in B$.
Let $X=A\cup B$.
Let $(S_1,S_2)$ be a partition witnessing the bipartiteness of $G\setminus X=G^+\setminus X$.
We construct an auxiliary bipartite graph $G'$ from $G^+$ as follows.
The vertex set of the auxiliary graph is $V'=V(G)\setminus X\cup\{x_1,x_2\mid x\in X\}$.
We maintain a one-to-one correspondence between the edges of $G^+$
and the edges of $G'$ by the following scheme (and with the same weight):
\begin{itemize}
\item for each edge $e$ of $G^+\setminus X$, there is a corresponding edge in $G'$ with the same endpoints,
\item for each edge $e\in E(G^+)$ joining a vertex $y\in S_i$ to a vertex $x\in X$, the corresponding edge in $G'$ joins $y$ to $x_{3-i}$, and
\item for each edge $e\in E(G^+)$ joining two vertices $a\in A$ and $b\in B$, the corresponding edge of $G'$ joins $a_1$ to $b_2$.
\end{itemize}
For $i\in\{1,2\}$, let $X_i=\{x_i\mid x\in X\}$, $A_i=\{a_i\mid a\in A\}$, $B_i=\{b_i\mid b\in B\}$.
Note that $G'$ is a bipartite graph, witnessed by the partition $(S_1\cup X_1,S_2\cup X_2)$.
Let $Y_1=A_1\cup B_2$ and $Y_2=A_2\cup B_1$.
Note also that there is no edge joining $Y_1$ and $Y_2$, so there exists a $(Y_1,Y_2)$-cut in $G'$ that is actually contained in $V(G^+)\setminus X$, and hence in $V(G)\setminus X$.
Let us show that $S^\star\subseteq V(G)\setminus X$ is a $(Y_1,Y_2)$-cut in $G'$ if and only if $S^\star$ is an odd cycle transversal of $G$ with $A$ on one side of the bipartition and $B$ on the other one.

\begin{claim}
If $S^\star\subseteq V(G)\setminus X$ is a cutset separating $Y_1$ from $Y_2$ in $G'$, then $S^\star$ is an odd cycle transversal of $G$ with $A$ on one side of the bipartition and $B$ on the other one.
\end{claim}
\begin{cproof}
Let $C$ be an odd cycle of $G^+$.
Suppose toward a contradiction that $C\cap S^\star=\emptyset$.
$G^+\setminus X$ is bipartite, so $C$ intersects $X$.
Moreover, we assumed $G^+[X]$ to be bipartite since $A$ and $B$ are two disjoint independent sets, so $C$ intersects $V(G^+)\setminus X$.
Hence, if we divide $C$ into paths whose endpoints are in $X$ and whose internal vertices are in $V(G^+)\setminus X$, each such path in $G'$ has either both endpoints in $Y_1$ or both endpoints in $Y_2$, since it is not intersected by the cutset $S^\star$.
More specifically, each such path in $G'$ has either both endpoints in $A_i$, or both endpoints in $B_i$, or one in $A_i$ and one in $B_{3-i}$, for $i\in\{1,2\}$.
A path with both endpoints in $A_i$ or $B_i$ is even, since the internal vertices are alternatively vertices of $S_i$ and $S_{3-i}$, with the first and the last in $S_{3-i}$.
A path with one endpoint in $A_i$ and one in $B_{3-i}$ is odd by a similar reasoning, but the number of such paths must be even in order to have a cycle.
Therefore, the cycle $C$ is even.
Hence the contradiction.
So $S^\star$ is an odd cycle transversal of $G^+$.
Given that we added in $G^+$ all edges between $A$ and $B$ and that $A,B\subseteq V(G^+)\setminus S^\star$, $A$ and $B$ belong to different sides of the bipartition.
An odd cycle of $G$ is also an odd cycle of $G^+$, so $S^\star$ is also an odd cycle transversal of $G$, with $A$ and $B$ on different sides of the bipartition.
\end{cproof}

\begin{claim}
If $S^\star\subseteq V(G)\setminus X$ is an odd cycle transversal of $G$ with $A$ and $B$ on different sides of the bipartition, then $S^\star$ is a cutset separating $Y_1$ from $Y_2$ in $G'$.
\end{claim}
\begin{cproof}
Suppose toward a contradiction that there is a path $P$ between $Y_1$ and $Y_2$ that does not intersect $S^\star$.
Choose $P$ of minimum length. Hence, the internal vertices of $P$ belong to $G\setminus X$.
The endpoints $u$ and $v$ of $P$ are such that, either $u\in A_i$ and $v\in A_{3-i}$, or $u\in B_i$ and $v\in B_{3-i}$, or $u\in A_i$ and $v\in B_i$, or $u\in B_i$ and $v\in A_i$, for $i\in\{1,2\}$.
By symmetry, we can assume without loss of generality that $u\in A_1$ and $v\in A_2$ or $v\in B_1$.
If $v\in A_2$, then $P$ is an odd path since $G\setminus X$ is bipartite.
However, since $G\setminus S^\star$ is bipartite, with $A$ on one side of the bipartition, $P$ is an even path.
If $v\in B_1$, then $P$ is an even path since $G\setminus X$ is bipartite.
However, since $G\setminus S^\star$ is bipartite, with $A$ and $B$ on different sides of the bipartition, $P$ is an odd path.
Hence the contradiction.
\end{cproof}

The graph $G'$ has $n'=n+|X|$ vertices and at most $m'=m+|X|^2/4$ edges.
Finding a minimum (weighted) vertex-cut can be reduced to the problem of finding a minimum (weighted) edge-cut.
To do so, we transform $G'$ into an arc-weighted directed graph $G''$, by first replacing every edge by two parallel arcs in opposite directions, and then replacing every vertex $v$ of $G'$ by an arc $(v_{\sf in}, v_{\sf out})$, such that the arcs incoming (resp. outgoing) to $v$ are now incoming to $v_{\sf in}$ (resp. outgoing of $v_{\sf out}$).
We give weight $w(v)$ to $(v_{\sf in}, v_{\sf out})$, and weight $w(V(G))+1$ to the other arcs.
Then, computing a minimum (weighted) vertex-cut in $G'$ is equivalent to computing a minimum (weighted) edge-cut in $G''$, which can be done in time $\Ocal((m')^{1+o(1)})$~\cite{ChenKLPGS25maxi}.
Hence, the running time of the algorithm is $\Ocal((m+|X|^2)^{1+o(1)})$.
 \end{proof}

We apply \autoref{oct-nice} and \autoref{lem-oct-annot} to the dynamic programming algorithm of \autoref{DP} to obtain the following result.

\begin{corollary}\label{co-algo-OCT}
Given a graph $G$ and a bipartite tree decomposition of $G$ of width $k$, there is an algorithm that solves {\sc Odd Cycle Transversal} on $G$ in time $\Ocal(3^{k}\cdot (m+k^2\cdot n)^{1+o(1)})$.
\end{corollary}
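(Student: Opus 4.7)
The plan is to apply the general dynamic programming scheme of \autoref{DP} with $p=3$, $\Hcal=\Bcal$, $\Pi = $~{\sc Odd Cycle Transversal}, and $\Pi' = $~{\sc \Annotated Odd Cycle Transversal}. All three prerequisites of \autoref{DP} are already in place: by construction, $\Pi'$ is a $3$-annotated extension of $\Pi$ corresponding to $f_{\oct}$ and $\opt=\min$; niceness was established in \autoref{oct-nice}; and an \FPT-algorithm on instances $(G,\Xcal)$ with $G\setminus\cup\Xcal$ bipartite was supplied by \autoref{lem-oct-annot}, running in time $\Ocal((n+k)(m+k^2))$ where $k=|\cup\Xcal|$. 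So there is essentially nothing new to prove — the work is in tracking the running time.

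To obtain the claimed complexity, I will instantiate the refined bound from the generalization discussion following \autoref{DP}: since the running time of \autoref{lem-oct-annot} is already expressed in terms of $n$ (the total size of the annotated instance) rather than only the size of the bipartite part, I will invoke the variant of \autoref{DP} that bounds the per-bag work by $f(k+\Ocal(1))\cdot \Ocal((n'_t+c_t)^{c}(m'_t+c_t)^{d})$, where $n'_t = |\beta(t)|$, $m'_t = |E(G[\beta(t)])|$, and $c_t$ is the number of children of node $t$. Here we have $c=1$, $d=1$, but with the additive $k^2$ coming from $\autoref{lem-oct-annot}$'s $m+k^2$ term; this means each bag costs $\Ocal((n'_t + k)(m'_t + k^2))$ after the nice reduction (which adds $\Ocal(1)$ new apex vertices and $\Ocal(c_t)$ new edges, so does not change the asymptotic bound).

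Summing over all $t \in V(T)$, I will use the bounds from the analysis of \autoref{DP} together with the generalization: $\sum_t n'_t = \Ocal(n)$, $\sum_t m'_t = \Ocal(m)$, and $|V(T)| = \Ocal(n)$. Expanding $(n'_t+k)(m'_t+k^2) = n'_t m'_t + k^2 n'_t + k m'_t + k^3$, and noting that $\sum_t n'_t m'_t \le n \cdot m$ while the other three sums telescope to $\Ocal(k^2 n)$, $\Ocal(km)$, and $\Ocal(k^3 n)$ respectively, the total work per annotation $\Acal$ is $\Ocal(k\cdot n\cdot (m+k^2))$. Multiplying by the $p^k = 3^k$ factor from enumerating partitions $\Acal \in \Pcal_3(A_t)$ at each bag yields the advertised bound $\Ocal(3^k\cdot k\cdot n\cdot (m+k^2))$.

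There is no genuine obstacle: all the conceptual difficulty (defining the right annotated extension, proving the gluing property, designing gadgets witnessing niceness, and solving the annotated problem parameterized by \oct via an iterative-compression/flow reduction) has been handled in \autoref{glu-oct}, \autoref{obs+oct}, \autoref{oct-nice}, and \autoref{lem-oct-annot}. The only thing to be careful about is the bookkeeping of the per-bag complexity: ensuring that $n$ and $m$ in $\autoref{lem-oct-annot}$'s running time are accounted correctly with respect to the nice-reduced bag rather than the whole graph $G$, so that the telescoping sums give $\sum_t n'_t = \Ocal(n)$ and $\sum_t m'_t = \Ocal(m)$ and not something worse.
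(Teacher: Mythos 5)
Your proposal is correct and follows exactly the paper's route: the paper's entire proof of this corollary is the single sentence ``apply \autoref{oct-nice} and \autoref{lem-oct-annot} to the dynamic programming algorithm of \autoref{DP}''. Your more careful per-bag accounting (summing $(n'_t+k)(m'_t+k^2)$ over the decomposition using $\sum_t n'_t=\Ocal(n)$, $\sum_t m'_t=\Ocal(m)$, $|V(T)|=\Ocal(n)$) is sound and in fact yields a bound at least as good as the stated $\Ocal(3^k\cdot k\cdot n\cdot(m+k^2))$.
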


\subsubsection{Maximum Weighted Cut}\label{sec-cut}

The {\sc Maximum Weighted Cut} problem is defined as follows.

\begin{center}
	\fbox{
		\begin{minipage}{11.5cm}
			\noindent{\sc Maximum Weighted Cut}\\
			\noindent\textbf{Input}:~~A graph $G$ and a weight function $w:E(G)\to\bN$.\\
			\textbf{Objective}:~~Find an edge cut of maximum weight.
		\end{minipage}
	}
\end{center}

Let $H$ be a graph.
We define $f_ {\sf cut}$  as the $2$-partition-evaluation function where, for every graph $G$ with edge weight $w$ and for every $\Pcal=(X_1,X_2)\in\Pcal_2(V(G))$,
\begin{equation*}
   f_ {\sf cut}(G,\Pcal)=w(\Pcal)=w(E(X_1,X_2)).
\end{equation*}

Hence, {\sc Maximum Weighted Cut} is the problem of computing $\p_{f_ {\sf cut},\max}(G)$.
We call its annotated extension {\sc \Annotated Maximum Weighted Cut}.
In other words, {\sc \Annotated Maximum Weighted Cut} is defined as follows.

\begin{center}
	\fbox{
		\begin{minipage}{11.5cm}
			\noindent{{\sc \Annotated Maximum Weighted Cut}}\\
			\noindent\textbf{Input}:~~A graph $G$, a weight function $w:E(G)\to\bN$, and two disjoint sets $X_1,X_2\subseteq V(G)$.\\
			\textbf{Objective}:~~Find an edge cut of maximum weight such that the vertices in $X_1$ belongs to one side of the cut, and the vertices in $X_2$ belong to the other side.
		\end{minipage}
	}
\end{center}
For simplicity, given a set $X\subseteq V(G)$ and $\Xcal=(A,B)\in\Pcal_2(X)$, we write $w(\Xcal)$, to denote the sum of the  weights of the edges with one endpoint in $A$ and the other in $B$.

\medskip
We first prove that {\sc \Annotated Maximum Weighted Cut} has the gluing property.
\begin{lemma}[Gluing property]\label{glu-cut}
{\sc \Annotated Maximum Weighted Cut} has the gluing property.\\
More precisely, given two boundaried graphs ${\bf F}=(F,B_F,\rho_F)$ and ${\bf G}=(G,B_G,\rho_G)$, a weight function $w:E({\bf F\oplus G})\to\bN$, a set $X\subseteq V({\bf F}\oplus{\bf G})$ such that $B_F\cap B_G\subseteq X$, and $\Xcal=(X_1,X_2)\in\Pcal_2(X)$, if we set $\bar{w}={w}(\Xcal\cap B_F\cap B_G)$, then we have
$$\hat{\p}_{f_{\cut},\max}({\bf F}\oplus{\bf G},\Xcal,w)=\hat{\p}_{f_{\cut},\max}(F,\Xcal\cap V(F),w)+\hat{\p}_{f_{\cut},\max}(G,\Xcal\cap V(G),w)-\bar{w}.$$
\end{lemma}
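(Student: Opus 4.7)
The plan is to follow the same template used for \autoref{glu-kt} and \autoref{glu-oct}: prove the two inequalities separately by restricting/extending partitions across the gluing boundary, and carefully account for the edges of $({\bf F}\oplus{\bf G})[B_F\cap B_G]$ that are counted by both sides. The key combinatorial fact is that, for any 2-partition $\Pcal=(Y_1,Y_2)$ of $V({\bf F}\oplus{\bf G})$ with $\Xcal\subseteq\Pcal$, every edge of ${\bf F}\oplus{\bf G}$ lies either in $F$ or in $G$ (or in both, which happens precisely for edges inside the common boundary $B_F\cap B_G$), and an edge crosses $(Y_1,Y_2)$ in ${\bf F}\oplus{\bf G}$ if and only if it crosses the induced 2-partition in whichever of $F,G$ contains it.

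First, I would prove the $\leq$ direction. Pick a partition $\Pcal=(Y_1,Y_2)$ of $V({\bf F}\oplus{\bf G})$ extending $\Xcal$ that realizes $\hat{\p}_{f_{\cut},\max}({\bf F}\oplus{\bf G},\Xcal,w)$. Let $\Pcal_F=\Pcal\cap V(F)$ and $\Pcal_G=\Pcal\cap V(G)$; these are 2-partitions of $V(F)$ and $V(G)$ extending $\Xcal\cap V(F)$ and $\Xcal\cap V(G)$ respectively. Each edge of $F$ (resp. $G$) crossing $\Pcal$ contributes to $f_{\cut}(F,\Pcal_F,w)$ (resp. $f_{\cut}(G,\Pcal_G,w)$), and an edge inside $B_F\cap B_G$ crossing $\Pcal$ contributes to both; its total contribution to $f_{\cut}(F,\Pcal_F,w)+f_{\cut}(G,\Pcal_G,w)$ is exactly $\bar{w}$ on top of its single contribution to $f_{\cut}({\bf F}\oplus{\bf G},\Pcal,w)$. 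Hence
\begin{equation*}
\hat{\p}_{f_{\cut},\max}({\bf F}\oplus{\bf G},\Xcal,w)+\bar{w}=f_{\cut}(F,\Pcal_F,w)+f_{\cut}(G,\Pcal_G,w)\leq \hat{\p}_{f_{\cut},\max}(F,\Xcal\cap V(F),w)+\hat{\p}_{f_{\cut},\max}(G,\Xcal\cap V(G),w).
\end{equation*}

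For the reverse direction, pick partitions $\Pcal_F$ and $\Pcal_G$ extending $\Xcal\cap V(F)$ and $\Xcal\cap V(G)$ that realize the two right-hand-side maxima. Since both partitions restrict to $\Xcal\cap(B_F\cap B_G)$ on the common boundary $B_F\cap B_G\subseteq X$, they glue into a well-defined 2-partition $\Pcal$ of $V({\bf F}\oplus{\bf G})$ extending $\Xcal$. Applying the same edge-by-edge bookkeeping as above yields $f_{\cut}(F,\Pcal_F,w)+f_{\cut}(G,\Pcal_G,w)-\bar{w}=f_{\cut}({\bf F}\oplus{\bf G},\Pcal,w)\leq \hat{\p}_{f_{\cut},\max}({\bf F}\oplus{\bf G},\Xcal,w)$, and combining both inequalities gives the equality.

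The only delicate point is the interpretation of $\bar{w}=w(\Xcal\cap B_F\cap B_G)$: one must verify that this is exactly the total weight of edges of ${\bf F}\oplus{\bf G}$ that have both endpoints in $B_F\cap B_G$ and cross $\Xcal$, and that such edges appear both in $F$ and in $G$ (since ${\bf F}$ and ${\bf G}$ are compatible on the shared boundary, these edges are preserved by the $\oplus$-operation in both copies). This is the main bookkeeping obstacle, but it is routine given the definition of $\oplus$ and of $f_{\cut}$; no other case analysis is needed.
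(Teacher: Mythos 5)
Your proposal is correct and follows essentially the same route as the paper's proof: restrict an optimal global partition to $F$ and $G$ for one inequality, glue optimal local partitions (which must agree on $B_F\cap B_G$ since that set is contained in the annotated set $X$) for the other, and correct for the double-counted boundary edges via $\bar{w}$. The bookkeeping point you flag about the meaning of $\bar{w}$ and the presence of the common-boundary edges in both $F$ and $G$ is exactly the content of the paper's identity $w(\Pcal)=w(\Pcal\cap V(F))+w(\Pcal\cap V(G))-\bar{w}$, so there is no gap.
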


\begin{proof}
Let $\Pcal\in\Pcal_2(V({\bf F}\oplus{\bf G}))$ be such that $\Xcal\subseteq\Pcal$ and $\hat{\p}_{f_{\cut},\max}({\bf F}\oplus{\bf G},\Xcal,w)=f_{\cut}({\bf F}\oplus{\bf G},\Pcal,w).$
Then,
\begin{align*}
\hat{\p}_{f_{\cut},\max}({\bf F}\oplus{\bf G},&\Xcal,w)=w(\Pcal)\\
&=w(\Pcal\cap V(F))+w(\Pcal\cap V(G))-\bar{w}\\
&\leq \hat{\p}_{f_{\cut},\max}(F,\Xcal\cap V(F),w)+\hat{\p}_{f_{\cut},\max}(G,\Xcal\cap V(G),w)-\bar{w}.
\end{align*}

Reciprocally, for $H\in\{F,G\}$, let $\Pcal_H=(X_1^H,X_2^H)\in\Pcal_2(V(H))$ be such that $\Xcal\cap V(H)\subseteq\Pcal_H$ and $\hat{\p}_{f_{\cut},\max}(H,\Xcal\cap V(H),w)=f_{\oct}(H,\Pcal_H,w)$.
Then, since $\Pcal_H\cap B_F\cap B_G=\Xcal\cap B_F\cap B_G$ for $H\in\{F,G\}$, we have
\begin{align*}
\hat{\p}_{f_{\cut},\max}({\bf F}\oplus{\bf G},&\Xcal,w)\geq w(E(X_1^F\cup X_1^G,X_2^F\cup X_2^G))\\
&=w(E(X_1^F,X_2^F))+w(E(X_1^G,X_2^G))-\bar{w}\\
&=\hat{\p}_{f_{\cut},\max}(F,\Xcal\cap V(F),w)+\hat{\p}_{f_{\cut},\max}(G,\Xcal\cap V(G),w)-\bar{w}.
\end{align*}
 \end{proof}

We now show how to reduce a graph ${\bf F}\oplus{\bf G}$ to a graph $F'$ when the boundary of ${\bf F}$ and ${\bf G}$ has a single vertex $v$ that is not annotated. See \autoref{fig_gadget_max_cut} for an illustration.

\begin{figure}[h]
\center
\includegraphics[scale=0.7]{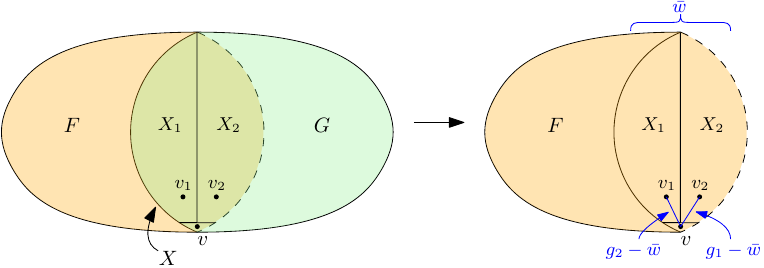}
\caption{Illustration of the gadgetization for {\sc Weighted Max Cut}.}
\label{fig_gadget_max_cut}
\end{figure}

\begin{lemma}[Gadgetization]\label{obs+cut}
Let ${\bf F}=(F,B_F,\rho_F)$ and ${\bf G}=(G,B_G,\rho_G)$ be two boundaried graphs.
Let $w:E({\bf F\oplus G})\to\bN$ be a weight function and
let $X\subseteq V({\bf F}\oplus{\bf G})$ be such that $B_F\cap B_G\subseteq X$.
Let also $v\in B_F\cap B_G$ and let $\Xcal=(X_1,X_2)\in\Pcal_2(X\setminus\{v\})$.
Suppose that there is $v_1\in X_1\cap B_F\cap B_G$ and $v_2\in X_2\cap B_F\cap B_G$ adjacent to $v$ with $w(vv_1)=w(vv_2)=0$.
We set $\Xcal^1=(X_1\cup\{v\},X_2)$ and $\Xcal^2=(X_1,X_2\cup\{v\})$.
We also set, for $a\in[2]$, $g_a=\hat{\p}_{f_{\cut},\max}(G,\Xcal^a\cap V(G),w)$,
we set $\bar{w}={w}(\Xcal\cap B_F\cap B_G)$, and we set
$w':E(F)\to\bN$ such that $w'(vv_1)=g_2-\bar{w}$,
$w'(vv_2)=g_1-\bar{w}$, and $w'(e)=w(e)$ otherwise.
Then
$$\hat{\p}_{f_{\cut},\max}({\bf F}\oplus{\bf G},\Xcal,w)=\hat{\p}_{f_{\cut},\max}(F,\Xcal,w').$$
\end{lemma}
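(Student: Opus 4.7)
The plan is to mirror the structure of the gluing property (\autoref{glu-cut}) by splitting the optimization on both sides of the desired equality according to which part of the partition receives the vertex $v$. Since any partition of $V({\bf F}\oplus{\bf G})$ extending $\Xcal$ must place $v$ either in $X_1$ or in $X_2$, we obtain
$$\hat{\p}_{f_{\cut},\max}({\bf F}\oplus{\bf G},\Xcal,w) = \max_{a\in\{1,2\}}\hat{\p}_{f_{\cut},\max}({\bf F}\oplus{\bf G},\Xcal^a,w),$$
and analogously on the right,
$$\hat{\p}_{f_{\cut},\max}(F,\Xcal,w') = \max_{a\in\{1,2\}}\hat{\p}_{f_{\cut},\max}(F,\Xcal^a\cap V(F),w').$$
I will then show that for each $a\in\{1,2\}$ the $a$-th term on the left equals the $a$-th term on the right, which forces the two maxima to coincide.

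For the left-hand $a$-th term, applying the gluing property (\autoref{glu-cut}) to ${\bf F}\oplus{\bf G}$ with the partition $\Xcal^a$ of $X$ yields
$$\hat{\p}_{f_{\cut},\max}({\bf F}\oplus{\bf G},\Xcal^a,w) = f_a + g_a - w(\Xcal^a\cap B_F\cap B_G),$$
where $f_a := \hat{\p}_{f_{\cut},\max}(F,\Xcal^a\cap V(F),w)$. The hypothesis $w(vv_1)=w(vv_2)=0$, together with the fact that in the setting in which the lemma is applied $v_1,v_2$ account for the edges incident to $v$ inside the shared boundary, ensures that extending $\Xcal$ to $\Xcal^a$ introduces no new cut weight among edges with both endpoints in $B_F\cap B_G$, so $w(\Xcal^a\cap B_F\cap B_G)=\bar{w}$ for both $a\in\{1,2\}$.

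For the right-hand $a$-th term, the key observation is that $w'$ differs from $w$ only on the edges $vv_1$ and $vv_2$, both of which are edges of $F$. For every partition $\Pcal$ of $V(F)$ extending $\Xcal^a\cap V(F)$, the assignments $v\in X_a$, $v_a\in X_a$, $v_{3-a}\in X_{3-a}$ force $vv_{3-a}$ to cross the cut while $vv_a$ does not. Since $w(vv_{3-a})=0$ and $w'(vv_{3-a})=g_a-\bar{w}$ by construction, we obtain
$$w'(\Pcal) - w(\Pcal) = w'(vv_{3-a}) - w(vv_{3-a}) = g_a - \bar{w},$$
which is a constant independent of $\Pcal$. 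Taking the maximum over such $\Pcal$ therefore yields $\hat{\p}_{f_{\cut},\max}(F,\Xcal^a\cap V(F),w') = f_a + g_a - \bar{w}$, matching the left-hand $a$-th term.

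The main subtlety I expect is the bookkeeping of edges incident to $v$ within the shared boundary $B_F\cap B_G$: the two $0$-weight edges $vv_1$ and $vv_2$ play a dual role, serving on the left as witnesses that adding $v$ to either side leaves the correction term $\bar{w}$ unchanged, and on the right as re-weightable slots that can encode the two values $g_1$ and $g_2$ without disturbing any preexisting cut weight of $F$. Once this invariance of $\bar{w}$ under the extension $\Xcal\mapsto\Xcal^a$ is secured, the two sides become fully symmetric in $a$ and the equality follows by taking the maximum over $a\in\{1,2\}$.
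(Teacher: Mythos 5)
Your proposal is correct and follows essentially the same route as the paper's proof: split the maximum according to which side $v$ joins, apply the gluing property to identify the left-hand $a$-th term with $f_a+g_a-\bar{w}$, and observe that under $w'$ the crossing edge $vv_{3-a}$ contributes exactly the constant $g_a-\bar{w}$ to every partition extending $\Xcal^a\cap V(F)$. Your explicit remark that the invariance $w(\Xcal^a\cap B_F\cap B_G)=\bar{w}$ relies on the zero-weight edges at $v$ is a point the paper glosses over, so no gap here.
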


\begin{proof}
For $a\in[2]$, let $f_a=\hat{\p}_{f_{\cut},\max}(F,\Xcal^a\cap V(F),w)$.
Note that in $F$ with partition $\Xcal$, if $v$ is on the same side as $X_1$, then we must count the weight of the edge $vv_2$, but not the weight of $vv_1$, and vice versa when exchanging 1 and 2.
Thus, using \autoref{glu-cut}, we have
\begin{align*}
\hat{\p}_{f_{\cut},\max}({\bf F}\oplus{\bf G},\Xcal,w)
&=\max\{\hat{\p}_{f_{\cut},\max}({\bf F}\oplus{\bf G},\Xcal^1,w),\hat{\p}_{f_{\cut},\max}({\bf F}\oplus{\bf G},\Xcal^2,w)\}\\
&=\max\{f_1+g_1-\bar{w},f_2+g_2-\bar{w}\}\\
&=\max\{f_1+w'(vv_2),f_2+w'(vv_1)\}\\
&=\max\{\hat{\p}_{f_{\cut},\max}(F,\Xcal^1,w'),\hat{\p}_{f_{\cut},\max}(F,\Xcal^2,w')\}\\
&=\hat{\p}_{f_{\cut},\max}(F,\Xcal,w').
\end{align*}
 \end{proof}

Using \autoref{glu-cut} and \autoref{obs+cut}, we prove that {\sc \Annotated Maximum Weighted Cut} is $\Hcal$-nice.
Essentially, given an instance \hspace{-0.05cm}$({\bf G}={\bf X}\boxplus(\boxplus_{i\in[d]}{\bf G}_i),(A,B), \Acal,w)$,
we reduce ${\bf G}$ to ${\bf X}$ where we add two new vertices in $\Acal$ and add all edges between these new vertices and the vertices in $B$.
We then show that if the appropriate weight is given to each new edge, then the resulting boundaried graph is an equivalent instance to $\bf G$ modulo some constant $s$.

\begin{lemma}[Nice problem]\label{cut-nice}
Let $\Hcal$ be a graph class.
{\sc \Annotated Maximum Weighted Cut} is $\Hcal$-nice.
\end{lemma}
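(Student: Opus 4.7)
The plan is to follow the same template as the three preceding nice-problem proofs (\autoref{ktvc-nice}, \autoref{mis-nice}, \autoref{oct-nice}), combining the gluing property of \autoref{glu-cut} with the gadgetization of \autoref{obs+cut}. Given an input $({\bf G} = {\bf X}\boxplus(\boxplus_{i\in[d]}{\bf G}_i),(A,B),\Acal=(X_1,X_2),w)$, I first extend the annotation by introducing two fresh isolated boundary vertices $u_1, u_2$ (with unused labels), forming a boundaried graph ${\bf G}_0'$ with boundary $X\cup\{u_1,u_2\}$ and enriched annotation $\Acal'=(X_1\cup\{u_1\},X_2\cup\{u_2\})$. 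Since $u_1,u_2$ are isolated, this preserves the value of the objective. Their role is twofold: to guarantee that both sides of the partition are nonempty (so that \autoref{obs+cut} applies), and to serve as the endpoints for the gadget edges introduced later.

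Next I group the children according to which vertex of $B$ lies in their boundary. For $v_i \in B$, let $I_i=\{j\in[d]\mid X_j\setminus A=\{v_i\}\}$ and $I_0=\{j\in[d]\mid X_j\subseteq A\}$, and set ${\bf G}_i'=\boxplus_{j\in I_i}{\bf G}_j$. Processing the groups one by one, I maintain an invariant of the form
\[
\hat{\p}_{f_{\cut},\max}({\bf G}\oplus{\bf F},\Acal,w)=\hat{\p}_{f_{\cut},\max}({\bf H}_i\oplus{\bf F},\Acal',w_i)+s_i
\]
for every compatible ${\bf F}$. For the group $I_0$, \autoref{glu-cut} lets me absorb each child entirely into the additive constant $s_i$ via the known table values $\hat{\p}_{f_\cut,\max}(G_j,\Acal\cap X_j,w)$, leaving ${\bf H}_i$ unchanged on this step. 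For each nonempty group $I_i$ with $i\geq 1$, I compute the two values $g_i^1=\hat{\p}_{f_\cut,\max}(G_i',\Acal'_i{}^1\cap X_i')$ and $g_i^2=\hat{\p}_{f_\cut,\max}(G_i',\Acal'_i{}^2\cap X_i')$ (where $\Acal'_i{}^a$ places $v_i$ on side $a$) through a direct application of \autoref{glu-cut} to the children in $I_i$, costing $\Ocal(|A|\cdot|I_i|)$. Then, following \autoref{obs+cut} exactly, I replace the glued subgraph ${\bf G}_i'$ by two new edges $v_iu_1$ and $v_iu_2$, assigning weight $g_i^2-\bar{w}$ to $v_iu_1$ and weight $g_i^1-\bar{w}$ to $v_iu_2$ (where $\bar w$ is the contribution of already-annotated boundary edges).

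After all groups have been processed, ${\bf H}_{|B|}$ consists of ${\bf X}$ together with the two apex vertices $u_1,u_2$ and at most $2|B|$ new edges, each joining a vertex of $B$ to $\{u_1,u_2\}$. This gives $|A'|=|A|+2$, $|V(H_{|B|})|\le|X|+2$, and $|E(H_{|B|})|\le|E(G[X])|+2|B|$, which are all within the allowed slack. Because every newly added vertex and edge of the reduction is incident to $\{u_1,u_2\}\subseteq A'$, the graph $({\bf H}_{|B|}\triangleright{\bf F})\setminus A'$ is isomorphic to $F\setminus A$, so the hereditary-class condition holds for \emph{any} graph class $\Hcal$: if the premise $\bar F\setminus A_F\in\Hcal$ is satisfied, then so is the conclusion. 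Finally, assembling the per-group cost $\Ocal(|A|\cdot|I_i|)$ over all $i\in[0,|B|]$ yields the required total running time $\Ocal(|A|\cdot d)$.

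The main obstacle I expect is purely bookkeeping, not conceptual: correctly tracking the additive constants $\bar w$ arising from edges in the shared boundary (as in the proof of \autoref{glu-cut}) when a single vertex $v_i\in B$ is the shared endpoint of several children in $I_i$, so that the weights assigned to the two gadget edges $v_iu_1$ and $v_iu_2$ exactly reproduce the optimum of $({\bf H}_{i-1}\boxplus{\bf F}, \Acal')$ via \autoref{obs+cut}. The other technical point is handling $|I_i|=0$ (in which case no gadget is introduced and ${\bf H}_i={\bf H}_{i-1}$), and confirming that the extension by $u_1,u_2$ at the start does not alter any optimum; both of these are immediate from the definitions.
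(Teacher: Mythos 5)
Your proposal is correct and follows essentially the same route as the paper's proof: introduce two annotated apex vertices $u_1,u_2$, process the children grouped by their unannotated boundary vertex $v_i\in B$, absorb the fully-annotated groups via \autoref{glu-cut}, and replace each remaining group by the two weighted edges $v_iu_1,v_iu_2$ via \autoref{obs+cut}, with the same size bounds, the same argument that the reduction works for an arbitrary class $\Hcal$, and the same $\Ocal(|A|\cdot d)$ accounting. The only (immaterial) difference is that the paper adds all edges from $u_1,u_2$ to $B$ upfront with weight zero and later updates their weights, whereas you add each gadget edge with its final weight when its group is processed.
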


\begin{proof}
Let ${\bf G}=(G,X,ρ)$ be a boundaried  graph,
let $w:E(G)\to\bN$ be a weight function,
let ${\bf X}=(G[X],X,\rho_X)$ be a trivial boundaried graph and
let $\{{\bf G}_i=(G_{i},X_{i},ρ_{i})\mid i\in[d]\}$ be a collection of boundaried graphs,
such that ${\bf G}={\bf X}\boxplus(\boxplus_{i\in[d]}{\bf G}_i)$,
let $(A,B)$ be a partition of $X$ such that
for all $i \in [d]$,
$|X_i\setminus A|\leq 1$,
and
let $\Acal=(X^1,X^2)\in\Pcal_2(A)$.
Suppose that we know,
for every $i\in[d]$ and each $\mathcal{X}_i \in \Pcal_3(X_i)$, the value $\hat{\mathsf{p}}_{f_{\cut},\max}(G_i,\mathcal{X}_i,w)$.

Let $\bar{\bf G}$, $\bar{\bf X}$, and $\bar{\bf G}_i$ be the boundaried graphs obtained from $\bf G$, $\bf X$, and ${\bf G}_i$, respectively, by adding two new vertices $u_1$ and $u_2$ in the boundary (with unused labels) and making them adjacent to every vertex in $B$.
We extend $w$ to $E(G')$ such that any edge adjacent to $u_1$ or $u_2$ has weight zero.
Let $\Acal'=(X^1\cup\{u_1\},X^2\cup\{u_2\})$. This operation is done to ensure that $X_i'=X^i\cup\{u_i\}$ is non-empty for $i\in[2]$.
For any boundaried graph $\bf F$ compatible with $G$, let $\bar{\bf F}$ be the boundaried graph obtained similarly to $\bar{\bf G}$.
Obviously, $\hat{\p}_{f_{\cut},\max}({\bf G}\oplus{\bf F},\Acal,w)=\hat{\p}_{f_{\cut},\max}(\bar{\bf G}\oplus\bar{\bf F},\Acal',w)=\hat{\p}_{f_{\cut},\max}(\bar{\bf G}\triangleright{\bf F},\Acal',w)$.

Let $v_1,\ldots,v_{|B|}$ be the vertices of $B$.
For $i\in[|B|]$, let $I_i=\{j\in[d]\mid X_j\setminus A=\{v_i\}\}$.
Let $I_0=\{j\in[d]\mid X_j\subseteq A\}$.
Obviously, $(I_i)_{i\in[0,|B|]}$ is a partition of $[d]$.

Let us show that item (iii) holds.
Let $({\bf H}_{-1},s_{-1},w_{-1})=(\bar{\bf G},0,w)$.
For $i$ going from 0 up to $|B|$, we will construct $({\bf H}_i,s_i,w_i)$ from $({\bf H}_{i-1},s_{i-1},w_{i-1})$
such that $w_i$ and $w_{i-1}$ may only differ on $v_iu_1$ and $v_iu_2$ and
such that for any boundaried graph $\bf F$ compatible with $\bf G$ and with underlying graph $F$,
$$\hat{\p}_{f_{\cut},\max}({\bf G}\oplus{\bf F},\Acal)=\hat{\p}_{f_{\cut},\max}({\bf H}_i\oplus\bar{\bf F},\Acal',w_i)+s_i.$$
This is obviously true for $i=-1$.

Let $i\in[0,|B|]$.
Let ${\bf G}_i'=\boxplus_{j\in I_i}\bar{\bf G}_j$.
Let ${\bf H}_i$ be the boundaried graph such that ${\bf H}_{i-1}={\bf H}_i\boxplus{\bf G}_i$.
By induction, we have
$$\hat{\p}_{f_{\cut},\max}({\bf G}\oplus{\bf F},\Acal,w)=
\hat{\p}_{f_{\cut},\max}({\bf H}_{i-1}\oplus\bar{\bf F},\Acal',w_{i-1})+s_{i-1}.$$

\medskip
\noindent{\bf Suppose first that $i=0$.}
Let $\Pcal_0=(X_1^0,X_2^0)\in\Pcal_2(X\cup\{u_1,u_2\})$ be such that $\Acal'\subseteq \Pcal_0$ and
$\hat{\p}_{f_{\cut},\max}(\bar{\bf G}\oplus\bar{\bf F},\Acal',w)=\hat{\p}_{f_{\cut},\max}(\bar{\bf G}\oplus\bar{\bf F},\Pcal_0,w).$
According to \autoref{glu-cut},
\begin{align*}
\hat{\p}_{f_{\cut},\max}(\bar{\bf G}\oplus\bar{\bf F},\Pcal_0,w)
&= \hat{\p}_{f_{\cut},\max}(\bar{F},\Pcal_0,w)+\hat{\p}_{f_{\cut},\max}(\bar{G},\Pcal_0,w)-w(E(\Pcal_0\cap X))\\
%&\ \ \ -w(E(\Pcal_0\cap X)) \\
&=\hat{\p}_{f_{\cut},\max}(\bar{F},\Pcal_0,w)-w(E(\Pcal_0\cap X))+\hat{\p}_{f_{\cut},\max}(H_0,\Pcal_0,w)\\
%&\ \ \ +\hat{\p}_{f_{\cut},\max}(H_0,\Pcal_0,w)\\
&\ \ \ +\sum_{j\in I_0}(\hat{\p}_{f_{\cut},\max}(G_j,\Pcal_0\cap X_j,w)-w(\Pcal_0\cap X_j))\\
&= \hat{\p}_{f_{\cut},\max}({\bf H}_0\oplus\bar{\bf F},\Pcal_0,w)+\sum_{j\in I_i}(\hat{\p}_{f_{\cut},\max}(G_j,\Acal\cap X_j,w)-w(\Acal\cap X_j)).
%&\ \ \ +\sum_{j\in I_i}(\hat{\p}_{f_{\cut},\max}(G_j,\Acal\cap X_j,w)-w(\Acal\cap X_j)).
\end{align*}
Since this is the case for all such $\Pcal_0$, it implies that
\begin{align*}
\hat{\p}_{f_{\cut},\max}(\bar{\bf G}\oplus\bar{\bf F},\Acal',w)&=
\hat{\p}_{f_{\cut},\max}({\bf H}_0\oplus\bar{\bf F},\Acal',w)+\sum_{j\in I_i}(\hat{\p}_{f_{\cut},\max}(G_j,\Acal\cap X_j,w)-w(\Acal\cap X_j)).
%&\ \ \ +\sum_{j\in I_i}(\hat{\p}_{f_{\cut},\max}(G_j,\Acal\cap X_j,w)-w(\Acal\cap X_j)).
\end{align*}
Therefore, if $w_0=w$ and $s_0=s_{-1}+\sum_{j\in I_i}(\hat{\p}_{f_{\cut},\max}(G_j,\Acal\cap X_j,w)-w(\Acal\cap X_j))$, then
$$\hat{\p}_{f_{\cut},\max}({\bf G}\oplus{\bf F},\Acal,w)=\hat{\p}_{f_{\cut},\max}({\bf H}_0\oplus\bar{\bf F},\Acal,w_0)+s_0.$$

\medskip
\noindent{\bf Otherwise, $i\in[|B|]$ and $X_j\setminus A=\{v_i\}$ for each $j\in I_i$.}
Let $X_i'=\bigcup_{j\in I_i}X_j$.
Let $\Xcal_i^1=(X^1,X^2\cup\{v_i\})$ and $\Xcal_i^2=(X^1\cup\{v_i\},X^2)$.
Let $g_i^1=\hat{\p}_{f_{\cut},\max}(G_i',\Xcal_i^1\cap X_i',w)$ and $g_i^2=\hat{\p}_{f_{\cut},\max}(G_i',\Xcal_i^2\cap X_i',w)$.
For $a\in[2]$, $g_i^a$ can be computed in time $\Ocal(|A|\cdot |I_i|)$ since, by \autoref{glu-cut},
\begin{align*}
g_i^a&=\hat{\p}_{f_{\cut},\max}(\oplus_{j\in I_i} \bar G_j,\Xcal_i^a\cap X_i',w)\\
&=\sum_{j\in I_i}(\hat{\p}_{f_{\cut},\max}(\bar G_j,\Xcal_i^a\cap X_i'\cap V(\bar G_j))-w(\Xcal_i^a\cap X_i'\cap V(\bar G_j)))+w(\Xcal_i^a\cap X_i')\\
&=\sum_{j\in I_i}(\hat{\p}_{f_{\cut},\max}(G_j,\Xcal_i^a\cap X_j,w)-w(\Xcal_i^a\cap X_j))+w(\Xcal_i^a\cap X_i').
\end{align*}
Let $w_i:E({\bf H}_i\oplus\bar{\bf F})\to\bN$ be such that $w_i(v_iu_1)=g_i^2-w(\Xcal_i^a\cap X_i')$, $w_i(v_iu_2)=g_i^1-w(\Xcal_i^a\cap X_i')$, and $w_i(e)=w_{i-1}(e)$ otherwise.
Let $\Pcal_i=(R^i,S^i)\subseteq\Pcal_2(X\cup\{u_1,u_2\}\setminus\{v_i\})$ be such that $\Acal'\subseteq \Pcal_i$ and
$\hat{\p}_{f_{\cut},\max}({\bf H}_{i-1}\oplus\bar{\bf F},\Acal,w_{i-1})=\hat{\p}_{f_{\cut},\max}({\bf H}_{i-1}\oplus\bar{\bf F},\Pcal_i,w_{i-1}).$
Then, using \autoref{obs+cut},
\begin{align*}
\hat{\p}_{f_{\cut},\max}({\bf H}_{i-1}\oplus\bar{\bf F},\Pcal_i,w_{i-1})
&=\hat{\p}_{f_{\cut},\max}(({\bf H}_{i}\boxplus\bar{\bf F})\oplus{\bf G}_i',\Pcal_i,w_{i-1})\\
&=\hat{\p}_{f_{\cut},\max}({\bf H}_{i}\oplus\bar{\bf F},\Pcal_i,w_{i}).
\end{align*}
Since this is the case for all such $\Pcal_i$, it implies that
$$\hat{\p}_{f_{\cut},\max}({\bf H}_{i-1}\oplus\bar{\bf F},\Acal',w_{i-1})=
\hat{\p}_{f_{\cut},\max}({\bf H}_i\oplus\bar{\bf F},\Acal',w_{i}).$$
Therefore, given $s_i=s_{i-1}$,
$$\hat{\p}_{f_{\cut},\max}({\bf G}\oplus{\bf F},\Acal,w)=\hat{\p}_{f_{\cut},\max}({\bf H}_i\oplus\bar{\bf F},\Acal',w_i)+s_i.$$

Note that ${\bf H}_{|B|}$ is isomorphic to $\bar{\bf X}$.
We thus have
$$\hat{\p}_{f_{\cut},\max}({\bf G}\oplus{\bf F},\Acal,w)=\hat{\p}_{f_{\cut},\max}(\bar{\bf X}\triangleright{\bf F},\Acal',w_{|B|})+s_{|B|},$$
so item (iii) holds.

We have $|V(\bar{X})|\leq |X|+2$ and $|E(\bar{X})|\leq |E(G[X])|+2|B|$.
Moreover, $|\cup\Acal'|=|\cup\Acal|+2$.
Therefore, item (i) and (ii) hold.
Suppose that $F\setminus A\in\Hcal$.
Observe that, since the edges added in $\bar{\bf X}$ compared to $\bf X$ all have one endpoint in $\{u_1,u_2\}$, $({\bf X}'\triangleright{\bf F})\setminus (\cup\Acal)\setminus\{u_1,u_2\}$ is isomorphic to $F\setminus A$ and thus belong to $\Hcal$. So item (iv) holds.
Thus, $({\bf X}',\Acal',s_{|B|},w_{|B|})$ is an $\Hcal$-nice reduction of $(G,\Acal,w)$ with respect to {\sc Annotated Maximum Weighted Cut}.

Computing $({\bf H}_i,s_i,w_i)$ takes time $\Ocal(|A|\cdot |I_i|)$ at each step $i$, so
the computation takes time $\Ocal(|A|\cdot d)$.
Hence, {\sc Annotated Maximum Weighted Cut} is $\Hcal$-nice.
 \end{proof}

{\sc Maximum Weighted Cut} is an \NP-hard problem \cite{Karp10redu}.
However, there exists a polynomial-time algorithm when restricted to some graph classes.
In particular, Gr{\"{o}}tschel and Pulleyblank \cite{GrotschelP81weak} proved that {\sc Maximum Weighted Cut} is solvable in polynomial-time on weakly bipartite graphs, and Guenin \cite{Guenin01acha} proved that weakly bipartite graphs are exactly $K_5$-odd-minor-free graphs, which gives the following result.

\begin{proposition}[\hspace{1sp}\cite{GrotschelP81weak,Guenin01acha}]\label{k5}
There is a constant $c\in\bN$ and an algorithm that solves
{\sc Maximum Weighted Cut} on $K_5$-odd-minor-free graphs in time $\Ocal(n^c)$.
\end{proposition}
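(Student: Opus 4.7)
The plan is to derive the statement as an immediate combination of the two established results cited in its attribution. The first ingredient is Guenin's theorem~\cite{Guenin01acha}, which characterizes \emph{weakly bipartite} graphs---those graphs for which the polytope whose vertices are incidence vectors of edge sets intersecting all odd cycles is described by non-negativity and odd-cycle inequalities---as precisely the class of $K_5$-odd-minor-free graphs. This converts the structural hypothesis of the proposition (absence of $K_5$ as an odd-minor) into the polyhedral hypothesis required by the algorithmic ingredient.

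The second ingredient is the algorithm of Gr{\"o}tschel and Pulleyblank~\cite{GrotschelP81weak}, which solves {\sc Maximum Weighted Cut} in polynomial time on weakly bipartite graphs. The key observation is that on such graphs {\sc Maximum Weighted Cut} reduces to optimization of a linear objective over a polytope admitting a polynomial-time separation oracle: separation over odd-cycle inequalities can be implemented by a shortest-path computation in an auxiliary signed graph, and then, via the standard equivalence between separation and optimization provided by the ellipsoid method, a maximum weighted cut can be recovered in time $\Ocal(n^c)$ for some absolute constant $c$.

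Combining the two, given a $K_5$-odd-minor-free graph $G$ with edge weights $w$, it suffices to run the Gr{\"o}tschel-Pulleyblank algorithm directly on $(G,w)$: by Guenin's theorem $G$ is weakly bipartite, so the algorithm returns a maximum weighted cut in the claimed running time. No recognition step for $K_5$-odd-minor-freeness is required, since correctness of the algorithm depends only on the polytope description, which is guaranteed by the hypothesis. The main---and essentially only---point to verify is that the exponent $c$ extracted from the ellipsoid-based analysis is an absolute constant; this is standard under the usual assumption that the edge weights are polynomially bounded in $n$, and more generally the running time is polynomial in $n$ and in the bit-length of $w$, which is absorbed into the $\Ocal(n^c)$ notation in the stated form.
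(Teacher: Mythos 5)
Your proposal is correct and matches the paper's argument exactly: the paper also obtains this proposition by combining Guenin's characterization of weakly bipartite graphs as the $K_5$-odd-minor-free graphs with the Gr\"otschel--Pulleyblank polynomial-time algorithm for {\sc Maximum Weighted Cut} on weakly bipartite graphs, with no recognition step needed. Your additional remarks on the ellipsoid-based separation/optimization equivalence and the dependence on the bit-length of the weights are consistent with how the paper (implicitly) treats the constant $c$.
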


Moreover, we observe the following.

\begin{lemma}\label{oct2}
A graph $G$ such that $\oct(G)\leq2$ does not contain $K_5$ as an odd-minor.
\end{lemma}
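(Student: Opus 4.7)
The plan is to derive the statement directly from the odd-minor closure of bipartite treewidth (\autoref{lem:closed-under-om}), rather than attempting any hands-on analysis of $K_5$-expansions in $G$. The key observation is that $\btw(G)\le\oct(G)\le 2$ by hypothesis (this inequality is noted right after the definition of bipartite treewidth in the introduction: given any odd cycle transversal $S$, the single-bag tree decomposition with $\alpha=S$ and $\beta=V(G)\setminus S$ is a bipartite tree decomposition of width $|S|$). So it suffices to show that $\btw(K_5)\ge 3$, because then assuming $K_5$ is an odd-minor of $G$ yields, via \autoref{lem:closed-under-om}, the contradiction $3\le\btw(K_5)\le\btw(G)\le 2$.

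For the computation $\btw(K_5)=3$, I would reuse the very argument that appears in the proof of \autoref{lem:not-closed-under-om}: since $K_5$ is a clique, every tree decomposition (and in particular every bipartite tree decomposition) must contain a bag $\chi(t)\supseteq V(K_5)$. In such a bag, the ``bipartite part'' $\beta(t)$ must induce a bipartite subgraph of $K_5$, hence has at most two vertices, forcing $|\alpha(t)|\ge 5-2=3$. The matching upper bound $\btw(K_5)\le\oct(K_5)=3$ is immediate (remove three vertices of $K_5$ and a single edge remains).

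Combining these two ingredients gives the claim in one line: if $K_5$ were an odd-minor of $G$, then \autoref{lem:closed-under-om} would yield $\btw(G)\ge\btw(K_5)=3$, contradicting $\btw(G)\le\oct(G)\le 2$. There is no real obstacle in this argument; both prerequisites are already established in the excerpt and the rest is a short chain of inequalities. The only thing to be careful about is to cite \autoref{lem:closed-under-om} in the correct direction (odd-minor containment decreases $\btw$), and to note explicitly the easy bound $\btw\le\oct$ used to pass from the hypothesis to a bound on $\btw(G)$.
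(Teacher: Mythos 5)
Your proof is correct, but it takes a genuinely different route from the one in the paper. You go through bipartite treewidth: you combine the easy bound $\btw(G)\le\oct(G)\le 2$, the closure of $\btw$ under odd-minors (\autoref{lem:closed-under-om}), and the computation $\btw(K_5)=3$ (every clique lives in a single bag, and a bipartite induced subgraph of $K_5$ has at most two vertices, so the bag needs at least three apices) to get the contradiction $3\le\btw(K_5)\le\btw(G)\le 2$. The paper instead argues directly and more elementarily: letting $u,v$ be an odd cycle transversal, $G\setminus\{u,v\}$ is bipartite and hence $K_3$-odd-minor-free, and since deleting two vertices from a $K_5$-odd-expansion destroys at most two branch sets while preserving the parity structure on the rest, a $K_5$ odd-minor in $G$ would force a $K_3$ odd-minor in $G\setminus\{u,v\}$. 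Both arguments are sound and both generalize to ``$\oct(G)\le t-3$ implies no $K_t$ odd-minor.'' Your version is essentially a one-liner given machinery already established in the paper, at the cost of invoking the (nontrivial) \autoref{lem:closed-under-om}; the paper's version is self-contained apart from the standard facts that bipartite graphs are exactly the $K_3$-odd-minor-free graphs and that vertex deletion kills at most one branch set of an expansion per deleted vertex.
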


\begin{proof}
Suppose that $G$ contains $K_5$ as an odd-minor and
let $\eta$ be an odd $K_5$-expansion of $G$.
Let $u,v\in V(G)$ be such that $G'=G\setminus\{u,v\}$ is bipartite.
Given that $\eta$ has at least three nodes that do not intersect $\{u,v\}$, it implies that $K_3$ is an odd-minor of $G'$, contradicting its bipartiteness.
 \end{proof}

Combining \autoref{k5} and \autoref{oct2}, we have that {\sc \Annotated Maximum Weighted Cut} is \FPT\ parameterized by \oct.

\begin{lemma}\label{cut-oct}
There is an algorithm that, given a graph $G$, a weight function $w:E(G)\to\bN$, and two disjoint sets $X_1,X_2\subseteq V(G)$,
such that $G'=G\setminus(X_1\cup X_2)$ is bipartite, solves {\sc \Annotated Maximum Weighted Cut} on $(G,X_1,X_2,w)$ in time $\Ocal(k\cdot n'+n'^c)$, where $k=|X_1\cup X_2|$, $n'=|V(G')|$, and $c$ is the constant of \autoref{k5}.
\end{lemma}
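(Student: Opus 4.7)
The plan is to reduce the annotated problem on $(G, X_1, X_2, w)$ to an unannotated instance of \textsc{Maximum Weighted Cut} on an auxiliary graph $G''$ of odd cycle transversal number at most $2$, and then invoke \autoref{k5} via \autoref{oct2}. The idea is to ``contract'' each $X_i$ to a single vertex $x_i$ and to add an artificially heavy edge $x_1 x_2$ that forces these two vertices to lie on opposite sides of every maximum cut.

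More precisely, I would build $G''$ on the vertex set $V(G') \cup \{x_1, x_2\}$, where $x_1, x_2$ are fresh, and equip it with the following weighted edges $w_{G''}$: every edge $uv \in E(G')$ keeps its original weight $w(uv)$; for each $v \in V(G')$ and each $i \in \{1,2\}$, add an edge $v x_i$ of weight $w(E(\{v\}, X_i))$ (the total $w$-weight of the edges from $v$ to $X_i$ in $G$); and add an edge $x_1 x_2$ of weight $M := 1 + \sum_{e \in E(G)} w(e)$. Since $G'' \setminus \{x_1, x_2\} = G'$ is bipartite, $\oct(G'') \leq 2$, so by \autoref{oct2} the graph $G''$ is $K_5$-odd-minor-free, and \autoref{k5} computes $\mu := \p_{f_{\cut},\max}(G'', w_{G''})$ in time $\Ocal(|V(G'')|^c) = \Ocal(n'^c)$. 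The choice of $M$ guarantees that every maximum cut of $G''$ separates $x_1$ from $x_2$.

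The main point to verify is the bijective correspondence between partitions $(A, B)$ of $V(G)$ with $X_1 \subseteq A$ and $X_2 \subseteq B$ and partitions $(A'', B'')$ of $V(G'')$ with $x_1 \in A''$ and $x_2 \in B''$, obtained by sending each $X_i$ onto $x_i$, together with the identity
\[
w(E(A, B)) \;=\; w_{G''}(E(A'', B'')) \;-\; M \;+\; w(E(X_1, X_2)).
\]
This identity follows term by term: edges inside $X_1$ or inside $X_2$ are never cut on either side; the edges in $E(X_1, X_2)$ are always cut in $G$ while in $G''$ they are replaced by the single heavy edge $x_1 x_2$ of weight $M$; the gadget edges $v x_i$ reproduce exactly the cut contribution of the edges from $v$ to $X_i$; and the edges of $G'$ contribute identically to both cuts. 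Consequently,
\[
\hat{\p}_{f_{\cut},\max}(G, (X_1, X_2), w) \;=\; \mu - M + w(E(X_1, X_2)),
\]
which the algorithm returns. The quantities $w(E(\{v\}, X_i))$, $w(E(X_1, X_2))$, and $M$ can be gathered in a single traversal of the edges of $G$ incident with $X_1 \cup X_2$, which together with copying $G'$ into $G''$ fits within $\Ocal(k \cdot n' + n'^c)$, as claimed. I do not anticipate any substantive obstacle beyond the bookkeeping verification of the displayed identity and the choice of $M$ large enough to enforce the separation of $x_1$ from $x_2$ in every maximum cut.
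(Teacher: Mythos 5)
Your proposal is correct and follows essentially the same route as the paper: identify each $X_i$ to a single vertex $x_i$ carrying the aggregated edge weights, add a heavy edge $x_1x_2$ of weight exceeding the total edge weight to force the two sides apart, observe that the resulting graph has odd cycle transversal number at most $2$ and is therefore $K_5$-odd-minor-free by \autoref{oct2}, and solve the unannotated instance via \autoref{k5}, correcting by the constant $w(E(X_1,X_2))-M$. The weight identity and the running-time accounting match the paper's argument.
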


\begin{proof}
Let $G''$ be the graph obtained from $G$ by identifying all vertices in $X_1$ (resp. $X_2$) to a new vertex $x_1$ (resp. $x_2$).
Let $w':V(G'')\to\bN$ be such that $w'(x_1x_2)=\sum_{e\in E(G)}w(e)+1$, $w'(x_iu)=\sum_{x\in X_i}w(xu)$ for $i\in[2]$ and $u\in N_G(X_i)$, and $w'(e)=w(e)$ otherwise.
Let $(X_1^\star,X_2^\star)\in\Pcal_2(V(G))$ be such that $(X_1,X_2)\subseteq (X_1^\star,X_2^\star)$.
For $i\in[2]$, let $X_i'=X_i^\star\setminus X_i$.
Then
\begin{align*}
w(X_1^\star,X_2^\star)
&=w(X_1,X_2)+w(X_1',X_2')+\sum_{xy\in E(X_1,X_2')}w(xy)+\sum_{xy\in E(X_1',X_2)}w(xy)\\
&=w(X_1,X_2)+w'(X_1',X_2')+\hspace{-0.5cm}\sum_{u\in X_2'\cap N_G(X_1)}w'(x_1u)+\hspace{-0.5cm}\sum_{u\in X_1'\cap N_G(X_2)}\hspace{-0.1cm}w'(x_2u)\\
&=w'(X_1'\cup\{x_1\},X_2'\cup\{x_2\})+w(X_1,X_2)-w'(x_1x_2)
\end{align*}
Let $\bar{w}$ be the constant $w(X_1,X_2)-w'(x_1x_2)$.
Hence, $$f_\cut(G,(X_1^\star,X_2^\star))=f_\cut(G'',(X_1'\cup\{x_1\},X_2'\cup\{x_2\}))+\bar{w},$$ and so $\hat{p}_{f_\cut,\max}(G,(X_1,X_2))=\hat{p}_{f_\cut,\max}(G'',(\{x_1\},\{x_2\}))+\bar{w}$.
Moreover, given that the weight of the edge $x_1x_2$ is larger than the sum of all other weights, $x_1$ and $x_2$ are never on the same side of a maximum cut in $G''$.
Hence, $\hat{p}_{f_\cut,\max}(G'',(\{x_1\},\{x_2\}))=\p_{f_\cut,\max}(G'')$, and therefore,
$$\hat{p}_{f_\cut,\max}(G,(X_1,X_2))={p}_{f_\cut,\max}(G'')+\bar{w}.$$

Constructing $G''$ takes time $\Ocal(k\cdot n)$ and computing $\bar{w}$ takes time $\Ocal(k^2)$.
Since $\oct(G'')=2$, according to \autoref{k5} and \autoref{oct2}, an optimal solution to {\sc Maximum Weighted Cut} on $G''$ can be found in time $\Ocal(n'^c)$, and thus, an optimal solution to {\sc \Annotated Maximum Weighted Cut} on $(G,X_1,X_2)$ can be found in time $\Ocal(k\cdot (k+n')+n'^c)$.
 \end{proof}
We apply \autoref{cut-nice} and \autoref{cut-oct} to the dynamic programming algorithm of \autoref{DP} to obtain the following result.

\begin{corollary}
Given a graph $G$ and a bipartite tree decomposition of $G$ of width $k$, there is an algorithm that solves {\sc Maximum Weighted Cut} on $G$ in time $\Ocal(2^k\cdot (k\cdot (k+n)+n^c))$, where $c$ is the constant of \autoref{k5}.
\end{corollary}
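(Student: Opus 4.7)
The plan is to assemble this corollary directly from the ingredients already established: the $\Hcal$-niceness of the annotated version (\autoref{cut-nice}) and the $\oct$-parameterized algorithm (\autoref{cut-oct}), and then invoke the general dynamic programming scheme of \autoref{DP} with $\Hcal = \Bcal$ (the class of bipartite graphs) and $p = 2$.

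First, I would observe that {\sc Maximum Weighted Cut} is the optimization problem associated with the $2$-partition-evaluation function $f_{\cut}$ and $\mathsf{opt} = \max$, and that its $2$-annotated extension is precisely {\sc Annotated Maximum Weighted Cut}. By \autoref{cut-nice}, this annotated extension is $\Bcal$-nice. By \autoref{cut-oct}, it is solvable on any instance $(G,(X_1,X_2),w)$ with $G\setminus(X_1\cup X_2)\in\Bcal$ in time $\Ocal(k\cdot(k+n')+n'^c)$, where $k=|X_1\cup X_2|$, $n'=|V(G)\setminus(X_1\cup X_2)|$, and $c$ is the absolute constant of \autoref{k5}.

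Next, I would plug these two facts into \autoref{DP}. The time bound of \autoref{cut-oct} fits the generalized form discussed right after \autoref{DP}, namely $f(|\cup\Xcal|)\cdot n'^c\cdot m'^d$ with $d=0$ plus a polynomial overhead $\Ocal(k(k+n'))$ per call; thus the dynamic programming along a rooted bipartite tree decomposition of width $k$ loops over the $2^{|A_t|}=\Ocal(2^k)$ $2$-partitions $\Acal$ of $A_t=\alpha(t)\cup\delta_t$, builds the nice reduction of the current bag in time $\Ocal(k\cdot c_t)$, and solves the resulting annotated instance via \autoref{cut-oct}. Summing over nodes using $\sum_t c_t=\Ocal(n)$, $\sum_t n'_t=\Ocal(n)$, exactly as in the generalization of \autoref{DP}, yields the claimed total time $\Ocal(2^k\cdot (k\cdot(k+n)+n^c))$.

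There is no genuine obstacle here beyond checking that the interface between \autoref{cut-nice} and \autoref{cut-oct} is clean: the nice reduction in \autoref{cut-nice} enlarges the annotation only by two new vertices $u_1,u_2$, so after reduction the current ``annotated part'' still has size $|A_t|+\Ocal(1)=k+\Ocal(1)$, and the complement remains bipartite, which is precisely the hypothesis required by \autoref{cut-oct}. Once this is verified, the corollary follows by a direct application of \autoref{DP}.
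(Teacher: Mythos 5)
Your proposal is correct and follows exactly the paper's route: the paper obtains this corollary by applying \autoref{cut-nice} ($\Bcal$-niceness of {\sc Annotated Maximum Weighted Cut}) and \autoref{cut-oct} (the \oct-parameterized algorithm) to the general dynamic programming scheme of \autoref{DP}, using the refined running-time accounting from the generalizations discussed after \autoref{DP}. Your additional bookkeeping about the $+\Ocal(1)$ growth of the annotated part and the preservation of bipartiteness after the reduction matches the checks already built into the definition of a nice reduction.
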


\section{\XP-algorithms for packing problems}\label{xp}
\newcommand\packing{\mathcal{S}}

Let $\Gcal$ be a graph class.
We define the {\PbPack} problem as follows.

\begin{center}
	\fbox{
		\begin{minipage}{11.5cm}
			\noindent{\PbPack}\\
			\noindent\textbf{Input}:~~A graph $G$.\\
			\textbf{Objective}:~~Find the maximum number $k$ of pairwise-disjoint subgraphs\\
			 \phantom{\textbf{Question}:~~} $H_1,\ldots,H_k$ such that, for each $i\in[k]$, $H_i\in\Gcal$.
		\end{minipage}
	}
\end{center}
Let $H$ be a graph.
If $\Gcal=\{H\}$ (resp. $\Gcal$ is the class of all graphs containing  $H$ as a minor/odd-minor/induced subgraph), then we refer to the corresponding problem as {\sc $H$-Subgraph-Packing} (resp. {\sc $H$-Minor-Packing}/{\sc $H$-Odd-Minor-Packing}/{\sc $H$-Induced-Subgraph-Packing}).
We remark, in particular, that {\sc $K_3$-Odd-Minor-Packing} is exactly {\sc Odd Cycle Packing}.

If in the definition of \PbPack\ we add the condition that there is no edge in the input graph between vertices of different $H_{i}$'s, then we refer to the corresponding problem as {\sc $H$-Scattered-Packing}, where we implicitly assume that we refer to the subgraph relation, and where we do {\sl not} specify a degree of ``scatteredness'', as it is usual in the literature when dealing, for instance, with the scattered version of \textsc{Independent Set}.
For instance, {\sc $K_2$-Scattered-Packing} is exactly {\sc Induced Matching}.
\medskip

As we prove in \autoref{pack-bip-min}, {\sc $H$-Minor-Packing} is {\sf para-NP}-complete parameterized by \btw\ when $H$ is 2-connected.\
This is however not the case for {\sc $H$-Subgraph-Packing} and {\sc $H$-Odd-Minor-Packing} when $H$ is 2-connected non-bipartite: we provide below an \XP-algorithm for both problems.

These algorithms have a similar structure to the dynamic programming algorithm of \autoref{DP}.
Namely, for each node $t$ of the bipartite tree decomposition, we reduce $\bf G_t$ to a smaller equivalent instance $\bf G_t'$, and solve the problem on $\bf G_t'$.
The main observation here is that the maximum size of a packing is small when restricted to the bag of $t$. Thus, we guess the packing in each bag, and how it intersects the adhesion of $t$ and its neighbors. This guessing is the source of \XP-time in our algorithms.\medskip

Recall that $\p_{f,\opt}$ is defined in \autoref{sec:nice}.
Given a graph class $\Gcal$, {\sc $\Gcal$-Packing}, seen as an optimization problem, is the problem of computing $\p_{f_{\sf pack},\max}(G)$, where, for every graph $G$, for every $(S,R)\in\Pcal_2(V(G))$,
\begin{align*}
f_{\sf pack}(G,(S,R))=
\begin{cases}
|S| & \text{if there are $|S|$ disjoint subgraphs $H_1,\ldots,H_{|S|}$ in $G$}\\
 & \text{ such that $|H_i\cap S|=1$ and $H_i\in\Gcal$ for $i\in[|S|]$},\\
0 & \text{otherwise.}
\end{cases}
\end{align*}

In the above equation, the pair $(S,R)$ means that $S$ is the part that interacts with the solution, and $R$ is the remainder.

\subsection{(Induced) Subgraph packing}

Let $H$ be a graph.
A \emph{partial copy} of $H$ is a boundaried graph ${\bf F}$ such that there is a boundaried graph ${\bf F'}$ compatible with ${\bf F}$ with ${\bf F} \oplus{\bf F'}=H$.
Given a graph $G$ and a set $X\subseteq V(G)$, we denote by $\Ccal_{G,X}^H$ the set of all partial copies ${\bf F}$ of $H$ such that  $G[\bd({\bf F})]$ is a subgraph of $G[X]$.

\begin{lemma}\label{Hpack}
	Let $\Hcal$ be a graph class and let $H$ be a 2-connected graph with $h$ vertices that does not belong to $\Hcal$.
	There is an algorithm that, given a graph $G$ and a 1-$\Hcal$-tree decomposition of $G$ of width at most $k$,
	solves the following problem in time $n^{\Ocal(h\cdot k)}$:
	\begin{enumerate}
		\item\label{Hpack:ind} \textsc{$H$-Induced-Subgraph-Packing}, if $\Hcal$ is hereditary.
		\item\label{Hpack:monotone} \textsc{$H$-Subgraph-Packing}, if $\Hcal$ is monotone.
		\item\label{Hpack:monotone2} \textsc{$H$-Scattered-Packing}, if $\Hcal$ is monotone.
	\end{enumerate}
\end{lemma}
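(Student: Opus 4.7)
The plan is to root the given $1$-$\Hcal$-tree decomposition at a vertex $r$ and perform bottom-up dynamic programming, indexing table entries at each node $t$ by a \emph{configuration} $C$ that records how packing elements cross the adhesion $\delta_t$. The critical structural observation is the following. Let $H'$ be a copy of $H$ (as a subgraph, induced subgraph, or element of a scattered packing) in $G$, and let $t^{\ast}$ be the root of the subtree
\[
T_{H'}\ :=\ \{t\in V(T)\ :\ (\alpha\cup\beta)(t)\cap V(H')\neq\emptyset\}.
\]
Then $V(H')\subseteq V(G_{t^{\ast}})$ and some vertex of $V(H')$ lies in $(\alpha\cup\beta)(t^{\ast})\setminus\delta_{t^{\ast}}$. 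If $V(H')\subseteq\beta(t^{\ast})$, then $H$ would be an (induced) subgraph of $G[\beta(t^{\ast})]\in\Hcal$, contradicting $H\notin\Hcal$ together with hereditariness (item~\ref{Hpack:ind}) or monotonicity (items~\ref{Hpack:monotone},~\ref{Hpack:monotone2}) of $\Hcal$. Otherwise $V(H')$ has a vertex in $G_{t'}\setminus\delta_{t'}$ for some child $t'$ of $t^{\ast}$; since $H'$ is $2$-connected and $\delta_{t'}$ separates $G_{t'}\setminus\delta_{t'}$ from the rest of $G_{t^{\ast}}$, we get $|V(H')\cap\delta_{t'}|\geq 2$, and together with $|\delta_{t'}\cap\beta(t^{\ast})|\leq 1$ (the defining property of a $1$-$\Hcal$-tree decomposition) this forces $V(H')\cap\alpha(t^{\ast})\neq\emptyset$. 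Consequently each node $t$ is the top of at most $|\alpha(t)|\leq k$ packing elements, and by the same $2$-connectivity argument at most $\lceil(k+1)/2\rceil$ packing elements cross any $\delta_t$.

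With this in hand I set up the DP. A \emph{configuration} at $\delta_t$ is a vertex-disjoint family of partial copies in $\Ccal^H_{G,\delta_t}$ each with boundary of size at least $2$; since $|\delta_t|\leq k+1$ and every partial copy of $H$ has at most $h$ boundary vertices, there are at most $n^{O(hk)}$ such configurations. Let $\mathrm{DP}[t][C]$ denote the maximum number of copies of $H$ with ``top'' in $T_t$ appearing in a packing of $G_t$ whose partial copies at $\delta_t$ are exactly $C$. To compute $\mathrm{DP}[t][C]$ at a node $t$ with children $t_1,\ldots,t_d$, I enumerate \emph{local configurations} $L$ at $t$: vertex-disjoint families of partial copies with vertices in $(\alpha\cup\beta)(t)$, whose $\delta_t$-boundary matches $C$ and whose remaining boundary lies in $\bigcup_i\delta_{t_i}$. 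By the structural observation $L$ consists of at most $k$ partial copies whose top is $t$, together with the at most $\lceil(k+1)/2\rceil$ partial copies inherited from $C$; these use at most $O(hk)$ vertices of $(\alpha\cup\beta)(t)\setminus(\delta_t\cup\bigcup_i\delta_{t_i})$, and specifying these vertices together with their role in each partial copy yields at most $n^{O(hk)}$ choices for $L$. Each $L$ induces a unique configuration $C_i$ at every $\delta_{t_i}$, and
\[
\mathrm{DP}[t][C]\ =\ \max_{L}\Big(\#\{\text{copies with top at } t \text{ in } L\}\ +\ \sum_{i=1}^{d}\mathrm{DP}[t_i][C_i]\Big).
\]
Each child is processed independently once $L$ is fixed, so the per-node work is $n^{O(hk)}\cdot(d+1)$; since $\sum_t(d_t+1)=O(n)$ the total running time is $n^{O(hk)}$, and the answer is $\mathrm{DP}[r][\emptyset]$.

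Finally, I adapt the scheme to the three cases. For item~\ref{Hpack:ind}, $\Ccal^H_{G,\cdot}$ is instantiated with \emph{induced} partial isomorphisms, and the structural contradiction when $V(H')\subseteq\beta(t^{\ast})$ uses hereditariness of $\Hcal$. For item~\ref{Hpack:monotone} partial copies are ordinary subgraph isomorphisms and monotonicity of $\Hcal$ plays the analogous role. For item~\ref{Hpack:monotone2} I additionally require that, within each local configuration $L$, no edge of $G[(\alpha\cup\beta)(t)]$ connects two distinct partial copies; because every edge of $G$ lies in some bag, this local check globally enforces the scattered condition. The main obstacle is that both $|\beta(t)|$ and the number of children $d$ can be arbitrarily large in terms of $k$, so neither a brute-force enumeration over $\beta(t)$ nor a ``nice $1$-$\Hcal$-tree decomposition'' reduction is available (recall~\autoref{no-nice}); the $2$-connected ``top vertex lies in $\alpha(t^{\ast})$'' lemma is precisely what confines the number of relevant local configurations to $n^{O(hk)}$, independently of $|\beta(t)|$ and $d$.
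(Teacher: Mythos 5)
Your overall architecture (bottom-up DP indexed by how packing elements meet the adhesions, with the $2$-connectivity of $H$ and the ``at most one $\beta$-vertex per adhesion'' property confining the interaction) matches the paper's, but your central structural claim is false, and both correctness and the complexity bound rest on it. You claim that every packing element whose trace subtree $T_{H'}$ is rooted at $t^{\ast}$ must meet $\alpha(t^{\ast})$ unless it sits entirely inside $\beta(t^{\ast})$, and hence that at most $k$ elements are ``topped'' at any node. The $2$-connectivity argument you invoke only yields $|V(H')\cap\delta_{t'}|\geq 2$ when $H'$ has vertices strictly on \emph{both} sides of the separator $\delta_{t'}$; it says nothing when $V(H')\subseteq V(G_{t'})$. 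In that case $H'$ may meet $\delta_{t'}$ in a single vertex $v\in\delta_{t'}\cap\beta(t^{\ast})$, and if $v\notin\delta_{t^{\ast}}$ then $t^{\ast}$ is the root of $T_{H'}$ even though $V(H')\cap\alpha(t^{\ast})=\emptyset$. Concretely, take $\Hcal=\Bcal$, $H=K_3$, and the graph of \autoref{no-nice} ($K_{t,t}$ with a pendant triangle glued to every vertex) with its star-shaped width-one decomposition rooted at the center $t_0$: every one of the $2t$ triangles has its trace subtree rooted at $t_0$, yet $\alpha(t_0)=\emptyset$. Your bound ``at most $|\alpha(t)|$ elements topped at $t$'' and the resulting $n^{\Ocal(h\cdot k)}$ bound on the number of local configurations both fail, because the number of such elements is governed by the number of children, which is unbounded in terms of $k$.

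The missing idea is exactly how the paper handles these elements: a copy is declared $t$-\emph{inner} only if it meets $\alpha(t)$ (and there are indeed at most $k$ of those in any packing), every other copy is contained in $G_{t'}\setminus\alpha(t)$ for a single child $t'$, and those are absorbed into the child's table rather than into the configuration at $t$ --- by comparing the child's optimum $s_{t'}^{+}$ when its unique adhesion vertex $v_{t'}\in\delta_{t'}\cap\beta(t)$ may be consumed by a copy below against the optimum $s_{t'}^{-}$ when it may not, then deleting or keeping $v_{t'}$ accordingly (with a tie-breaking rule when several children share the same $v_{t'}$). Without a mechanism of this kind your recursion undercounts the packing and your configuration space cannot be bounded as claimed. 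A secondary, patchable issue: your configurations at $\delta_t$ record only partial copies with boundary of size at least two, so they do not tell the parent which $\delta_t$-vertices are already consumed by copies lying entirely below $t$; this bookkeeping is also needed for correctness.
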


\begin{proof}
	The three cases follow from the same argument.
	When speaking of a ``copy'' of $H$ in $G$,
	we mean an induced subgraph of $G$ isomorphic to $H$ for Case~\ref{Hpack:ind},
	and a subgraph of $G$ isomorphic to $H$ for Case~\ref{Hpack:monotone} and Case~\ref{Hpack:monotone2}.
	If we speak of an $H$-packing,
	we mean an $H$-induced-subgraph packing in Case~\ref{Hpack:ind},
	an $H$-subgraph packing in Case~\ref{Hpack:monotone},
	and an $H$-scattered packing in Case~\ref{Hpack:monotone2}.

Let $(T,\alpha,\beta,r)$ be  a rooted 1-$\Hcal$-tree decomposition of $G$ of width at most $k$.
Let $k\cdot H$ be the union of $k$ disjoint copies of $H$.
For each $t\in V(T)$, in a bottom-up manner, for each ${\bf F}\in\Ccal_{G,\delta_t}^{k\cdot H}$, we compute the maximum integer $s_t^{\bf F}$, if it exists, such that there exists an
$H$-packing
$H_1,\ldots,H_{s_t^{\bf F}}$  in $G_t$ such that ${\bf F}$ is a boundaried (induced in Case~\ref{Hpack:ind}) subgraph of ${\bf G_t}\setminus\bigcup_{i\in[s_t^{\bf F}]}V(H_i)$.
Since $\delta_r=\emptyset$, $s_r^\emptyset$ is the optimum for  {\sc $H$-(Induced-)Subgraph-Packing\xspace} on $G$.

Let $t\in V(T)$.
We observe that no copy of $H$ is fully contained in $G[\beta(t)]$.
In Case~\ref{Hpack:ind}, this follows immediately from the fact that $H \notin \Hcal$,
whereas $G[\beta(t)] \in \Hcal$, and that $\Hcal$ is hereditary.
In Case~\ref{Hpack:monotone} and Case~\ref{Hpack:monotone2},
we additionally use the assumption that $\Hcal$ is monotone.
We call any copy of $H$ in $G_t$ that intersects $\alpha(t)$ a \emph{$t$-inner copy}.
Given that $H$ is 2-connected and that $H\notin\Hcal$, for any copy of $H$ in $G_t$ that is not a $t$-inner copy, the definition of a 1-$\Hcal$-tree decomposition implies that there is $t'\in\ch_r(t)$ such that this copy is contained in $G_{t'}\setminus\alpha(t)$.
We call these copies \emph{$t$-outer copies}.
See \autoref{fig_pack} for an illustration.
\begin{figure}[ht]
\center
\includegraphics[scale=0.8]{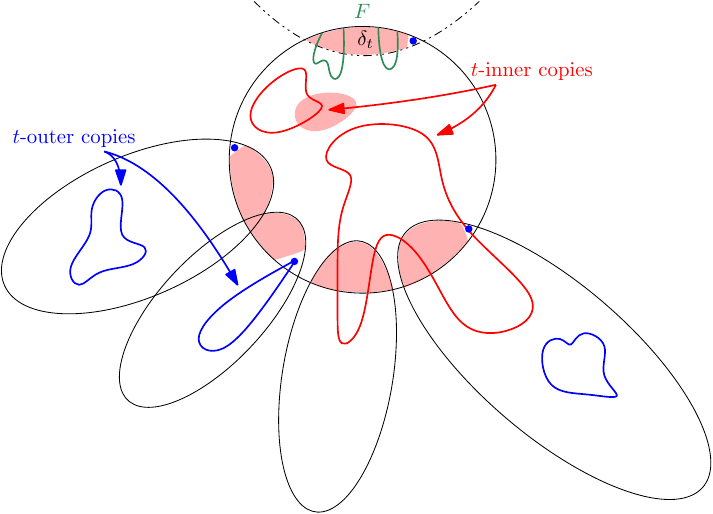}
\caption{Illustration of $t$-inner copies (the blue cycles) and $t$-outer copies (the red cycles). The red regions correspond to $\alpha(t)$ and each blue vertex is the vertex in $\delta_{t'}\setminus\alpha(t)$ for some node $t'$ adjacent to $t$.}
\label{fig_pack}
\end{figure}
Given that $|\alpha(t)|\leq k$, it follows that any $H$-subgraph-packing contains at most $k$ $t$-inner copies.
For each $t$-inner copy $H'$ of $H$, let $Y_{H'}$ be the set of the children $t'$ of $t$ such that $H'$ intersects $G_{t'}\setminus\alpha(t)$.
Since $H'$ has at most $h$ vertices, $|Y_{H'}|\leq h$.
Thus, for any maximum
$H$-packing
$\packing=\{H_1,\ldots,H_r\}$ in $G$, the set $Y_t=\{t_1,\ldots,t_{|Y_t|}\}$ of children of $t$ that intersect a $t$-inner copy in $\packing$ has size at most $h\cdot k$.
We guess $Y_t$, and note that there are at most $n^{h\cdot k}$ choices.
Let $Z_t=\ch_r(t)\setminus Y_t$.

No $t$-inner copy of $\packing$ intersects $G_{t'}\setminus\delta_{t'}$ for $t'\in Z_t$.
Thus, $\packing$ is maximum when restricted to $G_{t'}\setminus\delta_{t'}$.
However, if there is a vertex $v_{t'}$ such that $\delta_{t'}\setminus\alpha(t)=\{v_t'\}$,
there might still be one $t$-inner copy of $\packing$ that contains $v_{t'}$.
We want to find a maximum
$H$-packing
on $G_{t'}\setminus\alpha(t)$ that is contained in a maximum
$H$-packing
of $G$ such that no $t$-inner copy intersecting $Z_t$ is in the packing.
We compute inductively $s_{t'}^+=s_{t'}^{\bf F+}$ and $s_{t'}^-=s_{t'}^{\bf F-}$, where ${\bf F+}$ and ${\bf F-}$ are trivial boundaried graphs with boundary $\delta_{t'}\cap\alpha(t)$ and $\delta_{t'}$, respectively.
In other words, $s_{t'}^+$ (resp. $s_{t'}^-$) is the maximum size of an
$H$-packing
on $G_{t'}\setminus\alpha(t)$ (resp. $G_{t'}\setminus\delta_{t'}$).
Note that $s_{t'}^-\leq s_{t'}^+\leq s_{t'}^-+1$.
\begin{itemize}
\item If $s_{t'}^-= s_{t'}^+$, then it is optimal to choose a maximum $H$-subgraph-packing on $G_{t'}\setminus\delta_{t'}$. Therefore, we remove $V(G_{t'})\setminus\delta_{t'}$ from $G$ and set $s_{t'}=s_{t'}^-$.
\item Otherwise, $s_{t'}^+= s_{t'}^- +1$.
	In this case, $v_{t'}$ has to be part of some copy of $H$ in $\packing$.
	So, we may assume that $\packing$ consists of a maximum $H$-packing in $G_{t'} \setminus \alpha(t')$
	and a $t$-outer copy $H'$ of $H$ containing $v_{t'}$.
Indeed, if a $t$-inner copy of $\packing$ intersects $v_{t'}$, then $\packing$ restricted to $G_{t'}\setminus\delta_{t'}$ has size $s_{t'}^-$.
If we replace $H'$ and $\packing$ restricted to $G_{t'}\setminus\delta_{t'}$ by a maximum
$H$-packing
on $G_{t'}\setminus\alpha(t')$,
then we obtain a packing of the same size.
Therefore, we remove $(V(G_{t'})\setminus\delta_{t'})\cup\{v_{t'}\}$ from $G$ and set $s_{t'}=s_{t'}^+$.
If there are $z_1,\ldots,z_d\in Z_t$ such that $v_{z_1}=\ldots=v_{z_d}$ and $s_{z_i}^+= s_{z_i}^- +1$ for $i\in[d]$,
then we choose one of them arbitrarily, say $z_1$, for which we set $s_{z_1}=s_{z_1}^+$, and we set $s_{z_i}=s_{z_i}^-$ otherwise, since $v_{z_1}$ is contained in only one copy of an
$H$-packing.
\end{itemize}

Let ${\bf F}\in\Ccal_{G,\delta_t}^{k\cdot H}$.
Let $\Delta=\delta_t\cup\bigcup_{t'\in Y_t}\delta_{t'}$.
Now that we have dealt with the children containing no $t$-inner copy, we only have a few children left. To reduce the size of $G_t$, we will guess the partial $t$-inner copies in each child in $Y_t$.
$\Fcal_{\bf F}$ defined below is the set of all such possible guesses.
We set
$\Fcal=\{({\bf F_{t'}})_{t'\in Y_t}\mid\forall t'\in Y_t, {\bf F_{t'}}\in\Ccal_{G,\delta_{t'}}^{k\cdot H}\}$.
For each $\Lcal=({\bf F_{t'}})_{t'\in Y_t}\in\Fcal$,
we say that $\Lcal$ is \emph{compatible} with ${\bf F}$ if there is ${\bf F_\Lcal}\in\Ccal_{G,\Delta}^{k\cdot H}$
and a partition $(U,U_{t_1},\ldots,U_{t_{|Y|}})$ of the connected components of ${\bf F_\Lcal}\setminus\Delta$ such that
\begin{itemize}
\item ${\bf F_\Lcal}[V(U)\cup(\delta_t\cap V(F))]={\bf F}$ and
\item for $t'\in Y$, ${\bf F_\Lcal}[V(U_{t'})\cup(\delta_{t'}\cap V(F_{t'}))]={\bf F_{t'}}$.
\end{itemize}
Let $\Fcal_{\bf F}$ be the set of all $\Lcal\in\Fcal$ compatible with ${\bf F}$.
Let $\Lcal\in\Fcal_{\bf F}$.
Since we guessed how the $t$-inner copies of $\packing$ interact with each $t'\in Y_t$, we can now compute the $t$-outer copies of $\packing$ in $G_{t'}$. This is done iteratively by computing $s_{t'}^{\bf F_{t'}}$.
Let $G_\Lcal$ be the graph obtained from $\bf G_t$ by replacing ${\bf G_{t'}}$ by ${\bf F_{t'}}$ for each $t'\in Y_t$ and removing the vertices of $\Delta\setminus V(F_\Lcal)$.

Then we guess a set $W$ of $|V(F)|\leq h\cdot k$ vertices of $G_\Lcal$ that realize ${\bf F}$ (there are at most $n^{h\cdot k}$ possible choices) and we remove $W$  from $G_\Lcal$ to obtain $G_\Lcal^W$.
Then we compute the size $s_W$ of a maximum
$H$-packing
on $G_\Lcal^W$.
For each $i \in [k]$, we check whether we can find $i$ disjoint copies of $H$ in $G_\Lcal^W$
by brute-force.
Let $s_\Lcal=\max_W s_W +\sum_{t'\in Y_t}s_{t'}^{\bf F_{t'}}+\sum_{t'\in Z_t}s_{t'}$.
Then it follows that $s_t^{\bf F}=\max_{\Lcal\in\Fcal_{\bf F}}s_\Lcal$.

It remains to upper-bound the running time.
There are at most $n^{h\cdot k}$ choices for $Y_t$.
Let $h'=h\cdot k$ be the number of vertices of $H'$ and $d=|\Delta|\leq k+|Y_t|+1=\Ocal(h\cdot k)$.
There are at most $d!\cdot\binom{d}{h'}\cdot 2^{h'}\cdot|Y_t|^{h'}=(h\cdot k)^{\Ocal(h\cdot k)}$ choices for $\bf F_\Lcal$, and thus for $\Lcal$ and $\bf F$.
There are at most $n^{h\cdot k}$ choices for $W$.
Computing $s_W$ takes time at most $(n+h\cdot k)^{\Ocal(h\cdot k)}$,
since $G_\Lcal^W$ has size at most $n+h\cdot k$.
Hence, the running time of the algorithm is upper-bounded by $(h\cdot k\cdot n)^{\Ocal(h\cdot k)}$,
which is $n^{\Ocal(h\cdot k)}$ since $k \le n$ by definition and we may assume that $h \le n$.
 \end{proof}

\paragraph{Remark}
	Note that the above algorithm has two guessing phases that require \XP-time.
	For the case of \textsc{$H$-Subgraph-Packing}, however, we can get rid of
	the second guessing stage using color-coding~\cite{AlonYZ95colo},
	which allows us to find a subgraph with $\ell$ vertices and treewidth $t$
	inside an $n$-vertex graph in time $2^{\Ocal(\ell)}n^{\Ocal(t)}$.
	Since the disjoint union of $k$ copies of $H$ has treewidth less than $h = |V(H)|$
	and $h\cdot k$ vertices,
	the second \XP guessing stage can be replaced by an algorithm running in $2^{\Ocal(h\cdot k)}n^{\Ocal(h)}$ time,
	which is \FPT in our parameterization as $h$ is a fixed constant.
%\end{remark}

\subsection{Odd-minor-packing}
\label{sec-algo-odd-packing}

The algorithm for {\sc $H$-Odd-Minor-Packing} is very similar to the one for {\sc $H$-Subgraph-Packing}.
The main difference is the use of an \FPT-algorithm of Kawarabayashi, Reed, and Wollan~\cite{KawarabayashiRW11} solving the {\sc Parity $k$-Disjoint Paths} defined below.
We say that the parity of a path is zero if its length is even, and one otherwise.
\begin{center}
	\fbox{
		\begin{minipage}{11.5cm}
			\noindent{{\sc Parity $k$-Disjoint Paths}}\\
			\noindent\textbf{Input}:~~A graph $G$, two disjoint sets $S=\{s_1,\ldots,s_k\},T=\{t_1,\ldots,t_k\}\subseteq V(G)$, and values $j_1,\ldots,j_k\in\{0,1\}$.\\
			\textbf{Objective}:~~Find, if it exists, $k$ internally-vertex-disjoint paths $P_1,\ldots,P_k$ from $S$ to $T$ in $G$ such that $\Xcal_i$ has endpoints $s_i$ and $t_i$, and parity $j_i$ for $i\in[k]$.
		\end{minipage}
	}
\end{center}

\begin{proposition}[\hspace{1sp}\cite{KawarabayashiRW11}]\label{parity-pack}
There is an algorithm running in time $\Ocal_{k}(m\cdot\alpha(m,n)\cdot n)$ for the {\sc Parity $k$-Disjoints Paths} problem, where $\alpha$ is the inverse of the Ackermann function.
\end{proposition}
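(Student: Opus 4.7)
The plan is to adapt the irrelevant vertex technique from the Robertson--Seymour graph minor project, extended to cope with parity constraints on the paths. The algorithm proceeds by a win-win dichotomy on treewidth.

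\textbf{Bounded treewidth case.} If $\tw(G)\leq g(k)$ for some computable function $g$, then the problem can be solved by standard dynamic programming along a tree decomposition. At each bag, we store, for every function assigning each terminal in the bag to a partial path segment it belongs to together with the parity of that segment so far, whether the partial routing is realizable. The number of such states is bounded by a function of $k$ and $g(k)$, so the dynamic programming runs in \FPT-time on the bag size and in essentially linear time on $n$.

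\textbf{Large treewidth case.} If $\tw(G)$ exceeds $g(k)$, we locate an \emph{irrelevant vertex} $v$, meaning that the instance $(G\setminus\{v\},S,T,j_1,\ldots,j_k)$ has the same answer as the original. By the Grid Exclusion Theorem, large treewidth forces a large grid (or wall) minor, and by standard uncrossing arguments one may even demand that this wall is \emph{flat} in $G$ after detaching a bounded-size apex set containing the terminals. Within a sufficiently large flat wall, one can reroute any set of paths to avoid a fixed central vertex, which handles the parity-free case. To accommodate parities, we would strengthen this to ensure that the wall (or a slightly enriched structure) supports rerouting of \emph{both} parities between any pair of attachment points; this is where the concept of \emph{odd} minors/expansions from \autoref{odd-eq} enters, via an odd-clique-minor analogue of the flat wall theorem due to Geelen--Gerards--Reed--Seymour--Vetta. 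Having identified the irrelevant vertex, we delete it and recurse; since each deletion makes progress, the overall recursion depth is at most $n$ and the running time contains a factor $m\cdot\alpha(m,n)\cdot n$ coming from the flow/uncrossing subroutines.

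\textbf{Main obstacle.} The critical difficulty is the parity-aware irrelevant vertex step. For the unconstrained disjoint paths problem a large enough flat wall suffices, but under parity constraints we must guarantee \emph{parity-flexible routing}: for each pair of attachment points of the wall to the rest of the graph, both even-length and odd-length reroutings must exist. This typically requires descending from a large clique minor to a large \emph{odd} clique minor or, alternatively, finding a large bipartite wall whose two sides control parity. The combinatorial overhead of maintaining this odd structure throughout the peripheral tangle/society arguments, while preserving the near-linear dependence on $n$ promised in the statement, is the main technical hurdle, and is precisely what the Kawarabayashi--Reed--Wollan proof machinery is designed to accomplish.
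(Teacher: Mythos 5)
This statement is not proved in the paper at all: Proposition~\ref{parity-pack} is imported verbatim from \cite{KawarabayashiRW11} and used as a black box inside the \XP-algorithm for {\sc $H$-Odd-Minor-Packing} (\autoref{lem-algo-odd-minor-packing}). So there is no ``paper proof'' to compare against; the only question is whether your text constitutes an independent proof, and it does not.

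Your sketch does correctly reproduce the architecture of the Kawarabayashi--Reed--Wollan argument: a win-win on treewidth, dynamic programming with parity-annotated linkage states in the bounded case, and an irrelevant-vertex step via (flat) walls in the unbounded case, with parity controlled through odd clique minors or a bipartite wall. But every step that makes the parity version genuinely hard is named rather than established. Concretely: (i) the claim that in a large flat wall one can reroute paths around a central vertex \emph{while realizing both parities between any pair of attachment points} is exactly the parity analogue of the unique linkage theorem, and your text offers no argument for it beyond invoking ``an odd-clique-minor analogue of the flat wall theorem''; (ii) the dichotomy you allude to (either a large odd structure giving parity-flexible rerouting, or a large bipartite wall in which parities are forced) has to be proved and made algorithmic, and this is the bulk of the cited paper; (iii) the running-time accounting is asserted, not derived --- you do not explain why the recursion (depth up to $n$, each level finding a wall and an irrelevant vertex) collapses to a single $\Ocal_k(m\cdot\alpha(m,n)\cdot n)$ bound rather than, say, an extra factor of $n$. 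As a description of where the result comes from, your paragraph is accurate; as a proof, the entire technical content is missing.
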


Given that an odd-minor preserves the cycle parity (\autoref{odd-eq}), we make the following observation.
\begin{observation}\label{odd-bip}
Any odd-minor of a bipartite graph is bipartite.
\end{observation}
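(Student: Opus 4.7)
The plan is to invoke the characterization of odd-minors provided by \autoref{odd-eq}, specifically the third equivalent formulation, which says that $H$ is an odd-minor of $G$ if and only if there exists an $H$-expansion $\eta$ in $G$ such that every cycle $C$ of $H$ has the same parity as the corresponding cycle $\eta(C)$ in $\bigcup\eta$. This parity-preservation property is exactly what we need, since bipartiteness is equivalent to the absence of odd cycles.

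Concretely, suppose $G$ is bipartite and $H$ is an odd-minor of $G$. First, I would invoke \autoref{odd-eq}(3) to obtain an $H$-expansion $\eta$ in $G$ witnessing that $H$ is an odd-minor. Second, I would observe that $\bigcup\eta$ is a subgraph of $G$, so any cycle in $\bigcup\eta$ is also a cycle of $G$; since $G$ is bipartite, every cycle of $G$ has even length, hence so does every cycle of $\bigcup\eta$. Third, for any cycle $C$ in $H$, the cycle $\eta(C)$ in $\bigcup\eta$ has even length, and by the parity-preservation property $C$ itself must have even length. Since this holds for every cycle of $H$, the graph $H$ contains no odd cycles and is therefore bipartite.

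There is no real obstacle here: the hard work has already been done in establishing \autoref{odd-eq}, which packages the relationship between odd-minors and cycle parity in a directly usable form. Alternatively, one could proceed via statement~(4) of \autoref{odd-eq}, obtaining $H$ from a subgraph $G'\subseteq G$ by contracting an edge cut $E'$ witnessed by a vertex partition $(X_1,X_2)$; starting from a proper 2-coloring $(Y_1,Y_2)$ of $G'$ and switching colors on $X_2$ yields a 2-coloring whose monochromatic edges are exactly those of $E'$, so contracting $E'$ produces a properly 2-colored graph, namely $H$. Either route gives a very short proof, and the cycle-parity formulation seems the most concise.
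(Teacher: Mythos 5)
Your proof is correct and follows exactly the paper's route: the paper justifies this observation in one line by appealing to the cycle-parity preservation of odd-minors from \autoref{odd-eq} (statement 3), which is precisely your main argument. The alternative via statement (4) is also fine but unnecessary.
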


Given an odd $H$-expansion $\eta$ for some graph $H$ in a graph $G$, the \emph{branch vertices} of $\eta$ are the vertices of $\eta$ of degree at least three and the ones incident to an edge not contained in a node of $\eta$.

\begin{lemma}\label{branch}
Let $H$ and $G$ be two graphs such that $H$ is an odd-minor of $G$ with $h$ vertices.
Any odd $H$-expansion in $G$ has at most $h\cdot(2h-3)$ branch vertices.
\end{lemma}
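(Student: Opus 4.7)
The plan is to bound, for each node $T_v := \eta(v)$ of the expansion, the number of branch vertices that lie in $V(T_v)$, and then sum over $v \in V(H)$. Writing $d_v = \deg_H(v)$, so that $\sum_v d_v = 2|E(H)| \le h(h-1)$, I partition the branch vertices in $T_v$ into two disjoint sets: $A_v$, the vertices of $T_v$ incident to some edge of $\eta$, and $C_v$, the vertices of $V(T_v)\setminus A_v$ of degree at least $3$ in $\bigcup\eta$.

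First I would show that $|A_v|\le d_v$. Edges of $\eta$ are in bijection with $E(H)$, and an edge $\eta(vw)$ has exactly one endpoint in $T_v$, so there are precisely $d_v$ edge-endpoints of $\eta$ in $T_v$, hence at most $d_v$ distinct vertices carrying them. Moreover, the definition of an expansion forces every leaf of $T_v$ to be incident to an edge of $\eta$ (joining it to another node), so every leaf belongs to $A_v$; in particular the number of leaves $L_v$ of $T_v$ satisfies $L_v\le |A_v|\le d_v$. For $C_v$, note that if $w\in C_v$ then no edge of $\eta$ contributes to $\deg_{\bigcup\eta}(w)$, so $\deg_{T_v}(w)\ge 3$. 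Applying the classical fact that a tree with $L\ge 2$ leaves has at most $L-2$ vertices of degree at least $3$, we obtain $|C_v|\le L_v-2\le d_v-2$ whenever $|V(T_v)|\ge 2$. Thus such a node contributes at most $2d_v-2$ branch vertices.

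To conclude, I would assemble the totals. A single-vertex node contributes at most $1$ branch vertex, and contributes $0$ if $v$ is isolated in $H$ (as then no edge of $\eta$ is incident to $T_v$). I would first discard the isolated vertices of $H$: they contribute nothing to the branch vertex count, and since $x\mapsto x(2x-3)$ is non-decreasing on $\bN_{\ge 1}$, the target bound only weakens. We may therefore assume $d_v\ge 1$ for every $v$, and letting $\alpha$ (resp.\ $\beta$) count the nodes with $|V(T_v)|=1$ (resp.\ $|V(T_v)|\ge 2$), we have $\alpha+\beta=h$. Summing gives at most
$$\alpha + \sum_{|V(T_v)|\ge 2}(2d_v-2) \;\le\; \alpha + 2\Bigl(2|E(H)|-\alpha\Bigr) - 2\beta \;=\; 4|E(H)| - h - \beta \;\le\; 2h(h-1)-h \;=\; h(2h-3),$$
using $\sum_v d_v=2|E(H)|\le h(h-1)$ and $\alpha+\beta=h$.

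The only subtle point is the treatment of single-vertex nodes: when $d_v=1$ the uniform estimate $2d_v-2=0$ fails to cover the one branch vertex such a node can contribute, so the uniform bound $2d_v-2$ per node is not quite right; the calculation above circumvents this precisely by using $\alpha+\beta=h$ to absorb the $\alpha$ extra branch vertices coming from single-vertex nodes.
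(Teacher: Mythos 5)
Your proof is correct and follows essentially the same route as the paper's: branch vertices in each node are split into those incident to an edge of $\eta$ and those of degree at least three inside the tree, the latter being bounded via the number of leaves, all of which are edge-endpoints. The paper simply uses the cruder per-node bounds $h-1$ (edge-endpoints) and $h-2$ (high-degree tree vertices) and multiplies by $h$, whereas your degree-by-degree accounting via $\sum_v d_v = 2|E(H)|$ recovers the same $h(2h-3)$ (indeed the slightly sharper $4|E(H)|-h$), at the cost of the extra care you take with single-vertex nodes and isolated vertices.
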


\begin{proof}
Let $\eta$ be an odd $H$-expansion in $G$.
At most $h-1$ vertices of each node of $\eta$ are adjacent to vertices of another node.
These $h-1$ vertices are the leaves of the tree induced by the vertices of the node in $G$.
By an easy induction, the number of vertices of degree at least three in a tree is at most the number of leaves minus one.
Therefore, there are at most $h-1+h-2=2h-3$ branch vertices in each node of $\eta$.
Since $\eta$ has $h$ nodes, the result follows.
 \end{proof}

Let $H$ be a graph.
A \emph{partial odd-model} of $H$ is an odd-minor-minimal boundaried graph ${\bf F}$ such that there is a boundaried graph ${\bf F'}$ compatible with ${\bf F}$ and such that $H$ is an odd-minor of ${\bf F} \oplus{\bf F'}$.
Given a graph $G$ and a set $X\subseteq V(G)$, we denote by $\Mcal_{G,X}^H$ the set of all boundaried graphs ${\bf F}$ such that $G[\bd({\bf F})]$ is a subgraph of $G[X]$.

\begin{lemma}\label{lem-algo-odd-minor-packing}
Let $H$ be a 2-connected non-bipartite graph with $h$ vertices.
There is an algorithm that, given a graph $G$ and a bipartite tree decomposition of $G$ of width at most $k$, solves {\sc $H$-Odd-Minor-Packing} in time $\Ocal_k(n^{\Ocal(h^2\cdot k)})$.
\end{lemma}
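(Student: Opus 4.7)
The algorithm follows the same bottom-up dynamic programming template as \autoref{Hpack}. We process a rooted bipartite tree decomposition $(T,\alpha,\beta,r)$ and, for each node $t$ and each partial odd-model ${\bf F}\in\Mcal_{G,\delta_t}^{k\cdot H}$, compute the maximum value $s_t^{\bf F}$ which counts the most $H$-odd-minors we can pack disjointly in $G_t$ so that the restriction of the packing to the boundary $\delta_t$ is isomorphic to ${\bf F}$ (and, crucially, agrees with ${\bf F}$ on parities of the partial branch-paths crossing $\delta_t$). The output is $s_r^{\emptyset}$.

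Two observations drive the reduction. First, since $H$ is non-bipartite and $G[\beta(t)]$ is bipartite, by \autoref{odd-bip} no odd-model of $H$ can live inside $G[\beta(t)]$; combined with the 2-connectedness of $H$ and the definition of a bipartite tree decomposition, every odd-model of $H$ appearing in $G_t$ either intersects $\alpha(t)$ (a \emph{$t$-inner} model) or is contained in $G_{t'}\setminus\alpha(t)$ for a unique child $t'$ (a \emph{$t$-outer} model). Therefore there are at most $|\alpha(t)|\leq k$ many $t$-inner models in any packing. Second, by \autoref{branch} each $t$-inner model has at most $B:=h(2h-3)=\Ocal(h^2)$ branch vertices, giving at most $Bk=\Ocal(h^2k)$ branch vertices in total.

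Based on this, at each node $t$ the algorithm proceeds as follows. First, guess the set $Y_t\subseteq\ch_r(t)$ of children through which some $t$-inner model extends, and, for each $t'\in Y_t$, a partial odd-model ${\bf F}_{t'}\in\Mcal_{G,\delta_{t'}}^{k\cdot H}$ describing how the $t$-inner models traverse the adhesion $\delta_{t'}$ (including the parity of the partial path at the unique bipartite vertex of $\delta_{t'}\cap\beta(t)$, if any); using the branch-vertex bound the size of $Y_t$ and the number of admissible tuples $\Lcal=({\bf F}_{t'})_{t'\in Y_t}$ can be bounded by $n^{\Ocal(h^2 k)}$. Children $t'\notin Y_t$ are replaced exactly as in \autoref{Hpack}: we use the inductive values $s_{t'}^+$ and $s_{t'}^-$ corresponding to whether the unique bipartite adhesion vertex $v_{t'}$ is reserved for a $t$-inner model or not, and we prune $G_{t'}\setminus\delta_{t'}$ (and possibly $\{v_{t'}\}$) accordingly. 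Children $t'\in Y_t$ are replaced by their partial-model gadget ${\bf F}_{t'}$, contributing the inductively stored value $s_{t'}^{{\bf F}_{t'}}$. In the resulting reduced graph $G_\Lcal$ of size $\Ocal(n)$, we then guess the $\Ocal(h^2 k)$ branch vertices $W$ of all $t$-inner models together with their combinatorial structure (the skeleton of each expansion, the parity required on each non-branch path, and the inter-node edges), for a further $n^{\Ocal(h^2 k)}$ choices. For each such guess, deciding whether the guessed skeleton can actually be realized reduces to a single instance of \textsc{Parity $k$-Disjoint Paths} with $\Ocal(h^2 k)$ terminal pairs, which by \autoref{parity-pack} can be solved in $\Ocal_{h,k}(m\cdot\alpha(m,n)\cdot n)$ time. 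Setting $s_t^{\bf F}$ to the maximum, over all compatible guesses $(Y_t,\Lcal,W)$, of $s_{\Lcal,W}+\sum_{t'\in Y_t}s_{t'}^{{\bf F}_{t'}}+\sum_{t'\notin Y_t}s_{t'}$ completes the step.

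The main technical obstacle is choosing the right definition of \emph{partial odd-model} at an adhesion so that the bottom-up combination is sound: unlike for subgraphs, a non-branch internal vertex of an odd-model may cross an adhesion, so the boundaried data $\bf F$ must record not only the induced skeleton on $\delta_t$ but also the parity of every partial path entering $\delta_t$, in order to be glued coherently with paths supplied by \textsc{Parity $k$-Disjoint Paths} at the parent. Once this parity bookkeeping is in place, bounding $|Y_t|$ by $\Ocal(h^2 k)$ follows from the fact that any child visited by a $t$-inner model must contain either a branch vertex of that model or an endpoint of a boundary-crossing path (both controlled by $B$), and the total running time is dominated by the $n^{\Ocal(h^2 k)}$ guessing, giving the claimed bound.
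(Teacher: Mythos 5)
Your proposal is correct and follows essentially the same route as the paper's proof: the same bottom-up dynamic program over partial odd-models at adhesions, the same use of \autoref{odd-bip} and \autoref{branch} to bound the number of $t$-inner models and their branch vertices by $\Ocal(h^2\cdot k)$, the same guessing of $Y_t$, the partial models at children, and the branch-vertex set $W$, and the same invocation of \autoref{parity-pack} to realize the guessed skeleton with the prescribed parities. Your explicit remark that the boundaried data must record parities of partial paths crossing the adhesion is a point the paper handles only implicitly, but it does not change the argument.
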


\begin{proof}
Let $(T,\alpha,\beta,r)$ be  a rooted bipartite tree decomposition of $G$ of width at most $k$.
Let $k\cdot H$ be the union of $k$ disjoint copies of $H$.
For each $t\in V(T)$, in a bottom-up manner, for each ${\bf F}\in\Mcal_{G,\delta_t}^{k\cdot H}$, we compute the maximum integer $s_t^{\bf F}$, if it exists, such that there exists an $H$-odd-minor-packing $H_1,\ldots,H_{s_t^{\bf F}}$  in $G_t$ where ${\bf F}$ is a boundaried odd-minor
of ${\bf G_t}\setminus\bigcup_{i\in[s_t^{\bf F}]}V(H_i)$.
Since $\delta_r=\emptyset$, it implies that $s_r^\emptyset$ is the optimum for  {\sc $H$-Odd-Minor-Packing} on $G$.

Let $t\in V(T)$.
Given that $H$ is non-bipartite, according to \autoref{odd-bip}, no odd $H$-expansion is contained in $G[\beta(t)]$.
We call any odd $H$-expansion in $G_t$ that intersects $\alpha(t)$ a \emph{$t$-inner odd-model}.
Given that $H$ is 2-connected and non-bipartite, for any odd $H$-expansion in $G_t$ that is not a $t$-inner odd-model, there is a $t'\in\ch_r(t)$ such that this odd $H$-expansion is contained in $G_{t'}\setminus\alpha(t)$.
We call these odd $H$-expansions \emph{$t$-outer odd-models}.
Given that $|\alpha(t)|\leq k$, it follows that at most $k$ many $t$-inner odd-models can be packed at once.
Since a $t$-inner odd-model of $H$ has at most $h'=h\cdot(2h-3)$ branch vertices by \autoref{branch},
for any maximum $H$-odd-minor-packing $\packing=\{H_1,\ldots,H_r\}$ in $G$, the set $Y_t=\{t_1,\ldots,t_{|Y_t|}\}$ of children of $t$ that intersect a $t$-inner odd-model in $\packing$ has size at most $h'\cdot k+k$.
The additive term $k$ refers to the fact that each path of the packing between branch vertices may go in and out of a child, but has to intersect at least one vertex in $\alpha(t)$ when crossing, so at most $|\alpha(t)|$ children can be intersected by a $t$-inner odd-model this way.
We guess $Y_t$, and note that there are at most $n^{\Ocal(h^2\cdot k)}$ choices.
Let $Z_t=\ch_r(t)\setminus Y_t$.

We proceed similarly to the proof of \autoref{Hpack}.
We compute inductively $s_{t'}^+=s_{t'}^{\bf F+}$ and $s_{t'}^-=s_{t'}^{\bf F-}$, where ${\bf F+}$ and ${\bf F-}$ are trivial boundaried graphs with boundary $\delta_{t'}\cap\alpha(t)$ and $\delta_{t'}$, respectively.
In other words, $s_{t'}^+$ (resp. $s_{t'}^-$) is the maximum size of an $H$-odd-minor-packing on $G_{t'}\setminus\alpha(t)$ (resp. $G_{t'}\setminus\delta_{t'}$).
Thus, we observe that $s_{t'}^-\leq s_{t'}^+\leq s_{t'}^-+1$.
\begin{itemize}
\item If $s_{t'}^-= s_{t'}^+$, then we remove $V(G_{t'})\setminus\delta_{t'}$ from $G$ and set $s_{t'}=s_{t'}^-$.
\item Otherwise, $s_{t'}^+= s_{t'}^- +1$.
Then, we remove $(V(G_{t'})\setminus\delta_{t'})\cup\{v_{t'}\}$ from $G$ and set $s_{t'}=s_{t'}^+$.
If there are $z_1,\ldots,z_d\in Z_t$ such that $v_{z_1}=\ldots=v_{z_d}$ and $s_{z_i}^+= s_{z_i}^- +1$ for $i\in[d]$,
then we choose one of them arbitrarily, say $z_1$, for which we set $s_{z_1}=s_{z_1}^+$, and we set $s_{z_i}=s_{z_i}^-$ otherwise, since $v_{z_1}$ is contained in only one odd $H$-expansion of an $H$-odd-minor-packing.
\end{itemize}

Let ${\bf F}\in\Mcal_{G,\delta_t}^{k\cdot H}$.
Let $\Delta=\delta_t\cup\bigcup_{t'\in Y_t}\delta_{t'}$.
Now that we have dealt with the children containing no $t$-inner odd-model, we only have a few children left. To reduce the size of $G_t$, we will guess the partial $t$-inner copies in each child in $Y_t$.
$\Fcal_{\bf F}$, defined as in the proof of \autoref{Hpack} but replacing partial copies with partial odd-models, is the set of all possible guesses.

Let $\Lcal\in\Fcal_{\bf F}$.
Since we guessed how the $t$-inner odd-models of $\Hcal$ interact with each $t'\in Y_t$, we can now compute the $t$-outer odd-models of $\Hcal$ in $G_{t'}$. This is done iteratively by computing $s_{t'}^{\bf F_{t'}}$.
Let $G_\Lcal$ be the graph obtained from $\bf G_t$ by replacing ${\bf G_{t'}}$ by ${\bf F_{t'}}$ for each $t'\in Y_t$ and removing the vertices of $\Delta\setminus V(F_\Lcal)$.

Then we guess a set $W$ of $|V(F)|\leq h'\cdot k$ vertices of $G_\Lcal$ that would be the branch vertices of the $t$-inner odd-models,
and of ${\bf F}$.
There are at most $n^{\Ocal(h^2\cdot k)}$ possible such choices.
Let $s_W$ be the size of the $H$-odd-minor-packing on $G_\Lcal$ corresponding to this choice of $W$, if it exists.
We can check its existence using \autoref{parity-pack}.
To do so, we guess which vertices in $W$ are joined by a path, and what is the parity of the path, so that we obtain $s_W$ disjoint odd $H$-expansions in $G_\Lcal$.
Since $G_\Lcal^W$ has size at most $n+h'\cdot k$, this takes time $\Ocal_k((n+h'\cdot k)^{\Ocal(1)})$.
Let $s_\Lcal=\max_W s_W +\sum_{t'\in Y_t}s_{t'}^{\bf F_{t'}}+\sum_{t'\in Z_t}s_{t'}$.
Then it follows that $s_t^{\bf F}=\max_{\Lcal\in\Fcal_{\bf F}}s_\Lcal$.

It remains to upper-bound the running time.
There are $n^{\Ocal(h^2\cdot k)}$ choices for $Y_t$.
There are $(h^2\cdot k)^{\Ocal(h^2\cdot k)}$ choices for $\bf F_\Lcal$, and thus for $\Lcal$ and $\bf F$.
There are at most $n^{\Ocal(h^2\cdot k)}$ choices for $W$.
Computing $s_W$ takes time $\Ocal_k((n+h^2\cdot k)^{\Ocal(1)})$.
Hence, the running time of the algorithm is $\Ocal_k((h^2\cdot k\cdot n)^{\Ocal(h^2\cdot k)})$,
which is $\Ocal_k(n^{\Ocal(h^2 \cdot k)})$ as $k \le n$ and we may assume that $h \le n$.
 \end{proof}

\section{\NP-completeness on graphs of bounded \btw}\label{bip}

In this section we present our hardness results.
For any graph $G$, it holds that $\btw(G)\leq\oct(G)$.
Thus, for a problem $\Pi$ to be efficiently solvable on graphs of bounded \btw, $\Pi$ needs to be efficiently solvable on graphs of bounded \oct, and first and foremost on bipartite graphs.
Unfortunately, many problems are \NP-complete on bipartite graphs (or on graphs of small \oct), and hence \textsf{para-}\NP-complete parameterized by \btw.
In this section we provide a non-exhaustive list of such problems. In fact, there also exist problems that
 are trivial or polynomially solvable on bipartite graphs, but are \textsf{para-}\NP-complete parameterized by \btw, such as the \textsc{$3$-Coloring} problem discussed in \autoref{color}.

\subsection{Coloring}\label{color}

The {\sc $3$-Coloring} problem is defined as follows.

\begin{center}
	\fbox{
		\begin{minipage}{5cm}
			\noindent{\sc $3$-Coloring}\\
			\noindent\textbf{Input}:~~A graph $G$.\\
			\textbf{Question}:~~Is $G$ $3$-colorable?
		\end{minipage}
	}
\end{center}

Bipartite graphs are 2-colorable, so we could hope for positive results about {\sc $3$-Coloring} on graphs of bounded \oct, or even bounded \btw.
In fact, we have the following result.

\begin{lemma}\label{approx-col}
If a graph has bipartite treewidth at most $k$, then it is $(k+2)$-colorable.
\end{lemma}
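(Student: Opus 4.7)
My plan is to construct a proper $(k+2)$-coloring of $G$ by processing the bags of a given bipartite tree decomposition $(T,\alpha,\beta)$ of $G$ in BFS order from an arbitrary root $r$, maintaining the invariant that at each processed bag $t$ the coloring restricted to $(\alpha\cup\beta)(t)$ is proper, the vertices of $\alpha(t)$ use pairwise distinct colors from a ``local apex palette'' $A_t\subseteq\{1,\ldots,k+2\}$ of size $|\alpha(t)|\leq k$, and the vertices of $\beta(t)$ are $2$-colored using the two elements of the disjoint complement $B_t:=\{1,\ldots,k+2\}\setminus A_t$. At the root this invariant is easy to establish, for instance with $A_r=\{1,\ldots,|\alpha(r)|\}$ and $B_r=\{k+1,k+2\}$.

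For a non-root bag $t$ with parent $p$, the vertices of the adhesion $\delta_t$ come already colored from $p$'s processing, and by the defining property of bipartite tree decompositions $|\delta_t\cap\beta(t)|\leq 1$; let $w$ denote this vertex if it exists. The crucial compatibility step is to show that the pre-colors on $\alpha(t)\cap\delta_t$ are pairwise distinct and that they all differ from $c(w)$. Both reduce to a case analysis on whether pre-colored vertices of $\delta_t$ come from $\alpha(p)$ or from $\beta(p)$: the invariant at $p$ makes the pre-colors coming from $\alpha(p)$ distinct and confined to $A_p$, those coming from $\beta(p)$ confined to $B_p$ (so disjoint from the former), and the bipartite decomposition condition $|(\alpha\cup\beta)(t)\cap\beta(p)|\leq 1$ forbids two distinct vertices of $\delta_t$ from simultaneously coming from $\beta(p)$. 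Once compatibility is granted, I would extend the coloring by choosing $A_t$ to contain the pre-colors on $\alpha(t)\cap\delta_t$ plus additional fresh colors reaching size $|\alpha(t)|$ while avoiding $c(w)$, setting $B_t$ to be any $2$-subset of $\{1,\ldots,k+2\}\setminus A_t$ containing $c(w)$, coloring the new apex vertices bijectively into $A_t$, and properly $2$-coloring $G[\beta(t)]$ using $B_t$ so that $c(w)$ is respected in $w$'s bipartition class. A simple count using $k+2\geq |\alpha(t)|+2$ shows the color budget suffices for all these choices.

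The main obstacle I anticipate is the compatibility claim itself: without the special structural property of bipartite tree decompositions it could fail when a vertex lying in the bipartite part of one bag reappears as an apex vertex in the adhesion of a neighbor. The condition $|(\alpha\cup\beta)(t)\cap\beta(p)|\leq 1$, which sets bipartite tree decompositions apart from the more permissive $q$-$\Bcal$-tree decompositions for $q\geq 2$, is precisely the ingredient that guarantees at most one forced color from outside the local apex palette has to be accommodated in $B_t$, keeping the inductive extension feasible within the $k+2$-color budget.
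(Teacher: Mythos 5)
Your proof is correct and takes essentially the same approach as the paper's: a greedy bag-by-bag extension of the coloring along the tree decomposition (the paper organizes it as a bottom-up induction that peels off a leaf rather than a top-down BFS), reserving at most $k$ colors for the apex vertices and two further colors for the bipartite part of each bag, with the adhesion condition $|(\alpha\cup\beta)(t)\cap\beta(p)|\leq 1$ guaranteeing that at most one color is forced onto the bipartite part. Your compatibility claim and color count match the paper's argument.
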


\begin{proof}
Let $G$ be a graph.
Let $\Tcal=(T,\alpha,\beta)$ be a bipartite tree decomposition of $G$ of width at most $k$.
We proceed by induction on $|V(T)|$.
For the base case, suppose that $T$ has a unique node $t$.
$G[\beta(t)]$ is bipartite, so it is 2-colorable.
Thus, if we color each vertex in $\alpha(t)$ with a unique new color, given that $|\alpha(t)|\leq k$, we can extend the 2-coloring of $G[\beta(t)]$ to a $(k+2)$-coloring of $G$.
Now suppose that $T$ has $\ell\geq 2$ nodes.

Let $t$ be a leaf of $T$.
Let $H=G[\bigcup_{t'\in V(T)\setminus\{t\}}(\alpha\cup\beta)(t')]$.
$(T\setminus\{t\},\alpha',\beta')$, where $\alpha'$ and $\beta'$ are the restrictions of $\alpha$ and $\beta$ to $T\setminus\{t\}$, respectively, is a bipartite tree decomposition of $H$ of width at most $k$.
By induction, $H$ admits an $(k+2)$-coloring $c$.
Let $\delta$ be the adhesion of $t$ and its neighbor.
Let $a=|\alpha(t)\cap \delta|\leq k$.
Given that $|\delta\cap\beta(t)|\leq 1$, it follows that $|\delta|\leq a+1$.
We extend $c$ to a coloring $c'$ of $G[V(H)\cup\alpha(t)]$ by coloring each one of the at most $k-a$ vertices in $\alpha(t)\setminus \delta$ with a unique color that $c$ does not use in $\delta$.
Given that no vertex in $\alpha(t)\setminus \delta$ is adjacent to a vertex in $H\setminus \delta$, this coloring is proper.

If $a+1$ colors are used in $\delta$, then it means that there is a (unique) vertex $v\in \delta\cap\beta(t)$.
In this case, at node $t$, there is at least $(k+2)-(a+1)-(k-a)=1$ unused color.
Otherwise, at most $a$ colors are used in $\delta$, and thus, there are at least $(k+2)-a-(k-a)=2$ unused colors at node $t$.
Let $(A,B)$ be a partition witnessing the bipartiteness of $G[\beta(t)]$ such that $v\in A$, if it exists.
We color the vertices of $A$ with the color of $v$, if it exists, or one of the two unused color and the vertices of $B$ with the last unused color.
Given that the vertices in $A$ (resp. $B$) are not pairwise adjacent and that the vertices in $\beta(t)\setminus\{v\}$ are not adjacent to vertices in $H\setminus \delta$, the coloring remains proper and uses at most $k+2$ colors.
 \end{proof}

The result of \autoref{approx-col} is tight since any
even cycle has bipartite treewidth zero and is 2-colorable, and any
odd cycle has bipartite treewidth one and is 3-colorable.
Unfortunately, despite of \autoref{approx-col}, the problem is \textsf{para-}\NP-complete parameterized by \oct.

\begin{lemma}\label{lem-hard-coloring}
{\sc 3-Coloring} is \NP-complete even for graphs of \oct\ at most three.
\end{lemma}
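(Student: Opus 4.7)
My plan is to reduce from the \textsc{Precoloring Extension} problem with three colors on bipartite graphs, a classical \NP-complete problem. An instance of this problem is a triple $(H,S,\phi)$ with $H$ bipartite, $S\subseteq V(H)$, and $\phi\colon S\to\{1,2,3\}$; the question is whether $\phi$ admits an extension to a proper 3-coloring of $V(H)$.

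Given such an instance, the construction of the output graph $G'$ is straightforward: add three new vertices $t_1,t_2,t_3$ forming a triangle, and, for every $v\in S$ with $\phi(v)=i$, add edges from $v$ to the two apex vertices $t_j,t_k$ where $\{i,j,k\}=\{1,2,3\}$. Because $G'\setminus\{t_1,t_2,t_3\}=H$ is bipartite, we immediately obtain $\oct(G')\leq 3$, and the construction is polynomial-time.

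I would then argue the equivalence between 3-colorability of $G'$ and the existence of a valid extension of $\phi$. In any proper 3-coloring $c$ of $G'$, the triangle forces $\{c(t_1),c(t_2),c(t_3)\}=\{1,2,3\}$, and after relabeling we may assume $c(t_i)=i$. Each precolored vertex $v$ with $\phi(v)=i$ is adjacent to both $t_j$ and $t_k$, and is thus forced to receive color $i=\phi(v)$, so $c|_{V(H)}$ is a proper 3-coloring of $H$ extending $\phi$. Conversely, from any such extension $c'$, the assignment $c(t_i)=i$ and $c(v)=c'(v)$ for $v\in V(H)$ is easily verified to be a proper 3-coloring of $G'$. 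Combined with the \NP-hardness of \textsc{Precoloring Extension} on bipartite graphs and the trivial membership of \textsc{3-Coloring} in \NP, this completes the proof.

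The main obstacle I anticipate is the reliance on the \NP-hardness of the source problem, which needs to be cited. Should a self-contained argument be preferred, one could instead reduce directly from $3$-\textsc{SAT} using bipartite variable and clause gadgets attached to the apex triangle; the subtlety there is that bipartite gadgets hinged on only three apex vertices tend to naturally enforce equality-type constraints between their ports rather than the disequalities required by SAT encodings, so one must carefully combine several complementary gadgets per clause whose joint effect excludes exactly the unsatisfying assignments.
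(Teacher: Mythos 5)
Your proposal is correct and follows essentially the same route as the paper: the paper reduces from \textsc{List-$3$-Coloring} on bipartite graphs (rather than \textsc{Precoloring Extension}, which is just the restriction to singleton or full lists), adding three mutually adjacent apex vertices and joining each constrained vertex to the apex vertices representing its forbidden colors. The construction, the oct bound of three, and the correctness argument are the same.
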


\begin{proof}
We present a reduction from the {\sc List-$3$-Coloring} problem that is defined as follows.
\begin{center}
	\fbox{
		\begin{minipage}{11.5cm}
			\noindent{\sc List-$3$-Coloring}\\
			\noindent\textbf{Input}:~~A graph $G$ and a set $L(v)$ of colors in $[3]$ for each $v\in V(G)$.\\
			\textbf{Question}:~~Is there a proper 3-coloring $c$ of $G$ such that $c(v)\in L(v)$ for each $v\in V(G)$?
		\end{minipage}
	}
\end{center}
The {\sc List-$3$-Coloring} problem is \NP-complete even when restricted to  planar 3-regular bipartite graphs~\cite{ChlebikC06}. Let $(G,L)$ be such an instance of {\sc List-$3$-Coloring}.
Let $G^+$ be the graph obtained from $G$ by adding three vertices $v_1,v_2$, and $v_3$ that are pairwise adjacent and such that for each $v\in V(G)$, $v$ is adjacent to $v_i$ for $i\in[3]\setminus L(v)$.
See \autoref{fig_3-color} for an illustration.
It is easy to see that $G^+$ admits a proper 3-coloring $c$ if and only if $(G,L)$ admits a proper list coloring $c'$, and, in this case, necessarily, $c_{|V(G)}=c'$.
Given that {\sc List-3-Coloring} is \NP-complete on bipartite graphs and that $\oct(G^+)\leq 3$, {\sc 3-Coloring} is \NP-complete even for graphs of \oct\ at most three.
\begin{figure}[h]
\center
\includegraphics[scale=1]{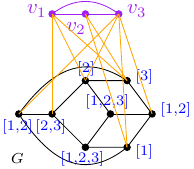}
\caption{Example of the construction of $G^+$.}
\label{fig_3-color}
\end{figure}
 \end{proof}

\subsection{Hardness of covering problems}
\label{sec-covering}

\PbCov\ is known to be \NP-complete on general graphs for every non-trivial graph class $\Gcal$ \cite{LewisY80then}.
However, for some graph classes $\Gcal$, it might change when we restricte the input graph to be bipartite.
Yannakakis \cite{Yannakakis81node} characterizes hereditary graph classes $\Gcal$ for which \PbCov\ on bipartite graphs is polynomial-time solvable and those for which \PbCov\ remains \NP-complete.

A problem $\Pi$ is said to be \emph{trivial} on a graph class $\Gcal$ if the solution to $\Pi$ is the same for every graph $G\in\Gcal$. Otherwise, $\Pi$ is called \emph{nontrivial} on $\Gcal$.
Given a graph $G$, let $\nu(G)=|\{N_G(v)\mid v\in V(G)\}|$.
Given a graph class $\Gcal$, let $\nu(\Gcal)=\sup\{\nu(G)\mid G\in\Gcal\}$.

\begin{proposition}[\hspace{1sp}\cite{Yannakakis81node}]\label{yan}
Let $\Gcal$ be a hereditary graph class such that {\sc Vertex Deletion to $\Gcal$} is nontrivial on bipartite graphs.
\begin{itemize}
\item If $\nu(\Gcal)=+\infty$, then \PbCov\ is \NP-complete on bipartite graphs.
\item If $\nu(\Gcal)<+\infty$, then \PbCov\ is polynomial time-solvable on bipartite graphs.
\end{itemize}
\end{proposition}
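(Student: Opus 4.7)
The plan is to split the analysis according to whether $\nu(\Gcal)$ is finite, and in each direction crucially exploit the hereditariness of $\Gcal$.

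For the polynomial case ($\nu(\Gcal) \leq k$ for some constant $k$), I would first establish a structural lemma: any hereditary class with bounded $\nu$ is very restricted, in the sense that every $H \in \Gcal$ is a ``blow-up'' of a \emph{type graph} on at most $k$ vertices, where each vertex of the type graph is replaced by a set of pairwise twins (true or false depending on whether the type has a self-loop in the quotient). The number of such type graphs is bounded by a function of $k$ alone. Given a bipartite input $G$, I would then enumerate each candidate type graph $T$ together with each assignment of its types to the two sides of the bipartition. For each choice, the subproblem becomes: find a maximum $V' \subseteq V(G)$ such that $G[V']$ is a blow-up of $T$ compatible with the bipartition. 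This is a constrained assignment problem with a constant number of ``colors'' and fixed adjacency constraints between color classes, which can be solved in polynomial time by reduction to minimum cut/maximum flow in a bipartite network. Taking the minimum of $|V(G)|-|V'|$ over all choices of $(T,\text{assignment})$ yields the optimum deletion set.

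For the \NP-complete case ($\nu(\Gcal) = +\infty$), hereditariness produces an infinite family in $\Gcal$ whose members exhibit arbitrarily many distinct open neighborhoods. By a Ramsey-type extraction, one should be able to find arbitrarily large canonical bipartite substructures inside $\Gcal$ (for instance, long induced paths, large induced matchings, half-graphs, or bicliques). These serve as gadgets for a polynomial reduction from a known \NP-hard problem such as \textsc{Bipartite Vertex Cover}, \textsc{3-SAT}, or \textsc{Hitting Set}, producing a bipartite instance whose minimum deletion set to $\Gcal$ encodes a Boolean assignment or vertex selection. The nontriviality hypothesis supplies a bipartite graph that is \emph{not} in $\Gcal$, hence a ``forbidden pattern'' that forces deletions at prescribed positions in the gadgets, which is the key ingredient ensuring that the reduction is correct in both directions.

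The main obstacle will be the hardness case: the hypothesis $\nu(\Gcal) = +\infty$ is purely extremal and gives no a priori information about which bipartite substructures actually live inside $\Gcal$. Following Yannakakis, I expect the proof to require a structural classification of how unbounded $\nu$ can arise inside a hereditary class, together with a tailored polynomial reduction for each regime (paths, matchings, half-graphs, bicliques, \ldots). In each subcase one must (i) verify that the constructed gadgets indeed lie in $\Gcal$ (using hereditariness), (ii) ensure that the overall instance is bipartite, and (iii) argue that the ``forbidden'' bipartite graph guaranteed by nontriviality prevents cheap deletions. Combining these case-by-case reductions into a single argument, while keeping the overall construction polynomial, is the technical heart of the proof. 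The polynomial direction, by contrast, should follow cleanly once the blow-up characterization of bounded-$\nu$ hereditary classes is in hand.
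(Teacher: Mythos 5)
The paper offers no proof of this statement: it is imported as a black box from Yannakakis~\cite{Yannakakis81node}, so there is no in-paper argument to compare yours against, and what you are really attempting is a reconstruction of an entire external paper. As written, your text is a plan rather than a proof, and both directions defer the actual work. On the hardness side you state that you ``expect the proof to require a structural classification \ldots together with a tailored polynomial reduction for each regime''; that classification and those reductions \emph{are} the theorem, and none of them is supplied. Worse, the one concrete step you do commit to --- that $\nu(\Gcal)=+\infty$ yields, by a Ramsey-type extraction, arbitrarily large canonical \emph{bipartite} substructures inside $\Gcal$ --- fails as stated. Take $\Gcal$ to be the union of all complete graphs and all edgeless graphs: it is hereditary, $\nu(\Gcal)=+\infty$ is witnessed entirely by the cliques, \PbCov is nontrivial on bipartite graphs, yet the only bipartite members of $\Gcal$ are independent sets and a single edge, and the deletion problem on bipartite inputs is solved essentially by one maximum independent set computation. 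Any correct proof must work with the \emph{bipartite} part of $\Gcal$ from the outset (as Yannakakis does), which your extraction step does not do; this also shows the dichotomy has to be read with $\nu$ measured on the bipartite members rather than on all of $\Gcal$.

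The polynomial direction has concrete gaps as well. Your structural lemma is fine up to a detail (with open neighbourhoods, two vertices in the same class are automatically non-adjacent, so every class is an independent set and there are no ``true twin'' types), but the algorithm built on it conflates ``$G[V']$ is a blow-up of a type graph $T$ realizable in $\Gcal$'' with ``$G[V']\in\Gcal$''. A hereditary class with bounded $\nu$ may contain only blow-ups of $T$ with restricted multiplicities (for instance, at most five disjoint edges plus arbitrarily many isolated vertices), so maximizing $|V'|$ over all bipartition-compatible blow-ups of $T$ can return a graph outside $\Gcal$. One needs an additional argument --- this is where hereditariness is genuinely used --- identifying which classes of $T$ can be blown up unboundedly and handling the bounded-multiplicity part by exhaustive guessing. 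Finally, the claim that the resulting constrained assignment problem ``can be solved \ldots by reduction to minimum cut/maximum flow'' is asserted without justification: already for $T$ a single edge it is the maximum (vertex) induced biclique problem, which is polynomial on bipartite graphs only via a specific K\H{o}nig-type argument, and for a general pattern $T$ with prescribed complete/empty adjacencies between colour classes it is not evident that a single min-cut formulation exists. As it stands, neither direction of the dichotomy is established.
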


Hence, here is a non-exhaustive list of problems that are \NP-complete on bipartite graphs:
{\sc Vertex Deletion to $\Gcal$} where $\Gcal$ is a minor-closed graph class that contains edges (and hence
{\sc Feedback Vertex Set}, {\sc Vertex Planarization}, {\sc $H$-Minor-Cover} for $H$ containing $P_3$ as a subgraph),
{\sc $H$-Subgraph-Cover}, {\sc $H$-Induced-Subgraph-Cover}, and {\sc $H$-Odd-Minor-Cover} for $H$ bipartite graph containing $P_3$ as a (necessarily induced) subgraph, {\sc Vertex Deletion to graphs of degree at most $p$} for $p\geq 1$, {\sc Vertex Deletion to graphs of girth at least $p$} for $p\geq 6$ (note that the smallest non-trivial lower bound on the length of a cycle in a bipartite graph is six, or equivalently five).
\medskip

As a consequence of the above results, all the above problems, when parameterized by bipartite treewidth, are \textsf{para-NP}-complete.

\subsection{Hardness of packing problems}
\label{sec-packing-hardness}

Kirkpatrick and Hell~\cite[Theorem 4.2]{KirkpatrickH83onth} proved that if $H$ is a graph that contains a $P_3$ (the path on three vertices) as a subgraph, then the problem of partitioning the vertex set of an input graph $G$ into subgraphs isomorphic to $H$ is \NP-complete. Let us call this problem {\sc $H$-Partition}. This immediately implies that the {\sc $H$-Subgraph-Packing} problem is \NP-complete if $H$ contains a $P_3$.
In the next lemma we observe that the reduction of Kirkpatrick and Hell~\cite{KirkpatrickH83onth}  can be carefully analyzed so that the same result also applies to bipartite input graphs, and to the induced packing version as well.

\begin{lemma}\label{pack-bip}
Let $H$ be a bipartite graph containing $P_3$ as a subgraph. Then {\sc $H$-partition} is \NP-complete on bipartite graphs.
As a consequence, {\sc $H$-Subgraph-Packing} and {\sc $H$-Induced-Subgraph-Packing} are \NP-complete on bipartite graphs.
\end{lemma}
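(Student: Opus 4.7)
The starting point is the reduction of Kirkpatrick and Hell~\cite{KirkpatrickH83onth} establishing NP-hardness of the \emph{$H$-partition} problem (is there a vertex-partition of $G$ into copies of $H$?) whenever $H$ contains $P_3$. The plan is to show, by a careful re-examination of that reduction, that when $H$ is moreover bipartite the constructed instance graph $G'$ can be made bipartite, and that the argument transfers from partition to (both variants of) packing.

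\noindent\emph{Step 1 (from partition to packing).} Since the constructed $G'$ has $|V(G')|$ divisible by $|V(H)|$, setting $k=|V(G')|/|V(H)|$ yields the equivalence: $G'$ admits an $H$-(induced-)subgraph packing of size $k$ if and only if $V(G')$ admits a partition into copies of $H$. Hence it is enough to reprove NP-hardness of $H$-partition on bipartite graphs, with the additional guarantee (for the induced version) that a partition into induced copies exists iff a partition into subgraph copies exists.

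\noindent\emph{Step 2 (preserving bipartiteness).} The Kirkpatrick--Hell reduction glues together copies of $H$ along designated connector vertices to form element gadgets and set gadgets associated with an instance of a base problem such as \textsc{Exact Cover by 3-Sets}. Each gadget is itself bipartite because $H$ is bipartite, so I only need to ensure that the identifications and cross-gadget edges of the construction respect a consistent global 2-coloring. Whenever a connection would create an odd cycle, I insert one extra vertex along the offending connecting edge (equivalently, extend a connector path by one vertex). Because $|V(H)|$ is fixed, each such subdivision can be absorbed locally by attaching an auxiliary copy of $H$ that covers the new vertex; the accounting of the parameter $k$ is adjusted accordingly. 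This yields a bipartite instance $G'$ in polynomial time that realizes the same yes/no answer as the original construction.

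\noindent\emph{Step 3 (no spurious copies, for the induced version).} The main obstacle is to guarantee that every copy of $H$ as a subgraph in $G'$ is in fact an \emph{induced} copy of $H$, so that an $H$-packing of size $k$ exists iff an induced-$H$-packing of size $k$ exists. To handle this, I would analyze the local structure of the gadgets. Any copy of $H$ in $G'$ that is not one of the intended ones must span at least two gadgets, so it has to pass through one of the connector vertices. Since $H$ is a fixed bipartite graph containing $P_3$, a finite case analysis bounded by $|V(H)|$ allows us to list the few possible ``crossing'' configurations; by \emph{replacing each connector edge by a sufficiently long even path} (keeping the graph bipartite and each new vertex coverable by an auxiliary copy of $H$), one forces every occurrence of $H$ to lie inside a single intended placement. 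In particular, longer connector paths also ensure that no unintended \emph{induced} copy can arise, since any induced subgraph isomorphic to $H$ has bounded diameter.

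\noindent\emph{Main difficulty.} The bipartiteness preservation (Step~2) is mostly bookkeeping; the genuinely delicate point is Step~3 — verifying that the constructed bipartite graph has no unintended (induced) copies of $H$ and that the local modifications used to destroy spurious copies and to restore divisibility by $|V(H)|$ do not inadvertently change the answer of the reduction. Once that case analysis is carried out, the equivalence between the source instance, the $H$-partition of $G'$, and the existence of an $H$-(induced-)subgraph packing of $G'$ of size $k$ follows immediately, giving NP-completeness of both \textsc{$H$-Subgraph-Packing} and \textsc{$H$-Induced-Subgraph-Packing} on bipartite graphs, and hence \textsf{para-}NP-completeness with parameter $\btw=0$.
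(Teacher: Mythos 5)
Your overall strategy --- revisit the Kirkpatrick--Hell reduction and argue that the constructed instance can be made bipartite --- matches the paper's, but the way you propose to enforce bipartiteness does not work and defers exactly the step that constitutes the proof. In Step~2 you fix parity conflicts by subdividing ``offending'' connecting edges and then ``absorbing'' each new vertex with an auxiliary copy of $H$. Subdividing an edge of a gadget destroys every intended copy of $H$ that uses that edge (a subdivided edge is no longer an edge of the graph), so the gadgets no longer contain the copies of $H$ on which the correctness of the reduction rests; and attaching extra copies of $H$ to cover the new vertices changes the set of available copies and hence potentially the answer of the instance. Nothing in your sketch verifies that the modified graph is still equivalent to the source instance. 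The paper avoids all of this by not modifying the Kirkpatrick--Hell graph at all: it (a) 2-colors each gadget $H\langle v^\star\rangle$ by possibly flipping the bipartition inside each attached copy $H_{v^\star}$ so that the twin (connector) vertex lands on the same side as the vertex of the base copy to which that copy is attached, and (b) replaces the \emph{arbitrary} identification of connector vertices with the independent-set vertices $(i,t_i)$ by a canonical identification driven by a fixed ordering of $V(H)$, which makes the colors assigned to $(i,t_i)$ agree across all tuples. These two observations are the entire bipartiteness argument, and neither appears in your proposal.

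Step~3 has the same defect: replacing connector edges by long even paths to kill spurious copies again destroys the intended copies and is unnecessary. For the induced version the paper only needs to observe that, in a yes-instance, each copy of $H$ selected by the Kirkpatrick--Hell argument (the base copy of $H$, and the copy inside each $H_{v^\star}$ containing or avoiding the connector vertex) is already an induced subgraph of the construction; combined with the fact that a packing of size $|V(G')|/|V(H)|$ must be a partition (your Step~1, which is fine), this yields both directions with no modification whatsoever. A minor point: the source problem is $h$-Dimensional Matching with $h=|V(H)|\geq 3$, not \textsc{Exact Cover by 3-Sets}. As written, the ``genuinely delicate point'' you postpone is precisely the content of the lemma, so the proposal has a real gap.
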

\begin{proof}
 We proceed to discuss the proof for the {\sc $H$-Subgraph-Packing} problem, and finally we observe that the same proof applies to the induced subgraph version. The reduction of Kirkpatrick and Hell~\cite[Lemma 4.1]{KirkpatrickH83onth} is from the \textsc{$h$-Dimensional Matching} problem, where $h=|V(H)|$. In this problem, we are given a set of $h$-tuples ${\mathcal T} \subseteq [p]^h$, where $p$ is any positive integer, and the goal is to decide whether there exists a subset ${\mathcal S} \subseteq {\mathcal T}$ with $|{\mathcal S}|=p$ such that no two elements in $S$ agree in any coordinate. This problem is well-known to be \NP-complete for any $h\geq 3$~\cite{Karp10redu}, which is guaranteed by the hypothesis of the lemma. We need to introduce some notation. Given a vertex $v \in H$, we denote by $H_v$ the graph obtained from $H$ by adding a new vertex $v'$ adjacent to $N_H(v)$ (that is, $v'$ becomes a false twin of $v$). Vertices $v,v'$ are called the \emph{connector vertices of $H_v$} and the newly introduced vertex $v'$ is called the \emph{twin vertex} of $H_v$. Given a vertex $v \in H$, we denote by $H\langle v\rangle$ the graph obtained from $H$ by adding, for every vertex $u \in V(H)$, a distinct copy of $H_v$ and identifying vertex $v$ of $H_v$ with $u$. The $h$ twin vertices of the copies of $H_v$,  which were not identified with any vertex, are called the \emph{connector vertices of $H\langle v\rangle$}. In this construction, we call the initial copy of $H$ the \emph{base copy} of $H$ in $H\langle v\rangle$.

   The reduction of Kirkpatrick and Hell~\cite[Lemma 4.1]{KirkpatrickH83onth} proceeds as follows. Let $v^\star$ be any vertex of $H$ that is not a cut-vertex and belongs to a biconnected component of $H$ containing at most one cut-vertex, which is always guaranteed to exist. Given an instance ${\mathcal T} \subseteq [p]^h$ of \textsc{$h$-Dimensional Matching}, we construct a graph $G$ as follows. We start with an independent set of size $hp$ whose elements are labeled with pairs $(i,j)$ with $i \in [h]$ and $j \in [p]$. For each $h$-tuple $(t_1, \ldots, t_h) \in {\mathcal T}$ we introduce a distinct copy of $H\langle v^\star\rangle$ and we identify its $h$ connector vertices arbitrarily with the $h$ vertices labeled $(1, t_1), \ldots, (h,t_h)$. It is proved in~\cite{KirkpatrickH83onth} that $V(G)$ can be partitioned into copies of $H$ if and only if ${\mathcal T}$ is a \yes-instance of \textsc{$h$-Dimensional Matching}.

  All we need to show is that, if $H$ is bipartite, then the constructed graph $G$ is bipartite as well. For this, we proceed to define a bipartition function $b:V(G) \to \{0,1\}$ such that for every edge $uv \in E(G)$ it holds that $b(u) \neq b(v)$. Since $H$ is bipartite, such a function exists for $H$; we denote it by $b_H$. We start by defining $b$ restricted to each of the copies $H\langle v^\star\rangle$ introduced in the construction (all copies are labeled equally). Recall that each such copy contains a copy of $H_{v^\star}$ for every vertex $v \in V(H)$. We proceed to define $b$ so that the twin vertex of every copy of $H_{v^\star}$ lies on the same side as the vertex of $H$ to which this copy has been attached. Formally, for every vertex $u$ in the base copy of $H$ in $H\langle v^\star\rangle$, we define $b(u)=b_H(u)$. Consider any vertex $u$ in the base copy of $H$, and let $v$ be any vertex in the copy of $H_{v^\star}$ that has been attached to $u$, different from the twin vertex of that copy. If $b_H(u)=b_H(v^\star)$, we let $b(v)=b_H(v)$, otherwise we let $b(v)=1-b_H(v)$. That is, in the latter case, when $u$ and $v^\star$ do {\sl not} agree in $H$, we swap the bipartition of that copy, except for its twin vertex. Finally, let $w$ be the twin vertex of the copy of $H_{v^\star}$  attached to $u$. We define $b(w) = b(u)$, and note that this is always possible since the two connector vertices of each copy of $H_{v^\star}$ are false twins. It can be easily verified that, for each copy of $H\langle v^\star\rangle$, the defined function $b$ induces a proper bipartition.  To conclude the proof, we need to guarantee that, when identifying the connector vertices of distinct copies of $H\langle v^\star\rangle$ according to the $h$-tuples in ${\mathcal T}$, we do not identify two vertices $u$ and $v$ with $b(u) \neq b(v)$. For this, we use the following trick. Let us consider any fixed ordering of $V(H)$. In the construction of $G$ described above,  for each $h$-tuple $(t_1, \ldots, t_h) \in {\mathcal T}$ and its associated copy of $H\langle v^\star\rangle$, instead of identifying its $h$ connector vertices {\sl arbitrarily} with the $h$ vertices in the independent set labeled $(1, t_1), \ldots, (h,t_h)$, we do it as follows. Each connector vertex $w$ in $H\langle v^\star\rangle$ is naturally associated with a vertex $u$ of the base copy of $H$ to which its copy of $H_{v^\star}$ has been attached. Suppose that vertex $u$ is the $i$-th vertex in the considered ordering of $V(H)$. We then identify the connector vertex $w$ with the vertex labeled $(i,t_i)$ in the independent set. This way, there is no conflict between the function $b$ defined for different copies of $H\langle v^\star\rangle$, and it indeed holds that $b(u) \neq b(v)$ whenever $uv \in E(G)$, concluding the proof.

  Finally, to prove the same result for {\sc $H$-Induced-Subgraph-Packing}, we use exactly the same construction as above, and it suffices to observe that, in a \yes-instance, each of the copies of $H$ in $G$ is induced. Indeed, in the proof of~\cite[Lemma 4.1]{KirkpatrickH83onth}, the idea of the construction is the following. If a tuple $(t_1, \ldots, t_h) \in {\mathcal T}$ is taken into the solution, then in its associated copy of $H\langle v^\star\rangle$, the copies of $H$ chosen for the packing are, on the one hand, the base copy of $H$ and, on the other hand, the copy of $H$ in each $H_{v^\star}$ containing the connector vertex. For a tuple $(t_1, \ldots, t_h) \in {\mathcal T}$ that is {\sl not} taken into the solution,  in its associated copy of $H\langle v^\star\rangle$, the copies of $H$ chosen for the packing are just the copy of $H$ in each $H_{v^\star}$ {\sl not} containing the connector vertex. In both cases, each of the chosen copies of $H$ is an induced subgraph of $G$, and the lemma follows.
 \end{proof}

\autoref{pack-bip} is tight for connected graphs $H$.
Indeed, if $H$ is connected and $P_3$ is not a subgraph of $H$, then $H$ is either a vertex or an edge. In the former case, the problem is trivial, and the latter is {\sc Maximum Matching}, which is polynomially-time-solvable on general graphs \cite{MicaliV80ano(}.
\medskip

In the next lemma, we derive from \autoref{pack-bip} that {\sc $H$-Minor-Packing} is also \textsf{para-NP}-complete parameterized by $\btw$ when $H$ contains a $P_3$.

\begin{lemma}\label{pack-bip-min}%\lau{This reduction is false!}\lau{However, this is a corollary of \autoref{pack-bip}, where we decide whether $V(G)$ can be partitioned into copies of $H$. For minors, we ask instead that the max size of the minor-packing is equal to $|V(G)|/|V(H)|$. Now we ask $H$ to be bipartite and to contain a $P_3$, but not to be 2-connected anymore. Works also for induced-minors.}
Let $H$ be a bipartite graph containing $P_3$ as a subgraph.
Then {\sc $H$-Minor-Packing} is \NP-complete on bipartite graphs.
\end{lemma}

\begin{proof}
We reduce from {\sc $H$-Partition} restricted to bipartite graphs, which is \NP-complete by \autoref{pack-bip}.
Let $G$ be a bipartite graph as an instance of {\sc $H$-partition}.
Then $G$ has a minor-packing of size at least $|V(G)|/|V(H)|$ if and only if $V(G)$ can be partitioned into copies of $H$, and the lemma follows.
\end{proof}
Note that a similar result to \autoref{pack-bip-min} would hold for {\sc $H$-Induced-Minor-Packing}.

\medskip
As a corollary of \autoref{pack-bip-min}, {\sc $H$-Odd-Minor-Packing} is \textsf{para-NP}-complete parameterized by $\btw$ when $H$ is bipartite and contains $P_3$ as a subgraph.

\begin{lemma}\label{pack-bip-odd}
Let $H$ be a bipartite graph containing $P_3$ as a subgraph.
Then {\sc $H$-Odd-Minor-Packing} is \NP-complete on bipartite graphs.
\end{lemma}

\begin{proof}
Given that odd-minors preserve cycle parity (\autoref{odd-eq}), when $H$ is bipartite,
{\sc $H$-Odd-Minor-Packing} and {\sc $H$-Minor-Packing} are the same problem on bipartite graphs.
 \end{proof}

\bigskip
As stated in~\autoref{mis-btw}, {\sc (Weighted) Independent Set} is solvable in \FPT-time when parameterized by \btw.
Can we go beyond?
Unfortunately, {\sc $d$-Scattered  Set}, that is, the problem of asking for a  set of vertices of size at least $k$ that are pairwise within distance at least $d$,
is \NP-complete for $d\geq 3$ even for bipartite planar graphs of maximum degree three \cite{EtoGM14dist} (when $d=2$, this corresponds to {\sc Independent Set}), so we cannot hope to go further than
 {\sc $d$-Scattered  Set}, parameterized by $\btw$, is \textsf{para-NP}-complete, when $d≥3$.

Similarly, {\sc Induced Matching}, that is, the problem of finding an induced matching with $k$ edges, is \NP-complete on bipartite graphs~\cite{Cameron89indu}.
In the next lemma, we reduce from   {\sc Induced Matching} to prove that {\sc $H$-Scattered-Packing} is \textsf{para-NP}-complete parameterized by $\btw$ when $H$ is 2-connected and bipartite.

\begin{lemma}\label{pack-bip-scat}
Let $H$ be a 2-connected bipartite graph with at least one edge.
Then {\sc $H$-Scattered-Packing} is \NP-complete on bipartite graphs.
\end{lemma}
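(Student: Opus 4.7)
The plan is to reduce from \textsc{Induced Matching} on bipartite graphs, which is \NP-complete by~\cite{Cameron89indu}. First I dispose of the case $H=K_2$: then an $H$-scattered packing is exactly an induced matching, so the identity reduction suffices. Hence assume $|V(H)|\geq 3$ and fix an edge $uv$ of $H$. Given a bipartite instance $G$ of \textsc{Induced Matching}, I will construct $G'$ by attaching to each edge $e=xy$ of $G$ a fresh copy $H_e$ of $H$ in which $u$ is identified with $x$ and $v$ with $y$; write $I_e=V(H_e)\setminus\{x,y\}$, so that the interiors $I_e$ over $e\in E(G)$ are pairwise disjoint. Combining a bipartition $(A,B)$ of $G$ with a bipartition $(U,V)$ of $H$ satisfying $u\in U,v\in V$ extends to a bipartition of $G'$, so $G'$ is bipartite.

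For the forward direction, I will show that an induced matching $M=\{e_1,\ldots,e_k\}$ in $G$ yields the scattered $H$-packing $\{H_{e_1},\ldots,H_{e_k}\}$ in $G'$: the gadgets are pairwise vertex-disjoint since $M$ is a matching and the $I_{e_i}$'s are disjoint, and any edge of $G'$ between $V(H_{e_i})$ and $V(H_{e_j})$ must have both endpoints in $V(G)$ (since every interior vertex has all its $G'$-neighbors inside its own gadget), hence be a $G$-edge between endpoints of distinct edges of $M$, contradicting inducedness.

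For the backward direction, given a scattered $H$-packing $\mathcal{P}$ in $G'$, I will associate to each $F\in\mathcal{P}$ an edge $e_F\in E(G)$ with $V(e_F)\subseteq V(F)$; the set $\{e_F:F\in\mathcal{P}\}$ is then an induced matching, since the $F$'s are pairwise vertex-disjoint and any $G$-edge among the $V(e_F)$'s would be a $G'$-edge between distinct members of $\mathcal{P}$, violating scatteredness. To produce $e_F$, I split into cases. If $F\subseteq G$, any edge of $F$ works. Otherwise $F$ contains an interior vertex $w\in I_e\cap V(F)$ for some $e=xy$; if $F\subseteq H_e$, then $|V(F)|=|V(H)|=|V(H_e)|$ forces $F=H_e$ and we set $e_F=e$; if $F$ has some vertex outside $V(H_e)$, I will argue that $\{x,y\}\subseteq V(F)$ and set $e_F=e$.

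The main obstacle is precisely this last subcase. To prove $\{x,y\}\subseteq V(F)$, I plan to exploit 2-connectivity of $F\cong H$: taking $w\in I_e\cap V(F)$ and any $w'\in V(F)\setminus V(H_e)$, there exist two internally vertex-disjoint $w$–$w'$ paths in $F$. Each path must exit $V(H_e)$ through $V(e)=\{x,y\}$, and since the paths are internally vertex-disjoint, they must exit through different vertices of $V(e)$, forcing both $x$ and $y$ to lie in $V(F)$. Establishing this structural claim ties the scattered packing back to an induced matching of $G$ and completes the reduction.
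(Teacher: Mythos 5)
Your proposal is correct and follows essentially the same route as the paper: a reduction from \textsc{Induced Matching} on bipartite graphs that attaches a copy of $H$ (minus the edge $uv$, identified with each edge of $G$) to every edge of $G$. Your backward direction merely spells out, via Menger's theorem, the $2$-connectivity argument that the paper states tersely, so the two proofs coincide in substance.
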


\begin{proof}
We reduce from {\sc Induced Matching} on bipartite graphs, which is \NP-complete~\cite{Cameron89indu}, as follows. Let $uv\in E(H)$ and let $H'=H\setminus\{u,v\}$.
Let $G$ be a bipartite graph as an instance of {\sc Induced Matching}.
We build $G'$ as an instance of {\sc $H$-Scattered-Packing} as follows.
Let $e_1,\ldots,e_m$ be the edges of $G$.
$G'$ is obtained from the disjoint union of $G$ and $m$ graphs $H_1,\ldots,H_m$ isomorphic to $H'$ by adding the appropriate edges between $e_i$ and $H_i$ for $i\in[m]$ to create a graph isomorphic to $H$.
Note that $G'$ is bipartite.

Given an induced matching $\{e_i\mid i\in I\subseteq [m]\}$ in $G$, $\{G'[e_i \cup V(H_i)\mid i\in I\}$ is an $H$-scattered packing of the same size in $G'$.
Conversely, given an $H$-scattered packing in $G'$, given that $|V(H')|=|V(H)|-2$ and that $H$ is 2-connected, any occurrence of $H$ in the packing intersects at least one edge of $G$.
Hence, this gives rise to an induced matching of the same size in $G$.
Thus, for any $k\in\bN$, $(G,k)$ is a \yes-instance for {\sc Induced Matching} if and only if $(G',k)$ is a \yes-instance for {\sc $H$-Scattered-Packing}.
 \end{proof}

For $q \geq 2$,
it turns out that even asking that the graph $H$ to be packed or covered is non-bipartite is {\sl not} enough to make the problem tractable. As an illustration of this phenomenon, in the next lemma we prove that {\sc $H$-Scattered-Packing}
is {\sf para-NP}-complete parameterized by $q$(-torso)-$\Bcal$-treewidth for $q\geq 2$ even if $H$ is {\sl not} bipartite (\autoref{lem-paraNP-non-bip}).

\begin{lemma}\label{lem-paraNP-non-bip}
Let $H$ be a 2-connected graph containing an edge, and let $q\in\bN_{\geq 2}$.
Then {\sc $H$-Scattered-Packing} is {\sf para-NP}-complete parameterized by $(q,\Bcal)^{(*)}$-$\tw$.
\end{lemma}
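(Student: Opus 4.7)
The plan is to reuse the reduction from Lemma~\ref{pack-bip-scat} without any modification: starting from a bipartite graph $G$ (an instance of \textsc{Induced Matching}, which is \NP-complete on bipartite graphs~\cite{Cameron89indu}) with edges $e_1, \ldots, e_m$, pick an edge $uv \in E(H)$, set $H' = H \setminus \{u, v\}$, and take $G'$ to be the disjoint union of $G$ and copies $H_1, \ldots, H_m$ of $H'$, where $H_i$ is attached to the endpoints of $e_i$ so as to form a copy of $H$. The correctness argument of Lemma~\ref{pack-bip-scat} relies only on the $2$-connectivity of $H$ together with $|V(H')| = |V(H)| - 2$, not on any bipartiteness of $H$, so it transfers verbatim. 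The new ingredient to supply is therefore a uniform upper bound on the $q$(-torso)-$\Bcal$-treewidth of $G'$ by a constant depending only on $H$.

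The plan for this bound is to build a star-shaped decomposition $(T, \alpha, \beta)$ with a central node $t_0$ carrying the bipartite backbone and one leaf per attached copy of $H$. I set $\alpha(t_0) = \emptyset$ and $\beta(t_0) = V(G)$; for each $i \in [m]$ I create a leaf $t_i$ adjacent to $t_0$ with bag $\{u_i, v_i\} \cup V(H_i)$, letting $\alpha(t_i)$ be a fixed odd cycle transversal of the copy of $H$ induced by this bag (hence $|\alpha(t_i)| \leq \oct(H)$) and $\beta(t_i)$ be its complement inside the bag. By construction $G'[\beta(t_i)]$ is bipartite and the width equals $\oct(H)$. Because $q \geq 2$, the adhesion conditions $|(\alpha \cup \beta)(t_i) \cap \beta(t_0)| \leq 2 \leq q$ and $|(\alpha \cup \beta)(t_0) \cap \beta(t_i)| \leq 2 \leq q$ are satisfied, so this is a valid $q$-$\Bcal$-tree decomposition and $(q,\Bcal)$-$\tw(G') \leq \oct(H) = \Ocal_H(1)$.

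The remaining step, which I expect to be the main (but mild) obstacle, is to promote this to a $q$-torso-$\Bcal$-tree decomposition by showing that no offending edges are introduced when taking torsos. At $t_0$, each connected component of $G' \setminus V(G)$ lies inside some $V(H_i)$, and its neighborhood in $G'$ is contained in $\{u_i, v_i\}$, which is already an edge of $G$; so the torso at $t_0$ coincides with $G$ and is bipartite. At $t_i$, each connected component of $G' \setminus (\alpha(t_i) \cup \beta(t_i))$ has neighborhood into $\beta(t_i)$ contained in $\{u_i, v_i\} \setminus \alpha(t_i)$, a set of size at most $2$ whose unique potential new edge is $u_iv_i$, already present in $G$. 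Hence the torso at every bag coincides with $G'[\beta(t)]$, which is bipartite, and $(q,\Bcal)^{\star}$-$\tw(G') \leq \oct(H)$ as well. Since $\oct(H)$ is a constant depending only on $H$, the reduction establishes \textsf{para-}\NP-completeness for both parameters simultaneously.
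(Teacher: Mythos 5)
Your proposal is correct and follows essentially the same route as the paper: the identical reduction from \textsc{Induced Matching} on bipartite graphs, followed by a star-shaped $q$(-torso)-$\Bcal$-tree decomposition with $V(G)$ in the central $\beta$-bag and one leaf per attached copy of $H$, using $q\geq 2$ to absorb the two attachment vertices $u_i,v_i$ in the adhesions. The only (harmless) difference is at the leaves: the paper grafts a copy of an optimal $q$(-torso)-$\Bcal$-tree decomposition of $H$ rooted at a bag containing $u,v$, yielding the slightly sharper bound $(q,\Bcal)^{(\star)}$-$\tw(G')\leq(q,\Bcal)^{(\star)}$-$\tw(H)$, whereas you use a single leaf bag with an odd cycle transversal of $H$ as apices, giving the bound $\oct(H)$ — both are constants depending only on $H$, so both establish \textsf{para-}\NP-completeness.
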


\begin{proof}
{\sc Induced Matching} is \NP-complete on bipartite graphs~\cite{Cameron89indu}.
Let $H'=H\setminus\{u,v\}$, where $uv\in E(H)$.
We reduce from {\sc Induced Matching} as follows.
Let $G$ be a bipartite graph as an instance of {\sc Induced Matching}.
We build $G'$ as in the proof of \autoref{pack-bip-scat}, so that, for any $k\in\bN$, $(G,k)$ is a \yes-instance for {\sc Induced Matching} if and only if $(G',k)$ is a \yes-instance for {\sc $H$-Subgraph-Cover}.

Let us show that $(q,\Bcal)^{(*)}$-$\tw(G')\leq(q,\Bcal)^{(*)}$-$\tw(H)$.
Let $\Tcal=(T,\alpha,\beta)$ be a $q$(-torso)-$\Bcal$-tree decomposition of $H$.
Since $uv\in E(H)$, there is $t_0\in V(T)$ such that $u,v\in(\alpha\cup\beta)(t_0)$.
Then we build a $q$(-torso)-$\Bcal$-tree decomposition of $G'$ as follows.
Let $T'$ be the tree obtained by taking $|E(G)|$ copies of $T$ and making the node $t_0$ of each such copy adjacent to a new node $r$.
We set $\beta'(r)=V(G')$, $\alpha'(r)=\emptyset$, and $\alpha'$ and $\beta'$ take the same values as $\alpha$ and $\beta$, respectively, for the other nodes of $T'$.
There are at most two vertices in the adhesion of $r$ and any other node, and they are in $\beta(r)$.
Moreover, $\torso_G(V(G'))=G'\in\Bcal$.
Hence, $(T',\alpha',\beta')$ is a $q$(-torso)-$\Bcal$-tree decomposition of $G'$ of width at most $(q,\Bcal)^{(*)}$-$\tw(H)$.
 \end{proof}

\section{Further research}
\label{sec-conclusions}

 In this paper we study the complexity of several problems parameterized by bipartite treewidth, denoted by \btw. In particular, our results
 extend the graph classes for which {\sc Vertex Cover/\hspace{1sp}Independent Set}, {\sc Maximum Weighted Cut}, and \textsc{Odd Cycle Transversal} are polynomial-time solvable. A number of interesting questions remain open.

Except for \textsc{3-Coloring}, all the problems we consider are covering and packing problems. We are still far from a full classification of the variants that are {\sf para-NP}-complete, and those that are not (\FPT or \XP). For instance, concerning {\sc $H$-Subgraph-Cover}, we provided \FPT-algorithms when $H$ is a clique (\autoref{cor-Kt-cover}). This case is particularly well-behaved because we know that in a tree decomposition every clique appears in a bag. On the other hand, as an immediate consequence of the result of Yannakakis~\cite{Yannakakis81node} (\autoref{yan}), we know that {\sc $H$-Subgraph-Cover} is {\sf para-NP}-complete for every bipartite graph $H$ containing $P_3$ (cf.~\autoref{sec-covering}). We do not know what happens when $H$ is non-bipartite and is not a clique. An apparently simple but challenging case is  {\sc $C_5$-Subgraph-Cover} (or any other larger odd cycle).
The main difficulty seems to be that {\sc $C_5$-Subgraph-Cover} does not have the gluing property, which is the main ingredient in this paper to show that a problem is nice, and therefore to obtain an \FPT-algorithm.
We do not exclude the possibility that the problem is {\sf para-NP}-complete, as we were not able to obtain even an \XP\ algorithm.

Concerning the packing problems, {\sc $H$-(Induced-)Subgraph/Scattered/Odd-Minor-Packing}, we provide \XP-algorithms for them in \autoref{xp} when $H$ is non-bipartite. Unfortunately, we do not know whether any of them admit an \FPT-algorithm, although we suspect that it is indeed the case. We would like to mention that  it is possible to apply the framework of equivalence relations and representatives (see for instance~\cite{GarneroPST15,GarneroPST19,BasteST20}) to obtain an \FPT-algorithm for {\sc $K_t$-Subgraph-Packing} parameterized by \btw.
However, since a number of definitions and technical details are required to present this algorithm, we decided not to include it in this paper. However, when $H$ is not a clique, we do not know whether {\sc $H$-Subgraph-Packing} admits an \FPT-algorithm. A concrete case that we do not know how to solve is when
 $H$ is the \emph{paw}, i.e., the 4-vertex graph consisting of one triangle and one pendent edge.

Beyond bipartite tree decompositions, we introduce a more general  type of decompositions that we call $q$(-torso)-$\Hcal$-tree decompositions. For $\Bcal$ being the class of bipartite graphs, we prove in
\autoref{lem-paraNP-non-bip} that for every $q \geq 2$ and every
 2-connected graph $H$ with an edge,  {\sc $H$-Scattered-Packing} is {\sf para-NP}-complete parameterized by $q$(-torso)-$\Bcal$-treewidth. It should be possible to prove similar results for other covering and packing problems considered in this article.

Most of our {\sf para-NP}-completeness results consist in proving \NP-completeness on bipartite graph (i.e., those with bipartite treewidth zero). There are two exceptions. On the one hand, the {\sf NP}-completeness of \textsc{$3$-Coloring} on graphs with odd cycle transversal at most three (\autoref{lem-hard-coloring}) and \autoref{lem-paraNP-non-bip} mentioned above for {\sc $H$-Scattered-Packing} parameterized by $q$-$\Bcal$-treewidth for every integer $q \geq 2$. It is worth noting that none of our hardness results really exploit the structure of  bipartite tree decompositions (i.e., for $q=1$), beyond being bipartite or having bounded odd cycle transversal.
Hence, an interesting question would be whether there is a problem that is \NP-complete on graph of bounded bipartite treewidth, but not on graphs of bounded odd cycle transversal.

Finally, as mentioned in the introduction, the goal of this article is to make a first step toward efficient algorithms to solve problems related to odd-minors. We already show in this paper that bipartite treewidth can be useful in this direction, by providing an \XP-algorithm for {\sc $H$-Odd-Minor-Packing}. Bipartite treewidth, or strongly related notions, also plays a strong role in the recent series of papers about odd-minors by
Campbell, Gollin, Hendrey, and Wiederrecht~\cite{GollinW23odd-,Campbell23odd-}. This looks like an emerging topic that is worth investigating.

\subparagraph{Acknowledgments.} We thank Sebastian Wiederrecht for his many helpful remarks, and the referees for their thorough reading of the manuscript.

\bibliographystyle{elsarticle-num-names}
\bibliography{bib_odd}

@article{AlonYZ95colo,
  author    = {Noga Alon and
               Raphael Yuster and
               Uri Zwick},
  title     = {Color-Coding},
  journal   = {Journal of the {ACM}},
  volume    = {42},
  number    = {4},
  pages     = {844--856},
  year      = {1995},
  doi       = {10.1145/210332.210337}
}

@inproceedings{AgrawalKLPRSZ22Deleting,
  author       = {Akanksha Agrawal and
                  Lawqueen Kanesh and
                  Daniel Lokshtanov and
                  Fahad Panolan and
                  M. S. Ramanujan and
                  Saket Saurabh and
                  Meirav Zehavi},
  title        = {Deleting, Eliminating and Decomposing to Hereditary Classes Are All
                  FPT-Equivalent},
  booktitle    = {Proceedings of the 2022 {ACM-SIAM} Symposium on Discrete Algorithms ({SODA})},
  pages        = {1976--2004},
  publisher    = {{SIAM}},
  year         = {2022},
  doi          = {10.1137/1.9781611977073.79}
}

@article{Cameron89indu,
  author       = {Kathie Cameron},
  title        = {Induced matchings},
  journal      = {Discrete Applied Mathematics},
  volume       = {24},
  number       = {1-3},
  pages        = {97--102},
  year         = {1989},
  doi          = {10.1016/0166-218X(92)90275-F}
}

@article{Campbell23odd-,
  author       = {Rutger Campbell and J. Pascal Gollin and Kevin Hendrey and
                  Sebastian Wiederrecht},
  title        = {{Odd-Minors {II:} Bipartite treewidth}},
  note  = {Manuscript under preparation (private communication)},
  year = {2023}
}

@article{ChenKLPGS25maxi,
  author       = {Li Chen and
                  Rasmus Kyng and
                  Yang P. Liu and
                  Richard Peng and
                  Maximilian Probst Gutenberg and
                  Sushant Sachdeva},
  title        = {Maximum Flow and Minimum-Cost Flow in Almost-Linear Time},
  journal      = {Journal of the {ACM}},
  volume       = {72},
  number       = {3},
  pages        = {19:1--19:103},
  year         = {2025},
  doi          = {10.1145/3728631}
}

@article{ChlebikC06,
  author    = {Miroslav Chleb{\'{\i}}k and
               Janka Chleb{\'{\i}}kov{\'{a}}},
  title     = {Hard coloring problems in low degree planar bipartite graphs},
  journal   = {Discrete Applied Mathematics},
  volume    = {154},
  number    = {14},
  pages     = {1960--1965},
  year      = {2006},
  doi       = {10.1016/j.dam.2006.03.014}
}

@inproceedings{DemaineHK10deco,
  author    = {Erik D. Demaine and
               MohammadTaghi Hajiaghayi and
               Ken{-}ichi Kawarabayashi},
  title     = {Decomposition, Approximation, and Coloring of Odd-Minor-Free Graphs},
  booktitle = {Proc. of the 21st Annual {ACM-SIAM} Symposium on Discrete
               Algorithms (SODA)},
  pages     = {329--344},
  publisher = {{SIAM}},
  year      = {2010},
  doi       = {10.1137/1.9781611973075.28}
}

@book{Diestel10grap,
	Author = {Reinhard Diestel},
	Publisher = {Springer-Verlag, 5th edition},
	Title = {{Graph Theory}},
	Volume = {173},
	Year = {2017},
	doi = {10.1007/978-3-662-53622-3}
}

@article{EibenGHK21meas,
  author       = {Eduard Eiben and
                  Robert Ganian and
                  Thekla Hamm and
                  O{-}joung Kwon},
  title        = {Measuring what matters: {A} hybrid approach to dynamic programming
                  with treewidth},
  journal      = {Journal of Computer and System Sciences},
  volume       = {121},
  pages        = {57--75},
  year         = {2021},
  doi          = {10.1016/j.jcss.2021.04.005}
}

@article{EtoGM14dist,
  author    = {Hiroshi Eto and
               Fengrui Guo and
               Eiji Miyano},
  title     = {Distance-$d$ independent
               set problems for bipartite and chordal graphs},
  journal   = {Journal of Combinatorial Optimization},
  volume    = {27},
  number    = {1},
  pages     = {88--99},
  year      = {2014},
  doi       = {10.1007/s10878-012-9594-4}
}

@article{GeelenGRSV09onth,
  author    = {Jim Geelen and
               Bert Gerards and
               Bruce A. Reed and
               Paul D. Seymour and
               Adrian Vetta},
  title     = {{On the odd-minor variant of Hadwiger's conjecture}},
  journal   = {Journal of Combinatorial Theory, Series {B}},
  volume    = {99},
  number    = {1},
  pages     = {20--29},
  year      = {2009},
  doi       = {10.1016/j.jctb.2008.03.006}
}

@article{GollinW23odd-,
  author       = {J. Pascal Gollin and
                  Sebastian Wiederrecht},
  title        = {{Odd-Minors {I:} Excluding small parity breaks}},
  journal      = {CoRR},
  volume       = {abs/2304.04504},
  year         = {2023},
  eprint       = {2304.04504}
}

@article{GrotschelP81weak,
  author    = {Martin Gr{\"{o}}tschel and
               William R. Pulleyblank},
  title     = {Weakly bipartite graphs and the Max-cut problem},
  journal   = {Operations Research Letters},
  volume    = {1},
  number    = {1},
  pages     = {23--27},
  year      = {1981},
  doi       = {10.1016/0167-6377(81)90020-1}
}

@article{Guenin01acha,
  author    = {Bertrand Guenin},
  title     = {A Characterization of Weakly Bipartite Graphs},
  journal   = {Journal of Combinatorial Theory, Series {B}},
  volume    = {83},
  number    = {1},
  pages     = {112--168},
  year      = {2001},
  doi       = {10.1006/jctb.2001.2051}
}

@article{Hadwiger43uber,
  title={{\"U}ber eine {K}lassifikation der {S}treckenkomplexe},
  author={Hadwiger, Hugo},
  journal={Vierteljschr. Naturforsch. Ges. Z{\"u}rich},
  volume={88},
  number={2},
  pages={133--142},
  year={1943},
  url={https://www.ngzh.ch/archiv/1943_88/88_2/88_17.pdf}
}

@phdthesis{Huynh09thel,
author={{T. Huynh}},
title={The Linkage Problem for Group-labelled Graphs},
year={2009},
publisher="UWSpace",
url={http://hdl.handle.net/10012/4716},
school={University of Waterloo}
}

@inproceedings{JansenK21fpta,
  author       = {Bart M. P. Jansen and
                  Jari J. H. de Kroon},
  title        = {{FPT} Algorithms to Compute the Elimination Distance to Bipartite
                  Graphs and More},
  booktitle    = {Proc. of the 47th International Workshop on Graph-Theoretic Concepts in Computer Science (WG)},
  series       = {LNCS},
  volume       = {12911},
  pages        = {80--93},
  year         = {2021},
  doi          = {10.1007/978-3-030-86838-3\_6}
}

@book{JensenT11grap,
  title={Graph coloring problems},
  author={Jensen, Tommy R and Toft, Bjarne},
  year={2011},
  publisher={Wiley},
  page={115},
  doi={10.1002/9781118032497}
}

@incollection{Karp10redu,
  author       = {Richard M. Karp},
  title        = {Reducibility Among Combinatorial Problems},
  booktitle    = {50 Years of Integer Programming 1958-2008 - From the Early Years to
                  the State-of-the-Art},
  pages        = {219--241},
  publisher    = {Springer},
  year         = {2010},
  doi          = {10.1007/978-3-540-68279-0\_8}
}

@inproceedings{KawarabayashiRW11,
  author    = {Ken{-}ichi Kawarabayashi and
               Bruce A. Reed and
               Paul Wollan},
  title     = {The Graph Minor Algorithm with Parity Conditions},
  booktitle = {Proc. of the 52nd Annual {IEEE} Symposium on Foundations of Computer Science (FOCS)},
  pages     = {27--36},
  publisher = {{IEEE} Computer Society},
  year      = {2011},
  doi       = {10.1109/FOCS.2011.52}
}

@article{KirkpatrickH83onth,
  author       = {David G. Kirkpatrick and
                  Pavol Hell},
  title        = {On the Complexity of General Graph Factor Problems},
  journal      = {{SIAM} Journal on Computing},
  volume       = {12},
  number       = {3},
  pages        = {601--609},
  year         = {1983},
  doi          = {10.1137/0212040}
}

@article{LewisY80then,
  author =       {John M. Lewis and
               Mihalis Yannakakis},
  title =        {{The node-deletion problem for hereditary properties is NP-complete}},
  journal =      {Journal of Computer and System Sciences},
  doi       = {10.1016/0022-0000(80)90060-4},
  year =         {1980},
  volume =       {20},
  number =       {2},
  pages =        {219--230},
}

@article{ReedSV04find,
  author    = {Bruce A. Reed and
               Kaleigh Smith and
               Adrian Vetta},
  title     = {Finding odd cycle transversals},
  journal   = {Operations Research Letters},
  volume    = {32},
  number    = {4},
  pages     = {299--301},
  year      = {2004},
  doi       = {10.1016/j.orl.2003.10.009}
}

@article{Steiner22impr,
  author       = {Raphael Steiner},
  title        = {Improved bound for improper colourings of graphs with no odd clique
                  minor},
  journal      = {Combinatorics, Probability and Computing},
  volume       = {32},
  number       = {2},
  pages        = {326--333},
  year         = {2023},
  doi          = {10.1017/S0963548322000268},
}

@article{Tazari12fast,
  author    = {Siamak Tazari},
  title     = {Faster approximation schemes and parameterized algorithms on (odd-){H}-minor-free
               graphs},
  journal   = {Theoretical Computer Science},
  volume    = {417},
  pages     = {95--107},
  year      = {2012},
  doi       = {10.1016/j.tcs.2011.09.014}
}

@article{Yannakakis81node,
  author    = {Mihalis Yannakakis},
  title     = {Node-Deletion Problems on Bipartite Graphs},
  journal   = {{SIAM} Journal on Computing},
  volume    = {10},
  number    = {2},
  pages     = {310--327},
  year      = {1981},
  doi       = {10.1137/0210022}
}

@article{LokshtanovNRRS14,
  author       = {Daniel Lokshtanov and
                  N. S. Narayanaswamy and
                  Venkatesh Raman and
                  M. S. Ramanujan and
                  Saket Saurabh},
  title        = {Faster Parameterized Algorithms Using Linear Programming},
  journal      = {{ACM} Transactions on Algorithms},
  volume       = {11},
  number       = {2},
  pages        = {15:1--15:31},
  year         = {2014},
  doi          = {10.1145/2566616},
}

@article{GarneroPST15,
  author       = {Valentin Garnero and
                  Christophe Paul and
                  Ignasi Sau and
                  Dimitrios M. Thilikos},
  title        = {Explicit Linear Kernels via Dynamic Programming},
  journal      = {{SIAM} Journal on Discrete Mathematics},
  volume       = {29},
  number       = {4},
  pages        = {1864--1894},
  year         = {2015},
  doi          = {10.1137/140968975},
}

@article{GarneroPST19,
  author       = {Valentin Garnero and
                  Christophe Paul and
                  Ignasi Sau and
                  Dimitrios M. Thilikos},
  title        = {Explicit Linear Kernels for Packing Problems},
  journal      = {Algorithmica},
  volume       = {81},
  number       = {4},
  pages        = {1615--1656},
  year         = {2019},
  doi          = {10.1007/s00453-018-0495-5},
}

@inproceedings{BasteST20,
  author       = {Julien Baste and
                  Ignasi Sau and
                  Dimitrios M. Thilikos},
  title        = {A complexity dichotomy for hitting connected minors on bounded treewidth
                  graphs: the chair and the banner draw the boundary},
  booktitle    = {Proc. of the 31st {ACM-SIAM} Symposium on Discrete Algorithms (SODA)},
  pages        = {951--970},
  year         = {2020},
  doi          = {10.1137/1.9781611975994.57},
}

@article{ChuzhoyT21,
  author       = {Julia Chuzhoy and
                  Zihan Tan},
  title        = {{Towards tight(er) bounds for the Excluded Grid Theorem}},
  journal      = {Journal of Combinatorial Theory, Series {B}},
  volume       = {146},
  pages        = {219--265},
  year         = {2021},
  doi          = {10.1016/j.jctb.2020.09.010},
}

@article{RobertsonST94,
  author       = {Neil Robertson and
                  Paul D. Seymour and
                  Robin Thomas},
  title        = {Quickly Excluding a Planar Graph},
  journal      = {Journal of Combinatorial Theory, Series {B}},
  volume       = {62},
  number       = {2},
  pages        = {323--348},
  year         = {1994},
  doi          = {10.1006/jctb.1994.1073},
}

\end{document}